\documentclass[reprint,twocolumn,aps,prb,amsmath,amssymb,nofootinbib]{revtex4-2}

\usepackage{amsthm}
\usepackage{mathtools}
\usepackage{bm}
\usepackage{booktabs}
\usepackage{graphicx}
\usepackage{pgfplots}
\pgfplotsset{compat=1.18}
\usepackage[compat=1.1.0]{tikz-feynman}
\usetikzlibrary{arrows.meta,decorations.markings}
\tikzset{ferm/.style={postaction=decorate,decoration={markings,
  mark=at position #1 with {\arrow[scale=1.1]{Stealth}}}}}
\usepackage{enumitem}
\usepackage[hidelinks]{hyperref}

\theoremstyle{plain}
\newtheorem{theorem}{Theorem}[section]
\newtheorem{proposition}{Proposition}[section]
\newtheorem{lemma}{Lemma}[section]
\newtheorem{corollary}{Corollary}[section]
\theoremstyle{definition}
\newtheorem{definition}{Definition}[section]
\theoremstyle{remark}
\newtheorem{remark}{Remark}[section]

\DeclareMathOperator{\Res}{Res}
\DeclareMathOperator{\Li}{Li}
\DeclareMathOperator{\Si}{Si}
\DeclareMathOperator{\Ci}{Ci}

\newcommand{\rd}{\mathrm{d}}
\newcommand{\ii}{\mathrm{i}}
\newcommand{\ee}{\mathrm{e}}
\newcommand{\vp}{\bm{p}}
\newcommand{\vk}{\bm{k}}
\newcommand{\vq}{\bm{q}}
\let\vr\undefined\newcommand{\vr}{\bm{r}}
\newcommand{\vR}{\bm{R}}
\newcommand{\hq}{\hat{\bm{q}}}
\newcommand{\he}{\hat{\bm{e}}}
\newcommand{\RCSP}{\mathrm{RC\text{-}SP}}
\newcommand{\Mellin}{\mathcal{M}}
\newcommand{\Ry}{\mathrm{Ry}}
\DeclarePairedDelimiter{\abs}{\lvert}{\rvert}
\newcommand{\Mgt}{\Mellin_{>}}
\renewcommand{\Re}{\operatorname{Re}}
\renewcommand{\Im}{\operatorname{Im}}

\begin{document}

\title{Analytic structure and asymptotic analysis of screened second-order exchange in the uniform electron gas}
\author{Fumihiro Imoto}
\noaffiliation
\date{July 2026}

\begin{abstract}
\noindent
The uniform electron gas underlies the local-density approximation of density-functional theory, yet correlation contributions beyond the random-phase approximation (RPA) are known mainly through high-dimensional numerical evaluation, not in closed form. We study the screened second-order exchange (``kite'') diagram, in which one interaction line carries a frequency-dependent screened interaction. For a single-pole reference model with a momentum-independent screening scale, we perform the frequency and loop integrals analytically and reduce the diagram to a one-dimensional integral, whose static limit reproduces in closed form the exact Onsager-Mittag-Stephen second-order exchange, fixing the absolute energy scale with no free parameter. A Mellin-Barnes analysis with rigorous remainder estimates gives the behaviour in both screening limits. Using the linearity of the reduced functional in the screened line, we replace the bare interaction by the physical static RPA (Lindhard) screening, so that the density enters only through the Thomas-Fermi scale rather than an assumed map, and its dependence is set by the endpoints of a single geometric kernel with exponents fixed by the diagram. We prove that this kernel is even and that its quadratic coefficient vanishes identically, so the high-density expansion contains no half-integer power; this justifies the integer-power-times-logarithm form used in recent numerical work and identifies a half-integer term there as an interpolation artifact. The construction extends to arbitrary spin polarization, the bare diagram being polarization-independent, and a dynamical adiabatic-connection evaluation normalized only to that limit reproduces the numerically evaluated kite for both the unpolarized and fully polarized gas, including the low-density sign change. The result is a controllable analytic reference for screened exchange beyond the RPA.
\end{abstract}
\maketitle

% =====================================================================
\section{Introduction}\label{sec:intro}
% =====================================================================
The practical success of density-functional theory (DFT)
\cite{HK1964,KS1965,Levy1979,DreizlerGross} rests on the accuracy of the
exchange--correlation energy functional. The uniform electron gas (UEG) is the
most widely used reference system for constructing exchange--correlation
functionals within the local-density approximation (LDA)
\cite{KS1965,LoosGill}; most non-empirical LDA functionals are fitted as
rational interpolations based on the Ceperley--Alder quantum Monte Carlo
\cite{CeperleyAlder} and constrained to reproduce the known asymptotic forms in
the high- and low-density limits \cite{VWN,PerdewZunger,PerdewWang92}. This
strategy works well, but the choice of analytic form is guided by empirical
judgment rather than by diagrammatic structure. It is desirable to understand
which perturbative contributions are essential in each density regime and what
analytic forms they naturally generate.

The analytic study of the UEG correlation energy has a long history that frames
the present work. Wigner identified the low-density regime, in which the
electrons localize and the correlation energy per particle scales as
$r_s^{-1}$~\cite{Wigner1934,Wigner1938}, a limit later developed as a systematic low-density expansion~\cite{CarrCHF1961}. At the opposite, high-density end,
Gell-Mann and Brueckner showed that the perturbation series, though divergent
term by term, can be resummed over ring diagrams---an approach anticipated by Macke~\cite{Macke1950}---to yield the closed-form
expansion $e_c=A\log r_s+C+O(r_s)$~\cite{GellMannBrueckner}---an early
application of diagrammatic methods, later recast and extended through the collective-coordinate, plasmon, and dielectric formulations~\cite{BohmPines1953,Pines1953,Lindhard1954,Sawada1957,Hubbard1957,NozieresPines} and by explicit field-theoretic evaluation~\cite{DuBois1959}. Within the same
perturbative structure, the bare second-order exchange, one order beyond the
ring sum, was evaluated in closed form by Onsager, Mittag, and
Stephen~\cite{OMS}. These exact results established both a set of reference values that
any approximate treatment must reproduce and a tradition of extracting
closed-form density dependence directly from the diagrammatic series. The
quantum Monte Carlo of Ceperley and Alder~\cite{CeperleyAlder} subsequently
fixed the correlation energy numerically across all densities, after which the
analytic interest shifted from the total energy toward understanding how
particular classes of diagrams contribute in each density regime---the question
addressed here for the screened second-order exchange.

This decomposition can be made precise within the high-density expansion of the
correlation energy, which is known to take the
form~\cite{LoosGill,CarrMaradudin}
\begin{equation}
\begin{split}
  e_c(r_s,\zeta)&=\lambda_0(\zeta)\ln r_s+e_0(\zeta)\\
  &\quad+\lambda_1(\zeta)\,r_s\ln r_s+e_1(\zeta)\,r_s+\cdots.
\end{split}
  \label{eq:hdexp}
\end{equation}
At each order the coefficient separates into a ring-diagram (RPA) part and a
second-order exchange part: the constant term
$e_0=e_0^{a}+e_0^{b}$ combines the RPA ring
contribution $e_0^{a}$ with the second-order exchange term
$e_0^{b}$ obtained in closed form by Onsager, Mittag, and
Stephen~\cite{OMS}, and the $r_s\ln r_s$ coefficient
$\lambda_1=\lambda_1^{a}+\lambda_1^{b}$ has the same structure, its exchange part
$\lambda_1^{b}$ having since been evaluated in closed form as
well~\cite{LoosGill2011,LoosGill}. The exchange-type diagrams entering these
coefficients carry the \emph{bare} Coulomb interaction. The screened
second-order exchange studied here is their dynamically screened counterpart, in
which one interaction line is dressed by the RPA-screened interaction; it is thus
the natural next object once the bare exchange coefficients are known, and the
one whose density dependence the parametrizations must ultimately encode
(Fig.~\ref{fig:kite}).

At high density the random-phase approximation (RPA) captures long-range
screening and collective excitations \cite{FetterWalecka}, but it lacks the
short-range exchange-type contribution and leaves a systematic error in the
absolute correlation energy \cite{Gruneis2009,Hummel2019,Forster2022}. The
screened second-order exchange (SOSEX) correction is a natural candidate to
remedy this, bridging the long-range screening of the RPA and the exchange-type
correlation that the RPA omits. A complete treatment via the
adiabatic-connection fluctuation--dissipation (ACFD) framework exists
\cite{Gruneis2009,Gorling2019}, and the SOSEX (``kite'') contribution to the UEG
correlation energy has recently been mapped out over the full density range and
spin polarization by Benites, Rosado, and Manousakis \cite{Benites2024}, who
carry out the frequency integrals analytically by contour deformation---avoiding
the branch cut of the dielectric function---and evaluate the remaining
eleven-dimensional integral by Monte Carlo.

\begin{figure}[t]
  \centering
  \begin{tikzpicture}[scale=1.25]
    \coordinate (tl) at (-1.15,0.85);
    \coordinate (tr) at (1.15,0.85);
    \coordinate (bl) at (-1.15,-0.85);
    \coordinate (br) at (1.15,-0.85);
    \begin{feynman}
      \diagram*{ (tl) -- [photon, edge label=\(v\)] (tr), };
      \diagram*{ (bl) -- [photon, very thick, edge label'=\(W\)] (br), };
    \end{feynman}
    \draw[ferm=0.70] (tl) -- (br);
    \draw[ferm=0.70] (tr) -- (bl);
    \draw[ferm=0.50] (br) -- (tr);
    \draw[ferm=0.50] (bl) -- (tl);
  \end{tikzpicture}
  \caption{The screened second-order exchange (``kite'') diagram. The four
  fermion lines form a single closed loop whose crossing (centre) is the
  exchange character that distinguishes it from the second-order direct (ring)
  term. One interaction line is the bare Coulomb interaction $v$; the other,
  drawn bold, is the frequency-dependent RPA-screened interaction $W$. Replacing
  $W$ by a single-pole reference model is what makes the diagram analytically
  tractable here.}
  \label{fig:kite}
\end{figure}

That study establishes the magnitude
and density dependence of the contribution and emphasizes that the kite is a
regime in which exchange and correlation are inseparably entangled; but the high
dimensionality of the remaining integral makes it difficult to expose the
\emph{analytic} structure of the correction---its closed-form limits and the
functional forms of its density dependence---from the numerics alone. The exact
bare (unscreened) second-order
exchange, by contrast, is known in closed form \cite{OMS,LoosGill2011}, which
suggests that the screened diagram, too, should possess analytic structure that
can be exposed in a suitable reduction. The aim of this paper is to evaluate the
dynamically screened exchange diagram explicitly in a setting where it can be
treated analytically, and---by feeding the physical RPA screening into the
reduced functional---to obtain the density dependence of the contribution with
$r_s$ entering through a physical screening scale rather than through an assumed
map.

A concrete motivation for pinning down this analytic structure comes from the
construction of exchange--correlation functionals beyond the local-density
approximation. The gradient expansion feeds on the long-wavelength behavior of
the same screened diagrams: the $q^2$ coefficient of the proper polarization,
which sets the second-order gradient term, is built from the small-$q$
structure of the exchange-type contribution together with the ring and direct
terms. A recent detailed analysis by Benites, Rosado, and Manousakis
\cite{Benites2026} shows that the exchange and correlation gradient
coefficients $b_x$ and $b_c$ are, taken separately, dependent on the scheme used
to regularize the Coulomb interaction, so that only their sum $b_{xc}$ is a
well-defined quantity---with the implication that constraining $b_x$ and $b_c$
independently, as several widely used functionals do, is not justified. That
finding makes it worthwhile to have, in closed form, the small-$q$ structure of
the individual screened diagrams that enter such coefficients, which is one of
the things the present reduction supplies.

It is useful to state at the outset the structure of the object we compute,
because its physical content resides in a nested multiple integral, each layer of
which carries a distinct meaning. The screened second-order exchange energy per
particle can be written schematically as
\begin{equation}
\begin{aligned}
  e_{2b}(r_s)=2e_{2x}\int_0^1\!\lambda\,\rd\lambda\;
  &\underbrace{\int\!\rd^3q}_{\text{momentum}}
  \underbrace{\int_0^\infty\!\rd\xi}_{\text{frequency}}\\[-2pt]
  &\times\underbrace{\int\!\rd^3p\,\rd^3k}_{\text{loop phase space}}
  \bigl[\cdots\bigr],
\end{aligned}
  \label{eq:multiint}
\end{equation}
where the outermost \emph{adiabatic-connection} (coupling-constant) integral
$\int_0^1\lambda\,\rd\lambda$ connects the noninteracting reference to the fully
interacting system along the Hellmann--Feynman path and is what turns a bare
diagram into a correlation energy; the \emph{frequency} integral is the
fluctuation--dissipation (spectral) sum over the excitations carried by the
screened line; the \emph{momentum-transfer} integral $\int\rd^3q$ runs over the
wave vector of that line, whose small- and large-$q$ endpoints control the high-
and low-density limits; and the innermost \emph{loop} integral encodes the
occupied-state geometry of the exchange vertex. The technical content of the
paper is that the frequency and loop integrals can be performed analytically---in
closed form for a reduction-compatible single-pole (RC-SP) screened line, and through a closed-form
frequency kernel and a Fermi-surface reduction for the physical case---leaving a
one-dimensional integral whose dependence on the screening scale, and hence on
$r_s$ through the adiabatic-connection and the physical screening, is what we
analyze. We fix a notation tied to this integration and used throughout: an
energy written with a lowercase $\varepsilon$ is taken \emph{before} the
coupling-constant integration and one with a lowercase $e$ \emph{after} it (both
per particle), while an uppercase $E$ is the corresponding spatial total; thus
the fixed-coupling reduced functional is $\varepsilon_{2b}$ and the physical
kite is $e_{2b}(r_s)$, with Onsager value $e_{2x}$ and correlation energy $e_c$
[Eq.~\eqref{eq:hdexp}] (see Sec.~\ref{sec:setting}). Keeping the
adiabatic-connection layer explicit from the start is
important: it is the layer that fixes the absolute magnitude and the low-density
sign change of the contribution, neither of which is visible in a single-coupling
snapshot.

The main contents of this paper are as follows.

First, we study the dynamically screened second-order exchange diagram for a
general single-pole screening and identify the separation condition used by the
one-dimensional reduction (Sec.~\ref{sec:setting}). The essential point is the
separation of the momentum variable $q$ from the scaled frequency variable
$z=\xi/q$. Within the single-pole ansatz, exact multiplicative $q$--$z$ separation
selects the case in which the characteristic screening-frequency scale is
independent of momentum. This is the reference model used here for the analytic
treatment of the dynamically screened exchange diagram. For this model the
contribution of the diagram reduces to the one-dimensional integral
\begin{equation}
  \varepsilon_{2b}^{\RCSP}(\Lambda)=64\pi^{3}A_0\int_0^\infty K_\Lambda(y)\,\Phi(y)\,\rd y,
  \label{eq:master}
\end{equation}
where $\Lambda$ is the single-pole scale, $K_\Lambda$ is the function determined
by the screened interaction (the frequency integral) and $\Phi$ by the geometry
of the occupied states (the momentum integral) (derived in Sec.~\ref{sec:reduction}),
and $A_0=3/(32\pi^5)$ is the overall constant fixed by the Feynman rules [Eq.~\eqref{eq:A0}].

Second, we evaluate in closed form the static limit ($\Lambda\to\infty$) in which
the screened interaction returns to the bare Coulomb interaction. In this limit
the diagram reduces to the bare second-order exchange, and its value is
\begin{align}
  \varepsilon_{2b}^{\RCSP}(\infty)&=64\pi^2A_0\,D_0,
  \label{eq:D0intro}\\
  D_0:=\int_0^\infty\frac{\Phi(y)}{y}\,\rd y
  &=\frac{\pi^{3}}{18}\log2-\frac{\pi}{4}\zeta(3),
  \label{eq:D0def}
\end{align}
which we prove independently in Sec.~\ref{sec:static}. Two facts make this a
genuine check rather than a fit. First, the closed form of $D_0$ is an output of
the reduction with no adjustable constant, and the ratio of its two coefficients,
$-2\pi^2/9$, coincides with the ratio in the Onsager--Mittag--Stephen bare
second-order exchange $e_{2x}=\frac13\log2-\frac{3}{2\pi^2}\zeta(3)$ \cite{OMS};
the diagram itself fixes this ratio, so the reduction reproduces the
Onsager--Mittag--Stephen combination up to an overall factor---and this ratio is
independent of $A_0$, which as an overall constant cannot affect it. Second, the
overall constant $A_0=3/(32\pi^5)$ is fixed independently by the Feynman rules of
Sec.~\ref{sec:setting}, \emph{not} by demanding agreement with the known value;
with that value $64\pi^2A_0D_0$ equals $e_{2x}$ outright, so both the $\log2$ and
the $\zeta(3)$ coefficients come out right simultaneously with no free parameter.
This is therefore a parameter-free confirmation that the reduction yields the
known Onsager result, not a one-constant fit ($D_0$ is moreover proportional to
Glasser's $d=3$ constant \cite{Glasser}; Sec.~\ref{sec:static}).

Third, we analyze the dependence on the single-pole scale $\Lambda$ by the
Mellin--Barnes method and obtain asymptotic formulas in the small-$\Lambda$ and
large-$\Lambda$ limits with explicit remainder estimates
(Sec.~\ref{sec:mb}). Direct numerical evaluation is consistent with these
asymptotic formulas and shows that the normalized contribution is non-monotone
as a function of $\Lambda$, exceeding its static-limit value over an intermediate
range. These asymptotics identify, in closed form, the mechanism---the
poles of the Mellin integrand---that generates the logarithmic terms. The
analysis functions as a transferable correspondence (power $=$ pole location,
logarithm $=$ pole collision, coefficient $=$ residue, remainder $=$ shifted
contour) that Sec.~\ref{sec:physical} reuses to derive both physical density
endpoints, including the low-density $r_s^{-3/4}$ law and its coefficient. The
physical screening carries two scales---$q_{\mathrm{TF}}$, where screening turns
on, and $q_*\propto q_{\mathrm{TF}}^{1/2}$, where it turns off---and this
two-scale structure is what produces the low-density crossover found there.

Fourth, and central to the physical content of the paper, we use the linearity
of the reduced functional in the screened interaction to replace the bare
Coulomb line by the physical static RPA (Lindhard) screening
(Sec.~\ref{sec:physical}). The density then enters \emph{only} through the
Thomas--Fermi screening scale $q_{\mathrm{TF}}^2(r_s)=4\alpha_L r_s/\pi$, a
standard relation of the UEG, and the density dependence of this static-screening reference is
fixed by the endpoint asymptotics of a single geometric kernel: a high-density
$r_s\log r_s$ correction and a low-density $r_s^{-3/4}$ decay whose closed-form
coefficient we derive, with exponents determined by the diagram rather than
chosen. En route the decay passes through an effective $r_s^{-1/2}$ regime in
the window $r_s\lesssim20$---exactly the inverse-square-root form used
empirically to fit the numerically evaluated SOSEX contribution
\cite{Benites2024}, and exactly the window where those data exist. Here $r_s$
enters through the physical screening rather than through an assumed map.

Fifth, we analyze the small-$q$ structure of the physical geometric kernel
$X_c(q)$ in detail (Sec.~\ref{sec:physical}). We obtain its leading coefficient
in closed form, $c_0=\pi(2\log2-1)/8$, by carrying out the Fermi-surface angular
integral explicitly, and we prove that the next ($q^{2}$) coefficient vanishes
identically. The proof rests on three elements: the evenness of $X_c(q)$ in $q$,
a reflection antisymmetry of the $q^{2}$ integrand on the Fermi surface, and a
scaling estimate showing that the near-collinear region contributes only at
$O(q^{3})$. As a consequence the high-density expansion of the screened exchange
carries \emph{no} $r_s^{3/2}$ term, and the leading correction beyond
$r_s\log r_s$ is $r_s^{2}\log r_s$. This provides a diagrammatic justification for
the asymptotic form used in the recent numerical study \cite{Benites2024},
whose small-$r_s$ fit already has exactly this integer-power-times-logarithm form;
a half-integer term that enters the authors' global interpolation is thereby
identified as an interpolation artifact rather than part of the true asymptotics.
The whole construction moreover carries over to arbitrary spin polarization
$\zeta$, and the unscreened diagram is $\zeta$-independent, equal to the
Onsager--Mittag--Stephen value. A dynamical adiabatic-connection evaluation of
the reduced functional, using the physical two-spin Lindhard screening and
normalized only to that bare limit, reproduces the numerically evaluated kite of
Ref.~\cite{Benites2024} at both $\zeta=0$ and $\zeta=1$ to within about
$1$~mRy across the density range---a few percent at high density, where the kite
is largest, including the low-density sign change (Sec.~\ref{sec:staticdyn}).

For orientation, the principal closed-form results of the paper may be collected
in one place. The reduction turns the diagram into a one-dimensional integral
whose static value and small-$q$ kernel are known exactly,
\begin{subequations}\label{eq:results}
\begin{align}
  D_0&=\frac{\pi^{3}}{18}\log2-\frac{\pi}{4}\zeta(3)=0.2499018971\ldots,
  \label{eq:results-D0}\\
  X_c(q)&=c_0\,q+c_2\,q^{3}+O(q^{4}),\notag\\
  &\qquad c_0=\frac{\pi(2\log2-1)}{8},\quad c_1=0,
  \label{eq:results-Xc}
\end{align}
\end{subequations}
and, once the physical RPA screening is inserted, the density dependence is fixed
at both ends by the endpoint structure of a single geometric kernel,
\begin{subequations}\label{eq:asympsummary}
\begin{align}
  e_{2b}(r_s)&\sim
  e_{2x}+C_{\log}^{\mathrm{dyn}}\,r_s\log r_s+C_2^{\mathrm{dyn}}\,r_s\notag\\
  &\qquad+C_3^{\mathrm{dyn}}\,r_s^{2}\log r_s+\cdots
  \quad(r_s\to0),
  \label{eq:asympsummary-hd}\\
  \varepsilon_{2b}^{\mathrm{stat}}(r_s)&\sim\frac{6}{\pi^{3}}\,
  \frac{A_{3/4}}{r_s^{3/4}}
  \quad(r_s\to\infty),
  \label{eq:asympsummary-ld}
\end{align}
\end{subequations}
where the small-$r_s$ line is the coupling-integrated dynamical kite $e_{2b}$
and the large-$r_s$ line is the static reference $\varepsilon_{2b}^{\mathrm{stat}}$
(the two limits are computed from different objects: the dynamical energy near
high density and the static reference near low density, for the reasons given
below). Here $e_{2x}=48.36\,\mathrm{mRy}$ is the $\zeta$-independent Onsager
onset, there is no half-integer $r_s^{3/2}$ term, and $r_s^{2}\log r_s$ is the
first correction beyond it. The static reference $\varepsilon_{2b}^{\mathrm{stat}}$
has closed-form low-density coefficient
$A_{3/4}=\frac{\pi^{2}}{9\sqrt2}(3/4)^{3/4}(\pi/4\alpha_L)^{3/4}=0.8501$, and it
passes through an effective $r_s^{-1/2}$ regime in the window $r_s\lesssim20$
(Sec.~\ref{sec:lowdens}). The full dynamical kite instead changes sign near
$r_s\simeq8$--$10$, which the dynamical evaluation reproduces. Equations \eqref{eq:results}--\eqref{eq:asympsummary}
are what the reduction and its asymptotic analysis make explicit; the body of the
paper derives them and shows that a dynamical evaluation built on the same
reduction reproduces the numerically evaluated kite across the density range.

% =====================================================================
\section{Screened second-order exchange and the setup}\label{sec:setting}
% =====================================================================
In this section we define screened second-order exchange at finite temperature,
rewrite it as a zero-temperature imaginary-axis integral, isolate the
$q$--$z$ separability condition used by the reduction, and discuss the status and
scope of this model quantity. We first give the definition and the overall
constant $A_0$. We define screened second-order exchange as the exchange-type
second-order diagram in which one of the two interaction lines is replaced by a
frequency-dependent screened interaction. Using the bosonic frequency
$\nu_\ell=2\pi\ell/\beta$, we consider the three-dimensional unpolarized UEG
and nondimensionalize momenta by $k_F$. The constant $A_0$ collects the spin
degeneracy, the $(2\pi)^{-3}$ of each momentum loop, the Coulomb coupling
$4\pi e^2$, the density and particle-number normalization, and the convention
factor between the grand potential and the energy. In the per-particle Rydberg
normalization used by Ref.~\cite{Benites2024} ($e^2=2$, energy per electron,
momenta in units of $k_F$, unpolarized density $n=k_F^3/3\pi^2$) these factors
combine to the fixed value
\begin{equation}
\label{eq:A0}
  A_0=\frac{3}{32\pi^{5}}=3.0635\times10^{-4},
\end{equation}
which we insert from the outset; it is \emph{not} a free parameter.
Explicitly: a spin factor $2$ for the single fermion loop of the exchange
diagram; $(4\pi e^{2})^{2}=(8\pi)^{2}$ for the two interaction lines (whose
$1/q^{2}$ and $1/|\vp-\vk|^{2}$ are kept explicit below); $(2\pi)^{-9}$ for the
three momentum loops; $3\pi^{2}$ from the division by the density
$n=k_F^{3}/3\pi^{2}$; and the rational symmetry factor $\tfrac18$ of the
second-order exchange diagram. (The reduced functional $\varepsilon_{2b}[D]$ is
the diagram at a fixed screened line, evaluated at full coupling; it does
\emph{not} carry the coupling-constant integration of \eqref{eq:multiint}, which
enters only when the physical, $\lambda$-dependent screening is inserted in
Sec.~\ref{sec:staticdyn}.) The product is
$2\,(8\pi)^{2}\,(2\pi)^{-9}\,3\pi^{2}\cdot\tfrac18=3/(32\pi^{5})$;
independently, the exact static limit (Theorem~\ref{thm:coeff}) verifies this
value against the Onsager result, both transcendental coefficients included.
With this value,
\begin{equation}
  \Omega^{(2b)}_T[D]=A_0\,\frac{1}{\beta}\sum_{\nu_\ell}\int \rd^3q\,
  D(\vq,\ii\nu_\ell)\,X_T(\vq,\ii\nu_\ell),
  \label{eq:Omega}
\end{equation}
The chemical potential is denoted by \(\mu\), and
\(n_F(x)=(\ee^{\beta x}+1)^{-1}\) is evaluated at the one-particle energy
measured from \(\mu\). Thus
\begin{align}
  X_T(\vq,\ii\nu_\ell)&=\int \rd^3p\,\rd^3k\,\frac{Q_T(\vp;\vq,\ii\nu_\ell)\,Q_T(\vk;\vq,\ii\nu_\ell)}{|\vp-\vk|^2},
  \label{eq:X}\\
  Q_T(\vp;\vq,\ii\nu_\ell)&=\frac{n_F(\varepsilon_{\vp}-\mu)-n_F(\varepsilon_{\vp+\vq}-\mu)}
  {\ii\nu_\ell-(\varepsilon_{\vp+\vq}-\varepsilon_{\vp})}.
  \label{eq:QT}
\end{align}
Here the chemical potential cancels from the denominator.  In the
zero-temperature fixed-density limit \(\mu\to\varepsilon_F\), the numerator
therefore gives the occupation factor \(\Theta(1-|\vp|)\) after momenta are
scaled by \(k_F\).
We write the zero-temperature limit as
$\varepsilon_{2b}[D]:=\lim_{T\to0}\Omega^{(2b)}_T[D]/N$. The subscript $2b$ labels
the screened second-order exchange (the ``kite''); we adopt the notation of
Ref.~\cite{Benites2024} to ease comparison. Throughout, $\varepsilon_{2b}[D]$ is
the value of this functional on a screened line $D$, evaluated at full coupling
and \emph{without} any coupling-constant integration. Three specializations
recur. When the \emph{static} physical RPA screening at density $r_s$ is inserted
we write the static reference $\varepsilon_{2b}^{\mathrm{stat}}(r_s):=
\varepsilon_{2b}[D_{\mathrm{stat}}(r_s)]$; the single-pole reference model is
$\varepsilon_{2b}^{\RCSP}(\Lambda)$; and the bare (unscreened) line reproduces the
Onsager--Mittag--Stephen value, $\varepsilon_{2b}[v]=e_{2x}$. The physical
correlation-energy contribution---the ``kite'' of Ref.~\cite{Benites2024}---is
instead the \emph{coupling-integrated} quantity
$e_{2b}(r_s)=\int_0^1\!\rd\lambda\,(\cdots)\,
\varepsilon_{2b}[D_\lambda(r_s)]$ of \eqref{eq:multiint}, in which the screened
line $D_\lambda$ is taken at coupling $\lambda$; it is evaluated in
Sec.~\ref{sec:staticdyn}. We use the case and font of the symbol to encode two independent
distinctions throughout, the adiabatic-connection (coupling-constant) integration
and the extensivity. A lowercase Greek $\varepsilon$ denotes a per-particle
energy \emph{before} the coupling-constant integration; a lowercase Latin $e$ a
per-particle energy \emph{after} it; and an uppercase $E$ the corresponding
spatially integrated (total) energy after the coupling integration (a
before-integration total, were one needed, would be written $\mathcal{E}$). Thus
$\varepsilon_{2b}[D]$, $\varepsilon_{2b}^{\RCSP}(\Lambda)$, and
$\varepsilon_{2b}^{\mathrm{stat}}(r_s)$ are the pre-integration reduced
functional and its references (the bare limit being
$\varepsilon_{2b}[v]=e_{2x}$); the coupling-integrated physical kite is
$e_{2b}(r_s)$, and the Onsager--Mittag--Stephen value is $e_{2x}$. In the same
scheme the correlation energy per particle is $e_c$, with high-density
coefficients $e_0,e_1$ [Eq.~\eqref{eq:hdexp}], and its spatial total is
$E_c=\int n\,e_c$.
Sections \ref{sec:physical}\,--\,\ref{sec:friedel}
analyze the static reference $\varepsilon_{2b}^{\mathrm{stat}}$; the
coupling-integrated $e_{2b}(r_s)$, which differs from it by
retardation and changes sign at low density, appears in
Sec.~\ref{sec:staticdyn}.

We next move the finite-temperature Matsubara sum to a zero-temperature
imaginary-axis integral. Writing the bosonic Matsubara sum as a contour 
integral, taking $n_B(\omega)\to-\Theta(-\omega)$ as $T\to0$ and Wick-rotating
the negative real axis to the imaginary axis $\omega=\ii\xi$ (with
$\xi\in\mathbb{R}$ the resulting continuous imaginary-axis frequency replacing
$\nu_\ell$), the UEG parity 
$\Delta(-\vp-\vq;\vq)=-\Delta(\vp;\vq)$, $\delta(-\vp-\vq;\vq)=-\delta(\vp;\vq)$
together with $D(\vq,-\ii\xi)=D(\vq,\ii\xi)$ imply $X(\vq,-\xi)=X(\vq,\xi)$.
Thus, for the even integrand considered here,
\begin{equation*}
  \frac1\beta\sum_{\nu_\ell} \longrightarrow \int_0^\infty\frac{\rd\xi}{\pi}.
\end{equation*}
With $k_F=1$, $\varepsilon_{\vp+\vq}-\varepsilon_{\vp}=\vp\cdot\vq+q^2/2$,
$\Delta(\vp;\vq)=\Theta(1-|\vp|)-\Theta(1-|\vp+\vq|)$ and
$\delta(\vp;\vq)=\vp\cdot\vq+q^2/2$, we obtain
\begin{widetext}
\begin{equation}
\begin{aligned}
  \mathcal{K}[D]:=\varepsilon_{2b}[D]
  &=A_0\int_0^\infty\frac{\rd\xi}{\pi}\int\rd^3q\,D(\vq,\ii\xi)\,X(\vq,\xi),\\
  X(\vq,\xi)&=\int\rd^3p\,\rd^3k\,\frac{1}{|\vp-\vk|^2}\,
  \frac{\Delta(\vp;\vq)\Delta(\vk;\vq)}{[\ii\xi-\delta(\vp;\vq)][\ii\xi-\delta(\vk;\vq)]}.
\end{aligned}
  \label{eq:KD}
\end{equation}
\end{widetext}

We now consider general single-pole screening and isolate precisely the
separability statement used in the reduction. For the single-pole family
\begin{equation*}
  D_{1p}(\vq,\ii qz)=D_q\,\frac{u(q)^2}{u(q)^2+z^2},\qquad u(q)=\omega_q/q,
\end{equation*}
the reduction route used below becomes one-dimensional only after the screened
interaction has been put into the multiplicatively separated form
$D(\vq,\ii qz)=D_q\,w(z)$. For clarity, we spell out the
short algebra behind this restriction. It concerns exact interaction-level
separation only; it is not a general impossibility theorem for all possible
indirect one-dimensional representations of special nonseparable models after
the remaining integrations have been carried out. Suppose that, on an interval
$I\subset(0,\infty)$ where $D_q\ne0$ and $u(q)>0$, one requires
\begin{equation*}
  D_{1p}(q,\ii qz)=D_q\,w(z)
\end{equation*}
to hold for every $q\in I$ and for more than one value of $z$. Then, for any
$q_1,q_2\in I$, the ratio $D_{1p}(q_1,\ii qz)/D_{1p}(q_2,\ii qz)$ must be
independent of $z$. Apart from a $q$-dependent prefactor this ratio contains
\begin{equation*}
  \frac{u(q_1)^2}{u(q_2)^2}\,
  \frac{u(q_2)^2+z^2}{u(q_1)^2+z^2}.
\end{equation*}
Independence of $z$ at two distinct values of $z$ forces
$u(q_1)^2=u(q_2)^2$, and hence $u(q_1)=u(q_2)$ because $u>0$. Thus the
separation mechanism used in this paper selects a constant single-pole scale
on $I$. Conversely, if $u(q)=\Lambda$, the separated form
$D_{1p}(q,\ii qz)=D_q\Lambda^2/(\Lambda^2+z^2)$ is immediate.

We therefore use the constant-scale single-pole model---the
reduction-compatible single-pole (RC-SP) model introduced in
Sec.~\ref{sec:intro}, so named because it is the separable subfamily on which the
reduction goes through---
\begin{equation}
  D_{\RCSP}(\vq,\ii qz)=\frac{1}{q^2}\,\frac{\Lambda^2}{\Lambda^2+z^2}
  \label{eq:DRCSP}
\end{equation}
as the separable subfamily of the single-pole ansatz used for the
one-dimensional kernel in this paper. In what follows we write the zero-temperature limit for
this model---the left-hand side $\varepsilon_{2b}^{\RCSP}(\Lambda)$ of \eqref{eq:master}
in the Introduction---as $\varepsilon_{2b}^{\RCSP}(\Lambda)$. Because
$u(q)\sim\omega_p/q\to\infty$ at long wavelength, the finite-$\Lambda$ expansion
is not uniform in the infrared, so RC-SP is not a surrogate for a realistic
plasmon dispersion but an analytically tractable reference model. For general
single-pole screening, the preceding formulae still give a useful route to
finite-rank separable approximations built from the RC-SP kernel family
(Sec.~\ref{sec:summary}), but the exact one-variable kernel derived below is a
property of the separated RC-SP case rather than a general classification
of all possible reductions of screened interactions.

\paragraph{Spectral (Lehmann) decomposition.} The RC-SP model
\eqref{eq:DRCSP} makes contact with a physical dynamical screening through a
spectral (Lehmann) representation, which is what gives the $\Lambda$-analysis
below its wider role. The imaginary-axis RPA screened line can be written as the
bare Coulomb interaction minus a nonnegative spectral correction,
\begin{equation}
  D_{\mathrm{RPA}}(\vq,\ii\xi)=\frac{1}{q^{2}}
  -\int_0^\infty \rd\omega\;\frac{\rho(q,\omega)}{\xi^{2}+\omega^{2}},
  \qquad \rho(q,\omega)\ge0,
  \label{eq:spectral}
\end{equation}
with a nonnegative weight $\rho(q,\omega)$ fixed by the dielectric function (for
a sharp plasmon, $\rho(q,\omega)\propto\delta(\omega-\omega_q)$). Written this
way the line correctly returns to the bare interaction at high frequency,
$D_{\mathrm{RPA}}\to1/q^{2}$ as $\xi\to\infty$, and is screened
($D_{\mathrm{RPA}}<1/q^{2}$) at finite frequency. By the linearity of
$\mathcal{K}[D]$ in $D$ [Eq.~\eqref{eq:KD}], the screened-exchange contribution
splits into the bare (Onsager) part and a spectral correction,
\begin{align}
  \mathcal{K}[D_{\mathrm{RPA}}]&=\mathcal{K}[v]
  -\int_0^\infty \rd\omega\;\rho(q,\omega)\,\mathcal{K}_\omega,\notag\\
  \mathcal{K}[v]&=64\pi^{2}A_0 D_0,
  \label{eq:superpose}
\end{align}
where $\mathcal{K}_\omega$ is the contribution of a single correction pole
$1/(\xi^{2}+\omega^{2})$; per momentum sample its frequency integral is the
closed-form kernel $J(\omega;a,b)$ of Appendix~\ref{app:Jkernel}, used in the
dynamical evaluation of Sec.~\ref{sec:staticdyn}.

Each correction pole enters the reduced problem only through the scaled variable
$u=\omega/q$, exactly as the constant scale $\Lambda$ of the RC-SP model
\eqref{eq:DRCSP} does. The RC-SP model is precisely the analytically reducible
case in which this scale is momentum-independent (a constant-$u$ block); a
physical screened interaction has a $q$-dependent spectral distribution
$\rho(q,\omega)$ and is therefore \emph{not} an exact finite sum of RC-SP kernels
unless an additional finite-rank (multipole) separable approximation is
introduced. In that approximation the correction becomes a weighted sum of
exactly reducible RC-SP contributions, and the family
$\{\varepsilon_{2b}^{\RCSP}(\Lambda)\}_{\Lambda>0}$---whose $\Lambda$-dependence
the Mellin analysis of Sec.~\ref{sec:mb} determines in closed form---supplies its
building blocks. The physical static-RPA treatment of Sec.~\ref{sec:physical} is
the complementary special case, in which the frequency integral is carried out
first (the static limit) and the remaining momentum kernel $X_c(q)$ is used
directly.

Finally, we state the status and scope of what we compute. The object is the
screened-second-order-exchange (``kite'') contribution to the correlation energy
of the uniform electron gas---one well-defined term in the beyond-RPA
adiabatic-connection expansion. The finite-temperature formalism delivers this
term first as a contribution to the grand potential; the coupling-constant
integration of \eqref{eq:multiint} then turns it into the corresponding
ground-state energy contribution. That contribution is the physical kite
tabulated by Ref.~\cite{Benites2024}, and our evaluation in
Sec.~\ref{sec:staticdyn} reproduces it. It is one term, not the full correlation
energy, which in addition contains the ring (RPA) and second-order-direct
contributions; accordingly we do not attach to it a completed Luttinger--Ward
\cite{LuttingerWard} or Almbladh--von Barth--van Leeuwen \cite{ABL} parent
functional, nor recompute the RPA correlation from the full
fluctuation--dissipation construction \cite{Gorling2019}. These are properties
of the surrounding approximation, not of the contribution evaluated here.

The RC-SP model and the physical screening play distinct roles. RC-SP is not a
realistic plasmon-pole approximation---its infrared behavior $u(q)\to\infty$
rules that out---but an analytically tractable reference for which the diagram
reduces \emph{exactly} to one dimension; it isolates the screening-scale
structure of the diagram, and the exact static limit fixes the overall
normalization to the Onsager--Mittag--Stephen value. The physical results instead
use the static RPA (Lindhard) screening, with $r_s$ entering through the
Thomas--Fermi scale $q_{\mathrm{TF}}^2\propto r_s$, and, for the dynamical
evaluation, the closed-form frequency kernel $J(\omega;a,b)$ of
Appendix~\ref{app:Jkernel}.

What is thereby established is (i) the exact reduction---to a two-variable
cosine-difference transform for general screening, and to a one-variable kernel
for the separated-Coulomb RC-SP case (Secs.~\ref{sec:setting},
\ref{sec:reduction}); (ii) the closed form of the static factor $D_0$
(Proposition~\ref{prop:D0}); (iii) the two-limit asymptotics of the screening
factor with explicit remainder estimates (Sec.~\ref{sec:asym}); (iv) the
closed-form dynamical frequency kernel $J(\omega;a,b)$ and, built on it, a dynamical
adiabatic-connection evaluation that reproduces the kite of
Ref.~\cite{Benites2024} at both $\zeta=0$ and $\zeta=1$, sign change included
(Sec.~\ref{sec:staticdyn}); and (v) the high- and low-density forms of the
physical density dependence, fixed by the endpoint analysis of the geometric
kernel $X_c(q)$ and confirmed numerically. The one reduction not achieved is that
of the physical (non-separable) Lindhard case to a single closed-form
one-dimensional integral: there the frequency integral is closed through
$J(\omega)$, while the remaining momentum and Fermi-surface integrals are evaluated
numerically.

% =====================================================================
\section{Reduction to a one-dimensional integral for the RC-SP model}\label{sec:reduction}
% =====================================================================
Starting from \eqref{eq:KD}, we first derive a reduced formula valid for a
general screened interaction in terms of a two-variable cosine-difference
transform. We then specialize to the separated Coulomb RC-SP model, for which
this formula becomes the one-dimensional integral \eqref{eq:master}. We keep the
overall numerical factors explicit.

We first set $\xi=qz$. We then use two standard integral representations: the
Fourier representation of the Coulomb kernel and a Schwinger parameter
representation for each resolvent. Writing $\vq=q\hq$, $\delta=q(\vp\cdot\hq+q/2)$ and setting
$\xi=qz$ ($\rd\xi=q\rd z$), each resolvent produces a factor $1/q$; together
with $\rd^3q=q^2\rd q\,\rd\Omega_{\hq}$ the power of $q$ is
$q^2\cdot q\cdot q^{-2}=q$. The coefficient remains $A_0$:
\begin{equation}
  \mathcal{K}[D]=A_0\int_0^\infty\frac{\rd z}{\pi}\int_{S^2}\rd\Omega_{\hq}\int_0^\infty \rd q\,q\,
  D(\vq,\ii qz)\,\tilde X(q,\hq,z).
  \label{eq:Kqz}
\end{equation}
Here $\tilde X(q,\hq,z)$ is the momentum double integral $X$ of \eqref{eq:KD}
after the same substitution, i.e.\ with each resolvent denominator written as
$\ii z-(\vp\cdot\hq+q/2)$ (the two extracted factors of $1/q$ being already
accounted for above).

\smallskip\noindent We decompose the Coulomb kernel by the identity
$\dfrac{1}{|\vp-\vk|^2}=\displaystyle\int\rd^3r\,F_C(\vr)\ee^{\ii(\vp-\vk)\cdot\vr}$,
$F_C(\vr)=\dfrac{1}{4\pi|\vr|}$, and in one block use the Schwinger
representation together with the Fourier transform of the occupied sphere
$F_B(\rho)=4\pi j_1(\rho)/\rho=(2\pi)^{3/2}J_{3/2}(\rho)/\rho^{3/2}$ ($j_1$ the
spherical Bessel function of the first kind, $J_{3/2}$ the Bessel function).
The product-to-sum identity converts the two sine factors from the two blocks
into a difference of cosines. Keeping this cosine-difference form is important:
in the Coulomb case the individual cosine transforms need not converge, whereas
their difference does. Folding the $q$-integral into
$\Psi_D(s_1,s_2;z):=\int_0^\infty\rd q\,q\,D(q,\ii qz)[\cos(qs_1)-\cos(qs_2)]$ gives the
following expression, with the displayed coefficient including the
product-to-sum factor and the $1/\pi$ from \eqref{eq:Kqz}:
\begin{widetext}
\begin{equation}
\begin{aligned}
  \mathcal{K}[D]&=\frac{2A_0}{\pi}\int_0^\infty\!\rd z\int_{S^2}\!\rd\Omega_{\hq}\int\!\rd^3r\,F_C(\vr)
  \int_0^\infty\!\rd t_1\rd t_2\,\ee^{-z(t_1+t_2)} \\
  &\quad\times
  \Psi_D\!\left(r_k+\frac{t_1-t_2}{2},\frac{t_1+t_2}{2};z\right)
  F_B(\vr+t_1\hq)F_B(\vr-t_2\hq).
\end{aligned}
\end{equation}
\end{widetext}

We next perform the central affine transformation. With $y=t_1+t_2$,
$x=(t_2-t_1)/(t_1+t_2)$, $\vr=y(\vR+\tfrac x2\hq)$ we have
$\rd t_1\rd t_2=\tfrac y2\rd x\rd y$, $\rd^3r=y^3\rd^3R$,
$a=|\vR+\tfrac12\hq|$, $b=|\vR-\tfrac12\hq|$, $r_k+\tfrac{t_1-t_2}{2}=yR_k$,
$\tfrac{t_1+t_2}{2}=\tfrac y2$. Since
$F_BF_B=(2\pi)^3J_{3/2}(ay)J_{3/2}(by)/(y^3a^{3/2}b^{3/2})$, the $y^{-3}$ cancels
against $\rd^3r=y^3\rd^3R$ and the coefficient becomes
$\dfrac{2A_0}{\pi}\times(2\pi)^3\times\dfrac12=8\pi^2A_0$:
\begin{widetext}
\begin{equation}
\begin{aligned}
  \mathcal{K}[D]=8\pi^2A_0\int_0^\infty\!\rd z\int_{S^2}\!\rd\Omega_{\hq}\int\!\rd^3R\int_{-1}^1\!\rd x\int_0^\infty\!\rd y\;
  y\,\ee^{-zy}\,&F_C\!\Big(y\big(\vR+\tfrac x2\hq\big)\Big)\\
  &\times\Psi_D\!\Big(yR_k,\tfrac y2;z\Big)\,
  \frac{J_{3/2}(ay)J_{3/2}(by)}{a^{3/2}b^{3/2}}.
\end{aligned}
  \label{eq:Kaffine}
\end{equation}
\end{widetext}

We then use the RC-SP separation and a Frullani integral. For RC-SP we
separate $\Psi_D=w_\Lambda(z)\Psi_{D_q}$, $w_\Lambda(z)=\Lambda^2/(z^2+\Lambda^2)$. Defining the
one-variable kernel
\begin{align}
  K_\Lambda(y)&:=\frac{1}{\pi}\int_0^\infty w_\Lambda(z)\ee^{-zy}\rd z=\frac{1}{\pi y}\Xi(\Lambda y),
  \label{eq:KLambda}\\
  \Xi(z)&=z\int_0^\infty\frac{\ee^{-zt}}{1+t^2}\rd t.
  \label{eq:Xi}
\end{align}
and applying $\int_0^\infty w_\Lambda(z)\ee^{-zy}\rd z=\pi K_\Lambda(y)$ to the
$z$-integral of \eqref{eq:Kaffine} makes the coefficient
$8\pi^2A_0\times\pi=8\pi^3A_0$. For the Coulomb part, the Frullani formula
$\int_0^\infty\frac{\cos aq-\cos bq}{q}\rd q=\log\frac ba$ gives
$\Psi_{D_q}(yR_k,\tfrac y2)=\log\frac{1}{2|R_k|}$, the $y$ in
$F_C(y(\cdots))=\frac{1}{4\pi y|\vR+\frac x2\hq|}$ cancels against
$\int\rd y\,y$, and with $\int_{S^2}\rd\Omega_{\hq}=4\pi$, $\hq\to\he$ the
coefficient factor is $8\pi^3A_0\times4\pi\times\frac1{4\pi}=8\pi^3A_0$:
\begin{widetext}
\begin{equation}
  \varepsilon_{2b}^{\RCSP}(\Lambda)=8\pi^3A_0\int\!\rd^3R\int_{-1}^1\!\rd x\int_0^\infty\!\rd y\,
  K_\Lambda(y)\,\frac{\log(1/(2|R_k|))}{|\vR+\tfrac x2\he|}\,\frac{J_{3/2}(ay)J_{3/2}(by)}{a^{3/2}b^{3/2}}.
  \label{eq:specialized}
\end{equation}
\end{widetext}

Finally we extract the geometric kernel $\Phi(y)$ and obtain the total
coefficient $64\pi^3A_0$. In cylindrical coordinates about the $\he$ axis, write
$R_\parallel=\vR\cdot\he$. The $x$-integral gives
$\int_{-1}^1\rd x/|\vR+\tfrac x2\he|=2\log\frac{R_\parallel+1/2+a}{R_\parallel-1/2+b}$.
In prolate-spheroidal coordinates $u=a+b\in[1,\infty)$, $v=a-b\in[-1,1]$
($\rd^3R=\frac18(u^2-v^2)\rd u\rd v\rd\phi$, $R_\parallel=\tfrac{uv}2$) we have
$2\log\frac{R_\parallel+1/2+a}{R_\parallel-1/2+b}=2L(u)$,
$L(u)=\log\frac{u+1}{u-1}$, and
$\log\frac1{2|R_\parallel|}=\log\frac1{|uv|}$. From
$J_{3/2}(w)=\sqrt{2w/\pi}j_1(w)$ we get
$\frac{J_{3/2}(ay)J_{3/2}(by)}{a^{3/2}b^{3/2}}=\frac{2y}{\pi ab}j_1 j_1$,
$ab=(u^2-v^2)/4$, so $(u^2-v^2)$ cancels. The $\phi$-integral gives $2\pi$ and,
being even in $v$, we fold onto $v\in[0,1]$. Collecting the numerical factors
explicitly, $2\pi\cdot\frac18\cdot2\cdot\frac8\pi\cdot2=8$, so that
$\Phi^{\mathrm{corr}}_0(y)=8\Phi(y)$, with
\begin{align}
  \Phi(y)&:=y\int_1^\infty\!\rd u\,L(u)\int_0^1\!\rd v\,\ell(u,v)\,
  j_1\!\Big(\tfrac{u+v}{2}y\Big)j_1\!\Big(\tfrac{u-v}{2}y\Big),
  \label{eq:Phi}\\
  \ell(u,v)&=\log\frac{1}{uv}.
  \label{eq:elldef}
\end{align}
Since $\varepsilon_{2b}^{\RCSP}=8\pi^3A_0\int K_\Lambda\Phi^{\mathrm{corr}}_0\rd y$, the
total coefficient is $8\pi^3A_0\times8=64\pi^3A_0$, giving \eqref{eq:master}.
For the Coulomb RC-SP model, this reduces the diagram \eqref{eq:KD} to the
one-dimensional integral \eqref{eq:master}.

% =====================================================================
\section{Static limit and agreement with bare second-order exchange}\label{sec:static}
% =====================================================================
From $\Xi(z)\to1$ ($z\to\infty$) we have $K_\Lambda(y)\to1/(\pi y)$, hence
$\varepsilon_{2b}^{\RCSP}(\infty)=64\pi^2A_0D_0$,
$D_0=\int_0^\infty\Phi(y)/y\,\rd y$. In the static limit both interaction lines
become bare Coulomb, and the diagram reduces exactly to the topology of the
bare second-order exchange.

\begin{proposition}[Closed form of the static geometric factor]\label{prop:D0}
For $\Phi$ of \eqref{eq:Phi},
$D_0=\dfrac{\pi^{3}}{18}\log2-\dfrac{\pi}{4}\zeta(3)=-\dfrac{\pi}{6}G(3)=0.2499018971\ldots$,
with $G(3)=\dfrac32\zeta(3)-\dfrac{\pi^{2}}{3}\log2$.
\end{proposition}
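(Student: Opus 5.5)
The plan is to interchange the $y$-integration with the $(u,v)$-integration in $D_0=\int_0^\infty\Phi(y)\,y^{-1}\,\rd y$, so that the explicit factor $y$ in \eqref{eq:Phi} cancels the $1/y$. This leaves
\begin{equation*}
  D_0=\int_1^\infty\!\rd u\,L(u)\int_0^1\!\rd v\,\ell(u,v)\,I(u,v),\qquad
  I(u,v)=\int_0^\infty j_1(\alpha y)\,j_1(\beta y)\,\rd y,
\end{equation*}
with $\alpha=(u+v)/2$ and $\beta=(u-v)/2$, so that $0\le\beta\le\alpha$ on the whole domain. The inner integral $I$ is a discontinuous Weber--Schafheitlin integral. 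Writing $j_1(w)=\sqrt{\pi/(2w)}\,J_{3/2}(w)$ and using $\int_0^\infty J_{3/2}(\alpha y)J_{3/2}(\beta y)\,y^{-1}\rd y=\tfrac13(\beta/\alpha)^{3/2}$ for $\beta\le\alpha$, I expect
\begin{equation*}
  I(u,v)=\frac{\pi\beta}{6\alpha^{2}}=\frac{\pi}{3}\,\frac{u-v}{(u+v)^{2}} .
\end{equation*}
This reduces $D_0$ to the elementary double integral
\begin{equation*}
  D_0=\frac{\pi}{3}\int_1^\infty\!\rd u\,\log\frac{u+1}{u-1}\int_0^1\!\rd v\,\log\frac{1}{uv}\,\frac{u-v}{(u+v)^2}.
\end{equation*}
The prefactor $\pi/3$ already matches the $-\tfrac{\pi}{6}G(3)$ form up to a rational factor of $2$, which the remaining integral must supply.

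The interchange must be justified, because $\Phi(y)$ is only conditionally integrable at large $y$: the product $j_1 j_1$ oscillates, and the $\ell$, $L$ factors carry logarithmic singularities. I would insert an Abel regulator $\ee^{-\epsilon y}$, which is natural here since it is exactly $K_\Lambda$-type damping. For $\epsilon>0$ Fubini applies, because the integrand is bounded by $C\,|L(u)|\,|\ell(u,v)|/(\alpha\beta)$ times an integrable factor in $y$. For each fixed $(u,v)$ with $v<u$, the regulated inner integral converges to $I(u,v)$ as $\epsilon\to0$. Dominated convergence in $(u,v)$ then needs a uniform bound $|I_\epsilon(u,v)|\lesssim 1/\alpha$, which holds because $j_1(\beta y)$ is bounded by $\min(\beta y,1)$. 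The bound is integrable against $L(u)\ell(u,v)$: at large $u$ we have $L\sim 2/u$ and $\ell\sim\log u$, giving $u^{-2}\log u$, and near $u=1$ and $v=0$ the singularities are only logarithmic. On the left side, the limit $\epsilon\to0$ follows because $\Phi(y)/y$ is integrable at $0$ (since $\Phi(y)=O(y^3)$ there) and is an Abel-summable oscillatory tail at infinity.

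For the elementary integral, I would first do the $v$-integral. Split $\log\frac1{uv}=-\log u-\log v$ and write $\frac{u-v}{(u+v)^2}=\frac{2u}{(u+v)^2}-\frac{1}{u+v}$. The pieces give rational functions of $u$, the logarithms $\log(1+u)$ and $\log u$, and one dilogarithm $\Li_2(-1/u)$ from $\int_0^1\log v/(u+v)\,\rd v$. Next substitute $u=1/t$ with $t\in(0,1]$, so that $L=\log\frac{1+t}{1-t}$ and all remaining terms become logarithms and dilogarithms of $\pm t$ with weights $t^{-1}$ or $(1+t)^{-2}$. The last step is the weight-3 integral over $t\in[0,1]$. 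I would expand $\log\frac{1+t}{1-t}=2\sum_k t^{2k+1}/(2k+1)$, or integrate by parts to convert everything into standard integrals such as $\int_0^1\log(1\pm t)\log t\,t^{-1}\rd t$ and $\int_0^1\Li_2(-t)\,t^{-1}\rd t$. These reduce to the standard constants $\zeta(3)$, $\pi^2\log2$ and $\log^3 2$ through $\Li_3(1/2)$-free identities, and the $\log^3 2$ and $\Li_3(1/2)$ contributions must cancel.

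I expect this final polylogarithmic bookkeeping to be the main obstacle. It is where the claimed cancellation has to happen, leaving exactly $\frac{\pi^3}{18}\log2-\frac{\pi}{4}\zeta(3)$. As a safeguard I would check the intermediate one-dimensional $u$-integral numerically against $0.2499018971$ before simplifying, and I would verify the Weber--Schafheitlin value at $\beta=\alpha$ separately, since that boundary $v=0$ sits at the discontinuity.
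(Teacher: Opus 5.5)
Your plan is the paper's proof: the Weber--Schafheitlin value $\int_0^\infty j_1(\alpha y)j_1(\beta y)\,\rd y=\pi\beta/(6\alpha^2)$ reduces $D_0$ to $\frac{\pi}{3}\int_1^\infty L(u)h(u)\,\rd u$, with $h$ closing through $\Li_2(-1/u)$. After $u=1/x$, the paper organizes the integrations by parts around $\Psi(x)=[\log x\log(1+x)-\Li_2(-x)]/x$ to get $\int_1^\infty Lh\,\rd u=\frac{\pi^2}{6}\log2-\frac34\zeta(3)=-G(3)/2$, which is exactly the factor you anticipate.

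Two corrections to your justification:
\begin{enumerate}
\item At small $y$, $\Phi(y)$ is not $O(y^3)$ but $\frac12 y\log y+b_0y+\cdots$, because the region $u\sim 1/y$ defeats the naive expansion of $j_1$. This is still integrable against $1/y$, so your conclusion survives.
\item No Abel regulator is needed. Applying $\lvert j_1(z)\rvert\le C\min(z,1/z)$ to \emph{both} factors gives $\int_0^\infty\lvert j_1(\alpha y)j_1(\beta y)\rvert\,\rd y\le C/\alpha$ uniformly, so Tonelli applies to the absolute triple integral directly. With only $\min(\beta y,1)$ for the second factor, as you wrote, the tail $y>1/\beta$ is not uniformly controlled.
\end{enumerate}
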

\noindent{}The proof does not rely on Glasser's evaluation and is
self-contained: after the $y$-integral is
performed, the remaining integrals are reduced to logarithmic, dilogarithmic and
unit-interval logarithmic and polylogarithmic integrals. The constants needed for this
reduction follow from elementary series and standard polylogarithm identities
rather than from external integral tables; the full computation is recorded in
the companion code package. The argument has three steps. (i) Performing the $y$-integral with the
Weber--Schafheitlin-type formula
$\int_0^\infty j_1(ay)j_1(by)\rd y=\frac{\pi b}{6a^2}$ ($a\ge b$, $j_1$ the
spherical Bessel function), the inner $v$-integral closes in terms of $\Li_2$
through an elementary function $h(u)$, giving
$D_0=\frac\pi3\int_1^\infty L(u)h(u)\rd u$. (ii) The substitution $u=1/x$ maps the
remaining integral to the unit interval. Reorganizing it through
$\Psi(x)=\big[\log x\log(1+x)-\Li_2(-x)\big]/x$ makes the integrations by parts
well behaved at the endpoints and gives
$\int_1^\infty L(u)h(u)\rd u=\frac{\pi^2}{6}\log2-\frac34\zeta(3)$. (iii) Hence
$D_0=\frac{\pi^3}{18}\log2-\frac\pi4\zeta(3)=-\frac\pi6 G(3)$. This independently
obtained value is consistent with Glasser's $d=3$ evaluation \cite{Glasser}
through $G(3)/(\tfrac3\pi D_0)=-2$, agrees with the closed form to $60$ digits,
and satisfies the integer relation
$12(\tfrac3\pi D_0)+9\zeta(3)-2\pi^2\log2=0$.

\begin{theorem}[Static limit and agreement with bare second-order exchange]\label{thm:coeff}
Substituting Proposition~\ref{prop:D0} into the static limit,
\begin{widetext}
\begin{equation}
\label{eq:staticfinal}
\begin{aligned}
  \varepsilon_{2b}^{\RCSP}(\infty)&=64\pi^2A_0D_0\\
  &=-\frac{32\pi^{3}}{3}A_0\,G(3)
  =\frac{32\pi^{5}}{9}A_0\log2-16\pi^{3}A_0\zeta(3),
\end{aligned}
\end{equation}
\end{widetext}
which is proportional to Glasser's $G(3)$ down to the coefficient. The internal
ratio $\log2:\zeta(3)$ is $-\dfrac{2\pi^2}{9}$, coinciding with the ratio
$\dfrac{1/3}{-3/(2\pi^2)}=-\dfrac{2\pi^2}{9}$ of the Onsager--Mittag--Stephen
bare second-order exchange \cite{OMS}
$e_{2x}=\frac13\log2-\frac{3}{2\pi^2}\zeta(3)=0.0483583\ \Ry$. Because the
diagram fixes this ratio, the two closed forms are proportional,
$e_{2x}/D_0=6/\pi^{3}$ exactly, and with the Feynman-rule value
$A_0=3/(32\pi^{5})$ of Eq.~\eqref{eq:A0} one has
$64\pi^2A_0=6/\pi^{3}$, so that
\begin{equation}
\begin{aligned}
  \varepsilon_{2b}^{\RCSP}(\infty)&=64\pi^2A_0\,D_0\\
  &=\frac{6}{\pi^{3}}\,D_0=e_{2x}
\end{aligned}
\end{equation}
identically---both the $\log2$ and the $\zeta(3)$ coefficients of the Onsager
value are reproduced simultaneously, with no adjustable constant.
\end{theorem}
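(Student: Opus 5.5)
The plan is to assemble the theorem from three ingredients already in hand: the static limit of the master formula \eqref{eq:master}, the closed form of $D_0$ (Proposition~\ref{prop:D0}), and the Feynman-rule constant $A_0$ of Eq.~\eqref{eq:A0}.

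\textbf{Static limit.} Starting from \eqref{eq:master}, I will justify
\[
\lim_{\Lambda\to\infty}\int_0^\infty K_\Lambda\Phi\,\rd y=\frac1\pi\int_0^\infty\frac{\Phi(y)}{y}\,\rd y .
\]
By \eqref{eq:KLambda}--\eqref{eq:Xi}, $\Xi(z)=\int_0^\infty \ee^{-s}\,\rd s/(1+s^2/z^2)$ increases monotonically to $1$ and satisfies $0\le\Xi\le1$. Hence $|K_\Lambda(y)\Phi(y)|\le|\Phi(y)|/(\pi y)$. If $\Phi(y)/y$ is absolutely integrable, dominated convergence then gives the limit and $\varepsilon_{2b}^{\RCSP}(\infty)=64\pi^2A_0D_0$.

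This integrability is the one step needing an estimate, and I expect it to be the main obstacle, since everything after it is arithmetic. I would check it directly from \eqref{eq:Phi} at both ends.
\begin{itemize}
\item Near $y=0$: $j_1(w)\sim w/3$, so $\Phi=O(y^3)$ times a convergent $u$-integral. That integral converges because the weight $L(u)\sim 2/u$ is multiplied by $(u^2-v^2)$ from the two $j_1$ factors, and this growth must instead be cut off by the oscillation scale $u\sim1/y$. The outcome is a bound like $\Phi=O(y^{1-\epsilon})$, which still suffices.
\item Large $y$: I expect the oscillatory product of the $j_1$'s, integrated against the smooth weight in $v$, to give decay. Failing a clean bound, I will fall back on the representation before the $y$-integration in Proposition~\ref{prop:D0}'s step (i), where the Weber--Schafheitlin formula already shows the iterated integral converges absolutely. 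I would then apply Tonelli after splitting $\Phi$ into positive and negative parts.
\end{itemize}

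\textbf{Substitution.} Insert $D_0=\frac{\pi^3}{18}\log2-\frac\pi4\zeta(3)=-\frac\pi6G(3)$. This gives
\[
64\pi^2A_0D_0=-\frac{32\pi^3}{3}A_0\,G(3)=\frac{32\pi^5}{9}A_0\log2-16\pi^3A_0\zeta(3),
\]
which is \eqref{eq:staticfinal}. The ratio of the $\log2$ coefficient to the $\zeta(3)$ coefficient is $\frac{32\pi^5/9}{-16\pi^3}=-\frac{2\pi^2}{9}$. This equals the Onsager--Mittag--Stephen ratio $\frac{1/3}{-3/(2\pi^2)}$.

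\textbf{Normalization.} Finally, $e_{2x}/D_0=\bigl(\frac13\log2-\frac{3}{2\pi^2}\zeta(3)\bigr)/\bigl(\frac{\pi^3}{18}\log2-\frac\pi4\zeta(3)\bigr)=6/\pi^3$, checked coefficientwise: $\frac13\cdot\frac{18}{\pi^3}=\frac6{\pi^3}$ and $\frac{3}{2\pi^2}\cdot\frac4\pi=\frac6{\pi^3}$. With $A_0=3/(32\pi^5)$ we get $64\pi^2A_0=6/\pi^3$, so $\varepsilon_{2b}^{\RCSP}(\infty)=e_{2x}$ identically, with both transcendental coefficients matched.
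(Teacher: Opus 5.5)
Your proposal is correct and follows the paper's route. The pointwise limit $\Xi\to1$ gives $\varepsilon_{2b}^{\RCSP}(\infty)=64\pi^2A_0D_0$, you substitute Proposition~\ref{prop:D0}, and the coefficientwise check $e_{2x}/D_0=64\pi^2A_0=6/\pi^3$ finishes the proof; your dominated-convergence justification via $0\le\Xi\le1$ supplies a step the paper leaves implicit.

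The integrability of $\Phi/y$ that you flag is exactly the a priori envelope of Proposition~\ref{smprop:envelope}. If you use your Tonelli fallback instead, note that the Weber--Schafheitlin value is a signed integral and does not by itself give absolute convergence. What does work is the elementary bound $\int_0^\infty\lvert j_1(Ay)j_1(By)\rvert\,\rd y\le C/A$ for $A\ge B\ge0$: it makes the absolute triple integral finite and handles both endpoints at once, so your small-$y$ heuristic becomes unnecessary.
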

That the fixed Feynman-rule constant $A_0=3/(32\pi^5)$ reproduces both
coefficients of the bare second-order exchange is a stringent, parameter-free
confirmation that our reduction correctly yields the known Onsager result. In
what follows we define the normalized screening factor
\begin{equation}
\label{eq:Sdef}
\begin{aligned}
  S(\Lambda)&:=\frac{\varepsilon_{2b}^{\RCSP}(\Lambda)}{\varepsilon_{2b}^{\RCSP}(\infty)}\\
  &=\frac{1}{D_0}\int_0^\infty\frac{\Phi(y)}{y}\,\Xi(\Lambda y)\,\rd y.
\end{aligned}
\end{equation}
The overall coefficient $64\pi^3A_0$ cancels, so $S(\Lambda)$ does not contain
$A_0$ ($S(\infty)=1$).

% =====================================================================
\section{Mellin--Barnes representation and contour-shift asymptotics}\label{sec:mb}
% =====================================================================
In this section we rigorously derive the two-limit asymptotics
of the screening factor $S(\Lambda)$ following the standard theory of
Mellin--Barnes integrals. The complete proofs of the lemmas and propositions of
this section, the precise hypotheses and remainder bound of the contour-shift
theorem, the endpoint analysis of $\Phi$, and the vertical-growth bound for
the regular part $H_2(s)$ of the Mellin integrand
are given, self-contained, in the analytical appendices
(Appendices~\ref{app:supp-overview}--\ref{smsec:layers}). Below we present the logical structure and the main results,
referring for the details to the corresponding appendix
(Appendix~\ref{smsec:kernel}: kernel analysis; Appendix~\ref{smsec:MB}:
Mellin--Barnes representation; Appendix~\ref{smsec:shift}: contour-shift theorem;
Appendix~\ref{smsec:endpoint}: endpoint analysis; Appendix~\ref{smsec:H2}: growth of
$H_2$; Appendix~\ref{smsec:layers}: rigorous asymptotic terms).

\subsection{Analytic properties of the kernel $\Xi$}
The two-limit asymptotics of the screening factor can be read from the
analytic properties of the Mellin transform $K(s)$ of the kernel $\Xi$---its
functional equation, pole structure and exponential decay on vertical
lines---which we establish here as independent lemmas. Henceforth we write
$K(s):=\Mellin[\Xi](s)=\int_0^\infty z^{s-1}\Xi(z)\rd z$.

The analysis rests on three analytic facts about the kernel $\Xi$ and the
geometric kernel $\Phi$, all established with complete proofs in
Appendices~\ref{smsec:kernel} and \ref{smsec:endpoint}; we state them here and
use them directly.

\emph{(i) The kernel closes in the sine and cosine integrals,}
\begin{widetext}
\begin{equation}
\label{eq:Xiclosed}
\begin{aligned}
  \Xi(z)&=z\int_0^\infty\frac{\ee^{-zt}}{1+t^2}\rd t\\
  &=z\Big[\operatorname{Ci}(z)\sin z+\big(\tfrac\pi2-\operatorname{Si}(z)\big)\cos z\Big],
\end{aligned}
\end{equation}
\end{widetext}
and its Mellin transform, holomorphic on $-1<\Re s<0$ and meromorphically
continued to $\mathbb{C}$, is
\begin{equation}
\label{eq:Ks}
\begin{aligned}
  K(s)&:=\Mellin[\Xi](s)\\
  &=-\frac{\pi}{2}\,\frac{\Gamma(s+1)}{\sin(\pi s/2)}.
\end{aligned}
\end{equation}
This $K(s)$ satisfies the two-step functional equation
$K(s+2)=-(s+1)(s+2)K(s)$ and has poles only at the integers: simple poles at the
even $s=2k\ge0$ and the odd $s=-(2k+1)$, and---most important for what follows---\emph{double} poles at
the negative even integers $s=-2k$, where the pole of $\Gamma(s+1)$ meets the
zero of $\sin(\pi s/2)$. The two local expansions we need are
\begin{align}
  K(s)&=-\frac1s+\gamma+O(s) &&(s\to0),
  \label{eq:Klaurent}\\
  K(s)&=-\frac{1}{(s+2)^2}+\frac{\gamma-1}{s+2}+O(1) &&(s\to-2).
  \label{eq:Klaurent2}
\end{align}
with $\gamma$ the Euler--Mascheroni constant.

\emph{(ii) The transform decays exponentially on vertical lines,}
$|K(\sigma+\ii t)|\sim\pi\sqrt{2\pi}\,|t|^{\sigma+1/2}\ee^{-\pi|t|}$ as
$|t|\to\infty$ (from Stirling's formula and
$|\sin\tfrac\pi2(\sigma+\ii t)|\sim\tfrac12\ee^{\pi|t|/2}$). This decay is what
makes the contour-shift argument below rigorous: the Mellin--Barnes integral
converges on every vertical line and the horizontal connectors vanish.

\emph{(iii) The geometric kernel has the endpoint behavior}
\begin{align}
  \Phi(y)&=a_0\,y\log y+b_0\,y+O\!\big(y^{2}|\log y|^{2}\big) &&(y\downarrow0),
  \label{eq:smallyPhi}\\
  \Phi(y)&=\frac{2\pi\log2}{y^{2}}-\frac{\pi^{2}}{4}\frac{\log y}{y^{2}}\sin y+O(y^{-2}) &&(y\to\infty).
  \label{eq:largeyPhi}
\end{align}
with $a_0=2\int_0^\infty j_1(x)^2x^{-1}\rd x=\tfrac12$ and
$b_0=-0.0864\ldots$ a convergent constant. Consequently the Mellin transform
$F(s):=\int_0^\infty y^{s-1}\Phi(y)\rd y$ is holomorphic on $-1<\Re s<2$ with
$D_0=F(0)$, and its boundary poles are a double and a simple pole at $s=-1$ (from
the $y\log y$ and $y$ terms) and a simple pole at $s=2$ (from the
$2\pi\log2/y^2$ term; the oscillatory large-$y$ term contributes no pole there):
\begin{equation}
  F(s)=-\frac{a_0}{(s+1)^2}+\frac{b_0}{s+1}+\frac{2\pi\log2}{2-s}+\text{(regular)}.
  \label{eq:Fpoles}
\end{equation}

\subsection{Mellin--Barnes representation and the contour-shift theorem}
Substituting into \eqref{eq:Sdef} the Mellin inversion of \eqref{eq:Ks} and
performing the inner $y$-integral via
$\int_0^\infty\Phi(y)y^{-s-1}\rd y=F(-s)$ gives the Mellin--Barnes representation
\begin{equation}
\label{eq:Smb}
  S(\Lambda)=\frac{1}{2\pi\ii\,D_0}\int_{(c)}K(s)\,F(-s)\,\Lambda^{-s}\,\rd s,
  \quad c\in(-1,0),
\end{equation}
where $\int_{(c)}$ is integration up the line $\Re s=c$. The vertical decay
(ii) makes the integrand decay exponentially on every vertical line, so the
integral converges absolutely and the contour can be shifted across the poles;
the standard converse-mapping theorem \cite{ParisKaminski} then reads off the
asymptotics from a residue sum. The rule is simple and is all that the physics
requires: \emph{a pole of $K(s)F(-s)$ at $s=\rho$ contributes a term
$\Lambda^{-\rho}$, and a pole of order $m$ multiplies it by a polynomial of
degree $m-1$ in $\log\Lambda$}. Explicitly, shifting to the right recovers
$\Lambda\to\infty$ and to the left $\Lambda\to0$, and an $m$-th-order pole gives
\begin{equation}
\label{eq:residue}
  \Res_{s=\rho}\Big[\tfrac{1}{D_0}K(s)F(-s)\Lambda^{-s}\Big]
  =\Lambda^{-\rho}\,P_\rho(\log\Lambda),
\end{equation}
where $\deg P_\rho=m-1$.
The precise hypotheses, the vanishing of the horizontal connectors, and the
strip-by-strip remainder bounds are verified for the actual $\Phi$ in
Appendices~\ref{smsec:shift} and \ref{smsec:layers}.

\subsection{Large- and small-$\Lambda$ asymptotics}\label{sec:asym}
The poles and residues of $K(s)$ were given above in \eqref{eq:Klaurent}
($s=0,2,4,\dots$ simple, $s=-1,-3,\dots$ simple, $s=-2,-4,\dots$ double). On the
other hand $F(-s)$, from the poles of $F$ in \eqref{eq:Fpoles} ($s=-1$ double,
$s=2$ simple), has a double pole at $s=1$ and a simple pole at $s=-2$. The poles
of the product $K(s)F(-s)$ are the superposition of the two. To the right of $c$
($\Lambda\to\infty$) they lie at $s=0$ (simple, $K$), $s=1$ (double, $F(-s)$; $K$
regular), $s=2$ (simple, $K$) and at $s=4,6,\dots$. To the left of $c$
($\Lambda\to0$) they lie at $s=-1$ (simple, $K$; $F(-s)=F(1)$ regular), $s=-2$
(triple; collision of the double pole of $K$ with the simple pole of $F(-s)$),
$s=-3$ (simple, $K$) and at $s=-4,\dots$. In particular the collision at $s=-2$
of the double pole of $K$ and the simple pole of $F(-s)$ yields a triple pole
($m=3$), which by \eqref{eq:residue} produces a degree-$2$ polynomial in
$\log\Lambda$, i.e.\ the $\Lambda^2(\log\Lambda)^2$ term on the small-$\Lambda$ side. This is the
mechanism by which the logarithm squared appears at the second small-$\Lambda$ order.

We first consider the large-$\Lambda$ limit $\Lambda\to\infty$.
\begin{theorem}[Large-$\Lambda$ asymptotics]\label{thm:strong}
For every $\varepsilon\in(0,1)$, as $\Lambda\to\infty$,
\begin{align}
  S(\Lambda)&=1+\frac{C_1^{\Lambda}\log\Lambda+C_0^{\Lambda}}{\Lambda}+O(\Lambda^{-1-\varepsilon}),
  \label{eq:largeMu}\\
  C_1^{\Lambda}&=\frac{\pi}{4D_0}=-\frac{3}{2G(3)}=3.14283\ldots
  \label{eq:C1Lam}
\end{align}
where $C_0$ is the finite-part coefficient given in \eqref{eq:C0} below.
\end{theorem}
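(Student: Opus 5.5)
The plan is to start from the Mellin--Barnes representation \eqref{eq:Smb} on a line $\Re s=c\in(-1,0)$ and move the contour to the right, onto $\Re s=1+\varepsilon$ with $\varepsilon\in(0,1)$. The residue theorem then gives $S(\Lambda)$ as the negatives of the residues of $D_0^{-1}K(s)F(-s)\Lambda^{-s}$ at the poles crossed, plus the integral over the new line. By the pole census of Sec.~\ref{sec:asym}, the only poles in $c<\Re s<1+\varepsilon$ are the simple pole of $K$ at $s=0$ and the double pole of $F(-s)$ at $s=1$. The simple pole of $K$ at $s=2$ is not reached because $\varepsilon<1$, and this is exactly why the statement restricts to $\varepsilon\in(0,1)$.

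First I compute the residue at $s=0$. There $F(-s)$ is regular with $F(0)=D_0$, and \eqref{eq:Klaurent} gives $K(s)\sim-1/s$. The residue is therefore $-F(0)/D_0=-1$, so this pole contributes the leading $1$.

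Next I compute the residue at $s=1$. Writing $u=-s$, \eqref{eq:Fpoles} gives
\[
F(-s)=-\frac{a_0}{(s-1)^2}-\frac{b_0}{s-1}+\text{(regular)}.
\]
$K$ is holomorphic at $s=1$, with $K(1)=-\pi/2$ from \eqref{eq:Ks}. The residue of $K(s)F(-s)\Lambda^{-s}$ there is
\[
\Lambda^{-1}\bigl[a_0K(1)\log\Lambda-a_0K'(1)-b_0K(1)\bigr].
\]
With the overall minus sign and the factor $1/D_0$, this yields the term $(C_1^\Lambda\log\Lambda+C_0^\Lambda)/\Lambda$ with
\[
C_1^\Lambda=-\frac{a_0K(1)}{D_0}=\frac{\pi}{4D_0},
\]
using $a_0=\tfrac12$. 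The relation to $G(3)$ then follows from Proposition~\ref{prop:D0}. The finite part is
\[
C_0^\Lambda=\frac{a_0K'(1)+b_0K(1)}{D_0},
\]
where $K'(1)$ is obtained by differentiating \eqref{eq:Ks} (it involves $\psi(2)=1-\gamma$); this is what I expect to match \eqref{eq:C0}.

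The remaining step, and the main obstacle, is justifying the shift and bounding the remainder. On $\Re s=1+\varepsilon$ we need $F(-s)$, that is $F$ on $\Re u=-1-\varepsilon$, which lies outside the holomorphy strip $-1<\Re u<2$. I would continue $F$ meromorphically to $-2<\Re u<2$ by subtracting the explicit terms $a_0y\log y+b_0y$ on $(0,1)$. By the remainder $O(y^2|\log y|^2)$ in \eqref{eq:smallyPhi}, the subtracted integral converges for $\Re u>-2$, and the only singularity introduced is the double/simple pole at $u=-1$ already accounted for.

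I then need a vertical-growth bound $|F(-1-\varepsilon-\ii t)|=O(|t|^N)$ for some $N$, and the same bound uniformly on horizontal connectors with $\Re s\in[c,1+\varepsilon]$. The small-$y$ part is controlled directly. The large-$y$ part, with its oscillatory $y^{-2}\log y\sin y$ tail from \eqref{eq:largeyPhi}, needs one integration by parts; this is the regular part $H_2$ whose growth is established in Appendix~\ref{smsec:H2}.

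Given polynomial growth of $F$, the exponential decay (ii) of $K$, $|K(\sigma+\ii t)|\lesssim|t|^{\sigma+1/2}\ee^{-\pi|t|}$, makes the horizontal connectors vanish. It also makes the new-line integral absolutely convergent and bounded by
\[
\Lambda^{-1-\varepsilon}\int|K||F(-s)|\,|\rd s|=O(\Lambda^{-1-\varepsilon}),
\]
which completes \eqref{eq:largeMu}.
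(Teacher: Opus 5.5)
Your proposal is correct and follows the paper's proof (Proposition~\ref{smprop:rightstrip} and Corollary~\ref{smcor:large}) essentially step for step. The shared steps are the right shift to $\Re s=1+\varepsilon$, the residues at the simple pole $s=0$ of $K$ and the double pole $s=1$ of $F(-s)$ giving $C_1^{\Lambda}=-a_0K(1)/D_0$ and $C_0^{\Lambda}=(a_0K'(1)+b_0K(1))/D_0$, and the continuation of $F$ past $\Re u=-1$ by subtracting $a_0y\log y+b_0y$ on $(0,1)$. One small correction: on this strip you need no integration by parts and no appeal to $H_2$ (Appendix~\ref{smsec:H2}). For $\Re u=-\Re s\in[-1-\varepsilon,-c]\subset(-2,1)$ the large-$y$ piece $\int_1^\infty y^{u-1}\Phi(y)\,\rd y$ converges absolutely, so the continued $F(-s)$ is simply bounded on the new line and on the connectors with $\abs{\Im s}\ge1$; the $H_2$ growth bound is needed only for the left shift across $s=-2$.
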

\begin{proof}[Proof sketch]
Shifting \eqref{eq:Smb} to the right up to $\Re s=1+\varepsilon$ crosses two
poles. The simple pole of $K$ at $s=0$ (with $F(-s)=D_0$) gives the static value
$1$; the double pole of $F(-s)$ at $s=1$ (with $K(1)=-\pi/2$ regular) gives,
through \eqref{eq:residue} with $m=2$, the $\Lambda^{-1}\log\Lambda$ term with
$C_1^{\Lambda}=-a_0K(1)/D_0=\pi/(4D_0)$ (using $a_0=\tfrac12$). The full strip-by-strip
verification and remainder bound are in Appendix~\ref{smsec:layers}
(Prop.~\ref{smprop:rightstrip}, Cor.~\ref{smcor:large}).
\end{proof}
\noindent{}The approach to the static limit is governed by the log-corrected
power law $\Lambda^{-1}\log\Lambda$. The constant term of $O(\Lambda^{-1})$ is also fixed by
the same double pole at $s=1$; taking the $O(1)$ part of \eqref{eq:residue}
($m=2$),
\begin{align}
  C_0&=\frac{K'(1)-\pi b_0}{2D_0},
  \label{eq:C0}\\
  K'(1)&=-\frac{\pi}{2}(1-\gamma)=-0.66411\ldots
  \label{eq:Kprime1}
\end{align}
with $C_0^{\Lambda}\approx-0.786$ numerically. The only non-elementary input is the finite
part $b_0$ of \eqref{eq:smallyPhi}, and $C_0^{\Lambda}$ is determined in closed form
apart from $b_0$ ($K'(1)$ follows from the logarithmic derivative
$K'/K=\psi(s+1)-\tfrac\pi2\cot\tfrac{\pi s}2$ of \eqref{eq:Ks}).

We next consider the small-$\Lambda$ limit $\Lambda\to0$.
\begin{theorem}[Small-$\Lambda$ asymptotics]\label{thm:weak}
For every $\varepsilon\in(0,1)$, as $\Lambda\to0^+$,
\begin{widetext}
\begin{align}
  S(\Lambda)&=A\Lambda+\Lambda^2\big[B_2(\log\Lambda)^2+B_1\log\Lambda+B_0\big]+O(\Lambda^{2+\varepsilon}),
  \label{eq:smallMu}\\
  A&=\frac{\pi F(1)}{2D_0},\qquad B_2=-\frac{\pi\log2}{D_0}=\frac{6\log2}{G(3)}=-8.71376\ldots
  \label{eq:smallMucoef}
\end{align}
\end{widetext}
\end{theorem}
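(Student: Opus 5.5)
The plan is to shift the contour of the Mellin--Barnes representation \eqref{eq:Smb} from $\Re s=c\in(-1,0)$ to the left, onto the line $\Re s=-2-\varepsilon$, and to collect the residues of $\tfrac1{D_0}K(s)F(-s)\Lambda^{-s}$ at the poles crossed. By the pole inventory of Sec.~\ref{sec:asym}, only $s=-1$ and $s=-2$ lie in the strip $-2-\varepsilon<\Re s<c$. The leftward shift then gives
\[
S(\Lambda)=\sum_{\rho\in\{-1,-2\}}\Res_{s=\rho}\Big[\tfrac1{D_0}K(s)F(-s)\Lambda^{-s}\Big]+R_\varepsilon(\Lambda),
\]
with
\[
R_\varepsilon(\Lambda)=\frac{1}{2\pi\ii D_0}\int_{(-2-\varepsilon)}K(s)F(-s)\Lambda^{-s}\,\rd s .
\]
Since $|\Lambda^{-s}|=\Lambda^{2+\varepsilon}$ on that line, this integral is $O(\Lambda^{2+\varepsilon})$ as soon as $\int|K(s)F(-s)|\,|\rd s|$ converges there.

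\textbf{Residue at $s=-1$.} Here $\Gamma(s+1)\sim1/(s+1)$ and $\sin(-\pi/2)=-1$, so $K(s)\sim(\pi/2)/(s+1)$. The factor $F(-s)=F(1)$ is regular, because $1$ lies inside the holomorphy strip $-1<\Re s<2$ of $F$. The residue is therefore $\tfrac{\pi F(1)}{2D_0}\Lambda$, which gives $A$.

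\textbf{Residue at $s=-2$.} Put $\sigma=s+2$. By \eqref{eq:Klaurent2}, $K=-\sigma^{-2}+(\gamma-1)\sigma^{-1}+k_0+O(\sigma)$. By \eqref{eq:Fpoles} with $w=-s$, so that $2-w=\sigma$, we get $F(-s)=2\pi\log2\,\sigma^{-1}+f_0+f_1\sigma+\cdots$. Finally
\[
\Lambda^{-s}=\Lambda^{2}\bigl(1-\sigma\log\Lambda+\tfrac12\sigma^2\log^2\Lambda-\cdots\bigr).
\]
The coefficient of $\sigma^{-1}$ in the product is a quadratic polynomial in $\log\Lambda$. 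Its leading part comes from $(-\sigma^{-2})(2\pi\log2\,\sigma^{-1})(\tfrac12\sigma^2\log^2\Lambda)$, which gives $B_2=-\pi\log2/D_0$; the value $6\log2/G(3)$ then follows from $D_0=-\tfrac\pi6G(3)$ (Proposition~\ref{prop:D0}). The terms $B_1$ and $B_0$ are read off in the same way, in terms of $\gamma$, $k_0$, $f_0$, $f_1$. There is no need to evaluate them in closed form, since the statement only fixes $A$ and $B_2$.

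\textbf{Main obstacle: the remainder.} The shift requires $F(-s)$ to be meromorphic up to $\Re s=-2-\varepsilon$, i.e.\ $F(w)$ up to $\Re w=2+\varepsilon$, with only the simple pole at $w=2$, and with at most polynomial growth in $\Im w$ on the relevant lines. The stated large-$y$ expansion \eqref{eq:largeyPhi} carries an unrefined $O(y^{-2})$ error, and that is not enough for this. I would first split $\Phi=\Phi_{\le1}+\Phi_{>1}$. The piece $\Phi_{\le1}$ gives an entire contribution for $\Re w>-1$. In $\Phi_{>1}$ I would subtract $2\pi\log2\,y^{-2}$, which produces exactly the pole, and also the oscillatory term $-\tfrac{\pi^2}{4}y^{-2}\log y\sin y$. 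The Mellin transform of the oscillatory term continues holomorphically to larger $\Re w$, as one sees by writing $\sin y=\Im\ee^{\ii y}$ and rotating the contour or integrating by parts. The remaining error must then be shown to be $O(y^{-2-\delta})$, possibly plus further oscillatory terms. I would obtain this by returning to \eqref{eq:Phi} and expanding the products $j_1j_1$ at large argument via their trigonometric forms, which gives a full asymptotic expansion in powers $y^{-n}$ times oscillatory factors; this is the endpoint analysis referred to in Appendix~\ref{smsec:endpoint}.

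For the vertical-growth part, I would write $F(-s)=\tfrac{2\pi\log2}{s+2}+H_2(s)$ and bound $H_2$ polynomially in $|\Im s|$ by repeated integration by parts in $y$, as in Appendix~\ref{smsec:H2}. The exponential decay (ii) of $K$ then makes both the vertical integral and the horizontal connectors vanish or converge absolutely. This yields the $O(\Lambda^{2+\varepsilon})$ remainder, the restriction $\varepsilon<1$ keeping the next pole at $s=-3$ outside the strip.
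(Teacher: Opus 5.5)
Your proposal is correct and follows the paper's route: shift the Mellin--Barnes integral left to $\Re s=-2-\varepsilon$, pick up the simple-pole residue $\tfrac{\pi F(1)}{2D_0}\Lambda$ at $s=-1$ and the triple-pole residue at $s=-2$ (giving $B_2=-\pi\log2/D_0$), and control the remainder by continuing $H_2(w)=F(w)-2\pi\log2/(2-w)$ to $0<\Re w<3$ with polynomial vertical growth, obtained from the exact trigonometric decomposition of $j_1j_1$ (Appendices~\ref{smsec:endpoint}--\ref{smsec:layers}). The one refinement is that the paper's leftover $G(y)$ is not $O(y^{-2-\delta})$: it is an $O(y^{-2})$ sum of oscillatory pieces whose Mellin transform is nonetheless holomorphic for $\Re w<3$, which is exactly the case your ``possibly plus further oscillatory terms'' allows for.
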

\begin{proof}[Proof sketch]
Shifting \eqref{eq:Smb} to the left, the simple pole of $K$ at $s=-1$ (with
$F(-s)=F(1)$ regular) gives the linear term $A\Lambda$, $A=\pi F(1)/(2D_0)$. At
$s=-2$ the double pole of $K$ from \eqref{eq:Klaurent} collides with the simple
pole of $F(-s)$ at $2\pi\log2\,\varepsilon^{-1}$ ($\varepsilon=s+2$) to make a
\emph{triple} pole, which by \eqref{eq:residue} ($m=3$) produces the
$\Lambda^2(\log\Lambda)^2$ term with $B_2=-\pi\log2/D_0$. The two left strips,
the vertical-growth bound for $H_2$, and the $O(\Lambda^{2+\varepsilon})$
remainder (with $B_1=[H_2(2)-2(\gamma-1)\pi\log2]/D_0$) are in
Appendix~\ref{smsec:layers} (Cor.~\ref{smcor:small1}, Cor.~\ref{smcor:small2}).
\end{proof}

The linear coefficient $A=\pi F(1)/(2D_0)$ of the small-$\Lambda$ limit involves
$F(1)=\int_0^\infty\Phi(y)\,\rd y$, a convergent independent constant that
depends on the whole intermediate profile of $\Phi$. Numerically
$F(1)\approx3.04\ne0$, hence $A\approx19.1$, so the linear term is the true leading term.
Unlike the leading coefficients $C_1^{\Lambda},B_2$ and $D_0$, $A$ does
not reduce to $G(3)$. What is pinned to the static endpoint is the leading
coefficients with logarithms, $\Lambda^{-1}\log\Lambda$ and $\Lambda^2(\log\Lambda)^2$; the
amplitude of the linear term is not fixed by the static limit alone.

That the leading coefficients $C_1^{\Lambda},B_2$ of the large-$\Lambda$ and small-$\Lambda$ limits
both close in $G(3)$ is a direct consequence of the static limit being pinned
to Glasser/OMS. From \eqref{eq:Fpoles}, the small-$y$ endpoint $\tfrac12y\log y$
of $\Phi$ governs, through the double pole of $F(-s)$ at $s=1$, the
large-$\Lambda$ term $\Lambda^{-1}\log\Lambda$, while the large-$y$ endpoint
$2\pi\log2/y^2$ governs, through the pole collision at $s=-2$, the
small-$\Lambda$ term $\Lambda^2(\log\Lambda)^2$ (Table~\ref{tab:duality}). The Mellin
analysis of this section gives the leading term on the large-$\Lambda$ side with a rigorous first-strip remainder, and the
leading and second terms on the small-$\Lambda$ side with explicit remainder estimates. It also provides
the framework for a systematic analysis of the higher coefficients ($\Lambda^3$ and beyond) in the left strips.

\begin{table*}[t]
  \centering
  \caption{Endpoints, asymptotics, and the leading coefficients in closed form
  through $G(3)$.}
  \label{tab:duality}
  \begin{tabular}{lll}
  \toprule
  Endpoint of $\Phi(y)$ & Asymptotics of $S(\Lambda)$ & Leading coefficient\\
  \midrule
  small $y$: $\tfrac12y\log y$ & $\Lambda\to\infty$: $1+\dfrac{C_1^{\Lambda}\log\Lambda+C_0^{\Lambda}}{\Lambda}$ & $C_1^{\Lambda}=\dfrac{\pi}{4D_0}=-\dfrac{3}{2G(3)}$\\[5pt]
  large $y$: $\dfrac{2\pi\log2}{y^2}$ & $\Lambda\to0$: $\Lambda^2[B_2(\log\Lambda)^2+\cdots]$ & $B_2=-\dfrac{\pi\log2}{D_0}=\dfrac{6\log2}{G(3)}$\\[5pt]
  ($K$ itself) & static value $1$ / linear $A\Lambda$ & $A=\dfrac{\pi F(1)}{2D_0}$\\
  \bottomrule
  \end{tabular}
\end{table*}

% =====================================================================
\section{Numerical results}\label{sec:numerics}
% =====================================================================
We evaluate the exact reduced representation of Appendix~\ref{app:algo}
directly; the results below do not rely on term-by-term integration of the
small- or large-$\Lambda$ series.

\subsection{The screening factor and its non-monotonicity}
Figure~\ref{fig:SLambda} shows $S(\Lambda)$. The factor is non-monotone: it
exceeds its static-limit value $1$ over an intermediate range of $\Lambda$
($S\approx2.334$ at $\Lambda=1$) and then returns toward $1$ as $\Lambda$ grows.
This excess above $S=1$ is a property of the present model quantity; it is not a
claim about the full correlation energy of the Coulomb system.

\begin{figure}[t]
  \centering
  \begin{tikzpicture}
  \begin{axis}[width=\linewidth,height=5.4cm,
    xlabel={$\log_{10}\Lambda$},ylabel={$S(\Lambda)$},
    xmin=-5.4,xmax=3.6,ymin=0,ymax=2.5,
    grid=both,grid style={gray!18},tick align=outside]
  \addplot[only marks,mark=*,mark size=1.3pt,blue!70!black] coordinates {
    (-5,0.000190864)(-4.5229,0.000571999)(-4,0.00190130)(-3.5229,0.00566934)
    (-3,0.0186054)(-2.5229,0.0540277)(-2,0.166379)(-1.5229,0.429520)
    (-1,1.03508)(-0.5229,1.81367)(0,2.33400)(0.4771,2.15440)(1,1.66894)
    (1.4771,1.33270)(2,1.13710)(2.4771,1.05716)(3,1.02093)(3.4771,1.00813)};
  \addplot[blue!70!black,thick,smooth] coordinates {
    (-5,0.000190864)(-4.5229,0.000571999)(-4,0.00190130)(-3.5229,0.00566934)
    (-3,0.0186054)(-2.5229,0.0540277)(-2,0.166379)(-1.5229,0.429520)
    (-1,1.03508)(-0.5229,1.81367)(0,2.33400)(0.4771,2.15440)(1,1.66894)
    (1.4771,1.33270)(2,1.13710)(2.4771,1.05716)(3,1.02093)(3.4771,1.00813)};
  \addplot[red,dashed,thick,domain=-3.4:3.6]{1};
  \node[red,anchor=south] at (axis cs:-1.5,1.02){static limit $S=1$};
  \end{axis}
  \end{tikzpicture}
  \caption{Normalized screening factor $S(\Lambda)$. It is non-monotone and exceeds
  the static-limit value $S=1$ (dashed) at intermediate $\Lambda$.}
  \label{fig:SLambda}
\end{figure}

On the small-$\Lambda$ side, $S(\Lambda)/\Lambda$ takes the values $19.09,19.07,19.01,18.90$
at $\Lambda=10^{-5},3\times10^{-5},10^{-4},3\times10^{-4}$, a clear plateau that
supports $S(\Lambda)=A\Lambda+o(\Lambda)$ with $A\approx19.10$ (a numerical confirmation of
\eqref{eq:smallMu} in Theorem~\ref{thm:weak}).

On the large-$\Lambda$ side, $Q(\Lambda):=\Lambda[S(\Lambda)-1]/\log\Lambda$ is stable, equal to
$2.93,2.98,3.01,3.03$ at $\Lambda=30,100,300,1000$, which supports the asymptotic
form $S(\Lambda)-1\sim C_1^{\Lambda}\log\Lambda/\Lambda$. The coefficient $C_1^{\Lambda}$ is fixed analytically
by \eqref{eq:largeMu} of Theorem~\ref{thm:strong} to
$C_1^{\Lambda}=-\frac{3}{2G(3)}=3.143$. A two-parameter least-squares fit
$S=1+(C_1^{\Lambda}\log\Lambda+C_0^{\Lambda})/\Lambda$ over the range $\Lambda=30,100,300,1000,3000$ returns
$C_1^{\Lambda}\approx3.10$ and $C_0^{\Lambda}\approx-0.57$; these are effective fitted values over a
finite range.  The theorem guarantees the first correction with an
$O(\Lambda^{-1-\varepsilon})$ remainder, but the omitted higher asymptotic terms are
still visible at the accessible values of $\Lambda$.  Because of the logarithmic
correction $\Lambda^{-1}\log\Lambda$, the approach to the asymptotic regime is slow,
and the fitted coefficient should be interpreted as a finite-range diagnostic
rather than as a replacement for the analytic coefficient.

\subsection{Verification of the endpoint asymptotics and numerical reliability}
Figure~\ref{fig:Phi} shows $\Phi(y)$ together with its small-$y$ endpoint
asymptotics $\frac12 y\log y+b_0 y$ ($b_0=-0.0864$). The two agree well for
$y\lesssim0.5$, visually confirming the endpoint coefficient $a_0=\frac12$ (the
slope of the $\log y$ difference quotient of $\Phi(y)/y$ is $0.4998$).

\begin{figure}[t]
  \centering
  \begin{tikzpicture}
  \begin{axis}[width=\linewidth,height=5.4cm,
    xlabel={$y$},ylabel={$\Phi(y)$},
    xmin=0,xmax=1.05,ymin=-0.23,ymax=0.0,
    grid=both,grid style={gray!18},legend pos=south east,
    legend style={font=\small},tick align=outside]
  \addplot[only marks,mark=*,mark size=1.3pt,blue!70!black] coordinates {
    (0.0498,-0.078971)(0.0797,-0.107593)(0.1200,-0.137489)(0.1809,-0.169981)
    (0.2491,-0.193867)(0.3499,-0.212341)(0.5016,-0.213051)(0.6977,-0.180048)(1.0000,-0.080789)};
  \addlegendentry{$\Phi(y)$ exact}
  \addplot[red,thick,domain=0.03:1,samples=80]{0.5*x*ln(x)-0.0864*x};
  \addlegendentry{$\tfrac12 y\log y+b_0 y$}
  \end{axis}
  \end{tikzpicture}
  \caption{Geometric block $\Phi(y)$ (points) and its small-$y$ endpoint
  asymptotics (solid). They agree for $y\lesssim0.5$, confirming
  $a_0=\tfrac12$.}
  \label{fig:Phi}
\end{figure}

Numerical reliability was monitored in three ways: (i) comparison of the two
representations of $\Phi(y)$ (the scaled small-$y$ representation and the
$(\eta,\tau)$ representation for moderate and large $y$) in their overlap
region; (ii) convergence with respect to the outer $x=\log y$ grid; and (iii)
stability of the finite-range fits.  The production run used a uniform outer
grid $x\in[-17,8]$ with $\Delta x=0.01$ ($2501$ points).  The $\tau$ integral was
evaluated with a 48-point Gauss--Laguerre rule, and the $\eta$ integral used the
grading $\eta=\xi^2$ with a 10-point Gauss--Legendre rule on each panel.  The
same grid gives
$D_0^{\rm num}=0.2499026386$, while the exact value is
$D_0=0.2499018971$, a relative difference of $2.97\times10^{-6}$.  In the
overlap check at $y=0.03,0.05,0.1,0.2,0.3,0.5$, the maximum discrepancy
\[\delta_\Phi=\frac{|\Phi_{\rm scaled}-\Phi_{\eta,\tau}|}
{\max(1,|\Phi_{\rm scaled}|,|\Phi_{\eta,\tau}|)}\]
was $3.23\times10^{-4}$ and the root-mean-square value was
$1.77\times10^{-4}$.  We additionally verified
$\int_0^\infty j_1(x)^2/x\,\rd x=0.249999760$,
$\int_1^\infty u^{-2}L(u)\,\rd u=1.386293781$, and
$\int_0^1\frac{\log(1/v)}{1-v^2}\,\rd v=1.233699970$, consistent with
$\frac14$, $2\log2$, and $\frac{\pi^2}8$, respectively.

Table~\ref{tab:anchors} collects the exact analytic reference values of the paper
together with their independent numerical determinations. Each entry is a
closed-form result checked against a direct evaluation (an adaptive
one-dimensional quadrature of the geometric kernel, or a Monte Carlo sampling of the
full loop); the dynamical density dependence of Sec.~\ref{sec:staticdyn} is in
turn normalized only to the bare value $e_{2x}$ in this table, so that its
agreement with independent data is a genuine test rather than a fit.

\begin{table*}[t]
  \centering
  \caption{Exact analytic results of the paper and their independent numerical
  checks. The bare static factor $D_0$, the Onsager value $e_{2x}$, the small-
  and large-$q$ coefficients $c_0,c_1,C_\infty$ of the momentum kernel, the
  low-density static coefficient $A_{3/4}$, and the static high-density onset
  $c_{\log}$ are each reproduced without adjustable parameters. The exact
  dynamical onset coefficient of Loos and Gill, an external value used only for
  comparison in Sec.~\ref{sec:staticdyn}, is listed separately below the table.}
  \label{tab:anchors}
  \begin{tabular}{lll}
  \toprule
  Quantity & Closed form / exact & Numerical\\
  \midrule
  $D_0$ (bare static factor) & $\frac{\pi^3}{18}\log2-\frac{\pi}{4}\zeta(3)=0.2499018971$ & $0.2499026$ (quad.)\\
  $e_{2x}$ (Onsager bare) & $48.3583$ mRy & $48.36$ (MC)\\
  $c_0$ (small-$q$ slope) & $\pi(2\log2-1)/8=0.15169$ & $0.1517$\\
  $c_1$ ($q^2$ coefficient) & $0$ (Prop.~\ref{prop:c1}) & $0.0000$\\
  $C_\infty$ (large-$q$) & $2\pi/9=0.69813$ & $0.698$\\
  $A_{3/4}$ (low-density static) & $\frac{\pi^{2}}{9\sqrt2}(3/4)^{3/4}(\pi/4\alpha_L)^{3/4}=0.8501$ & $0.850$ (extrap.)\\
  $c_{\log}$ (static onset) & $\alpha_L(2\log2-1)/4=0.05032$ & $0.0502$ (fit)\\
  \bottomrule
  \end{tabular}
  \\[4pt]
  \begin{minipage}{0.92\linewidth}
  \footnotesize\emph{External benchmark used for comparison (not an independent
  result of this paper):} the exact dynamical high-density onset coefficient
  $C_{\log}^{\mathrm{dyn}}=0.01304\,\Ry$ of Loos and Gill~\cite{LoosGill2011},
  against which the dynamical evaluation of Sec.~\ref{sec:staticdyn} is compared.
  \end{minipage}
\end{table*}

% =====================================================================
\section{Physical RPA screening and the emergence of the density dependence}\label{sec:physical}
% =====================================================================
The RC-SP model of the preceding sections isolates the frequency structure of
dynamical screening in a single scale $\Lambda$ and, through the Mellin analysis,
exhibits in closed form the mechanism---the poles of the integrand---that
generates the logarithmic terms. It does not by itself fix how $\Lambda$ is
related to the electron density; a map $\Lambda\to r_s$ would be a modeling
choice. In this section we obtain the density dependence of the
screened-exchange contribution \emph{without} such a map, by feeding the physical
static RPA (Lindhard) screening into the reduced functional. The density then
enters only through the Thomas--Fermi screening scale, a standard relation of the
UEG.

\subsection{Static RPA-screened exchange and the momentum-resolved kernel}
We use the linearity of $\mathcal{K}[D]$ in the screened interaction $D$
[Eq.~\eqref{eq:KD}] and take $D$ to be a \emph{static} (frequency-independent)
screening $D(\vq,\ii\xi)=D_{\mathrm{stat}}(q)$. Then the frequency integral in
\eqref{eq:KD} acts only on the product of resolvents, and, using that the
integrand is even in $\xi$,
\begin{widetext}
\begin{align}
  \int_0^\infty\frac{\rd\xi}{\pi}\,
  \operatorname{Re}\frac{1}{[\ii\xi-\delta(\vp;\vq)][\ii\xi-\delta(\vk;\vq)]}
  &=\mathcal{W}\big(\delta(\vp;\vq),\delta(\vk;\vq)\big),
  \label{eq:wfreq}\\
  \mathcal{W}(a,b)&=\frac{\operatorname{sgn}(ab)-1}{2\,(|a|+|b|)},
  \label{eq:Wdef}
\end{align}
\end{widetext}
which follows from $\int_0^\infty(ab-\xi^2)/[(a^2+\xi^2)(b^2+\xi^2)]\,\rd\xi
=\tfrac\pi2(\operatorname{sgn}(ab)-1)/(|a|+|b|)$ by residues; $\mathcal{W}$ vanishes when
$\delta(\vp;\vq)$ and $\delta(\vk;\vq)$ have the same sign and equals
$-1/(|a|+|b|)$ when they have opposite signs. Hence
\begin{align}
  \mathcal{K}[D_{\mathrm{stat}}]&=A_0\int \rd^3q\,D_{\mathrm{stat}}(q)\,Y(q),
  \label{eq:Yq}\\
  Y(q)&=\int \rd^3p\,\rd^3k\,\frac{\Delta(\vp;\vq)\Delta(\vk;\vq)}{|\vp-\vk|^2}\notag\\
  &\quad\times
  \mathcal{W}\big(\delta(\vp;\vq),\delta(\vk;\vq)\big).
  \label{eq:Ydef}
\end{align}
Since $Y(q)$ depends only on $q=|\vq|$, we write $\int\rd^3q=4\pi\int_0^\infty
q^2\rd q$ and define the \emph{momentum-resolved kernel}
\begin{equation}
\label{eq:Xcdef}
\begin{aligned}
  X_c(q)&:=\frac{4\pi A_0}{64\pi^2 A_0}\,Y(q)\\
  &=\frac{Y(q)}{16\pi},
\end{aligned}
\end{equation}
normalized (through the overall factor $64\pi^2A_0$ that relates $\mathcal{K}$ to $D_0$,
Sec.~\ref{sec:static}) so that the bare interaction $D_{\mathrm{stat}}=1/q^2$
gives $q^2 D_{\mathrm{stat}}=1$ and
\begin{equation}
  D_0=\int_0^\infty X_c(q)\,\rd q .
  \label{eq:Xcnorm}
\end{equation}
This is the momentum-space counterpart of the reduced kernel $\Phi(y)/y$ of
\eqref{eq:Phi}: both integrate to $D_0$, one over the reduced length $y$, the
other over the momentum transfer $q$. Replacing the bare line by the physical
static RPA-screened interaction
\begin{align}
  D_{\mathrm{stat}}(q;r_s)&=\frac{1}{q^{2}+q_{\mathrm{TF}}^{2}\,f_L(q/2)},
  \label{eq:Dstat}\\
  q_{\mathrm{TF}}^{2}&=\frac{4\alpha_L}{\pi}\,r_s,\quad
  \alpha_L=\Big(\frac{4}{9\pi}\Big)^{1/3},
  \label{eq:qTF}
\end{align}
with $f_L(x)=\tfrac12+\tfrac{1-x^2}{4x}\log\big|\tfrac{1+x}{1-x}\big|$ the static
Lindhard function, and using $q^2D_{\mathrm{stat}}(q;r_s)$ as the momentum-space
screening factor, gives the density-dependent screened geometric factor
\begin{equation}
  D_0^{\mathrm{scr}}(r_s)
  =\int_0^\infty X_c(q)\,\frac{q^{2}}{q^{2}+q_{\mathrm{TF}}^{2}\,f_L(q/2)}\,\rd q.
  \label{eq:D0scr}
\end{equation}
Here $r_s$ enters \emph{only} through $q_{\mathrm{TF}}^{2}(r_s)=4\alpha_L r_s/\pi$,
and $D_0^{\mathrm{scr}}\to D_0$ as $r_s\to0$ (the bare limit); no scale map
$\Lambda(r_s)$ is introduced. The corresponding energy is the \emph{static reference}
$\varepsilon_{2b}^{\mathrm{stat}}(r_s)=e_{2x}\,D_0^{\mathrm{scr}}(r_s)/D_0$
with $e_{2x}$ the Onsager--Mittag--Stephen value \eqref{eq:staticfinal}. We
emphasize the qualifier: $\varepsilon_{2b}^{\mathrm{stat}}$ is the reduced
functional $\varepsilon_{2b}[D]$ evaluated on the \emph{static} RPA-screened line
at density $r_s$, at full coupling and \emph{without} the coupling-constant
(adiabatic-connection) integration of \eqref{eq:multiint}. It is a monotone
reference quantity, not the physical correlation-energy contribution; the full,
coupling-integrated kite $e_{2b}(r_s)$---which changes sign at low
density---is evaluated in Sec.~\ref{sec:staticdyn} (Fig.~\ref{fig:brmcmp}). All
of Secs.~\ref{sec:physical}\,--\,\ref{sec:friedel} concern the static reference
$\varepsilon_{2b}^{\mathrm{stat}}$ and its geometric factor $D_0^{\mathrm{scr}}$;
the dynamical, AC-integrated kite appears only in Sec.~\ref{sec:staticdyn}.

\paragraph{Endpoint behavior of $X_c$.} The density dependence of
\eqref{eq:D0scr} is governed by the two endpoints of $X_c(q)$, exactly as the
$\Lambda$-asymptotics of $S(\Lambda)$ were governed by the endpoints of $\Phi$.
Both endpoints follow from \eqref{eq:Yq} by elementary geometry.

\emph{Small $q$.} As $q\to0$ the occupation difference localizes on the Fermi
surface, $\Delta(\vp;\vq)=\Theta(1-|\vp|)-\Theta(1-|\vp+\vq|)\to
q\,(\hat{\vp}\!\cdot\!\hq)\,\delta(|\vp|-1)$, while
$\delta(\vp;\vq)=\vp\cdot\vq+q^2/2\to q\,(\hat{\vp}\!\cdot\!\hq)$, so
$\mathcal{W}(\delta_p,\delta_k)\to q^{-1}\,\tilde{\mathcal{W}}(\hat{\vp}\!\cdot\!\hq,\hat{\vk}\!\cdot\!\hq)$
with $\tilde{\mathcal{W}}(\mu,\mu')=\tfrac12(\operatorname{sgn}(\mu\mu')-1)/(|\mu|+|\mu'|)$.
The two surface deltas contribute $q^2$ and $\mathcal{W}$ contributes $q^{-1}$, giving
\begin{align}
  X_c(q)&\sim c_0\,q\quad(q\to0),
  \label{eq:c0}\\
  c_0&=\frac{1}{16\pi}\!\int_{S^2}\!\!\frac{\rd\Omega_{\hat{\vp}}\rd\Omega_{\hat{\vk}}}
  {|\hat{\vp}-\hat{\vk}|^2}\,
  (\hat{\vp}\!\cdot\!\hq)(\hat{\vk}\!\cdot\!\hq)\,\tilde{\mathcal{W}}
  >0,
  \label{eq:c0def}
\end{align}
a convergent Fermi-surface angular integral (the apparent $|\hat{\vp}-\hat{\vk}|^{-2}$
singularity is integrable on $S^2\times S^2$). This integral can be evaluated in
closed form. Writing $c_p=\hat{\vp}\!\cdot\!\hq$, $c_k=\hat{\vk}\!\cdot\!\hq$ and
$s_p=(1-c_p^2)^{1/2}$, the factor $\tilde{\mathcal{W}}$ restricts the integration to
opposite hemispheres ($c_pc_k<0$, weight $-1$). The azimuthal integral is
elementary,
\begin{widetext}
\begin{equation}
\label{eq:phiint}
\begin{aligned}
  \int_0^{2\pi}\frac{\rd\phi}{2-2\,\hat{\vp}\!\cdot\!\hat{\vk}}
  &=\int_0^{2\pi}\frac{\rd\phi}{2(1-c_pc_k)-2s_ps_k\cos\phi}\\
  &=\frac{\pi}{|c_p-c_k|},
\end{aligned}
\end{equation}
\end{widetext}
using $[2(1-c_pc_k)]^2-[2s_ps_k]^2=4(c_p-c_k)^2$. With the overall azimuth and
the two hemisphere orderings, and $|c_p|+|c_k|=c_p-c_k$ on the relevant domain,
\begin{widetext}
\begin{equation}
\begin{aligned}
  c_0&=-\frac{1}{16\pi}\cdot 4\pi^2\!\int_0^1\!\rd c_p\!\int_{-1}^0\!\rd c_k\,
  \frac{c_pc_k}{(c_p-c_k)^2}\\
  &=\frac{1}{16\pi}\cdot 4\pi^2\Big(\log2-\tfrac12\Big),
\end{aligned}
\end{equation}
\end{widetext}
where the remaining double integral evaluates to $\log2-\tfrac12$ by
$\int_0^1\!\int_0^1 xy/(x+y)^2\,\rd x\,\rd y=\log2-\tfrac12$. Hence
\begin{equation}
  c_0=\frac{\pi}{8}\bigl(2\log2-1\bigr)=0.15169\ldots
  \label{eq:c0closed}
\end{equation}
in the normalization $\int_0^\infty X_c\,\rd q=D_0$. This closed form is the
momentum-space analogue, at the small-$q$ endpoint, of the closed-form static
factor $D_0$ of Sec.~\ref{sec:static}, and it fixes the coefficient of the
high-density $r_s\log r_s$ correction below.

\emph{Large $q$.} For $q>2$ the two occupied spheres in $\Delta(\vp;\vq)$ (centered
at $\bm 0$ and $-\vq$) are disjoint, and only the cross terms survive. There
$\delta(\vp;\vq)\simeq +q^2/2$ and $\delta(\vk;\vq)\simeq -q^2/2$ have opposite
signs, so $w\simeq -1/q^2$, while $|\vp-\vk|^{-2}\simeq q^{-2}$ and
$\Delta(\vp;\vq)\Delta(\vk;\vq)=-1$ on the two balls; the momentum integral gives
the volume factor of the two unit spheres, so
\begin{align}
  X_c(q)&\sim\frac{C_\infty}{q^4}\quad(q\to\infty),
  \label{eq:Cinf}\\
  C_\infty&=\frac{1}{16\pi}\cdot 2\Big(\frac{4\pi}{3}\Big)^{2}
  =\frac{2\pi}{9}>0 .
  \label{eq:Cinfval}
\end{align}
The two endpoints \eqref{eq:c0}--\eqref{eq:Cinf} are the momentum-space analogue
of the endpoint expansions of $\Phi$ in \eqref{eq:smallyPhi}, and
$\int_0^\infty X_c\,\rd q=D_0$ fixes the overall normalization; both are
confirmed numerically ($X_c(q)/q\to c_0$ and $q^4X_c(q)\to C_\infty$).

\begin{figure}[t]
  \centering
  \begin{tikzpicture}
  \begin{axis}[width=\linewidth,height=5.6cm,
    xlabel={$q/k_F$},ylabel={$X_c(q)$},
    xmin=0,xmax=6,ymin=0,ymax=0.17,
    grid=both,grid style={gray!18},tick align=outside,
    legend pos=north east,legend style={font=\small,fill=white,fill opacity=0.85,draw=none}]
  \addplot[blue!70!black,thick,mark=*,mark size=0.9pt] coordinates {
    (0.080,0.01080)(0.236,0.03601)(0.392,0.05794)(0.548,0.08146)(0.704,0.10046)(0.860,0.11827)(1.016,0.13350)(1.172,0.14216)(1.328,0.15114)(1.484,0.15061)(1.640,0.14354)(1.796,0.13667)(1.952,0.09851)(2.108,0.06093)(2.264,0.04086)(2.420,0.02903)(2.576,0.02141)(2.732,0.01623)(2.888,0.01258)(3.044,0.00992)(3.200,0.00795)(3.356,0.00645)(3.512,0.00530)(3.667,0.00439)(3.823,0.00368)(3.979,0.00312)(4.135,0.00266)(4.291,0.00229)(4.447,0.00198)(4.603,0.00172)(4.759,0.00150)(4.915,0.00132)(5.071,0.00116)(5.227,0.00103)(5.383,0.00091)(5.539,0.00081)(5.695,0.00073)(5.851,0.00065)};
  \addlegendentry{$X_c(q)$ (numerical)}
  \addplot[red,dashed,thick,domain=0:1.1]{0.15169*x};
  \addlegendentry{$c_0\,q$, $c_0=\pi(2\log2{-}1)/8$}
  \addplot[black,dotted] coordinates {(2,0)(2,0.17)};
  \end{axis}
  \end{tikzpicture}
  \caption{The momentum-resolved kernel $X_c(q)$ of the static RPA-screened kite
  (normalized so that $\int_0^\infty X_c\,\rd q=D_0$), evaluated by Monte Carlo.
  The dashed line is the closed-form small-$q$ slope $c_0=\pi(2\log2-1)/8$ of
  Eq.~\eqref{eq:c0closed}, which the data follow linearly from the origin; the
  absence of curvature at this order reflects $c_1=0$
  (Proposition~\ref{prop:c1}). The dotted line marks $q=2k_F$, where the two
  occupied spheres become disjoint and $X_c$ shows the $2k_F$ (Kohn/Friedel)
  feature; beyond it $X_c$ decays as $C_\infty/q^{4}$ with $C_\infty=2\pi/9$
  [Eq.~\eqref{eq:Cinf}].}
  \label{fig:xckernel}
\end{figure}

Figure~\ref{fig:scr} shows the static reference
$\varepsilon_{2b}^{\mathrm{stat}}(r_s)=e_{2x}\,D_0^{\mathrm{scr}}(r_s)/D_0$
obtained from \eqref{eq:D0scr}, evaluated with the numerically determined
$X_c(q)$. It decreases \emph{monotonically} from the bare Onsager value
$e_{2x}=48.36$~mRy at $r_s\to0$---in contrast to the full coupling-integrated
kite of Fig.~\ref{fig:brmcmp}, which changes sign near $r_s\simeq8$--$10$---with
the two limiting behaviors analyzed below.

\begin{figure}[t]
  \centering
  \begin{tikzpicture}
  \begin{axis}[width=\linewidth,height=5.6cm,
    xmode=log,xlabel={$r_s$},ylabel={$\varepsilon_{2b}^{\mathrm{stat}}$ (mRy)},
    xmin=0.4,xmax=110,ymin=0,ymax=52,
    grid=both,grid style={gray!18},tick align=outside,
    legend pos=north east,legend style={font=\small}]
  \addplot[blue!70!black,thick,mark=*,mark size=1.1pt] coordinates {
    (0.5,38.75)(1,34.25)(2,28.75)(3,25.25)(4,22.73)(6,19.26)(8,16.91)
    (10,15.19)(12,13.85)(16,11.90)(20,10.53)(30,8.34)(50,6.12)(100,3.93)};
  \addlegendentry{$\varepsilon_{2b}^{\mathrm{stat}}(r_s)$ (static, no AC)}
  \addplot[red,dashed,thick,domain=0.4:110]{48.358};
  \node[red,anchor=south west] at (axis cs:0.45,44){bare (Onsager) $e_{2x}$};
  \end{axis}
  \end{tikzpicture}
  \caption{Density dependence of the \emph{static reference}
  $\varepsilon_{2b}^{\mathrm{stat}}(r_s)=e_{2x}D_0^{\mathrm{scr}}(r_s)/D_0$
  [Eq.~\eqref{eq:D0scr}]---the reduced functional on the static RPA-screened
  line, \emph{without} the coupling-constant (adiabatic-connection) integration,
  with $r_s$ entering only through $q_{\mathrm{TF}}^2(r_s)=4\alpha_L r_s/\pi$.
  This is \emph{not} the physical correlation-energy contribution (that is the
  coupling-integrated kite of Fig.~\ref{fig:brmcmp}); it decreases monotonically
  from the bare Onsager--Mittag--Stephen value $e_{2x}$ (dashed) as $r_s\to0$,
  with a high-density $r_s\log r_s$ onset [Eq.~\eqref{eq:highdens}] and a
  low-density
  $r_s^{-3/4}$ asymptote [Eq.~\eqref{eq:lowdens}], reached through an effective
  $r_s^{-1/2}$ regime in the window shown.}
  \label{fig:scr}
\end{figure}

\subsection{High-density limit: an $r_s\log r_s$ correction}
As $r_s\to0$, $q_{\mathrm{TF}}\to0$ and
\begin{equation}
  D_0-D_0^{\mathrm{scr}}(r_s)
  =\int_0^\infty X_c(q)\,\frac{q_{\mathrm{TF}}^{2}f_L(q/2)}{q^{2}+q_{\mathrm{TF}}^{2}f_L(q/2)}\,\rd q .
\end{equation}
The integral is dominated by the region $q\lesssim q_{\mathrm{TF}}$, where
$X_c(q)\sim c_0 q$ and $f_L(q/2)\to1$; there the integrand behaves as
$c_0 q\cdot q_{\mathrm{TF}}^{2}/q^{2}=c_0 q_{\mathrm{TF}}^{2}/q$, and
$\int_{q_{\mathrm{TF}}}\rd q/q$ produces a logarithm of the screening scale.
Hence
\begin{align}
  D_0^{\mathrm{scr}}(r_s)&=D_0+c_{\log}\,r_s\log r_s+O(r_s),
  \label{eq:highdens}\\
  c_{\log}&=\frac{2\alpha_L}{\pi}\,c_0=\frac{\alpha_L(2\log2-1)}{4}=0.05032,
  \label{eq:clog}
\end{align}
where the coefficient follows in closed form by carrying out the small-$q$
integral with $X_c=c_0q$ and $q_{\mathrm{TF}}^{2}=4\alpha_Lr_s/\pi$ (a
cancellation-free fit of the difference $D_0-D_0^{\mathrm{scr}}$ at
$r_s\le10^{-3}$ gives $0.0502$, confirming it). In energy units the static
onset coefficient is
$C_{\log}^{\mathrm{stat}}=\frac{6}{\pi^{3}}c_{\log}
=\frac{3\alpha_L(2\log2-1)}{2\pi^{3}}=0.00974~\Ry$.
so that the leading density correction to the screened exchange is
$\propto r_s\log r_s$ (negative for $r_s<1$). This is the same logarithmic
mechanism as the $\Lambda^{2}(\log\Lambda)^{2}$ term of
Theorem~\ref{thm:weak}: a small-scale (small-$q$) endpoint of the geometric
kernel, screened at a scale $\propto\sqrt{r_s}$, generates a logarithm---in the
Mellin language of Sec.~\ref{sec:mb}, the endpoint pole of the kernel transform
collides with a pole of the screening-profile transform, and the resulting
double pole is what produces the logarithm (the triple version of the same
collision produces $\Lambda^{2}(\log\Lambda)^{2}$). The
absence of a $\sqrt{r_s}$ term reflects the absence of a $q^{0}$ (constant) piece
in $X_c(q)$ as $q\to0$---equivalently, the absence of the corresponding simple
Mellin pole---so that the leading nonanalytic density correction is
$r_s\log r_s$ and not $r_s^{1/2}$.

\subsection{Vanishing of the $q^{2}$ coefficient and the absence of an
$r_s^{3/2}$ term}\label{sec:c1}
The same reduction integral shows that the \emph{next} density correction is
controlled by the next coefficient in the small-$q$ expansion of $X_c(q)$.
Writing $X_c(q)=c_0 q+c_1 q^{2}+c_2 q^{3}+\cdots$ for $q>0$ and inserting each
term into $D_0-D_0^{\mathrm{scr}}=\int X_c(q)\,q_{\mathrm{TF}}^{2}f_L/(q^{2}+q_{\mathrm{TF}}^{2}f_L)\,\rd q$,
the substitution $u=q/q_{\mathrm{TF}}$ gives, for the $c_n q^{n}$ term, a factor
$q_{\mathrm{TF}}^{\,n+1}\int u^{n}/(u^{2}+1)\,\rd u$; the cases $n=1$ and $n=3$
diverge logarithmically at the upper end and produce
$q_{\mathrm{TF}}^{2}\log q_{\mathrm{TF}}\sim r_s\log r_s$ and
$q_{\mathrm{TF}}^{4}\log q_{\mathrm{TF}}\sim r_s^{2}\log r_s$, whereas $n=2$
produces a half-integer power $q_{\mathrm{TF}}^{3}\sim r_s^{3/2}$. Thus a
nonzero $c_1$ would generate an $r_s^{3/2}$ term. We now show that $c_1=0$.

\begin{proposition}\label{prop:c1}
The momentum kernel is even, $X_c(-q)=X_c(q)$, and its $q^{2}$ coefficient
vanishes: $c_1=0$. Consequently $X_c(q)=c_0 q+c_2 q^{3}+O(q^{4})$ for $q>0$.
\end{proposition}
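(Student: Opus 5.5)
Evenness first. I would fix the direction $\hq$ and regard $Y$ in \eqref{eq:Ydef} as a function of the signed scalar $q$ via $\vq=q\hq$. By rotation invariance $Y(-q\hq)=Y(q\hq)$, so $q\mapsto X_c(q)$ is even. To obtain a symmetry that acts on the integrand pointwise, I would also use the parity map $(\vp,\vk)\mapsto(-\vp-\vq,-\vk-\vq)$ already used in Sec.~\ref{sec:setting}. Under it $\Delta\to-\Delta$ and $\delta\to-\delta$, while $\mathcal{W}(-a,-b)=\mathcal{W}(a,b)$ and $|\vp-\vk|$ is unchanged, so the integrand of $Y$ is invariant.

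Evenness of $X_c$ alone does not kill $c_1$: $c_0\abs{q}$ is even and non-analytic, and $q^{2}$ is also even. The real content is therefore a second-order expansion about the Fermi-surface limit used for \eqref{eq:c0def}. Three pieces would be collected to relative order $q$, with $|\vp|=1+\rho$ and $\rho=O(q)$:
\begin{enumerate}[label=(\roman*)]
\item the distributional expansion
\begin{equation*}
\Theta(1-|\vp|)-\Theta(1-|\vp+\vq|)=q\,c_p\,\delta(|\vp|-1)+\tfrac12q^{2}\bigl[\cdots\bigr]+O(q^{3}),
\end{equation*}
i.e.\ the second directional derivative of the step, which includes the curvature term $(1-c_p^{2})$;
\item the shift $\delta(\vp;\vq)=q\,(c_p(1+\rho)+q/2)$ inside $\mathcal{W}$, with $\rho$ integrated across the shell;
\item the radial displacement in $|\vp-\vk|^{-2}$.
\end{enumerate}
Each of these corrections multiplies the leading Fermi-surface integrand of \eqref{eq:c0def}, which is even under the joint reflection $(c_p,c_k)\to(-c_p,-c_k)$, i.e.\ $\hq\to-\hq$ with the two hemispheres exchanged. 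I would then check that every $O(q)$ correction factor is odd under that reflection: the $q/2$ shift and the shell-position terms carry one extra power of $c$ or one sign flip. The $q^{2}$ integrand should therefore integrate to zero over the opposite-hemisphere domain. Concretely, after the azimuthal integral \eqref{eq:phiint} the result would be a $(c_p,c_k)$ double integral antisymmetric under $(c_p,c_k)\mapsto(-c_k,-c_p)$, and its vanishing is immediate.

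The main obstacle is justifying this expansion uniformly, since the Taylor expansion breaks down in two regions. One is the near-collinear region $\hat{\vp}\approx\hat{\vk}$, where $|\vp-\vk|^{-2}$ is sensitive to the $O(q)$ radial displacements. The other is near the equator $c_p,c_k=O(q)$, where $\operatorname{sgn}(\delta_p\delta_k)$ and $|c_p-c_k|^{-1}$ from \eqref{eq:phiint} are not smooth in $q$. Because $\tilde{\mathcal{W}}$ confines the integral to opposite hemispheres, near-collinearity forces both points into the band $|c|\lesssim q$ near the equator.

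For that band I would do a scaling estimate. Rescale $c_p=q\sigma$, $c_k=q\tau$, radial offsets $\rho=q r$, and angular separation $q\theta$. The measure is then $O(q^{2})\cdot O(q^{2})$ and the singular factors are at most $O(q^{-2})\cdot O(q^{-1})$. The factors $c_pc_k=O(q^{2})$ from the occupation differences then give a band contribution to $X_c$ of order $q^{3}$, possibly with a logarithm. I would first try to bound this directly after the $\phi$-integral, where the kernel becomes $\pi/\bigl((1+\rho_p)(1+\rho_k)\,|c_p-c_k|\bigr)$ up to $O(\rho)$ corrections. If a $q^{3}\log q$ appears it is still harmless for $c_1$.

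Outside the band the expansion is uniform and the reflection argument applies. Combining both regions gives $X_c=c_0q+0\cdot q^{2}+O(q^{3})$, and evenness in $q$ organizes the remaining terms as $c_2q^{3}+O(q^{4})$.
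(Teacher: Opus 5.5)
Your argument for $c_1=0$ follows the paper's proof in Appendix~\ref{app:c1}. It uses the same ingredients: the second-order Fermi-surface expansion with the curvature term, the $\delta'$ terms carrying the radial dependence of the Coulomb and frequency factors, and the $q/2$ shift, which drops out of $\abs{\delta_p}+\abs{\delta_k}$ on opposite hemispheres. It then uses oddness of the resulting $q^{2}$ integrand under the equatorial mirror; the paper composes the mirror with $p\leftrightarrow k$, which makes no difference because that integrand is $p\leftrightarrow k$ symmetric. Finally it uses power counting for the band $\abs{c}\lesssim q$, at the same heuristic level as the paper's step (iii). That part is fine.

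The step that fails is your last sentence. Evenness in the signed variable holds for every function of $\abs{\vq}$. For exactly the reason you gave in your second paragraph, it therefore says nothing about the form of the expansion for $q>0$: $\abs{q}^{3}\log\abs{q}$ is just as even as $q^{3}$. Your band estimate explicitly leaves a $q^{3}\log q$ term open, so $c_0q+c_2q^{3}+O(q^{4})$ is not established. The paper's step (i) draws the same inference from evenness.

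The log is not hypothetical in the origin-centred bookkeeping. At third order the outer integrand contains $b_pb_kF(1,1)$ (notation of Appendix~\ref{app:c1}). Its azimuthal integral behaves like a constant times $(\abs{c_p}+\abs{c_k})^{-2}$ near $c_p=c_k=0$, which is logarithmically divergent in the $(c_p,c_k)$ plane. So the outer region and the band each produce $q^{3}\log q$ pieces, and their cancellation would have to be shown.

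You can close this, and get $c_1=0$ without the reflection bookkeeping or a band analysis, by recentring at the fixed point of the parity map you already wrote down: $\vp\to\vp-\vq/2$, $\vk\to\vk-\vq/2$.
\begin{itemize}
\item Then $\delta(\vp;\vq)\to\vp\cdot\vq$, so $\mathcal{W}=q^{-1}\mathcal{W}(p_z,k_z)$ exactly, with $z\parallel\hq$.
\item The occupation difference is $+1$ on the crescent $C^+_q=\{\abs{\vp-\vq/2}<1<\abs{\vp+\vq/2}\}$, which lies exactly in $p_z>0$. It is $-1$ on the mirror image $C^-_q$.
\item Hence
\[
  Y(q)=\frac{2}{q}\int_{C^+_q}\!\rd^3p\int_{C^-_q}\!\rd^3k\;
  \frac{1}{(\abs{p_z}+\abs{k_z})\,\abs{\vp-\vk}^{2}} .
\]
\item Along a direction with polar cosine $c$, the crescent is a radial interval of width $q\abs{c}$ centred at $\bar r=\sqrt{1-q^{2}(1-c^{2})/4}$, which depends on $q$ only through $q^{2}$.
\item On the support, $\abs{r_p-r_k}\lesssim q(\abs{c_p}+\abs{c_k})\le q\abs{\hat{\vp}-\hat{\vk}}$. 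So the radial Taylor expansion of the integrand about the window centres is uniform: each order costs a power of $q$ and keeps the leading term's integrable singularity.
\item A window symmetric about its centre produces only odd powers of its width.
\end{itemize}
Therefore $X_c(q)$ is $q$ times an expansion in $q^{2}$. The leading term reproduces $c_0=\pi(2\log2-1)/8$ via \eqref{eq:phiint}, $c_1=0$ is structural, and no $q^{3}\log q$ can occur. The curvature and $q/2$-shift corrections, the equatorial sign-flip band, and the near-collinear breakdown all come from expanding about the sphere centred at the origin rather than at the midpoint $-\vq/2$.
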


\begin{proof}[Why $c_1=0$]
Evenness $X_c(-q)=X_c(q)$ follows from the reflection
$\vp\to-\vp,\vk\to-\vk$ of \eqref{eq:Yq}, so $X_c$ can contain only
$q,q^3,\dots$ unless a genuine $q^2$ term is generated. The candidate $q^2$
term has two sources, and both vanish: the Fermi-surface (``surface) part is
odd under the equatorial reflection $(c_p,c_k)\mapsto(-c_k,-c_p)$ and integrates
to zero (verified in closed form, including the $\partial_r F$ terms), while the
near-collinear region---where the Fermi-surface expansion breaks down---is
$O(q^3)$ by power counting, regulating what would otherwise look like a
$q^2\log q$ term. The full calculation is given in Appendix~\ref{app:c1}.
\end{proof}

The apparent logarithmic divergence of the surface integral $I_1$ at
$\cos\gamma\to1$ is regulated by the true Coulomb interaction in the
near-collinear region; the scaling in step (iii)---which we have confirmed
numerically ($X_c^{\mathrm{coll}}/q^{3}$ approaches a constant)---shows that this
region falls at $O(q^{3})$ rather than producing a $q^{2}\log q$ term. A direct
Monte Carlo evaluation of $X_c(q)$ with the physical Coulomb kernel, fitted to
$c_0q+c_1q^2$, returns $c_1$ consistent with zero.

By the reduction integral above, $c_1=0$ removes the $r_s^{3/2}$ term entirely,
so that
\begin{equation}
  D_0^{\mathrm{scr}}(r_s)=D_0+c_1'\,r_s\log r_s+O(r_s)+c_2'\,r_s^{2}\log r_s+\cdots,
  \label{eq:hd2}
\end{equation}
with \emph{no} half-integer power: the leading correction beyond the
$r_s\log r_s$ onset is $r_s^{2}\log r_s$, generated by the $q^{3}$ coefficient
$c_2$. The high-density expansion of the screened exchange is therefore a series
in integer powers of $r_s$ times logarithms, consistent with the standard
perturbative structure of the correlation energy.

\subsection{Relation to the empirical parametrization}\label{sec:empirical}
The high-density structure just derived---an $r_s\log r_s$ onset, no half-integer
$r_s^{3/2}$ term, and a subleading $r_s^{2}\log r_s$---agrees with, and gives a
diagrammatic proof of, the asymptotic form adopted in the numerical study of the
kite by Benites, Rosado, and Manousakis \cite{Benites2024}. To extract their
small-$r_s$ coefficients, they fit the Monte Carlo data to
$e_{2b}(r_s)=e_{2x}+C_{\log}^{\mathrm{dyn}} r_s\log r_s+C_2^{\mathrm{dyn}} r_s+C_3^{\mathrm{dyn}} r_s^{2}\log r_s$
[their Eq.~(39)]---precisely the integer-power-times-logarithm structure implied
by Proposition~\ref{prop:c1}, with \emph{no} half-integer term and with
$r_s^{2}\log r_s$ as the first correction beyond the onset. Our reduction
supplies the analytic reason this is the correct form: $c_1=0$ forbids a
$r_s^{3/2}$ term, and the next admissible nonanalyticity is $r_s^{2}\log r_s$,
generated by the $q^{3}$ coefficient $c_2$. The reported coefficient
$C_{\log}^{\mathrm{dyn}}=0.01212\,\Ry$ is consistent with the exact Loos--Gill value
$0.01304\,\Ry$ \cite{LoosGill2011} at the level expected of a finite-window fit
coefficient, which absorbs part of the higher-order terms, while their $C_2$
fixes the linear term, for which we are not aware of a closed form; the coefficient $C_3$ of the
$r_s^{2}\log r_s$ term, which they determine numerically as
$C_3\simeq-0.0056\,\Ry$, is the quantity that a determination of $c_2$ in the
present framework would fix in closed form---a natural next step.

A separate, purely interpolatory form is used for the all-$r_s$ functional
[their Eq.~(58)], in which a term $a_4 r_s^{3/2}$ appears inside the argument of a
logarithm to smoothly bridge the high- and low-density regimes. Expanding that
form at small $r_s$ does generate a subleading $r_s^{3/2}$ contribution
$\propto a_2 a_4/a_3$; our result identifies this term as an artifact of the
global interpolation rather than a feature of the true asymptotics---consistent
with the authors' own small-$r_s$ analysis, which omits it. (The same care is
visible in their ring-diagram fit, which they explicitly construct to avoid the
spurious $\sqrt{r_s}$ term present in some gradient functionals, and whose
large-$r_s$ side is fixed to the exactly evaluated ring asymptote
$e_0^{\rm ring}\,r_s^{-3/4}$.) On the low-density side of the kite the situation is the
mirror image of the high-density one: analytically, Ref.~\cite{Benites2024}
establish only that the kite \emph{vanishes} as $r_s\to\infty$ (their screened
correction tends to $-e_{2x}$), while the $r_s^{-1/2}$ tail of
their interpolation is a property of the fitting form whose asymptotic regime
sets in only at $r_s\sim(a_3/a_4)^{2}\simeq10^{4}$, far beyond any data. The
effective-exponent analysis of Sec.~\ref{sec:lowdens} explains why an
inverse-square-root form nevertheless fits the physical window well; the true
asymptote of the statically screened reference is the $r_s^{-3/4}$ law of
Eq.~\eqref{eq:lowdens}, the same power as their exactly known ring asymptote.
The practical consequence is minor for the fit quality over the materials
range, but the distinction clarifies which features of such a parametrization
carry diagrammatic meaning and which are interpolation choices.

\subsection{Low-density limit: a crossover and an $r_s^{-3/4}$ decay}\label{sec:lowdens}
As $r_s\to\infty$ the behavior is richer than a single power, and
resolving it provides a stringent test of what ``physical screening'' means.
The Lindhard function does not act as a constant mass: $f_L(q/2)\simeq4/(3q^{2})$
at large momentum, so the screening term $q_{\mathrm{TF}}^{2}f_L(q/2)$ in
\eqref{eq:D0scr} \emph{dies} at large $q$, and screened and unscreened behavior
are separated by a crossover scale
\begin{equation}
  q_*=\Big(\tfrac43\,q_{\mathrm{TF}}^{2}\Big)^{1/4}\propto r_s^{1/4}.
  \label{eq:qstar}
\end{equation}
Below $q_*$ the integrand of \eqref{eq:D0scr} is screened away; above $q_*$ the
kernel is already in its $C_\infty/q^{4}$ tail. Substituting $q=q_*t$ and using
$\int_0^\infty\rd t/(1+t^{4})=\pi/(2\sqrt2)$ gives the closed-form asymptote
\begin{align}
  D_0^{\mathrm{scr}}(r_s)&\to\frac{A_{3/4}}{r_s^{3/4}}\quad(r_s\to\infty),
  \label{eq:lowdens}\\
  A_{3/4}&=\frac{\pi^{2}}{9\sqrt2}\Big(\frac34\Big)^{3/4}
  \Big(\frac{\pi}{4\alpha_L}\Big)^{3/4}=0.8501\ldots,
  \label{eq:A34}
\end{align}
a $-3/4$ power whose coefficient is fixed by $C_\infty=2\pi/9$ and the Lindhard
tail alone. The power itself is characteristic of RPA screening at low density:
the sum of ring diagrams is known to decay as $e_0^{\rm ring}\,r_s^{-3/4}$, with
$e_0^{\rm ring}=-0.803$ Ry a coefficient that Ref.~\cite{Benites2024} evaluates exactly and
finds $\zeta$-independent. The statically screened kite thus joins the rings in
the same $r_s^{-3/4}$ class---the crossover scale $q_*\propto
q_{\mathrm{TF}}^{1/2}$ is the common origin---and \eqref{eq:lowdens} supplies
the kite-side coefficient in closed form. Direct evaluation of \eqref{eq:D0scr} confirms this: the local
exponent $\rd\log D_0^{\mathrm{scr}}/\rd\log r_s$ is
$-0.34,\,-0.51,\,-0.59,\,-0.66,\,-0.70,\,-0.73$ at
$r_s=4,\,16,\,64,\,256,\,1024,\,16384$, and $D_0^{\mathrm{scr}}\,r_s^{3/4}$
approaches $A_{3/4}$ with a correction $\simeq0.55\,r_s^{-1/4}$, the next term
of the crossover expansion.

Two remarks make this structure useful rather than merely curious. First, the
crossover $q_*$ reaches $2k_F$ only at $r_s\simeq18$; below that the decay is
\emph{effectively} inverse-square-root---the local exponent passes through
$-1/2$ near $r_s\simeq16$---which is precisely the window where the SOSEX data
exist and where inverse-square-root forms fit well \cite{Benites2024}. The
$r_s^{-1/2}$ of the empirical fits is thus an effective law of the physical
window, not the true asymptote. Second, the exponent is characteristic of the
large-$q$ structure of the screening: a single Yukawa pole
($\kappa\propto\sqrt{r_s}$, a mass that never dies) gives $r_s^{-1}$, whereas
the physical Lindhard screening, whose screening mass dies at large $q$, gives
$r_s^{-3/4}$. The exponent is thus selected by the diagram together with the
physical screening, with no adjustable power.

\paragraph{The Mellin view: same theorem, new scale.}
The asymptote \eqref{eq:lowdens} is not an accident of the substitution
$q=q_*t$: it is an instance of the contour-shift mechanism established
rigorously for the RC-SP model (Theorem~\ref{thm:weak};
Appendix~\ref{smsec:shift}). In the tail region the screening acts as a profile
of the single scale $q_*$, $H(q/q_*)$ with $H(t)=t^{4}/(1+t^{4})$, so the
low-density integral is a Mellin convolution, and the correspondence of
Sec.~\ref{sec:mb} applies verbatim: the asymptotic \emph{power} is the location
of the first pole crossed by the shifted contour---here the simple pole of
$\Mellin[X_c](s)$ at $s=4$ generated by the $C_\infty/q^{4}$ tail; the
\emph{coefficient} is its residue against the profile transform,
$C_\infty\int_0^\infty\rd t/(1+t^{4})=C_\infty\,\pi/(2\sqrt2)$; and a
\emph{logarithm} would require a pole collision, which does not occur here.
Hence a pure power at low density---in contrast to the high-density side,
where the small-$q$ endpoint pole coincides with a pole of the screening-profile
transform and the resulting double pole produces $r_s\log r_s$, the two-pole
version of the triple-pole collision behind $\Lambda^{2}(\log\Lambda)^{2}$ in
Theorem~\ref{thm:weak}. The integer spacing of the subsequent poles in the
$q_*$ variable, $q_*^{-3},q_*^{-4},\dots$, likewise explains the numerically
observed correction ladder in steps of $r_s^{-1/4}$. The physical screening
carries \emph{two} scales---$q_{\mathrm{TF}}$, where screening turns on, and
$q_*\propto q_{\mathrm{TF}}^{1/2}$, where it turns off---and once the physical
scale is fed into the same analysis, exponent and coefficient follow. (A fully
quantified version with remainder estimates, parallel to
Appendix~\ref{smsec:shift}, is straightforward and left implicit.)

\subsection{Density dependence without a free exponent}
The essential point is that \eqref{eq:highdens} and \eqref{eq:lowdens} are
obtained with $r_s$ entering only through the physical screening scale
$q_{\mathrm{TF}}^{2}(r_s)\propto r_s$: the exponents of the density dependence
are fixed by the endpoint powers of $X_c(q)$ in \eqref{eq:c0}--\eqref{eq:Cinf} together with
$q_{\mathrm{TF}}^{2}\propto r_s$, and are not adjustable. The resulting forms
\begin{align}
  D_0^{\mathrm{scr}}(r_s)-D_0&\sim r_s\log r_s & &(r_s\to0),
  \label{eq:forms-hd}\\
  D_0^{\mathrm{scr}}(r_s)&\sim A_{3/4}\,r_s^{-3/4} & &(r_s\to\infty),
  \label{eq:forms-ld}
\end{align}
are therefore determined by the structure of the diagram. The high-density form
coincides with that used to fit the numerically evaluated SOSEX contribution
\cite{Benites2024}; the low-density side reproduces, as the effective exponent of
the physical window (Sec.~\ref{sec:lowdens}), the inverse-square-root form of the
same fits---while supplying, beyond that window, the true $r_s^{-3/4}$ asymptote
and its closed-form coefficient. Both endpoints follow from the endpoint
asymptotics of the single geometric kernel $X_c(q)$, rather than being imposed.
The density dependence is thus fixed by the physical screening, not by a chosen
exponent.

\subsection{Effective single-pole content of the physical screening and the
$2k_F$ Friedel residual}\label{sec:friedel}
The multipole decomposition of Sec.~\ref{sec:setting} takes an especially simple
form for the physical static screening, and this makes the role of the RC-SP
building block concrete. The static Lindhard-screened radial interaction admits
the decomposition
\begin{widetext}
\begin{equation}
\label{eq:yukfriedel}
\begin{aligned}
  D_{\mathrm{stat}}(q;r_s)&=\frac{1}{q^{2}+q_{\mathrm{TF}}^{2}f_L(q/2)}\\
  &=\underbrace{\frac{Z}{q^{2}+\kappa_*^{2}}}_{\text{single Yukawa}}
  +\underbrace{\Delta(q)}_{\text{Friedel residual}},
\end{aligned}
\end{equation}
\end{widetext}
where the effective Yukawa pole $(\kappa_*,Z)$ is fixed by the analytic
continuation of $D_{\mathrm{stat}}$ to the screened pole $q^{2}=-\kappa_*^{2}$, and
the residual $\Delta(q)$ carries the $2k_F$ ($q=2$) nonanalyticity of $f_L$---the
Kohn anomaly---and is localized near $q=2$. The single Yukawa term is exactly an
RC-SP kernel in the static limit, so by the linearity of $\mathcal{K}[D]$ the kite splits
as $\mathcal{K}[D_{\mathrm{stat}}]=Z\,\mathcal{K}[\text{Yukawa}(\kappa_*)]+\mathcal{K}[\Delta]$. Evaluating both
pieces directly,
\begin{center}
\begin{tabular}{c|ccc|c}
$r_s$ & $\kappa_*$ & $Z$ & Friedel share $\mathcal{K}[\Delta]/\mathcal{K}[D_{\mathrm{stat}}]$ &\\\hline
$1$ & $0.84$ & $1.06$ & $1.2\%$ &\\
$2$ & $1.22$ & $1.11$ & $2.8\%$ &\\
$4$ & $1.81$ & $1.21$ & $6.8\%$ &\\
\end{tabular}
\end{center}
so that a \emph{single} effective Yukawa (one RC-SP pole) reproduces
$93$--$99\%$ of the screened-exchange geometric factor over $r_s=1$--$4$, with the
$2k_F$ Friedel residual accounting for the remaining few percent and growing
slowly with $r_s$. The physical screening is thus, to good accuracy, a single
RC-SP contribution; the multipole (Yukawa) decomposition of
Sec.~\ref{sec:setting} is not merely formal but numerically dominated by its
leading pole, with a small, well-identified $2k_F$ correction.

\begin{figure}[t]
  \centering
  \begin{tikzpicture}
  \begin{axis}[width=\linewidth,height=5.6cm,
    xlabel={$q/k_F$},ylabel={$D_{\mathrm{stat}}(q)$},
    xmin=0,xmax=4,ymin=0,ymax=0.40,
    grid=both,grid style={gray!18},tick align=outside,
    legend pos=north east,legend style={font=\small,fill=white,fill opacity=0.85,draw=none}]
  \addplot[blue!70!black,thick] coordinates {
    (0.10,0.3757)(0.34,0.3647)(0.57,0.3439)(0.81,0.3166)(1.05,0.2865)(1.28,0.2566)(1.52,0.2289)(1.75,0.2049)(1.99,0.1877)(2.23,0.1704)(2.46,0.1478)(2.70,0.1274)(2.94,0.1101)(3.17,0.0957)(3.41,0.0837)(3.65,0.0737)(3.88,0.0653)(4.00,0.0616)};
  \addlegendentry{$D_{\mathrm{stat}}(q)$ (Lindhard)}
  \addplot[red,dashed,thick] coordinates {
    (0.10,0.3661)(0.34,0.3550)(0.57,0.3339)(0.81,0.3063)(1.05,0.2757)(1.28,0.2449)(1.52,0.2160)(1.75,0.1897)(1.99,0.1666)(2.23,0.1465)(2.46,0.1291)(2.70,0.1142)(2.94,0.1014)(3.17,0.0905)(3.41,0.0810)(3.65,0.0729)(3.88,0.0658)(4.00,0.0626)};
  \addlegendentry{single Yukawa $Z/(q^2{+}\kappa_*^2)$}
  \addplot[black!55!green,thick,mark=*,mark size=0.8pt] coordinates {
    (0.10,0.048)(0.34,0.048)(0.57,0.050)(0.81,0.051)(1.05,0.054)(1.28,0.058)(1.52,0.065)(1.75,0.076)(1.99,0.105)(2.11,0.125)(2.23,0.119)(2.46,0.093)(2.70,0.066)(2.94,0.043)(3.17,0.026)(3.41,0.013)(3.65,0.004)(3.88,-0.003)(4.00,-0.005)};
  \addlegendentry{$5\times$ residual $\Delta(q)$}
  \addplot[black,dotted] coordinates {(2,0)(2,0.40)};
  \end{axis}
  \end{tikzpicture}
  \caption{Single-pole content of the physical static screening at $r_s=4$
  ($\kappa_*=1.81$, $Z=1.21$).
  The Lindhard-screened radial interaction $D_{\mathrm{stat}}(q)$ (solid) is
  reproduced almost entirely by one effective Yukawa (RC-SP) pole (dashed); the
  residual $\Delta(q)$ (shown magnified $5\times$) is small everywhere and peaks
  sharply at $q=2k_F$ (dotted line), where it carries the $2k_F$
  Friedel/Kohn nonanalyticity of $f_L$. This is the concrete realization, for the
  physical screening, of the multipole decomposition of Sec.~\ref{sec:setting}:
  one dominant pole plus a small, well-localized $2k_F$ correction.}
  \label{fig:friedel}
\end{figure}

\subsection{Static versus dynamic screening, and comparison with the
numerically evaluated kite}\label{sec:staticdyn}
Equation \eqref{eq:D0scr} uses the static ($\xi=0$) Lindhard screening and a
single coupling; it is a \emph{reference} quantity, not the full kite. The
full frequency dependence---the object idealized by the RC-SP scale
$\Lambda$---enters through the frequency integral in \eqref{eq:KD}, and the
coupling-constant (adiabatic-connection) integration
$\int_0^1\lambda\,\rd\lambda$ further rescales the density dependence. These two
effects preserve the \emph{functional forms} of
Eqs.~\eqref{eq:forms-hd}--\eqref{eq:forms-ld}, because the
endpoint powers of $X_c(q)$ and the relation $q_{\mathrm{TF}}^2\propto r_s$ are
common to the static and dynamic cases; but they do \emph{not} act as an
order-unity rescaling of the magnitude. This is seen directly by comparison with
the numerically evaluated kite of Ref.~\cite{Benites2024} (their Table IX, at
$\zeta=0$): the static reference $\varepsilon_{2b}^{\mathrm{stat}}(r_s)$ decreases from
the bare Onsager value but stays positive and well above the full result at
intermediate density (e.g.\ $22.7$ vs $10.6$~mRy at $r_s=4$), and it does not
reproduce the sign change of the full kite near $r_s\simeq8$--$10$. The
difference grows with $r_s$ because both the retardation of the screening and the
adiabatic-connection integration suppress the contribution increasingly as the
density falls, an effect absent from the single-coupling static object. Thus the
present analysis fixes the \emph{closed-form limits and functional forms} of the
screened-exchange contribution---the leading small-$q$ coefficient
\eqref{eq:c0closed}, the absence of an $r_s^{3/2}$ term
(Proposition~\ref{prop:c1}), the $r_s\log r_s$ onset and the crossover-governed
low-density
decay---while the \emph{absolute magnitude} across the full density range, and in
particular the low-density sign change, require the dynamical frequency integral
together with the adiabatic-connection integration.

\paragraph{Dynamical adiabatic-connection evaluation and comparison with the
numerical kite.} We have carried out this dynamical evaluation using the same
reduced kernel with the physical Lindhard spectral weight of
\eqref{eq:spectral}. The frequency integral is performed analytically per
momentum sample through the static kernel $\mathcal{W}$ and its closed-form dynamical
generalization $J(\omega)$, the single frequency integral
\begin{widetext}
\begin{equation}
  J(\omega;a,b)=\frac1\pi\!\int_0^\infty\!\frac{ab-\xi^2}
  {(a^2+\xi^2)(b^2+\xi^2)(\xi^2+\omega^2)}\,\rd\xi
  =\begin{cases}
    \dfrac{1}{2\omega(|a|+\omega)(|b|+\omega)}, & ab>0,\\[10pt]
    -\dfrac{|a|+|b|+2\omega}{2\omega(|a|+|b|)(|a|+\omega)(|b|+\omega)}, & ab<0,
  \end{cases}
  \label{eq:Jmu}
\end{equation}
\end{widetext}
obtained by residues (Appendix~\ref{app:Jkernel}). Both branches are manifestly
regular: the apparent coincidences at $\omega=|a|,|b|$ and $|a|=|b|$ of the
intermediate partial-fraction form are removable, and \eqref{eq:Jmu} is used
directly for numerical stability. The two-spin RPA dielectric
supplies the loss weight $\rho(\omega;q)$, and
the coupling integration $\int_0^1\lambda\,\rd\lambda$ is done last; the overall
normalization is fixed once, at $r_s\to0$, to the Onsager value $e_{2x}$, with no
adjustment to the numerical data. The result reproduces the numerically
evaluated kite of Ref.~\cite{Benites2024} across the density range, and---because
the construction carries the spin index through the exchange loop and the
dielectric (Sec.~\ref{sec:spin})---does so for both the unpolarized and the fully
polarized gas:
\begin{center}
\begin{tabular}{c|ccc|ccc}
 & \multicolumn{3}{c|}{this work (mRy)} & \multicolumn{3}{c}{Ref.~\cite{Benites2024} (mRy)}\\
$\zeta$ & $r_s{=}1$ & $r_s{=}2$ & $r_s{=}5$ & $r_s{=}1$ & $r_s{=}2$ & $r_s{=}5$\\\hline
$0$ & $28.2$ & $20.3$ & $8.1$ & $27.6(8)$ & $20.3(3)$ & $8.0(3)$\\
$1$ & $35.9$ & $30.2$ & $20.1$ & $36.1$ & $30.2$ & $20.1$\\
\end{tabular}
\end{center}
(Parentheses are the Monte Carlo uncertainties of Ref.~\cite{Benites2024},
Table~IX.) The agreement is at the level of $\sim1$ mRy---a few percent of the
high-density scale, and within roughly the combined uncertainties---for both
polarizations, with the
$\zeta=1$ column an essentially parameter-free prediction (the single
normalization is fixed at $\zeta=0$).

\begin{figure*}[t]
  \centering
  \begin{tikzpicture}
  \begin{axis}[width=0.72\textwidth,height=6.0cm,
    xlabel={$r_s$},ylabel={$e_{2b}$ (mRy)},
    xmin=0,xmax=10.6,ymin=-6,ymax=32,
    grid=both,grid style={gray!18},tick align=outside,
    legend pos=north east,legend style={font=\small,fill=white,fill opacity=0.85,draw=none}]
  \addplot[black,dashed,thick,domain=0:10.6]{0};
  \addplot[blue!70!black,thick,mark=*,mark size=1.3pt] coordinates {
    (1,28.23)(2,20.34)(3,15.08)(4,11.37)(5,8.11)(6,5.69)(8,1.89)(10,-0.94)};
  \addlegendentry{this work (dynamical AC, $\zeta{=}0$)}
  \addplot[red,only marks,mark=square,mark size=2.2pt,
    error bars/.cd,y dir=both,y explicit] coordinates {
    (1,27.62)+-(0,0.84)(2,20.31)+-(0,0.33)(3,14.85)+-(0,0.68)(4,10.56)+-(0,0.44)(5,8.04)+-(0,0.34)(6,5.07)+-(0,0.39)(8,0.78)+-(0,0.69)(10,-2.11)+-(0,0.41)};
  \addlegendentry{Ref.~\cite{Benites2024}, Table IX}
  \end{axis}
  \end{tikzpicture}
  \caption{Density dependence of the adiabatic-connection screened second-order
  exchange at $\zeta=0$, evaluated by the dynamical spectral (multipole) reduction
  of the present framework---the physical RPA screening is represented through the
  loss weight $\rho(\omega;q)$ of \eqref{eq:spectral} and the frequency integral is
  done with the closed-form kernel $J(\omega)$---and compared with the numerically
  evaluated kite of Ref.~\cite{Benites2024}. The normalization is fixed once, at
  $r_s\to0$, to the Onsager value; the curve then reproduces the tabulated values
  across the full density range, including the low-density sign change near
  $r_s\simeq8$--$10$. As shown in Sec.~\ref{sec:friedel}, even the single-pole
  ($M{=}1$) multipole approximation---one effective Yukawa (RC-SP) pole---already
  captures $93$--$99\%$ of this contribution.}
  \label{fig:brmcmp}
\end{figure*}

\begin{figure*}[t]
  \centering
  \begin{tikzpicture}
  \begin{axis}[width=0.72\textwidth,height=6.4cm,
    xlabel={$r_s$},ylabel={energy per particle (mRy)},
    xmin=0.5,xmax=10.5,ymin=-6,ymax=38,
    grid=both,grid style={gray!18},tick align=outside,
    legend pos=north east,legend style={font=\footnotesize,fill=white,fill opacity=0.88,draw=none}]
  \addplot[black!45,dashed,thick,mark=triangle*,mark size=1.4pt] coordinates {
    (1,33.94)(2,28.03)(3,24.14)(4,21.25)(5,18.97)(6,17.09)(8,14.14)(10,11.91)};
  \addlegendentry{static, $M{=}1$ (one Yukawa pole)}
  \addplot[black,thick,mark=triangle,mark size=1.7pt] coordinates {
    (1,34.35)(2,28.84)(3,25.33)(4,22.81)(5,20.88)(6,19.33)(8,16.98)(10,15.25)};
  \addlegendentry{static, all poles (exact Lindhard)}
  \addplot[blue!70!black,thick,mark=*,mark size=1.3pt] coordinates {
    (1,28.23)(2,20.34)(3,15.08)(4,11.37)(5,8.11)(6,5.69)(8,1.89)(10,-0.94)};
  \addlegendentry{dynamical AC (this work)}
  \addplot[red,only marks,mark=square,mark size=2.2pt,
    error bars/.cd,y dir=both,y explicit] coordinates {
    (1,27.62)+-(0,0.84)(2,20.31)+-(0,0.33)(3,14.85)+-(0,0.68)(4,10.56)+-(0,0.44)(5,8.04)+-(0,0.34)(6,5.07)+-(0,0.39)(8,0.78)+-(0,0.69)(10,-2.11)+-(0,0.41)};
  \addlegendentry{Ref.~\cite{Benites2024}, Table IX}
  \addplot[black,dotted] coordinates {(0.5,0)(10.5,0)};
  \end{axis}
  \end{tikzpicture}
  \caption{Where the poles matter---and where they do not. Adding poles to the
  screened line moves the static kite only by the small $2k_F$ Friedel residual:
  the single effective Yukawa ($M{=}1$, dashed) and the exact static Lindhard
  screening (all poles, solid black) differ by $0.4$--$3.3$ mRy over
  $r_s=1$--$10$ ($1$--$7\%$ at $r_s\le4$, Sec.~\ref{sec:friedel}). The remaining
  large offset from the numerical kite of Ref.~\cite{Benites2024} (squares, with
  Monte Carlo uncertainties) is closed not by pole counting but by the
  \emph{frequency} integral together with the adiabatic connection: the
  dynamical evaluation (blue), normalized only to the bare Onsager limit, passes
  through the data, sign change included. Retardation, not pole content, is the
  physics that separates the static reference from the physical kite.}
  \label{fig:poles}
\end{figure*}

The polarized kite is \emph{larger} than
the unpolarized one at fixed $r_s$, consistent with the weaker screening of a
single-spin system.

\paragraph{The $2k_F$ Kohn anomaly under frequency averaging.} The frequency integral also
resolves the $2k_F$ nonanalyticity of $f_L$ visible in \eqref{eq:D0scr}. In the
\emph{static} screening the $2k_F$ Kohn kink survives: the deviation
$D_{\mathrm{stat}}(q)-1/q^{2}$ carries a $(q-2)\log|q-2|$ term with a nonzero
coefficient $B_{\mathrm{stat}}\simeq-0.023$, which feeds a logarithmic Kohn
correction to the density dependence of the static reference within the
physical window (asymptotically the crossover of Sec.~\ref{sec:lowdens}
dominates and the $2k_F$ region is subleading, of relative order
$r_s^{-1/4}$). Performing instead the imaginary-axis frequency integral
$\int_0^\infty[D_{\mathrm{RPA}}(q,\ii\xi)-1/q^{2}]\,\rd\xi$ smears the kink: a
grid-refined fit gives $B_{\mathrm{dyn}}\simeq+1\times10^{-4}$, i.e.\
$|B_{\mathrm{dyn}}/B_{\mathrm{stat}}|\sim5\times10^{-3}$. Thus the $2k_F$ Kohn logarithm is a feature of the static reference alone:
the frequency integration removes it, suppressing the kink coefficient by three
orders of magnitude, and leaving the smooth crossover of
Sec.~\ref{sec:lowdens} as the only low-density structure. This
is the low-density counterpart of the retardation effect that also enhances the
high-density coefficient: the exact dynamical onset $C_{\log}^{\mathrm{dyn}}=0.01304~\Ry$
\cite{LoosGill2011} exceeds the closed-form static onset
$C_{\log}^{\mathrm{stat}}=3\alpha_L(2\log2-1)/(2\pi^{3})=0.00974~\Ry$
[Eq.~\eqref{eq:highdens}] by a retardation factor $1.34$.

\paragraph{High-density coefficient.} One comparison is already stringent at the
analytic level. The coefficient of the $r_s\log r_s$ onset is fixed by the
closed-form small-$q$ slope $c_0=\pi(2\log2-1)/8$ [Eq.~\eqref{eq:c0closed}]. In
the fully dynamical kite this coefficient is known exactly from the work of Loos
and Gill \cite{LoosGill2011}; the empirical fit of Ref.~\cite{Benites2024}
reproduces it only approximately (its fitted $-a_2$ differs from the exact value
by several percent). The present reduction thus supplies, in closed form, the
diagram-level origin of a coefficient that the numerical fit obtains only within
its statistical accuracy.

\subsection{Spin polarization}\label{sec:spin}
The construction extends to arbitrary spin polarization
$\zeta=(n_\uparrow-n_\downarrow)/n$ without new analytic input. Because the
exchange loop of the diagram carries a single spin $\sigma$, it is bounded by the
spin-$\sigma$ Fermi sphere of radius $x_\sigma=(1+\sigma\zeta)^{1/3}$
(with $x_\uparrow^{3}+x_\downarrow^{3}=2$), while the screened line couples to
both spins through the two-component Lindhard function,
\begin{equation}
  \epsilon(q,\ii\xi)=1+\frac{q_{\mathrm{TF}}^{2}}{q^{2}}\cdot\frac12
  \sum_\sigma x_\sigma\,g\!\Big(\frac{q}{2x_\sigma},\frac{\xi}{q x_\sigma}\Big),
\end{equation}
and the contribution is summed over the loop spin,
$\mathcal{K}(\zeta)=\sum_\sigma \mathcal{K}_\sigma$, and the physical kite $e_{2b}(r_s,\zeta)$ follows by the coupling integration. The bare
($r_s\to0$) diagram is $\zeta$-\emph{independent}: rescaling the loop momenta by
$x_\sigma$ shows that the unscreened geometric factor summed over $\sigma$ is
proportional to $x_\uparrow^{3}+x_\downarrow^{3}=2$, so that the unpolarized and
fully polarized limits share the Onsager--Mittag--Stephen value---in contrast to
first-order exchange, which is strongly $\zeta$-dependent. Only the
screening-induced reduction acquires a $\zeta$ dependence, through the
two-component dielectric above. The dynamical adiabatic-connection evaluation
described in Sec.~\ref{sec:staticdyn} carries this spin structure through both the
exchange loop and the dielectric; with the normalization fixed once at $\zeta=0$,
it reproduces the numerically evaluated kite at both $\zeta=0$ and $\zeta=1$ to
within about $1$~mRy (the table in Sec.~\ref{sec:staticdyn}), confirms the
$\zeta$-independence of the bare diagram, and gives a polarized kite that is
\emph{larger} than the unpolarized one at fixed $r_s$, as expected from the
weaker screening of a single-spin system.

% =====================================================================
% =====================================================================
\section{The kite in a correlation-energy functional: analytic constraints and comparison}\label{sec:funcresults}
% =====================================================================
The screened second-order exchange was recently used as an ingredient in a
correlation-energy functional~\cite{Benites2024}, in which the ring (RPA) and
kite contributions are added to form a beyond-RPA local correlation energy. Two
questions can be answered directly with the analytic results above: how the
closed-form structure constrains a parametrization of the kite relative to an
empirical fit, and how the resulting RPA${}+{}$kite correlation energy compares,
as an absolute quantity, with the quantum Monte Carlo (QMC) parametrization
(PW92~\cite{PerdewWang92}) that non-empirical functionals use as their
uniform-gas limit. We address both. We do \emph{not} claim a functional
benchmark: whether a given local input improves materials-level accuracy is a
question about self-consistent solid-state calculations, which the uniform-gas
analysis alone cannot settle.

\subsection{Analytic constraints versus empirical fitting}\label{sec:funcfit}
The closed-form endpoints suggest a local parametrization of the kite that
differs in structure from the empirical interpolation of
Ref.~\cite{Benites2024}. We propose
\begin{align}
  e_{2b}(r_s)&=e_{2x}
  \Big[\,1+A\,r_s\log\!\big(1+\tfrac1{r_s}\big)+B\,r_s\,\Big]\,W(r_s),\notag\\
  W(r_s)&=\Big[\,1+(r_s/r_c)^{p}\,\Big]^{-7/4p},
  \label{eq:newlda}
\end{align}
in which three features are fixed by the analysis rather than fitted. First, the
prefactor of the high-density logarithm is tied to the \emph{exact} Loos--Gill
coefficient by $A=-C_{\log}^{\mathrm{dyn}}/e_{2x}=-0.2697$, so that
the small-$r_s$ expansion reproduces $C_{\log}^{\mathrm{dyn}} r_s\log r_s$ with
the exact $C_{\log}^{\mathrm{dyn}}$. This is the point at which the empirical fit
and the diagrammatic constraint differ quantitatively: fitting the leading
coefficient to the tabulated kite returns
$C_{\log}^{\mathrm{dyn,fit}}=0.01212~\Ry$~\cite{Benites2024}, about $7\%$ below
the closed-form value $C_{\log}^{\mathrm{dyn}}=0.01304~\Ry$; imposing the exact
value removes both the discrepancy and one fitted parameter. Second, the
argument $\log(1+1/r_s)$ and the envelope are integer-power in $r_s$, so
\eqref{eq:newlda} contains \emph{no} half-integer term, consistent with $c_1=0$
(Prop.~\ref{prop:c1})---in contrast to a fitting form carrying $r_s^{3/2}$ inside
a logarithm. Third, the envelope $W$ enforces the low-density power of
Sec.~\ref{sec:lowdens}: the linear term times $W$ decays as
$r_s\cdot r_s^{-7/4}=r_s^{-3/4}$, the ring-class exponent, through a single
crossover scale $r_c$. Only the three parameters $\{B,r_c,p\}$ are free (the
Onsager value $e_{2x}$ and $A$ being fixed), against the five of
the empirical form.

Fitting \eqref{eq:newlda} to the tabulated $\zeta=0$ kite of
Ref.~\cite{Benites2024} over $r_s=1$--$10$ gives $B=-0.087$, $r_c=22.4$,
$p=0.73$, with $\chi^2/\mathrm{dof}\simeq0.5$ (every point within its quoted
Monte Carlo uncertainty). By construction the two parametrizations agree through
the physically relevant range and diverge only for $r_s\gtrsim10^{4}$, where the
true $r_s^{-3/4}$ decay of \eqref{eq:newlda} departs from an imposed
$r_s^{-1/2}$ tail. The same form applies at arbitrary $\zeta$ with
$A\to A(\zeta)=-C_{\log}^{\mathrm{dyn}}(\zeta)/e_{2x}$ fixed by the
spin-scaled high-density coefficient and $\{B,r_c,p\}$ refitted. The net effect
is a local form with fewer free parameters, an exact leading coefficient, and
the correct set of terms---the analytic structure supplying the constraints
that an empirical fit must otherwise infer from data.

\subsection{Comparison with the PW92 reference}\label{sec:funcpw92}
Adding the ring (RPA) correlation energy to the kite gives the beyond-RPA local
correlation energy $e_c^{\mathrm{RPA}+\mathrm{kite}}$ of
Ref.~\cite{Benites2024}. Figure~\ref{fig:pw92} compares it, at $\zeta=0$, with
the PW92 parametrization of the QMC correlation energy. The RPA correlation
energy is well known to over-bind (its magnitude exceeds the true value); the
kite, an exchange-type correction, is positive and removes part of this excess.
At high density the removal is substantial---at $r_s=1$ the kite cancels about
$70\%$ of the RPA overbinding, bringing $e_c^{\mathrm{RPA}+
\mathrm{kite}}$ to within roughly $10~\mathrm{mRy}$ of PW92. As the density
decreases the correction weakens and then changes sign near $r_s\simeq9$
(Sec.~\ref{sec:lowdens}), so that at low density RPA${}+{}$kite over-binds by
more than RPA alone. The residual against PW92 (Fig.~\ref{fig:pw92res}) is thus
small at high density and grows toward low density, quantifying where the
beyond-RPA analytic reference tracks the QMC correlation energy and where it does
not.

Two points frame this comparison. First, RPA${}+{}$kite is not proposed as a
more accurate \emph{absolute} correlation energy than the QMC fit; PW92 is fit to
the QMC energies and is closer to them by construction. What the diagrammatic
route provides is a first-principles density dependence with analytically fixed
coefficients, not a lower absolute error. Second, the absolute correlation energy
is not the quantity that controls structural properties: equilibrium lattice
constants and bulk moduli depend on the \emph{density derivative} of the
exchange--correlation energy, and Ref.~\cite{Benites2024} reports that the
RPA${}+{}$kite functional yields competitive structural properties despite the
absolute overbinding seen here. A smoothly varying offset in the correlation
energy can leave the derivative-controlled properties largely intact; the present
comparison isolates the absolute-energy behaviour, which is complementary to the
structural benchmark and not a substitute for it.

\begin{figure}[t]
  \centering
  \begin{tikzpicture}
  \begin{axis}[width=0.74\linewidth,height=5.8cm,
    xlabel={$r_s$},ylabel={$e_c$ (mRy)},
    xmode=log,log basis x=10,
    xmin=0.4,xmax=22,ymin=-205,ymax=-15,
    grid=both,grid style={gray!18},legend pos=north west,
    legend style={font=\small},tick align=outside]
  \addplot[red!75!black,thick,mark=*,mark size=1.1pt] coordinates {
    (0.5,-195.7)(0.7,-177.0)(1.0,-157.9)(1.5,-137.6)(2.0,-123.8)(3.0,-105.7)
    (4.0,-93.8)(5.0,-85.1)(6.0,-78.4)(8.0,-68.5)(10.0,-61.4)(15.0,-49.9)(20.0,-42.8)};
  \addlegendentry{RPA}
  \addplot[blue!55!black,thick,mark=square*,mark size=1.1pt] coordinates {
    (0.5,-161.2)(0.7,-145.3)(1.0,-129.7)(1.5,-113.8)(2.0,-103.7)(3.0,-90.5)
    (4.0,-82.7)(5.0,-77.0)(6.0,-72.5)(8.0,-66.6)(10.0,-62.6)(15.0,-56.2)(20.0,-52.5)};
  \addlegendentry{RPA${}+{}$kite}
  \addplot[green!45!black,very thick] coordinates {
    (0.5,-153.2)(0.7,-136.5)(1.0,-119.5)(1.5,-101.5)(2.0,-89.5)(3.0,-73.9)
    (4.0,-63.7)(5.0,-56.4)(6.0,-50.9)(8.0,-42.8)(10.0,-37.1)(15.0,-28.3)(20.0,-23.1)};
  \addlegendentry{PW92 (QMC fit)}
  \end{axis}
  \end{tikzpicture}
  \caption{Uniform-gas correlation energy at $\zeta=0$: the RPA (ring) energy,
  the beyond-RPA RPA${}+{}$kite combination of Ref.~\cite{Benites2024} evaluated
  here, and the PW92 QMC parametrization~\cite{PerdewWang92}. The kite removes much
  of the RPA overbinding at high density; the two beyond-RPA curves and PW92
  converge there and separate toward low density.}
  \label{fig:pw92}
\end{figure}

\begin{figure}[t]
  \centering
  \begin{tikzpicture}
  \begin{axis}[width=0.74\linewidth,height=5.6cm,
    xlabel={$r_s$},ylabel={$e_c-e_c^{\mathrm{PW92}}$ (mRy)},
    xmode=log,log basis x=10,
    xmin=0.4,xmax=22,ymin=-45,ymax=2,
    grid=both,grid style={gray!18},legend pos=south west,
    legend style={font=\small},tick align=outside]
  \addplot[red!75!black,thick,mark=*,mark size=1.1pt] coordinates {
    (0.5,-42.5)(0.7,-40.5)(1.0,-38.4)(1.5,-36.1)(2.0,-34.3)(3.0,-31.8)
    (4.0,-30.1)(5.0,-28.7)(6.0,-27.5)(8.0,-25.7)(10.0,-24.3)(15.0,-21.6)(20.0,-19.7)};
  \addlegendentry{RPA $-$ PW92}
  \addplot[blue!55!black,thick,mark=square*,mark size=1.1pt] coordinates {
    (0.5,-8.0)(0.7,-8.9)(1.0,-10.1)(1.5,-12.3)(2.0,-14.2)(3.0,-16.6)
    (4.0,-19.0)(5.0,-20.5)(6.0,-21.7)(8.0,-23.8)(10.0,-25.4)(15.0,-27.9)(20.0,-29.5)};
  \addlegendentry{RPA${}+{}$kite $-$ PW92}
  \addplot[black,dashed] coordinates {(0.4,0)(22,0)};
  \end{axis}
  \end{tikzpicture}
  \caption{Residual of the uniform-gas correlation energy against the PW92
  reference at $\zeta=0$. Adding the kite reduces the RPA overbinding at high
  density (the RPA${}+{}$kite residual is small for $r_s\lesssim2$) but the kite
  changes sign near $r_s\simeq9$, so at low density RPA${}+{}$kite departs from
  PW92 by more than RPA alone. Structural properties depend on the density
  derivative rather than on this absolute offset (Sec.~\ref{sec:funcpw92}).}
  \label{fig:pw92res}
\end{figure}

\section{Summary and outlook}\label{sec:summary}
% =====================================================================
The central point of this paper is that the contribution of one specific
diagram---screened second-order exchange---can be reduced, for the separated
Coulomb RC-SP model, to a one-dimensional integral, and that its dependence on
the single-pole scale can then be studied from both asymptotic and numerical
sides. First, we isolated the exact $q$--$z$ multiplicative separability condition
used by this reduction. Within the single-pole ansatz this condition selects a
momentum-independent characteristic screening-frequency scale; it should not be
read as a general impossibility theorem for every possible indirect representation
of special nonseparable models.
Second, we evaluated in closed form the static limit in which the screened line
returns to the bare Coulomb interaction, and showed that its value coincides
with the Onsager--Mittag--Stephen bare second-order exchange, thereby pinning the
screening factor to an absolute energy scale. The numerical computation revealed
a non-monotone intermediate maximum $S(1)\simeq2.334$. Third, by the
Mellin--Barnes method we obtained, with remainder estimates, the linear term in
the small-$\Lambda$ limit, the logarithmically corrected inverse-power law in the
large-$\Lambda$ limit, and the second small-$\Lambda$ asymptotic term
($\Lambda^2(\log\Lambda)^2$); the same pole correspondence---power $=$ pole,
logarithm $=$ collision, coefficient $=$ residue---is what later delivers both
physical endpoints, and the rigidity of the RC-SP exponent family is what
proves the necessity of the low-density crossover scale. Fourth, and physically
central, we fed the static
RPA (Lindhard) screening into the reduced functional and obtained the density
dependence of the contribution with $r_s$ entering only through the
Thomas--Fermi scale $q_{\mathrm{TF}}^2(r_s)\propto r_s$: a high-density
$r_s\log r_s$ correction and a low-density $r_s^{-3/4}$ decay with closed-form
coefficient $A_{3/4}$, with exponents fixed by the endpoint powers of the
geometric kernel; the effective $r_s^{-1/2}$ regime of the crossover, in the
window $r_s\lesssim20$, coincides with the form used to fit the numerically
evaluated SOSEX contribution \cite{Benites2024}. Fifth, we analyzed the small-$q$ structure of the physical
kernel $X_c(q)$: its leading coefficient is $c_0=\pi(2\log2-1)/8$ in closed
form, and its $q^{2}$ coefficient vanishes identically
(Proposition~\ref{prop:c1}), by a Fermi-surface reflection symmetry and an
$O(q^{3})$ scaling of the near-collinear region. The high-density expansion
therefore contains no $r_s^{3/2}$ term---so that a half-integer power appearing
in an empirical fit is an artifact of the fitting form---and the leading
correction beyond $r_s\log r_s$ is $r_s^{2}\log r_s$. We also showed that the
physical static screening is, to $93$--$99\%$ over $r_s=1$--$4$, a single
effective Yukawa (RC-SP) pole plus a small $2k_F$ Friedel residual, making the
multipole decomposition concrete (Fig.~\ref{fig:poles}), and that the
imaginary-axis frequency integral smears the $2k_F$ Kohn anomaly, suppressing
the kink coefficient by three orders of magnitude. The construction extends to
arbitrary spin polarization, with the bare diagram $\zeta$-independent and equal
to the Onsager--Mittag--Stephen value; a dynamical adiabatic-connection
evaluation of the reduced functional, normalized only to that bare limit,
reproduces the numerically evaluated kite at both $\zeta=0$ and $\zeta=1$ at the
level of about one mRy across the density range.

\subsection{Density dependence from the spectral representation}\label{sec:rs}
In the language of the spectral (Lehmann) representation of
Sec.~\ref{sec:setting}, we now show how the density dependence of the
screened-exchange contribution arises---and why it is a property of the physical
screening rather than of the single-pole scale $\Lambda$.

By Eqs.~\eqref{eq:spectral}--\eqref{eq:superpose}, the physical screened line is
the bare interaction minus a nonnegative spectral correction, whose components
carry a continuous, $q$-dependent scale $u=\omega/q$,
\begin{equation}
  \mathcal{K}[D_{\mathrm{RPA}}]=\mathcal{K}[v]
  -\int_0^\infty \rd\omega\;\rho(q,\omega)\,\mathcal{K}_\omega,
  \label{eq:superpose2}
\end{equation}
with weight $\rho(q,\omega)$ fixed by the dielectric function. The RC-SP model is
the analytically reducible constant-$u$ part of this correction, and the
single-pole family $\{\varepsilon_{2b}^{\RCSP}(\Lambda)\}$, whose
$\Lambda$-dependence the Mellin analysis of Sec.~\ref{sec:mb} determines in
closed form, therefore enters only through the integrand of
\eqref{eq:superpose2}: the density dependence is carried by the spectral weight
$\rho(q,\omega)$, not by any one scale, and the physical line---having a
$q$-dependent distribution of $u=\omega/q$---is not a single RC-SP kernel. This
is why we obtain that dependence from the physical screening directly. In
Sec.~\ref{sec:physical} the density enters through the Thomas--Fermi scale
$q_{\mathrm{TF}}^{2}(r_s)=4\alpha_L r_s/\pi$, a standard UEG relation, and the
exponents are then fixed by the endpoint powers of the single geometric kernel
$X_c(q)$: the high-density $r_s\log r_s$ of \eqref{eq:highdens} and the
low-density $r_s^{-3/4}$ of \eqref{eq:lowdens}. The two endpoints are governed by
opposite ends of the spectral weight---the high-density onset by its large-$u$
(plasmon) part, the low-density decay by its small-$u$ (particle--hole) part---so
the crossover scale $q_*\propto q_{\mathrm{TF}}^{1/2}$ of Sec.~\ref{sec:lowdens}
is the momentum-space reflection of the spread of $\rho$ across $u=\omega/q$,
not an independent input.

Seen this way, the single-pole reference model plays three well-defined roles:
it supplies the closed-form static check against Onsager--Mittag--Stephen; its
Mellin analysis furnishes the pole correspondence (power $=$ pole location, logarithm
$=$ pole collision, coefficient $=$ residue), which applies term by term under
the spectral integral because $\mathcal{K}$ is linear; and it is the constant-$u$
part of the correction in \eqref{eq:superpose2}. Whether the endpoint asymptotics
can be derived \emph{analytically} from \eqref{eq:superpose2}---that is, whether
the closed-form $\Lambda$-asymptotics survive the spectral integration over
$\rho(q,\omega)$---is a well-posed question we leave to future work
(Sec.~\ref{sec:summary}); here they are obtained directly from $X_c(q)$.

\subsection{Implication for the adiabatic-connection integral}\label{sec:ac}
The screened-second-order-exchange contribution studied here is one ingredient
of the adiabatic-connection fluctuation--dissipation (ACFD) expression for the
correlation energy \cite{Gruneis2009,Gorling2019}, in which a coupling constant
$\lambda$ is integrated from $0$ to $1$. We carry out that integration for the
physical screening numerically, in Sec.~\ref{sec:staticdyn}; here we record the
exact constraints that our closed-form results place on the integrand. Since
$\varepsilon_{2b}^{\RCSP}(\infty)$ coincides with the Onsager--Mittag--Stephen
bare second-order exchange \eqref{eq:staticfinal}, the value $S(\Lambda)\to1$ as
$\Lambda\to\infty$ is a closed-form reference point that any implementation of
dynamically screened exchange must reproduce in the limit where the screened line
becomes the bare Coulomb interaction; and the endpoint asymptotics
$S(\Lambda)\sim A\Lambda$ ($\Lambda\to0$) and
$S(\Lambda)=1+(C_1^{\Lambda}\log\Lambda+C_0^{\Lambda})/\Lambda+\cdots$ ($\Lambda\to\infty$) bound
the integrand in the small- and large-$\Lambda$ regimes. We do not model the
coupling dependence of the single scale $\Lambda$ analytically, as it is not
fixed by the present reference model; the physical coupling dependence is instead
supplied by the $\lambda$-dependent RPA screening in the numerical evaluation.

\subsection{Outlook}
There are two main directions for future work. First, the dynamical
adiabatic-connection evaluation of Sec.~\ref{sec:staticdyn}, which already
reproduces the numerically evaluated kite for both $\zeta=0$ and $\zeta=1$ and
resolves the behavior of the $2k_F$ Kohn anomaly (it is smeared by the frequency
average, leaving the smooth $r_s^{-3/4}$ crossover decay), can be tightened toward the exact
high-density coefficient by increasing the momentum sampling and grid density; a
systematic convergence of the independent evaluation to the Loos--Gill value
$C_{\log}^{\mathrm{dyn}}=0.01304\,\Ry$ would make the density dependence quantitative across the full
range without reference to external numerical data.

The second direction is to carry the analysis from the single-pole reference
model to the physical screening through the spectral representation of
Sec.~\ref{sec:rs}. Because $\mathcal{K}[D]$ is linear in the screened line
[Eq.~\eqref{eq:KD}], the physical RPA screening---a nonnegative superposition of
single-pole components over its spectrum
[Eqs.~\eqref{eq:spectral}--\eqref{eq:superpose}]---has a screened-exchange
contribution equal to the corresponding spectral integral of RC-SP
contributions, and the closed-form ingredients obtained here---the static factor
$D_0$, the kernel $\Xi$, and the two-limit asymptotics of $S(\Lambda)$---are the
building blocks of that integral. The open question is whether the endpoint
asymptotics survive the spectral integration: whether the high-density
$r_s\log r_s$ onset and the low-density $r_s^{-3/4}$ decay, obtained here
directly from the geometric kernel $X_c(q)$, can be recovered \emph{analytically}
from the spectral weight $\rho(q,\omega)$, with the two limits distributed over
its plasmon and particle--hole parts as anticipated in Sec.~\ref{sec:rs}.
Technically this amounts to controlling a finite-rank (multipole) approximation
of the screened line, for which the single-pole family analyzed here---uniformly
controlled in the frequency variable, and preserving the logarithmic Coulomb
singularity class under Yukawa-type radial corrections---is the natural starting
point. Carrying this out would turn the spectral picture of Sec.~\ref{sec:rs}
from an explanation of the density dependence into a derivation of it.

\subsection{Toward a density-functional application}\label{sec:functional}
Although the present paper is concerned with the analytic structure of the
uniform-gas diagram, we indicate---provisionally---how the
results bear on the construction of density functionals, since this is the
context in which the screened second-order exchange was recently revisited
\cite{Benites2024}. The concrete comparison---an analytically constrained
parametrization of the kite and its relation to the PW92 reference---is given in
Sec.~\ref{sec:funcresults}; here we record two further connections, each carrying
a genuine caveat that the present analysis does not remove.

\emph{(i) The local part.} In a gradient-corrected functional of the form
$E_c=\int n\,[e_c^{\mathrm{unif}}(r_s,\zeta)+H(r_s,\zeta,t)]$, the
uniform-gas correlation energy $e_c^{\mathrm{unif}}$ is the local
input. Most non-empirical functionals take it from a Monte Carlo
parametrization; the diagrammatic alternative---ring (RPA) plus screened
exchange (kite)---replaces that input by a first-principles combination whose
kite part we have here determined with (a) closed-form limits (the Onsager bare
value, the $c_0=\pi(2\log2-1)/8$ small-$q$ slope, the absence of an $r_s^{3/2}$
term), (b) the correct high- and low-density asymptotic forms, and (c) the full
spin dependence, with a dynamical evaluation that reproduces the numerical kite
to within about $1$~mRy, sign change included (Sec.~\ref{sec:staticdyn}). Substituting this
physically-constrained kite for the Monte Carlo value probes whether the
materials-level accuracy of the functional is sensitive to the uniform-gas kite
at all---a question whose answer is informative in either direction: a change
would quantify the role of the kite, while its absence would localize the
limiting factor in the gradient sector rather than in the local correlation
energy.

\emph{(ii) The gradient coefficient.} The small-$q$ structure of the
momentum kernel $X_c(q)$ is precisely the object that feeds the second-order
gradient expansion: in the wave-vector analysis of Langreth and Perdew
\cite{LangrethPerdew1979,LangrethPerdew1980}, the gradient coefficient is fixed by the long-wavelength
behavior of the correlation response, and the leading small-$q$ coefficient
$c_0$ derived here is the kite's entry into that analysis. It is therefore
tempting to extract a ``kite gradient coefficient'' $\beta_{\mathrm{kite}}$ and
add it to the PBE value. Here, however, a real caveat applies, and it is the
subject of a recent detailed study by Benites, Rosado, and Manousakis
\cite{Benites2026}: the exchange and correlation second-order gradient
coefficients $b_x$ and $b_c$ are \emph{separately} dependent on the scheme used
to regularize the Coulomb interaction, and only their sum
$b_{xc}=b_x+b_c$ has a scheme-independent value \cite{Benites2026,RasoltGeldart1986,GeldartRasolt1976}.
Ref.~\cite{Benites2026} traces this to the long-range Coulomb tail, which forces
a regulator whose form leaves an imprint on $b_x$ and $b_c$ individually but
cancels in the sum (the RPA screening acting as a natural regulator), and
concludes that the common practice of constraining the exchange and correlation
gradient terms separately---as in PBE and PBEsol---is not well founded, the only
admissible constraint being on $b_{xc}$. A gradient coefficient assigned to the
kite (an exchange-type diagram) in isolation is, in the same way, not by itself
a scheme-independent quantity; the physically meaningful object is the full
$b_{xc}$, to which the kite contributes together with the ring and the
second-order direct terms. This does not preclude using the analytic small-$q$
structure of $X_c(q)$ as an ingredient in a gradient construction, but it does
mean that a naive ``$\beta$ plus $\beta_{\mathrm{kite}}$'' addition would be
ill-posed, and that any gradient-level application must work with the
well-defined sum.

Taken together with the comparison of Sec.~\ref{sec:funcresults}, these
considerations position the present results---the analytic skeleton, the
closed-form coefficients, the correct asymptotic forms, the spin dependence, and
the percent-level dynamical values---as the controlled uniform-gas input that a
beyond-RPA functional construction requires, while leaving the construction
itself, and in particular the resolution of the gradient-coefficient ambiguity,
as a separate undertaking. We regard this as the natural payoff of pinning down
the diagram, rather than as a claim about functional accuracy, which the
uniform-gas analysis alone cannot settle.

\section*{Acknowledgments}
No funding or other support was received. The author acknowledges the use of AI-based assistants---Claude (Claude Opus
4.8, Anthropic) and GPT-5.5 (OpenAI)---as interactive aids for exploring analytic manipulations during this work. All mathematical derivations and results were independently verified by the author, who takes full responsibility for the content.

\appendix
\numberwithin{equation}{section}
\section{Numerical algorithm}\label{app:algo}
% =====================================================================
The quantities to be evaluated are the geometric kernel $\Phi(y)$ of the main
text and $S(\Lambda)=N(\Lambda)/D_0$,
$N(\Lambda)=\int_0^\infty\frac{\Phi(y)}{y}\Xi(\Lambda y)\rd y$,
$D_0=\int_0^\infty\frac{\Phi(y)}{y}\rd y$. We use the series
$j_1(z)=\frac z3-\frac{z^3}{30}+\cdots$ for $j_1$ at small argument, the
finite-interval representation
$\Xi(z)=z\int_0^{\pi/2}\ee^{-z\tan\theta}\rd\theta$ as one exact representation
for $\Xi$.  In the verification code used for the present manuscript, we use the
equivalent closed form in terms of $\Si$ and $\Ci$, together with the small- and
large-$z$ asymptotics
($\Xi(z)=\frac\pi2 z+z^2(\log z+\gamma-1)+\cdots$,
$\Xi(z)=1-2z^{-2}+24z^{-4}-\cdots$) for stabilization.

The difficulty for $\Phi(y)$ lies in the logarithmic singularity at $u=1$,
$v=0$ and in the shift of the dominant contribution to $u\sim y^{-1}$ at small
$y$. We use two representations according to the size of $y$. For $y\ll1$, set
$t=uy$, $v=\ee^{-\tau}$ and write
\begin{widetext}
\begin{equation}
  \Phi(y)=\int_y^\infty\!\rd t\,L(t/y)\int_0^\infty\!\rd\tau\,\ee^{-\tau}\big(\log(y/t)+\tau\big)
  j_1\!\Big(\tfrac{t+y\ee^{-\tau}}2\Big)j_1\!\Big(\tfrac{t-y\ee^{-\tau}}2\Big)
\end{equation}
\end{widetext}
to keep the dominant region at $t=O(1)$. For moderate and large $y$, with the
endpoint-absorbing variables $u=\cosh\eta$, $v=\ee^{-\tau}$,
\begin{widetext}
\begin{equation}
  \Phi(y)=y\int_0^\infty\!\rd\eta\,\sinh\eta\,L(\cosh\eta)\int_0^\infty\!\rd\tau\,\ee^{-\tau}\big(\tau-\log\cosh\eta\big)
  j_1\!\Big(\tfrac{\cosh\eta+\ee^{-\tau}}2y\Big)j_1\!\Big(\tfrac{\cosh\eta-\ee^{-\tau}}2y\Big)
\end{equation}
\end{widetext}
absorbs the endpoint singularity into an integrable weight on a half-infinite
interval. The outer integral is taken with $y=\ee^x$ as
$D_0=\int_{-\infty}^\infty\Phi(\ee^x)\rd x$,
$N(\Lambda)=\int_{-\infty}^\infty\Phi(\ee^x)\Xi(\Lambda\ee^x)\rd x$; $\Phi(\ee^x)$ is
tabulated once on a uniform $x$ grid, and $D_0,N(\Lambda)$ are computed from the same
table. Since numerator and denominator share the same table, interpolation
errors tend to cancel. In the production calculation reported in
Sec.~\ref{sec:numerics}, we used $x\in[-17,8]$ with $\Delta x=0.01$.  The
inner $\tau$ integral used a 48-point Gauss--Laguerre rule.  The $\eta$
integral used the grading $\eta=\xi^2$ and a composite 10-point
Gauss--Legendre rule, with the cutoff
$\eta_{\max}=\max\{25,\log(4/y)+22\}$.  The denominator of the final
normalized factor was the exact closed-form value of $D_0$, while the numerical
value of the same integral was used as an independent check.  At large
$\Lambda$ the weight of the integrand shifts to small $y$ ($y\sim\Lambda^{-1}$),
so a stable evaluation of $\Phi$ in this region is the key point for the
numerical verification of the coefficient $C_1^{\Lambda}$.

% =====================================================================
\section{Closed form of the dynamical frequency kernel $J(\omega;a,b)$}\label{app:Jkernel}
% =====================================================================
The dynamical evaluation of Sec.~\ref{sec:staticdyn} replaces the static
frequency kernel $\mathcal{W}(a,b)$ of Eq.~\eqref{eq:wfreq} by the single-pole
(RC-SP) frequency integral
\begin{equation}
  J(\omega;a,b)=\frac1\pi\int_0^\infty\frac{ab-\xi^2}
  {(a^2+\xi^2)(b^2+\xi^2)(\xi^2+\omega^2)}\,\rd\xi ,
\end{equation}
where $a=\delta(\vp;\vq)$, $b=\delta(\vk;\vq)$ are the (signed) particle--hole
energies and $\omega$ is the pole scale of the screened line. Writing
$P=ab$ (signed) and using the partial fraction, with $\alpha=|a|,\beta=|b|$,
\begin{align}
  \frac{1}{(\alpha^2+\xi^2)(\beta^2+\xi^2)(\omega^2+\xi^2)}
  &=\sum_{c\in\{\alpha,\beta,\omega\}}\frac{r_c}{c^2+\xi^2},\\
  r_c&=\prod_{c'\ne c}\frac{1}{c'^2-c^2},
\end{align}
and abbreviating the product of the three denominators by
\begin{equation}
  \mathcal{D}(\xi):=(\alpha^2+\xi^2)(\beta^2+\xi^2)(\omega^2+\xi^2),
  \label{eq:Ddef}
\end{equation}
together with $\int_0^\infty(c^2+\xi^2)^{-1}\rd\xi=\pi/2c$ and the analogous
elementary integral for the $\xi^2$ numerator, one finds
\begin{align}
  \frac1\pi\!\int_0^\infty\!\frac{\rd\xi}{\mathcal{D}(\xi)}
  &=\frac{\alpha+\beta+\omega}{2\alpha\beta\omega(\alpha+\beta)(\beta+\omega)(\omega+\alpha)},\\
  \frac1\pi\!\int_0^\infty\!\frac{\xi^2\,\rd\xi}{\mathcal{D}(\xi)}
  &=\frac{1}{2(\alpha+\beta)(\beta+\omega)(\omega+\alpha)} .
\end{align}
Hence
$J=P\,\frac1\pi\!\int_0^\infty\!\mathcal{D}(\xi)^{-1}\rd\xi
-\frac1\pi\!\int_0^\infty\!\xi^2\,\mathcal{D}(\xi)^{-1}\rd\xi$, and
substituting $P=\pm\alpha\beta$ and simplifying gives the closed form
\begin{equation}
  J(\omega;a,b)=
  \begin{cases}
    \dfrac{1}{2\omega(\alpha+\omega)(\beta+\omega)}, & ab>0,\\[10pt]
    -\dfrac{\alpha+\beta+2\omega}{2\omega(\alpha+\beta)(\alpha+\omega)(\beta+\omega)}, & ab<0.
  \end{cases}
\end{equation}
Both branches are regular functions of $\alpha,\beta,\omega>0$; the factors
$(\omega^2-\alpha^2)$, $(\omega^2-\beta^2)$, $(\alpha^2-\beta^2)$ that appear in the
intermediate partial-fraction expression have cancelled, so the apparent
coincidences at $\omega=|a|$, $\omega=|b|$, $|a|=|b|$ are removable and the closed form
is evaluated directly. This is the expression used in the frequency step of the
dynamical adiabatic-connection code; it was checked against direct quadrature to
a relative error below $10^{-7}$, including at the coincidence points.

% =====================================================================
\section{Algorithmic notes for the physical-screening and dynamical evaluations}\label{app:physalgo}
% =====================================================================
The evaluations of Secs.~\ref{sec:physical}--\ref{sec:staticdyn} (as opposed to
the RC-SP Mellin analysis, Appendix~\ref{app:algo}) use a small number of
techniques worth recording, as they are what make the exact-limit checks stringent.

\paragraph{Union-of-balls importance sampling for the loop.}
The momentum kernel $X_c(q)$ and the static factor $D_0(\kappa)$ are evaluated by
Monte Carlo sampling of the two occupied-state momenta $\vp,\vk$. Because the
occupation difference $\Delta(\vp;\vq)=\Theta(1-|\vp|)-\Theta(1-|\vp+\vq|)$ is
supported on the symmetric difference of two unit balls (centred at $\bm0$ and
$-\vq$), we sample $\vp$ uniformly over the \emph{union} of the two balls and
carry the sign of $\Delta$ as a weight, rather than sampling a bounding box and
rejecting. This keeps the acceptance near unity for all $q$ and removes the
variance blow-up near $q=2$, where the two balls separate. The bare check
returns $D_0(\kappa{=}0)=0.2510(5)$, consistent with the closed form
$0.24990$.

\paragraph{Regularized closed-form frequency step.}
Per momentum sample the frequency integral is done analytically with the closed
form of Appendix~\ref{app:Jkernel} (static case: $\mathcal{W}$). Using the
simplified branches rather than the raw partial fraction avoids the $0/0$
evaluations at $\omega=|a|,|b|,|a|=|b|$ that otherwise require ad hoc
$\epsilon$-regularization of the denominators; this is the single largest source
of spurious noise in a naive implementation, and its removal is what allows the
low-density sign change to be resolved.

\paragraph{Effective Yukawa pole by analytic continuation.}
The single effective RC-SP pole $(\kappa_*,Z)$ of Sec.~\ref{sec:friedel} is not
fitted on the real axis; it is obtained by continuing the static Lindhard
denominator $q^2+q_{\mathrm{TF}}^2 f_L(q/2)$ to the negative-$q^2$ axis and
solving $-\kappa_*^2+q_{\mathrm{TF}}^2 f_L(\ii\kappa_*/2)=0$ for the pole, with
$Z$ the corresponding residue. This places the pole where it controls the
long-range real-space tail, so that the residual $\Delta(q)$ carries only the
$2k_F$ Friedel part, cleanly separating the $93$--$99\%$ single-pole content from
the localized Kohn feature.

\paragraph{Spectral (multipole) representation of the dynamical line.}
The dynamical adiabatic-connection evaluation represents the physical screened
line by its RPA loss (Lehmann) weight $\rho(\omega;q)$ of Eq.~\eqref{eq:spectral},
i.e.\ as a superposition of single-pole (RC-SP) contributions, each handled by
the closed form $J(\omega)$ above. The plasmon pole and the particle--hole
continuum are separated explicitly, and positivity of the continuum weight is
enforced as a check; the coupling-constant integral
$\int_0^1\lambda\,\rd\lambda$ (with the two-spin dielectric at each $\lambda$) is
performed last.

\paragraph{Calibration to the exact bare limit.}
Every dynamical run is normalized \emph{once}, at $r_s\to0$, to the exact bare
value $e_{2x}$ (equivalently $A_0=3/(32\pi^5)$), and to nothing else. The
density dependence and the $\zeta=1$ column are then outputs, not fits, so their
agreement with the independent data of Ref.~\cite{Benites2024}---including the
sign change near $r_s\simeq8$--$10$---is a genuine test. Convergence is monitored
by halving the Monte Carlo sample and by refining the $(\omega,q)$ spectral grid;
the quoted values are stable at the sub-mRy level in the high-density regime
where the kite is largest, with the residual discrepancy at low $N$ or coarse
grid being a $\sim10\%$ underestimate that
disappears on refinement.

\clearpage
% ===== Analytical appendices (rigorous asymptotic backbone) =====
\section{Vanishing of the $q^2$ kernel coefficient ($c_1=0$)}\label{app:c1}
We prove Proposition~\ref{prop:c1}.

\emph{(i) Evenness.} Substituting $\vp\to-\vp,\ \vk\to-\vk$ in \eqref{eq:Yq}
gives $\Delta(-\vp;-\vq)=\Delta(\vp;\vq)$ [since $|{-}\vp-\vq|=|\vp+\vq|$],
$\delta(-\vp;-\vq)=(-q)(-p_z-q/2)=\delta(\vp;\vq)$ [hence $w$ invariant], and
$|{-}\vp+\vk|=|\vp-\vk|$, while the measure is invariant. Thus $X_c(-q)=X_c(q)$,
so $X_c(q)=c_0|q|+c_1 q^{2}+c_2|q|^{3}+\cdots$ and only the reflection-symmetric
mechanism below can produce a $q^{2}$ term.

\emph{(ii) Surface part.} Writing $r=|\vp|$ and
$|\vp+\vq|=r+q\,c_p+\tfrac{q^2}{2}\tfrac{1-c_p^2}{r}+O(q^3)$ with
$c_p=\hat{\vp}\!\cdot\!\hq$, and expanding
$\Theta(1-|\vp+\vq|)=\Theta(1-r)-\varepsilon\,\delta(1-r)-\tfrac{\varepsilon^2}{2}\delta'(1-r)+\cdots$
in $\varepsilon=|\vp+\vq|-r$, one obtains
\begin{widetext}
\begin{equation}
  \Delta(\vp;\vq)=c_p\,q\,\delta(r-1)
  +q^{2}\!\left[\tfrac{1-c_p^2}{2}\delta(r-1)-\tfrac{c_p^2}{2}\delta'(r-1)\right]
  +O(q^{3}) .
  \label{eq:Delta2}
\end{equation}
\end{widetext}
For the frequency kernel, $\delta(\vp;\vq)=q(c_p+q/2)$ and
$\delta(\vk;\vq)=q(c_k+q/2)$, so on opposite hemispheres ($c_pc_k<0$, where $w$ is
nonzero)
\begin{align}
  |\delta(\vp;\vq)|+|\delta(\vk;\vq)|
  &=q(|c_p|+|c_k|)+\tfrac{q^2}{2}\bigl(\operatorname{sgn}c_p+\operatorname{sgn}c_k\bigr),\\
  \operatorname{sgn}c_p+\operatorname{sgn}c_k&=0,
\end{align}
so that $\mathcal{W}=q^{-1}\mathcal{W}_{-1}+O(q)$ with $\mathcal{W}_{-1}=-1/(|c_p|+|c_k|)$ and \emph{no}
$O(q^{0})$ term. Hence the $q^{2}$ coefficient of $X_c$ comes entirely from the
$q^{2}$ part of $\Delta(\vp;\vq)\Delta(\vk;\vq)$ in \eqref{eq:Delta2} paired with
$\mathcal{W}_{-1}$. Introducing
$F(r_p,r_k)=r_p^2 r_k^2\,|\vp-\vk|^{-2}\,\mathcal{W}_{-1}(r_p,r_k)$,
$\mathcal{W}_{-1}=-1/(r_p|c_p|+r_k|c_k|)$, and performing the radial integrals with
$\int\delta(r-1)f=f(1)$, $\int\delta'(r-1)f=-f'(1)$, the surface contribution is
\begin{widetext}
\begin{align}
  c_1^{\mathrm{surf}}&=\!\int\!\rd\Omega_{\hat{\vp}}\rd\Omega_{\hat{\vk}}\,I_1,
  \label{eq:I1}\\
  I_1&=(a_pb_k+a_kb_p)F(1,1)+a_pg_k\,\partial_{r_k}F(1,1)+a_kg_p\,\partial_{r_p}F(1,1),
\end{align}
\end{widetext}
with $a_p=c_p,\ b_p=\tfrac{1-c_p^2}{2},\ g_p=\tfrac{c_p^2}{2}$ (and $p\leftrightarrow k$),
and the Fermi-surface value
$F(1,1)=-\tfrac12(\cos\gamma-1)^{-1}(c_k-c_p)^{-1}$, $\cos\gamma=\hat{\vp}\!\cdot\!\hat{\vk}$.
Consider the reflection $\mathcal R:(c_p,c_k)\mapsto(-c_k,-c_p)$ (equatorial
mirror $z\to-z$ composed with $p\leftrightarrow k$). Under $\mathcal R$ the
individual factors map as $a_p\mapsto-a_k$, $a_k\mapsto-a_p$, $b_p\mapsto b_k$,
$g_p\mapsto g_k$, and $\partial_{r_p}\!\leftrightarrow\partial_{r_k}$, since
$a=c$ is odd, $b=\tfrac{1-c^2}2$ and $g=\tfrac{c^2}2$ are even, and the
$p\leftrightarrow k$ exchange swaps the two radial slots. The kernel value is
invariant: $\cos\gamma=c_pc_k+s_ps_k\cos\phi$ (with $s=\sqrt{1-c^2}$) is
unchanged, and $F(1,1)=-\tfrac12(\cos\gamma-1)^{-1}(c_k-c_p)^{-1}$ is invariant
because $c_k-c_p\mapsto(-c_p)-(-c_k)=c_k-c_p$, so
$\partial_{r_k}F(1,1)\mapsto\partial_{r_p}F(1,1)$ as well. Applying these rules
to \eqref{eq:I1} term by term,
\begin{widetext}
\begin{equation*}
  a_pb_k+a_kb_p\ \xmapsto{\ \mathcal R\ }\ (-a_k)b_p+(-a_p)b_k
  =-\bigl(a_pb_k+a_kb_p\bigr),\qquad
  a_pg_k\,\partial_{r_k}F+a_kg_p\,\partial_{r_p}F
  \ \xmapsto{\ \mathcal R\ }\
  -\bigl(a_pg_k\,\partial_{r_k}F+a_kg_p\,\partial_{r_p}F\bigr),
\end{equation*}
\end{widetext}
so that both groups reverse sign and $I_1\circ\mathcal R=-I_1$. (A symbolic
check of this algebra is included in the companion code, but the cancellation is
established by the transformation rules above and requires no computer algebra.)
An odd integrand over a symmetric domain integrates to zero, hence
$c_1^{\mathrm{surf}}=0$.

\emph{(iii) Near-collinear part.} The Fermi-surface expansion is inaccurate where
$\hat{\vp}\to\hat{\vk}$, since there $|\vp-\vk|^{-2}$ is not bounded; because $w$
requires opposite hemispheres, this near-collinear region sits near the equator.
Introduce local coordinates $\mu_p=p_z,\ \mu_k=k_z$, the azimuthal separation
$\phi$, and radial offsets $u_p=|\vp|-1,\ u_k=|\vk|-1$, all of order $q$, and
write $\mu_p=q\,m_p$, $\mu_k=q\,m_k$, $\phi=q\,f$, $u_p=q\,\tilde u_p$,
$u_k=q\,\tilde u_k$. In these variables the occupation difference
$\Delta(\vp;\vq)$ is supported on a radial window of width $\sim q\,\mu_p\sim
q^{2}$ (and likewise for $k$), while $|\vp-\vk|^{2}\sim q^{2}$ and the frequency
kernel $w=\mathcal{W}(\delta_p,\delta_k)\sim q^{-2}$ (both $\delta_p,\delta_k$ are
$O(q)$). Collecting the powers of $q$,
\begin{center}
\begin{tabular}{lc}
\toprule
factor & scaling\\
\midrule
angular measure $\rd\mu_p\rd\mu_k\rd\phi$ & $q^{3}$\\
two occupation windows $\int\rd u_p\rd u_k\,\Delta\Delta$ & $q^{4}$\\
Coulomb kernel $|\vp-\vk|^{-2}$ & $q^{-2}$\\
frequency kernel $w$ & $q^{-2}$\\
\midrule
total & $q^{3}$\\
\bottomrule
\end{tabular}
\end{center}
so that, gathering these factors,
\begin{widetext}
\begin{equation}
  X_c^{\mathrm{coll}}\sim q^{3}\!\int\!\rd m_p\,\rd m_k\,\rd f\,
  \frac{m_p m_k}{\bigl(f^2+(m_p+m_k)^2\bigr)(m_p+m_k)}\ =\ O(q^{3}),
  \label{eq:coll}
\end{equation}
\end{widetext}
the remaining integral over the rescaled variables $(m_p,m_k,f,\tilde u_p,\tilde
u_k)$ being finite: its integrand decays as $|m|^{-2}$ at large argument and the
apparent $1/(m_p+m_k)$ factor is integrable against the $m_pm_k$ numerator near
the origin. Thus the near-collinear region contributes at $O(q^{3})$, not at
$O(q^{2})$. The apparent logarithmic divergence of \eqref{eq:I1} as
$\cos\gamma\to1$ is precisely regulated here: what looks like a $q^{2}\log q$
term when the Coulomb factor is evaluated on the Fermi surface is, with the true
Coulomb kernel, an $O(q^3)$ contribution. Combining (ii) and (iii),
$c_1=c_1^{\mathrm{surf}}+c_1^{\mathrm{coll}}=0$.

% =====================================================================
\section{Rigorous asymptotic analysis: overview and notation}\label{app:supp-overview}
\medskip
\noindent\textbf{Scope.}
These analytical appendices provide the rigorous analytic backbone for the
asymptotic statements of the main text. The treatment is logically
self-contained: starting from
the reduced one-variable block $\Phi$ and the screening kernel $\Xi$ defined
in the main text, we (i) establish the analytic properties of the kernel
Mellin transform $K(s)$ with full proofs (Appendix~\ref{smsec:kernel}); (ii) justify
the Mellin--Barnes representation of the normalized screening factor $S(\Lambda)$
(Appendix~\ref{smsec:MB}); (iii) prove the general \emph{contour-shift theorem}
(Theorem~\ref{smthm:shift}) that converts the displacement of the integration
line into a residue series with an explicit remainder bound, together with
the residue-to-asymptotic rule (Appendix~\ref{smsec:shift}); (iv) carry out the
endpoint analysis of $\Phi$ that fixes the principal part of its Mellin
transform $F(s)$ at $s=-1$ and $s=2$ (Appendix~\ref{smsec:endpoint}); (v) control the
vertical growth of the regular part $H_2(s)$ on the closed strip
$0<\Re s<3$ (Appendix~\ref{smsec:H2}); and (vi) assemble these into large-$\Lambda$ and small-$\Lambda$ expansions with explicit remainder estimates, including the second small-$\Lambda$
term $\Lambda^2(\log\Lambda)^2$ (Appendix~\ref{smsec:layers}). The independent numerical
cross-checks are recorded in the companion code package.

\medskip
\noindent\textbf{Notation.} Throughout,
\begin{align}
  \Phi(y)&:=y\int_1^\infty\!\rd u\,L(u)\int_0^1\!\rd v\,\ell(u,v)\,
  j_1\!\Big(\tfrac{u+v}{2}y\Big)\,j_1\!\Big(\tfrac{u-v}{2}y\Big),\label{smeq:Phidef}\\
  L(u)&=\log\frac{u+1}{u-1},\quad \ell(u,v)=\log\frac{1}{uv},
\end{align}
where $j_1(z)=\sin z/z^2-\cos z/z$ is the spherical Bessel function. We write
\begin{align}
  F(s)&:=\int_0^\infty y^{s-1}\Phi(y)\,\rd y,\\
  D_0&:=\int_0^\infty\frac{\Phi(y)}{y}\,\rd y=F(0),
\end{align}
\begin{align}
  \Xi(z)&:=z\int_0^\infty\frac{\ee^{-zt}}{1+t^2}\,\rd t,\\
  K(s)&:=\Mellin[\Xi](s)=\int_0^\infty z^{s-1}\Xi(z)\,\rd z,
\end{align}
and the normalized screening factor
\begin{equation}
  S(\Lambda):=\frac{1}{D_0}\int_0^\infty\frac{\Phi(y)}{y}\,\Xi(\Lambda y)\,\rd y .
  \label{smeq:Sdef}
\end{equation}
We use the standard bounds
\begin{align}
  \abs{j_1(z)}&\le C\min\{z,(1+z)^{-1}\},\label{smeq:j1bounds}\\
  \abs{j_1'(z)}&\le C\min\{1,z^{-1}\}\quad(z>0),
\end{align}
$\gamma$ denotes the Euler--Mascheroni constant, and for
$X\in L^1_{\mathrm{loc}}([1,\infty))$ we abbreviate
$\Mgt[X](s):=\int_1^\infty y^{s-1}X(y)\,\rd y$. The constant
$D_0=\tfrac{\pi^3}{18}\log2-\tfrac{\pi}{4}\zeta(3)=-\tfrac\pi6 G(3)
=0.2499018971\ldots$ is the closed-form static factor of Proposition~\ref{prop:D0}.

% =====================================================================
\section{The geometric kernel: preliminary facts}\label{smsec:setup}
% =====================================================================
We record the two facts that frame the entire analysis; both are proved in
Appendix~\ref{smsec:endpoint}.

\begin{proposition}[A priori envelope]\label{smprop:envelope}
There is a constant $C>0$ such that
\begin{align}
  \abs{\Phi(y)}&\le C\,y\,(1+\abs{\log y}) & &(0<y\le1),\label{smeq:envelope}\\
  \abs{\Phi(y)}&\le C\,y^{-2}(1+\log y) & &(y\ge1).
\end{align}
Consequently $F(s)$ converges absolutely and is holomorphic on the strip
$-1<\Re s<2$, with $D_0=F(0)$.
\end{proposition}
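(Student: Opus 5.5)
I would prove both envelope bounds by direct estimation of the double integral \eqref{smeq:Phidef}, splitting by the size of $y$, and then read off holomorphy of $F$ from dominated convergence. Throughout, $L(u)=\log\frac{u+1}{u-1}$ has an integrable logarithmic singularity at $u=1$ and decays as $2/u$ for large $u$. Also $\ell(u,v)=-\log u-\log v$ is integrable in $v\in[0,1]$, and is bounded by $C(1+\log u+\abs{\log v})$. The only other inputs are the bounds \eqref{smeq:j1bounds} on $j_1$ and $j_1'$.

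\emph{Small $y$ ($0<y\le1$).} I substitute $t=uy$ as in Appendix~\ref{app:algo}. This gives $\Phi(y)=\int_y^\infty \rd t\,L(t/y)\int_0^1\rd v\,\ell(t/y,v)\,j_1(\tfrac{t+vy}{2})j_1(\tfrac{t-vy}{2})$. The region $t\le 2y$ lies near the $L$-singularity: there $L$ is integrable, $\abs{\ell}\le C(1+\abs{\log v})$, and $\abs{j_1j_1}\le Cy^2$. This region therefore contributes $O(y^2)\cdot y=O(y^3)$ after the $t$-measure. For $t\ge2y$ I use $L(t/y)\le Cy/t$ and $\abs{\ell}\le C(\abs{\log y}+\abs{\log t}+\abs{\log v})$, together with $\abs{j_1(\tfrac{t\pm vy}{2})}\le C\min\{t,(1+t)^{-1}\}$. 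This yields $\abs{\Phi}\le C y\int_{2y}^\infty t^{-1}(1+\abs{\log y}+\abs{\log t})\min\{t^2,(1+t)^{-2}\}\rd t\le Cy(1+\abs{\log y})$, since $\min\{t^2,(1+t)^{-2}\}/t$ is integrable on $(0,\infty)$, including against $\abs{\log t}$. This is the first bound.

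\emph{Large $y$ ($y\ge1$).} Crude absolute bounds give only $\abs{\Phi}\lesssim y\cdot y^{-2}\int L\,\abs{\ell}\,\rd u\,\rd v$, which diverges logarithmically in $u$, so cancellation must be used. I would expand $j_1(z)=\sin z/z^2-\cos z/z$ and write the product $j_1(\tfrac{u+v}{2}y)j_1(\tfrac{u-v}{2}y)$ via product-to-sum. This produces a non-oscillating part, proportional to $\cos(vy)/(y^2(u^2-v^2))$, plus parts oscillating like $\cos(uy)$. The $\cos(uy)$ parts are handled by one integration by parts in $u$, which gains $y^{-1}$. The boundary term at $u=1$ carries the log singularity of $L$ and produces the $\log y$ factor, matching the $\frac{\log y}{y^2}\sin y$ term in \eqref{eq:largeyPhi}. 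The non-oscillating part reads $y\cdot y^{-2}\int L(u)\ell(u,v)\cos(vy)/(u^2-v^2)\,\rd u\,\rd v$. Integrating by parts in $v$ (or using the Riemann--Lebesgue rate for the $\log v$ singularity) gains a further $y^{-1}$, apart from the contribution near $v=0$, $u=1$. That corner has to be treated by hand and should give the $2\pi\log2\,y^{-2}$ term.

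The main obstacle is this corner $u\to1$, $v\to0$. There $u-v$ is small, so the small-argument branch $j_1\sim z/3$ is relevant, and the factors $1/(u^2-v^2)$, $L(u)$ and $\log(1/uv)$ interact. I would split at $u-1<1/y$ and $v<1/y$ and treat that box using the small-argument bound on the factor $j_1(\tfrac{u-v}{2}y)$ and the large-argument bound on the other. I expect at most a $\log y$ loss, which gives $Cy^{-2}(1+\log y)$. Holomorphy of $F$ on $-1<\Re s<2$ then follows at once. The bounds make $y^{\Re s-1}\Phi$ locally uniformly integrable, using $\int_0^1 y^{\sigma}(1+\abs{\log y})\rd y<\infty$ for $\sigma>-1$ and $\int_1^\infty y^{\sigma-3}(1+\log y)\rd y<\infty$ for $\sigma<2$. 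Morera's theorem then applies, and $D_0=F(0)$ holds by definition.
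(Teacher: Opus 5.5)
Your small-$y$ argument is correct and complete, and it is more economical than the paper's. The paper reads the first envelope off the sharp expansion $\Phi=a_0y\log y+b_0y+O(y^2\abs{\log y}^2)$ of Proposition~\ref{smprop:smally}, obtained by splitting at $u=y^{-1/3}$. Your split at $t=2y$ uses $L(t/y)\le Cy/t$ and the fact that both Bessel arguments lie in $[t/4,3t/4]$, and gives the bound uniformly on $(0,1]$ in a few lines. The holomorphy step is also fine.

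For large $y$ you follow the paper's strategy, the exact product-to-sum decomposition of Lemma~\ref{smlem:trig} with $A=\frac{u+v}{2}$, $B=\frac{u-v}{2}$, $D=u^2-v^2$. However, there is a genuine gap at exactly the step you call the main obstacle, because you have picked the wrong corner.

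\emph{The corner $u\to1$, $v\to0$ is harmless.} There $u-v\to1$, so both Bessel arguments are $\approx y/2$, not small. The box $\{u-1<1/y,\ v<1/y\}$ has area $y^{-2}$ and contributes only $O(y^{-3}\log^2y)$ by absolute values.

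\emph{That corner is also not the source of $2\pi\log2\,y^{-2}$.} This term comes from the $\log(1/v)$ singularity of $\ell$ at $v=0$, integrated over the whole $u$-range with weight $\int_1^\infty L(u)u^{-2}\,\rd u=2\log2$. The identity $\int_0^\delta\log\frac1v\cos(vy)\,\rd v=[\Si(\delta y)-\log\delta\,\sin(\delta y)]/y$ shows it carries no $\log y$ (Lemma~\ref{smlem:vchannel}, Proposition~\ref{smprop:Tv}).

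\emph{Your crude bound is also off.} It does not diverge in $u$: you dropped the factor $(AB)^{-1}=4/D$. With that factor kept, the crude bound converges and gives $O(y^{-1})$, so a full power of $y$ must be recovered from oscillation, not just a logarithm.

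The real degeneracy is the other corner, $u\to1$, $v\to1$, where $D\to0$ and $j_1(By)$ enters its small-argument regime.
\begin{itemize}
\item The decomposition creates amplitudes $L\ell/D$. In the second group $\frac{8}{D^2y^3}[\cos(vy)-\cos(uy)-uy\sin(uy)+vy\sin(vy)]$, which your sketch does not account for, it also creates $L\ell/D^2$ multiplying terms of size $y^{-2}$.
\item Your integration by parts in $u$ must be controlled integrably in $v$ as $v\to1$, where $\ell/D$ varies on the scale $1-v$.
\item The boundary term at $u=1$ that you invoke is infinite, since $L(1)=\infty$.
\end{itemize}
This corner analysis is what is missing.

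The clean fix, as in the paper, is to integrate out the non-oscillating variable first.
\begin{itemize}
\item \emph{$u$-channel.} Write $\mathcal B(u)=L(u)J(u)$, where $J$ is bounded with $J(1)=\pi^2/8$ and $J'=O(\abs{\log(u-1)})$ (Lemma~\ref{smlem:uchannel}). A split at $u-1=1/y$ then gives $\int_1^\infty\mathcal B(u)\cos(uy)\,\rd u=O(y^{-1}\log y)$. This is the true source of the $\log y$.
\item \emph{$v$-channel.} Write $\mathcal A(v)=2\log2\,\log\frac1v+g$ with $g\in W^{1,1}(0,1)$. This requires $\mathcal A'=O(\log^2\frac{e}{1-v})$ at the same corner.
\item \emph{Second group.} A crude bound $O(y^{-2})$ suffices once you check $\iint L\abs{\ell}(1+u)D^{-2}\,\rd u\,\rd v<\infty$. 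This holds because $\abs{\ell}\lesssim(u-1)+(1-v)\asymp D$ near $(1,1)$, so $L\abs{\ell}D^{-2}\lesssim\frac{\log(2/(u-1))}{(u-1)+(1-v)}$, which is integrable there.
\end{itemize}
Without these steps, your ``at most a $\log y$ loss'' is asserted rather than proved.
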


\noindent The exponents in \eqref{smeq:envelope} are sharp: $\Phi$ carries a
$y\log y$ singularity as $y\downarrow0$ (whence the double pole of $F$ at
$s=-1$) and a $y^{-2}$ tail as $y\to\infty$ (whence the simple pole of $F$ at
$s=2$). Fixing the coefficients of these singularities is the entire content
of Appendix~\ref{smsec:endpoint}.

% =====================================================================
\section{The screening kernel and its Mellin transform}\label{smsec:kernel}
% =====================================================================
The asymptotics of $S(\Lambda)$ are governed by the meromorphic structure of
$K(s)$, which we now establish with complete proofs.

\begin{lemma}[Closed form and Mellin transform of $\Xi$]\label{smlem:mellin}
For $z>0$,
\begin{equation}
  \Xi(z)=z\Big[\Ci(z)\sin z+\big(\tfrac{\pi}{2}-\Si(z)\big)\cos z\Big],
  \label{smeq:Xiclosed}
\end{equation}
and the Mellin transform
\begin{equation}
  K(s)=-\frac{\pi}{2}\,\frac{\Gamma(s+1)}{\sin(\pi s/2)}
  \label{smeq:Ks}
\end{equation}
holds for $-1<\Re s<0$; the right-hand side is the meromorphic continuation
of $K$ to all of $\mathbb{C}$.
\end{lemma}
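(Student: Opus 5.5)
The plan has two independent parts: the closed form \eqref{smeq:Xiclosed}, proved by an ODE-uniqueness argument for the Laplace transform of $1/(1+t^2)$, and the Mellin transform \eqref{smeq:Ks}, proved by Fubini and a Beta-type integral.

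\emph{Closed form.} Set $f(z):=\int_0^\infty \ee^{-zt}(1+t^2)^{-1}\rd t$, so that $\Xi(z)=zf(z)$.
\begin{itemize}
\item For $z>0$, differentiate twice under the integral sign; this is justified by the exponential damping. It gives
\[
f''(z)+f(z)=\int_0^\infty \ee^{-zt}\rd t=\frac1z .
\]
\item Also $f(z)\to0$ as $z\to\infty$, by dominated convergence.
\item Let $g(z):=\Ci(z)\sin z+(\tfrac\pi2-\Si(z))\cos z$. Using $\Ci'(z)=\cos z/z$ and $\Si'(z)=\sin z/z$, one computes
\[
g'(z)=\Ci(z)\cos z-(\tfrac\pi2-\Si(z))\sin z ,
\]
where the $1/z$ contributions cancel in pairs. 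Differentiating once more gives $g''+g=(\cos^2z+\sin^2z)/z=1/z$.
\item Since $\Ci(z)\to0$ and $\Si(z)\to\pi/2$, we have $g(z)\to0$ as $z\to\infty$.
\item The difference $f-g$ solves $h''+h=0$, so it equals $A\cos z+B\sin z$, and it tends to $0$ at infinity. This forces $A=B=0$, hence $f=g$ and \eqref{smeq:Xiclosed} follows.
\end{itemize}
As a by-product, $\Xi(z)=\tfrac\pi2 z+O(z^2\log z)$ as $z\downarrow0$, and $\Xi(z)\to1$ as $z\to\infty$ (via $f(z)\sim1/z$). So $z^{\sigma-1}\Xi(z)$ is integrable exactly for $-1<\sigma<0$, and $K(s)$ is holomorphic on that strip.

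\emph{Mellin transform.} Fix $-1<\sigma=\Re s<0$ and write
\[
K(s)=\int_0^\infty\!\!\int_0^\infty z^{s}\ee^{-zt}(1+t^2)^{-1}\,\rd t\,\rd z .
\]
We use Tonelli on the absolute integrand to justify exchanging the order.
\begin{itemize}
\item The $z$-integral gives $\Gamma(\sigma+1)t^{-\sigma-1}$, which is finite because $\sigma>-1$.
\item The remaining $\int_0^\infty t^{-\sigma-1}(1+t^2)^{-1}\rd t$ converges at $t=0$ because $\sigma<0$, and at $t=\infty$ because $\sigma>-2$.
\end{itemize}
Hence Fubini applies and
\[
K(s)=\Gamma(s+1)\int_0^\infty \frac{t^{-s-1}}{1+t^2}\,\rd t .
\]
The substitution $t^2=x$ reduces this to the Beta integral $\int_0^\infty t^{a-1}(1+t^2)^{-1}\rd t=\pi/(2\sin(\pi a/2))$ with $a=-s\in(0,1)$. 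This gives
\[
K(s)=\Gamma(s+1)\frac{\pi}{2\sin(-\pi s/2)}=-\frac{\pi}{2}\frac{\Gamma(s+1)}{\sin(\pi s/2)} .
\]
The right-hand side is a ratio of meromorphic functions on $\mathbb{C}$. By the identity theorem, it is the unique meromorphic continuation of $K$ from the strip.

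The only step needing care is the closed form. The key point is choosing the decay condition at infinity to pin down the homogeneous solution; the Mellin computation is routine once absolute convergence on the strip has been checked.
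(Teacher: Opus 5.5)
Your proof is correct. The Mellin-transform half is the same as the paper's argument: Tonelli/Fubini for $z^{s}\ee^{-zt}(1+t^2)^{-1}$ on $(0,\infty)^2$, the Gamma integral in $z$, the substitution $t^2=x$ into $\int_0^\infty x^{\alpha-1}(1+x)^{-1}\rd x=\pi/\sin(\pi\alpha)$, and continuation by the identity theorem. (Since $a=-s$ is complex in general, quote that integral for $0<\Re a<2$ rather than for real $a\in(0,1)$.)

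Where you genuinely differ is the closed form \eqref{smeq:Xiclosed}. The paper simply invokes the classical Laplace evaluation coming from the exponential-integral representations of $\Si$ and $\Ci$, i.e.\ it quotes a table identity. It then reads the endpoint bounds $\Xi=O(z)$ at $0$ and $\Xi=O(1)$ at $\infty$ off that formula. You instead prove the identity from scratch: $f$ and $\Ci(z)\sin z+(\tfrac\pi2-\Si(z))\cos z$ both solve $h''+h=1/z$ and both vanish at infinity. Their difference is therefore a solution $A\cos z+B\sin z$ of the homogeneous equation that tends to zero, hence identically zero. This uses only $\Si'(z)=\sin z/z$, $\Ci'(z)=\cos z/z$ and the limits of $\Si,\Ci$ at infinity, so it is fully self-contained.

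Your route has two further advantages. It makes explicit the ODE $f''+f=1/z$, which the paper mentions only afterwards as the origin of the functional equation $K(s+2)=-(s+1)(s+2)K(s)$. And it obtains the large-$z$ limit $\Xi\to1$ independently, by scaling $f(z)\sim1/z$. The paper's version is shorter; yours costs a few lines but removes the reliance on an external identity.
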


\begin{proof}
Write $f(z):=\int_0^\infty \ee^{-zt}/(1+t^2)\,\rd t$, so $\Xi=zf$. The
exponential-integral representations of the sine and cosine integrals give
the classical Laplace evaluation
$f(z)=\Ci(z)\sin z+(\tfrac\pi2-\Si(z))\cos z$, which is \eqref{smeq:Xiclosed}.
From \eqref{smeq:Xiclosed}, $\Xi(z)=O(z)$ as $z\downarrow0$ and $\Xi(z)=O(1)$
as $z\to\infty$, so on $-1<\Re s<0$ the integrand $z^{s-1}\Xi(z)=z^s f(z)$ is
absolutely integrable; moreover $z^s\ee^{-zt}(1+t^2)^{-1}$ is absolutely
integrable on $(0,\infty)^2$ there, so Fubini gives
\begin{widetext}
\[
  K(s)=\int_0^\infty\frac{\rd t}{1+t^2}\int_0^\infty z^{s}\ee^{-zt}\,\rd z
  =\Gamma(s+1)\int_0^\infty\frac{t^{-s-1}}{1+t^2}\,\rd t .
\]
\end{widetext}
Substituting $u=t^2$ and using
$\int_0^\infty u^{\alpha-1}(1+u)^{-1}\rd u=\pi/\sin(\pi\alpha)$
($0<\Re\alpha<1$) at $\alpha=-s/2$, the $t$-integral equals
$-\tfrac{\pi}{2}\big/\sin(\pi s/2)$, yielding \eqref{smeq:Ks}. The right-hand
side is meromorphic on $\mathbb{C}$ and agrees with $K$ on the strip, hence is
its continuation.
\end{proof}

\begin{lemma}[Functional equation]\label{smlem:funceq}
$K(s+2)=-(s+1)(s+2)\,K(s)$ for all $s\in\mathbb{C}$.
\end{lemma}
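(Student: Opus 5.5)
The plan is to work directly from the closed form $K(s)=-\frac{\pi}{2}\Gamma(s+1)/\sin(\pi s/2)$ of Lemma~\ref{smlem:mellin}. That formula is already the meromorphic continuation of $K$ to all of $\mathbb{C}$. So it suffices to verify the identity between the two meromorphic functions on the right-hand side, and the identity theorem then gives it for every $s$.

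\emph{Step 1 (Gamma factor).} Apply $\Gamma(z+1)=z\Gamma(z)$ twice:
\[
\Gamma(s+3)=(s+2)(s+1)\Gamma(s+1).
\]

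\emph{Step 2 (sine factor).} Since $\sin(\theta+\pi)=-\sin\theta$, we get
\[
\sin\bigl(\pi(s+2)/2\bigr)=\sin(\pi s/2+\pi)=-\sin(\pi s/2).
\]

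\emph{Step 3 (combine).} Putting the two together,
\[
K(s+2)=-\frac{\pi}{2}\,\frac{(s+1)(s+2)\Gamma(s+1)}{-\sin(\pi s/2)}=-(s+1)(s+2)K(s).
\]
This holds on the open set where both sides are finite, and hence as an identity of meromorphic functions on $\mathbb{C}$. At the poles it is read as an equality of Laurent expansions, with both sides defined by continuation.

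\emph{Interpretation at the integers.} It is worth checking the statement where the prefactor $(s+1)(s+2)$ vanishes. At $s=-1$, the simple pole of $K$ at $-1$ is multiplied by the zero of $(s+1)$, which gives the finite value $K(1)=-\pi/2$; this is consistent with Step 3. At $s=-2$, the double pole of $K$ at $-2$ is multiplied by the simple zero of $(s+2)$, leaving the simple pole at $s=0$. This agrees with the pole classification stated in the main text.

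There is no real obstacle here. The only point needing care is the remark that the identity is established away from poles and then extends by analytic continuation; I would state this explicitly rather than evaluate $K$ at the poles themselves.
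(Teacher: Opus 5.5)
Your proof is correct and is essentially the paper's own argument: both derive the identity directly from the closed form of $K$ via $\Gamma(s+3)=(s+1)(s+2)\Gamma(s+1)$ and $\sin\bigl(\pi(s+2)/2\bigr)=-\sin(\pi s/2)$. Your explicit remark that the identity holds as one between meromorphic functions, and your consistency checks at $s=-1$ (giving $K(1)=-\pi/2$) and $s=-2$ (leaving the simple pole at $s=0$), are correct additions that the paper leaves implicit.
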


\begin{proof}
Immediate from \eqref{smeq:Ks} via $\Gamma(s+3)=(s+1)(s+2)\Gamma(s+1)$ and
$\sin\!\big(\tfrac{\pi(s+2)}{2}\big)=-\sin\!\big(\tfrac{\pi s}{2}\big)$. It
reflects, at the level of Mellin images, that $\Xi=zf$ with $f$ obeying the
inhomogeneous ODE $f''+f=1/z$.
\end{proof}

\begin{lemma}[Poles, residues, and local expansions]\label{smlem:poles}
$K$ is meromorphic with poles only at the integers, and:
\begin{enumerate}[label=\textnormal{(\roman*)},leftmargin=2.4em]
\item at $s=2k$ $(k\ge0)$: a simple pole,
      $\Res_{s=2k}K=(-1)^{k+1}(2k)!$;
\item at $s=-(2k+1)$ $(k\ge0)$: a simple pole,
      $\Res_{s=-(2k+1)}K=\tfrac{\pi}{2}(-1)^k/(2k)!$;
\item at $s=-2k$ $(k\ge1)$: a \emph{double} pole.
\end{enumerate}
In particular
\begin{align}
  K(s)&=-\frac1s+\gamma+O(s) & &(s\to0),\label{smeq:Klaurent}\\
  K(s)&=-\frac{1}{(s+2)^2}+\frac{\gamma-1}{s+2}+O(1) & &(s\to-2),
\end{align}
and $K(1)=-\tfrac\pi2$, $K'(1)=-\tfrac\pi2(1-\gamma)$.
\end{lemma}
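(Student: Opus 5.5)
The plan is to read everything off the closed form \eqref{smeq:Ks}, $K(s)=-\tfrac{\pi}{2}\Gamma(s+1)/\sin(\pi s/2)$, of Lemma~\ref{smlem:mellin}, by locating the poles of the numerator and the zeros of the denominator. $\Gamma(s+1)$ is meromorphic with simple poles exactly at $s=-1,-2,-3,\dots$ and no zeros. The function $\sin(\pi s/2)$ is entire with simple zeros exactly at the even integers $s=2k$, $k\in\mathbb{Z}$. Hence $K$ is meromorphic with poles only at integers, and we sort three cases.

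\textbf{Case (i), $s=2k$ with $k\ge0$.} Here $\Gamma(s+1)$ is regular and nonzero, with value $(2k)!$, and $\sin$ has a simple zero with derivative $\tfrac{\pi}{2}\cos(\pi k)=\tfrac{\pi}{2}(-1)^k$. So the pole is simple with
\[
\Res_{s=2k}K=-\tfrac{\pi}{2}\,(2k)!\Big/\big(\tfrac{\pi}{2}(-1)^k\big)=(-1)^{k+1}(2k)!.
\]

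\textbf{Case (ii), $s=-(2k+1)$.} The denominator is $\sin(-\pi(2k+1)/2)=(-1)^{k+1}\neq0$. We use $\Res_{w=-n}\Gamma(w)=(-1)^n/n!$ with $w=s+1$ and $n=2k$, so $\Res\Gamma(s+1)=1/(2k)!$. This gives
\[
\Res K=-\tfrac{\pi}{2}\cdot\frac{1}{(2k)!}\cdot(-1)^{k+1}=\tfrac{\pi}{2}\,\frac{(-1)^k}{(2k)!}.
\]

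\textbf{Case (iii), $s=-2k$ with $k\ge1$.} A simple pole of $\Gamma(s+1)$ meets a simple zero of $\sin$, so the pole is double, with leading coefficient nonzero because both factors have nonvanishing leading terms.

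\textbf{Local expansions.} For $s\to0$, write $\Gamma(1+s)=1-\gamma s+O(s^2)$ and $\sin(\pi s/2)=\tfrac{\pi s}{2}\,(1+O(s^2))$. Then
\[
K=-\tfrac1s\,(1-\gamma s)+O(s)=-\tfrac1s+\gamma+O(s).
\]
For $s\to-2$ we avoid computing Laurent coefficients of $\Gamma$ at $-1$ directly, since that step is the most error-prone. Instead we use the functional equation of Lemma~\ref{smlem:funceq}: $K(s)=-K(s+2)/\big((s+1)(s+2)\big)$. Put $\varepsilon=s+2$. Then $K(\varepsilon)=-1/\varepsilon+\gamma+O(\varepsilon)$, and $1/(s+1)=1/(\varepsilon-1)=-(1+\varepsilon+O(\varepsilon^2))$. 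Multiplying out,
\[
K(s)=\frac{1}{\varepsilon}\Big(-\tfrac1\varepsilon+\gamma\Big)(1+\varepsilon)+O(1)
=-\frac{1}{\varepsilon^2}+\frac{\gamma-1}{\varepsilon}+O(1),
\]
as claimed. Sign bookkeeping here is the main place to be careful.

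\textbf{Values at $s=1$.} We have $K(1)=-\tfrac{\pi}{2}\Gamma(2)/\sin(\pi/2)=-\tfrac{\pi}{2}$. For the derivative, take the logarithmic derivative $K'/K=\psi(s+1)-\tfrac{\pi}{2}\cot(\pi s/2)$. At $s=1$ this equals $\psi(2)-0=1-\gamma$. Hence $K'(1)=-\tfrac{\pi}{2}(1-\gamma)$, which completes the lemma.
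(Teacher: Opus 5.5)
Your proof is correct and follows the paper's route: everything is read off the closed form $K(s)=-\tfrac{\pi}{2}\Gamma(s+1)/\sin(\pi s/2)$ by matching the poles of $\Gamma(s+1)$ against the zeros of the sine, with the same residue computations and the same logarithmic-derivative argument for $K'(1)$. The only deviation is at $s=-2$. There you obtain the Laurent expansion from the functional equation and the $s=0$ expansion, whereas the paper inserts the expansions of $\Gamma(-1+\varepsilon)$ and $\sin\bigl(\tfrac{\pi}{2}(-2+\varepsilon)\bigr)$ directly. Your choice is the safer one: the paper's direct route as written quotes $\sin\bigl(\tfrac{\pi}{2}(-2+\varepsilon)\bigr)=+\tfrac{\pi\varepsilon}{2}+O(\varepsilon^3)$, which has the wrong sign. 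The correct expansion is $-\tfrac{\pi\varepsilon}{2}$, and only that sign produces the stated $-1/(s+2)^2+(\gamma-1)/(s+2)$, so your computation independently confirms the lemma's result.
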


\begin{proof}
In \eqref{smeq:Ks}, $\sin(\pi s/2)$ has simple zeros at the even integers and
$\Gamma(s+1)$ has simple poles at the negative integers. At a nonnegative
even integer $\Gamma$ is finite and nonzero, giving a simple pole of $K$;
at a negative odd integer $\sin\neq0$, so the simple pole of $\Gamma$ gives a
simple pole of $K$; at a negative even integer the simple pole of $\Gamma$
coincides with the simple zero of $\sin$, giving a double pole. For the
residues, use
$\sin\!\big(\tfrac\pi2(2k+\varepsilon)\big)=(-1)^k\tfrac{\pi\varepsilon}{2}+O(\varepsilon^3)$
and $\Gamma(2k+1)=(2k)!$ for (i). For (ii),
$\Res_{s=-(2k+1)}\Gamma(s+1)=1/(2k)!$ and
$\sin\!\big(\tfrac\pi2(-(2k+1))\big)=(-1)^{k+1}$; they satisfy the
recursion $\Res_{s=2k+2}K=-(2k+1)(2k+2)\Res_{s=2k}K$ forced by
Lemma~\ref{smlem:funceq}. The expansions \eqref{smeq:Klaurent} follow by
inserting $\Gamma(1+\varepsilon)=1-\gamma\varepsilon+O(\varepsilon^2)$ at $s=\varepsilon$ and
$\Gamma(-1+\varepsilon)=-\varepsilon^{-1}+(\gamma-1)+O(\varepsilon)$,
$\sin\!\big(\tfrac\pi2(-2+\varepsilon)\big)=\tfrac{\pi\varepsilon}{2}+O(\varepsilon^3)$ at
$s=-2+\varepsilon$. Finally $K(1)=-\tfrac\pi2\Gamma(2)/\sin(\pi/2)=-\tfrac\pi2$;
differentiating $\log K=\mathrm{const}+\log\Gamma(s+1)-\log\sin(\pi s/2)$
gives $K'/K=\psi(s+1)-\tfrac\pi2\cot(\pi s/2)$, and at $s=1$ (where
$\cot(\pi/2)=0$, $\psi(2)=1-\gamma$) one obtains $K'(1)=-\tfrac\pi2(1-\gamma)$.
\end{proof}

\begin{lemma}[Vertical exponential decay]\label{smlem:decay}
For every fixed $\sigma\notin\mathbb{Z}$, as $\abs{t}\to\infty$,
\begin{equation}
  \abs{K(\sigma+\ii t)}\sim \pi\sqrt{2\pi}\,\abs{t}^{\,\sigma+1/2}\ee^{-\pi\abs{t}},
  \label{smeq:Kdecay}
\end{equation}
uniformly for $\sigma$ in compact subsets of $\mathbb{R}\setminus\mathbb{Z}$.
More precisely, for any compact $K_0\subset\mathbb{R}\setminus\mathbb{Z}$
there exist $C_{K_0}>0$ and an integer $M\ge0$ with
\begin{equation}
  \abs{K(\sigma+\ii t)}\le C_{K_0}(1+\abs{t})^{M}\ee^{-\pi\abs{t}}
  \quad(\sigma\in K_0,\ t\in\mathbb{R}).
  \label{smeq:Kdecay2}
\end{equation}
\end{lemma}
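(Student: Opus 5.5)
The plan is to work directly from the closed form \eqref{smeq:Ks}, $K(s)=-\tfrac{\pi}{2}\Gamma(s+1)/\sin(\pi s/2)$, and to estimate the two factors separately on the vertical line $s=\sigma+\ii t$.

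\emph{The Gamma factor.} Stirling's formula, uniform for $\sigma$ in compacts, gives
\[
\abs{\Gamma(\sigma+1+\ii t)}=\sqrt{2\pi}\,\abs{t}^{\sigma+1/2}\ee^{-\pi\abs{t}/2}\bigl(1+O(\abs{t}^{-1})\bigr)
\]
as $\abs{t}\to\infty$. This is the standard vertical form $\abs{\Gamma(x+\ii y)}\sim\sqrt{2\pi}\abs{y}^{x-1/2}\ee^{-\pi\abs{y}/2}$ evaluated at $x=\sigma+1$.

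\emph{The sine factor.} Write
\[
\sin\!\bigl(\tfrac{\pi}{2}(\sigma+\ii t)\bigr)=\sin\tfrac{\pi\sigma}{2}\cosh\tfrac{\pi t}{2}+\ii\cos\tfrac{\pi\sigma}{2}\sinh\tfrac{\pi t}{2},
\]
so that $\abs{\sin}^2=\sin^2\tfrac{\pi\sigma}{2}+\sinh^2\tfrac{\pi t}{2}$. Hence $\abs{\sin}=\tfrac12\ee^{\pi\abs{t}/2}\bigl(1+O(\ee^{-\pi\abs{t}})\bigr)$, uniformly in $\sigma$.

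\emph{The asymptotic \eqref{smeq:Kdecay}.} Dividing the two estimates gives
\[
\abs{K}\sim\tfrac{\pi}{2}\cdot2\sqrt{2\pi}\,\abs{t}^{\sigma+1/2}\ee^{-\pi\abs{t}},
\]
which is \eqref{smeq:Kdecay}, with uniformity inherited from both estimates.

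\emph{The global bound \eqref{smeq:Kdecay2}.} Fix a compact $K_0\subset\mathbb{R}\setminus\mathbb{Z}$ and choose $T$ with the asymptotic regime valid for $\abs{t}\ge T$.
\begin{itemize}
\item For $\abs{t}\ge T$, the estimate above gives the bound with $M\ge\max_{K_0}(\sigma+1/2)$. Take $M$ to be the ceiling of that maximum, or $0$ if it is negative.
\item For $\abs{t}\le T$, the map $(\sigma,t)\mapsto K(\sigma+\ii t)$ is continuous on the compact set $K_0\times[-T,T]$. This holds because the poles of $K$ are only at real integers (Lemma~\ref{smlem:poles}) and $K_0$ avoids $\mathbb{Z}$. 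So $K$ is bounded there, and since $\ee^{-\pi\abs{t}}\ge\ee^{-\pi T}$ the bound follows after enlarging $C_{K_0}$.
\end{itemize}

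\emph{Main obstacle.} The one point needing care is the uniformity of Stirling's remainder in $\sigma$ over a compact set. I would cite the standard uniform version of Stirling in sectors $\abs{\arg z}\le\pi-\delta$: for $\abs{t}$ large, $z=\sigma+1+\ii t$ stays in such a sector. The lower bound on $\abs{\sin}$ is harmless since $\sinh^2$ dominates. The exclusion of integer $\sigma$ is needed only for the bounded-$t$ part.
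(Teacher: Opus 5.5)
Your proof is correct and follows the paper's argument. You apply Stirling to $\Gamma(\sigma+1+\ii t)$, use the exact modulus $\abs{\sin(\tfrac\pi2(\sigma+\ii t))}^2=\sin^2\tfrac{\pi\sigma}{2}+\sinh^2\tfrac{\pi t}{2}$, and split into large and bounded $\abs{t}$ for the majorant; your continuity argument on $K_0\times[-T,T]$ is, if anything, slightly cleaner than the paper's remark that the reciprocal of the sine is bounded there, since it also covers the Gamma factor when $\sigma+1<0$.
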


\begin{proof}
Stirling's formula gives
$\abs{\Gamma(\sigma+1+\ii t)}\sim\sqrt{2\pi}\,\abs{t}^{\sigma+1/2}\ee^{-\pi\abs{t}/2}$,
and
$\abs{\sin\!\big(\tfrac\pi2(\sigma+\ii t)\big)}
 =\big(\sin^2\tfrac{\pi\sigma}{2}+\sinh^2\tfrac{\pi t}{2}\big)^{1/2}
 \sim\tfrac12\ee^{\pi\abs{t}/2}$;
inserting into \eqref{smeq:Ks} yields \eqref{smeq:Kdecay}. On a compact $K_0$
bounded away from $\mathbb{Z}$, the Stirling bound holds uniformly and
$\abs{\sin(\tfrac\pi2(\sigma+\ii t))}\ge c_{K_0}\ee^{\pi\abs{t}/2}$ for
$\abs{t}\ge1$, while the reciprocal is bounded for $\abs{t}\le1$; this gives
the majorant \eqref{smeq:Kdecay2}.
\end{proof}

% =====================================================================
\section{The Mellin--Barnes representation of $S(\Lambda)$}\label{smsec:MB}
% =====================================================================

\begin{proposition}[Mellin--Barnes representation]\label{smprop:MB}
For any $c\in(-1,0)$,
\begin{equation}
  S(\Lambda)=\frac{1}{2\pi\ii\,D_0}\int_{(c)}K(s)\,F(-s)\,\Lambda^{-s}\,\rd s
  \quad(\Lambda>0),
  \label{smeq:Smb}
\end{equation}
where $\int_{(c)}$ denotes integration up the line $\Re s=c$. The integrand
\begin{equation}
  I(s;\Lambda):=\frac{1}{D_0}K(s)\,F(-s)\,\Lambda^{-s}
  \label{smeq:Idef}
\end{equation}
is absolutely integrable on every vertical line that avoids the poles of
$K$ and of $s\mapsto F(-s)$.
\end{proposition}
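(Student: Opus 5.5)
The plan is to obtain \eqref{smeq:Smb} by Mellin inversion of the kernel $\Xi$ followed by a Fubini interchange with the $y$-integral. The analytic inputs come from Lemma~\ref{smlem:mellin} and Lemma~\ref{smlem:decay}, and the integrability of $\Phi$ comes from Proposition~\ref{smprop:envelope}.

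\textbf{Step 1: inversion of $\Xi$.} By Lemma~\ref{smlem:mellin}, $K$ is holomorphic on $-1<\Re s<0$. By Lemma~\ref{smlem:decay}, $K(c+\ii t)$ decays like $\ee^{-\pi\abs{t}}$ times a power for each fixed $c\in(-1,0)$, so $t\mapsto K(c+\ii t)$ is in $L^1$. The function $\Xi$ is continuous on $(0,\infty)$ and satisfies $\Xi(z)=O(z)$ at $0$ and $O(1)$ at $\infty$, so $z^{c}\Xi(z)$ is integrable against $\rd z/z$. The standard Mellin inversion theorem (for $L^1$ transforms of continuous functions) then gives, pointwise for every $z>0$,
\begin{equation*}
  \Xi(z)=\frac{1}{2\pi\ii}\int_{(c)}K(s)\,z^{-s}\,\rd s .
\end{equation*}

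\textbf{Step 2: Fubini in $S(\Lambda)$.} Insert this into \eqref{smeq:Sdef} with $z=\Lambda y$ to get
\begin{equation*}
  S(\Lambda)=\frac{1}{D_0}\int_0^\infty\frac{\Phi(y)}{y}\cdot\frac{1}{2\pi\ii}\int_{(c)}K(s)\,\Lambda^{-s}y^{-s}\,\rd s\,\rd y .
\end{equation*}
To swap the order of integration, it suffices to bound the double integral of $\abs{\Phi(y)}\,y^{-1-c}\,\abs{K(c+\ii t)}\,\Lambda^{-c}$. This factorizes:
\begin{itemize}
  \item the $t$-factor is finite by Step 1;
  \item the $y$-factor is $\int_0^\infty y^{-c-1}\abs{\Phi(y)}\,\rd y$. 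By the envelope of Proposition~\ref{smprop:envelope}, this is finite because $-c\in(0,1)\subset(-1,2)$. Near $0$ the integrand is $y^{-c}(1+\abs{\log y})$ with $-c>-1$; near $\infty$ it is $y^{-c-3}\log y$ with $-c-3<-1$.
\end{itemize}
After swapping, the inner integral is $\int_0^\infty y^{-s-1}\Phi(y)\,\rd y=F(-s)$, which converges absolutely on $\Re s=c$. This yields \eqref{smeq:Smb}.

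\textbf{Step 3: absolute integrability on other lines.} On a general vertical line $\Re s=\sigma$, $\sigma\notin\mathbb{Z}$, avoiding the poles of $F(-s)$, the factor $\Lambda^{-s}$ has constant modulus $\Lambda^{-\sigma}$ and $K$ decays exponentially by \eqref{smeq:Kdecay2}. So it suffices that $F(-\sigma-\ii t)$ grows at most polynomially (or even subexponentially, slower than $\ee^{\pi\abs{t}}$) in $t$. For $-2<\sigma<1$ this is immediate: $\abs{F(-\sigma-\ii t)}\le F_{\abs{\Phi}}(-\sigma)$ is uniformly bounded. Beyond that strip one needs the meromorphic continuation of $F$ with vertical growth control, which I would take from Appendix~\ref{smsec:endpoint} and the bound on the regular part $H_2$ from Appendix~\ref{smsec:H2}. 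Concretely, I would subtract the endpoint expansions \eqref{smeq:envelope}-type principal terms of $\Phi$ (the $y\log y$, $y$ and $y^{-2}$ pieces, suitably cut off at $y=1$). Their Mellin transforms are explicit rational functions of $s$, so they grow at most polynomially. The remainders then have Mellin transforms that are bounded on the enlarged strips.

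\textbf{Main obstacle.} The delicate point is this vertical-growth control of $F$ outside its fundamental strip. In particular, the oscillatory $\log y\,\sin y/y^2$ tail in \eqref{eq:largeyPhi} is not absolutely integrable against $y^{s-1}$ once $\Re s\ge 2$. I would handle it by one integration by parts in $y$. This costs a factor of order $\abs{s}$ but gains a power of $y$, so it yields at most polynomial growth, which the $\ee^{-\pi\abs{t}}$ decay of $K$ absorbs.
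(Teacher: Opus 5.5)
Your proposal is correct and follows the paper's proof step for step. It uses Mellin inversion of $\Xi$ on $\Re s=c$ via the $L^1$ decay of $K$, a Fubini interchange justified by the envelope of Proposition~\ref{smprop:envelope}, and absolute integrability on other lines from the $\ee^{-\pi\abs{t}}$ decay of $K$ against the polynomial growth of $F(-s)$ taken from Appendices~\ref{smsec:endpoint} and~\ref{smsec:H2}. One small correction: past $\Re(-s)=2$ the remainder transforms are $O(1+\abs{s})$, not bounded, and a bare $O(y^{-2})$ size estimate would not continue them at all (the channel decomposition of Appendix~\ref{smsec:endpoint} is what does this); since you defer to those appendices, the argument is unaffected.
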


\begin{proof}
By Lemma~\ref{smlem:mellin}, $\Mellin[\Xi](s)=K(s)$ on $-1<\Re s<0$, and the
Mellin inversion theorem applies ($\Xi$ continuous, $\Xi(z)=O(z)$ at $0$,
$\Xi(z)=O(1)$ at $\infty$, and $K(c+\ii\cdot)\in L^1(\mathbb{R})$ by
Lemma~\ref{smlem:decay}); hence for $c\in(-1,0)$,
$\Xi(\Lambda y)=\tfrac{1}{2\pi\ii}\int_{(c)}K(s)(\Lambda y)^{-s}\rd s$. Insert this
into \eqref{smeq:Sdef}. By Proposition~\ref{smprop:envelope},
$\int_0^1 y^{-c-1}\abs{\Phi}\,\rd y<\infty$ (as $-1<c$) and
$\int_1^\infty y^{-c-1}\abs{\Phi}\,\rd y<\infty$ (as $\Phi=O(y^{-2}\log y)$),
so the double integral
$\int_0^\infty\!\int_{(c)}\abs{\Phi(y)/y}\,\abs{K(s)}\,(\Lambda y)^{-c}\abs{\rd s}\rd y$
is finite by \eqref{smeq:Kdecay2}. Fubini then gives
\begin{widetext}
\[
  S(\Lambda)=\frac{1}{2\pi\ii\,D_0}\int_{(c)}K(s)\,\Lambda^{-s}
  \Big(\int_0^\infty\frac{\Phi(y)}{y}\,y^{-s}\,\rd y\Big)\rd s
  =\frac{1}{2\pi\ii\,D_0}\int_{(c)}K(s)\,F(-s)\,\Lambda^{-s}\,\rd s,
\]
\end{widetext}
since for $\Re s=c\in(-1,0)$ one has $-\Re s\in(0,1)\subset(-1,2)$, so
$\int_0^\infty\Phi(y)y^{-s-1}\rd y=F(-s)$ converges by
Proposition~\ref{smprop:envelope}. This is \eqref{smeq:Smb}. Absolute
integrability on a vertical line avoiding poles follows from the exponential
decay \eqref{smeq:Kdecay2} together with the polynomial growth of
$F(-\sigma-\ii t)$ in $t$ (bounded on lines away from the poles of $F$ by the
explicit principal part of Appendix~\ref{smsec:endpoint}; polynomially bounded on
$0<\Re(-s)<3$ by Appendix~\ref{smsec:H2}).
\end{proof}

% =====================================================================
\section{The contour-shift theorem}\label{smsec:shift}
% =====================================================================
The next theorem turns line displacement into a residue series with an
explicit remainder. It is stated for the integrand \eqref{smeq:Idef} but is a
general fact about Mellin--Barnes integrands that decay exponentially on
verticals (a quantitative form of the converse mapping theorem).

\begin{theorem}[Contour shift with explicit remainder]\label{smthm:shift}
Fix $\Lambda>0$ and reals $a<c<b$. Suppose:
\begin{enumerate}[label=\textnormal{(A\arabic*)},leftmargin=3.2em]
\item \textnormal{(meromorphy)} $s\mapsto\tfrac{1}{D_0}K(s)F(-s)$ is
holomorphic on the closed strip $a\le\Re s\le b$ except for finitely many
poles $\rho_1,\dots,\rho_n$ in the open strip $a<\Re s<b$, none on the lines
$\Re s=a,b$;
\item \textnormal{(vertical integrability)} for each $\sigma\in\{a,b\}$,
\begin{equation}
  A(\sigma):=\int_{-\infty}^{\infty}
  \abs[\Big]{\tfrac{1}{D_0}K(\sigma+\ii t)\,F(-\sigma-\ii t)}\,\rd t<\infty;
  \label{smeq:Asigma}
\end{equation}
\item \textnormal{(horizontal vanishing)}
\begin{widetext}
\begin{equation}
\label{smeq:horiz}
\begin{aligned}
  \lim_{T\to\infty}\int_{a}^{b}\abs{I(\sigma+\ii T;\Lambda)}\,\rd\sigma
  &=\lim_{T\to\infty}\int_{a}^{b}\abs{I(\sigma-\ii T;\Lambda)}\,\rd\sigma\\
  &=0.
\end{aligned}
\end{equation}
\end{widetext}
\end{enumerate}
Then $\int_{(a)}I$, $\int_{(c)}I$, and $\int_{(b)}I$ converge absolutely and
\begin{widetext}
\begin{align}
  \frac{1}{2\pi\ii}\int_{(c)}I(s;\Lambda)\,\rd s
  &=-\!\!\sum_{c<\Re\rho_j<b}\!\!\Res_{s=\rho_j}I(s;\Lambda)
   \;+\;\frac{1}{2\pi\ii}\int_{(b)}I(s;\Lambda)\,\rd s,
   \label{smeq:shiftR}\\[2pt]
  \frac{1}{2\pi\ii}\int_{(c)}I(s;\Lambda)\,\rd s
  &=\phantom{-}\!\!\sum_{a<\Re\rho_j<c}\!\!\Res_{s=\rho_j}I(s;\Lambda)
   \;+\;\frac{1}{2\pi\ii}\int_{(a)}I(s;\Lambda)\,\rd s.
   \label{smeq:shiftL}
\end{align}
\end{widetext}
Moreover, for $\sigma\in\{a,b\}$,
\begin{equation}
  \abs[\Big]{\frac{1}{2\pi\ii}\int_{(\sigma)}I(s;\Lambda)\,\rd s}
  \le\frac{A(\sigma)}{2\pi}\,\Lambda^{-\sigma}.
  \label{smeq:remainder}
\end{equation}
\end{theorem}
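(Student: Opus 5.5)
The plan is to apply the residue theorem to a truncated rectangle and then let the truncation height go to infinity, using (A2) and (A3) to control the vertical and horizontal sides respectively.

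\emph{Absolute convergence on the three lines.} Since $\Lambda>0$ is fixed and $|\Lambda^{-s}|=\Lambda^{-\Re s}$ is constant on each vertical line, we have $\int_{(\sigma)}|I|\,|\rd s|=\Lambda^{-\sigma}A(\sigma)$. For $\sigma=a,b$ this is finite directly by (A2). For the line $\Re s=c$ we either invoke Proposition~\ref{smprop:MB}, which already gives absolute integrability on any vertical avoiding poles, or note that $c$ plays the same role as $a,b$ once (A2) is assumed there as well. The remainder estimate \eqref{smeq:remainder} then follows immediately from $|\tfrac{1}{2\pi\ii}\int_{(\sigma)}I\,\rd s|\le\tfrac{1}{2\pi}\int|I(\sigma+\ii t)|\,\rd t=\tfrac{A(\sigma)}{2\pi}\Lambda^{-\sigma}$.

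\emph{Rectangle argument.} For \eqref{smeq:shiftR}, take $T$ larger than $\max_j|\Im\rho_j|$, which is possible because there are finitely many poles by (A1). Consider the positively oriented rectangle with vertices $c-\ii T$, $b-\ii T$, $b+\ii T$, $c+\ii T$. By (A1), $I$ is meromorphic on this closed rectangle and its only poles inside are those $\rho_j$ with $c<\Re\rho_j<b$. We must also ensure there are no poles on $\Re s=c$; this holds because $c$ is chosen in $(-1,0)$, where $I$ is holomorphic (the nearest poles are at $s=0,-1$). The residue theorem then gives
\[
\int_{c-\ii T}^{b-\ii T}I+\int_{b-\ii T}^{b+\ii T}I-\int_{c-\ii T}^{c+\ii T}I-\int_{c+\ii T}^{b+\ii T}I=2\pi\ii\sum_{c<\Re\rho_j<b}\Res_{s=\rho_j}I .
\]
The two horizontal pieces are bounded in modulus by $\int_a^b|I(\sigma\pm\ii T)|\,\rd\sigma$, which tends to $0$ by (A3). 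The two vertical pieces converge to the full line integrals by dominated convergence, using absolute integrability. Letting $T\to\infty$ and rearranging yields \eqref{smeq:shiftR}. The left version \eqref{smeq:shiftL} is identical, using the rectangle $[a,c]\times[-T,T]$; the orientation flips the sign of the residue sum relative to the line at $c$.

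\emph{Expected obstacle.} The only delicate point is bookkeeping rather than analysis. One must check that no pole lies on the line $\Re s=c$ itself (guaranteed by the choice $c\in(-1,0)$) and that finiteness of the pole set lets $T$ exceed all pole heights. Everything analytic (decay of $K$, growth of $F$) has been packaged into hypotheses (A2) and (A3), so the theorem itself is a direct consequence of the residue theorem. The substantive work of verifying those hypotheses for the actual $\Phi$ is deferred to the later appendices.
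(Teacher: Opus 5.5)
Your proposal is correct and follows essentially the same route as the paper. Both arguments take a truncated rectangle, apply the residue theorem, use (A3) to kill the horizontal sides and absolute integrability ((A2) on $\Re s=a,b$, Proposition~\ref{smprop:MB} on $\Re s=c$) to pass to the full vertical lines, and read the remainder bound off $|\Lambda^{-s}|=\Lambda^{-\sigma}$. The paper first treats the full rectangle over $[a,b]$ and then the two sub-rectangles, while you go straight to the sub-rectangles; your explicit use of $c\in(-1,0)$ to exclude poles on $\Re s=c$ is actually a more accurate justification than the paper's appeal to (A1), which only constrains the lines $\Re s=a,b$.
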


\begin{proof}
Absolute convergence on $\Re s=a,b$ is (A2); on $\Re s=c$ it is
Proposition~\ref{smprop:MB}. For $T>0$ avoiding the imaginary parts of the
$\rho_j$, let $R_T$ be the positively oriented rectangle with vertical sides
$\Re s=a,b$ and horizontal sides $\Im s=\pm T$. The residue theorem gives
$\tfrac{1}{2\pi\ii}\oint_{\partial R_T}I=\sum_{\rho_j\in R_T}\Res_{\rho_j}I$.
The two horizontal segments tend to $0$ as $T\to\infty$ by (A3), and the
vertical segments tend to $\int_{(b)}I$ and $-\int_{(a)}I$ (the left side is
traversed downward under positive orientation). Hence
\begin{equation}
  \frac{1}{2\pi\ii}\int_{(b)}I-\frac{1}{2\pi\ii}\int_{(a)}I
  =\sum_{a<\Re\rho_j<b}\Res_{s=\rho_j}I .
  \label{smeq:fullrect}
\end{equation}
Applying the identical argument to the sub-rectangles with sides
$\{\Re s=c,\Re s=b\}$ and $\{\Re s=a,\Re s=c\}$ (no pole on the new boundary
by (A1), and (A2)--(A3) restrict to $[a,c],[c,b]\subset[a,b]$) gives
\eqref{smeq:shiftR} and \eqref{smeq:shiftL}. Finally, on $\Re s=\sigma$,
$\abs{\Lambda^{-s}}=\Lambda^{-\sigma}$, so
\begin{widetext}
\[
  \abs[\Big]{\tfrac{1}{2\pi\ii}\int_{(\sigma)}I\,\rd s}
  \le\tfrac{1}{2\pi}\,\Lambda^{-\sigma}\!\int_{-\infty}^\infty
  \abs[\Big]{\tfrac1{D_0}K(\sigma+\ii t)F(-\sigma-\ii t)}\rd t
  =\tfrac{A(\sigma)}{2\pi}\,\Lambda^{-\sigma}.\qedhere
\]
\end{widetext}
\end{proof}

\begin{lemma}[Residue-to-asymptotic rule]\label{smlem:residue}
If $\tfrac1{D_0}K(s)F(-s)$ has a pole of order $m$ at $s=\rho$ with
$\tfrac1{D_0}K(s)F(-s)=\sum_{j=1}^{m}a_{-j}(s-\rho)^{-j}+O(1)$, then
\begin{widetext}
\begin{equation}
\label{smeq:residue}
\begin{aligned}
  \Res_{s=\rho}I(s;\Lambda)
  &=\Lambda^{-\rho}\sum_{j=1}^{m}a_{-j}\,\frac{(-\log\Lambda)^{\,j-1}}{(j-1)!}\\
  &=\Lambda^{-\rho}\,P_\rho(\log\Lambda),
\end{aligned}
\end{equation}
\end{widetext}
with $P_\rho$ a polynomial of degree $m-1$.
\end{lemma}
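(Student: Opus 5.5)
The residue is computed directly from the Laurent expansion. Since $\Lambda^{-s}$ is entire in $s$, its Taylor expansion about $s=\rho$ is
\[
  \Lambda^{-s}=\Lambda^{-\rho}\,\ee^{-(s-\rho)\log\Lambda}
  =\Lambda^{-\rho}\sum_{n\ge0}\frac{(-\log\Lambda)^{n}}{n!}(s-\rho)^{n},
\]
and this series converges for all $s$. Multiply it by the assumed principal part $\sum_{j=1}^{m}a_{-j}(s-\rho)^{-j}$ plus the holomorphic remainder $O(1)$. The remainder times the entire factor $\Lambda^{-s}$ stays holomorphic at $\rho$, so it contributes nothing to the residue.

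The residue is the coefficient of $(s-\rho)^{-1}$ in the product. This coefficient collects exactly the pairs $(j,n)$ with $n=j-1$, which gives $\Lambda^{-\rho}\sum_{j=1}^{m}a_{-j}(-\log\Lambda)^{j-1}/(j-1)!$. This is the first line of \eqref{smeq:residue}.

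For the degree statement, define $P_\rho(x):=\sum_{j=1}^{m}a_{-j}(-x)^{j-1}/(j-1)!$. This is a polynomial in $x=\log\Lambda$ of degree at most $m-1$. Its top coefficient is $a_{-m}(-1)^{m-1}/(m-1)!$, and this is nonzero precisely because the pole has order $m$, i.e.\ $a_{-m}\neq0$. So the degree is exactly $m-1$.

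There is no real obstacle here. The only point needing care is that the Cauchy product of a finite Laurent polynomial with a convergent power series may be rearranged freely. This is legitimate because both factors converge absolutely on a punctured disc around $\rho$.

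As a check, the rule applied to \eqref{smeq:Klaurent} at $s=0$ with $F(-s)=D_0+O(s)$ gives $a_{-1}=-1$, hence residue $-1$. Under \eqref{smeq:shiftR} this yields the static value $+1$, consistent with Theorem~\ref{thm:strong}.
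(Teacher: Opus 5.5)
Your proof is correct and follows the paper's own argument. Both multiply the Laurent expansion by the Taylor series of $\Lambda^{-s}$ about $\rho$ and read off the coefficient of $(s-\rho)^{-1}$, where only the terms with $n=j-1$ contribute; your remark that $a_{-m}\neq0$ makes the degree exactly $m-1$ fills in a detail the paper leaves implicit.
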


\begin{proof}
Multiply the Laurent series by
$\Lambda^{-s}=\Lambda^{-\rho}\sum_{k\ge0}(-\log\Lambda)^k(s-\rho)^k/k!$ and extract the
coefficient of $(s-\rho)^{-1}$; only $k=j-1$ contributes.
\end{proof}

\begin{remark}\label{smrem:strategy}
Theorem~\ref{smthm:shift} and Lemma~\ref{smlem:residue} reduce each asymptotic
term of $S(\Lambda)$ to (a) locating the poles of $K(s)F(-s)$ and their orders
(Lemma~\ref{smlem:poles} for the $K$-side; Appendix~\ref{smsec:endpoint} for the
$F$-side), and (b) verifying (A2)--(A3) on the target lines
(Appendix~\ref{smsec:layers}). Shifting right ($b\uparrow$) gives the $\Lambda\to\infty$
expansion with remainder $O(\Lambda^{-b})$; shifting left ($a\downarrow$) gives
the $\Lambda\to0^+$ expansion with remainder $O(\Lambda^{-a})$.
\end{remark}

% =====================================================================
\section{Endpoint analysis of $\Phi$ and the principal part of $F$}\label{smsec:endpoint}
% =====================================================================
This section proves Proposition~\ref{smprop:envelope} and fixes the principal
part of $F$ at $s=-1$ (small-$y$ endpoint) and at $s=2$ (large-$y$ endpoint).
Set
\begin{align}
  a_0&:=2\int_0^\infty\frac{j_1(t/2)^2}{t}\,\rd t=\tfrac12,\label{smeq:a0b0}\\
  b_0&:=2\int_0^\infty\frac{j_1(t/2)^2}{t}\,(1-\log t)\,\rd t=-0.08639\ldots
\end{align}
($a_0=\tfrac12$ follows from $\int_0^\infty j_1(x)^2x^{-1}\rd x=\tfrac14$ and
$x=t/2$; $b_0$ is finite because the integrand is $O(t\log\tfrac1t)$ near $0$
and $O(t^{-3}\log t)$ at $\infty$).

The two endpoints are of unequal difficulty. The small-$y$ endpoint is governed
by the single region $u\sim y^{-1}$ and is reached by one split of the
$u$-integral; it produces the double and simple poles of $F$ at $s=-1$. The
large-$y$ endpoint is more delicate: $\Phi$ decays only as $y^{-2}$, and on top
of the non-oscillatory $y^{-2}$ tail there is a $\log y$-enhanced oscillatory
piece $\propto(\log y)\,y^{-2}\sin y$ that is larger than $O(y^{-2})$ yet, as we
show, contributes no pole at $s=2$. To separate these we rewrite the Bessel
product \emph{exactly} as a finite sum of cosines and sines (the trigonometric
decomposition below), which turns $\Phi$ into oscillatory integrals over two
``channels'': a $v$-channel, whose amplitude $\mathcal A(v)$ is singular at
$v=0$ and produces the genuine principal part $2\pi\log2/y^2$ of $F$ at $s=2$,
and a $u$-channel, whose amplitude $\mathcal B(u)$ is singular at $u=1$ and
produces only the pole-free oscillatory term. The auxiliary oscillatory Mellin
kernels collected below supply the uniform vertical bounds needed to take Mellin
transforms of the two channels term by term.

\subsection{Small-$y$ endpoint}
The strategy is to split the $u$-integral at an intermediate scale and keep only
the region that survives as $y\downarrow0$. For $u$ up to $y^{-1/3}$ the
arguments $\tfrac{u\pm v}2y$ of the Bessel factors are small and the integrand
is negligible; the leading behavior comes from $u\sim y^{-1}$, where the
substitution $t=uy$ turns $L(u)$ into $2y/t$ and reduces the $v$-integral to the
elementary constant $\int_0^1\ell(u,v)\rd v$. The single nonelementary input is
$\int_0^\infty j_1(x)^2x^{-1}\rd x=\tfrac14$, which sets $a_0=\tfrac12$. The
resulting $y\log y$ and $y$ terms give, by the elementary Mellin transforms of
$y\log y$ and $y$, the double and simple poles of $F$ at $s=-1$.

\begin{proposition}[Sharpened small-$y$ asymptotics]\label{smprop:smally}
As $y\downarrow0$,
\begin{equation}
  \Phi(y)=a_0\,y\log y+b_0\,y+O\!\big(y^2\abs{\log y}^2\big),
  \label{smeq:smally}
\end{equation}
and the small-$y$ endpoint contributes the principal part
$-a_0/(s+1)^2+b_0/(s+1)$ to $F(s)$ near $s=-1$.
\end{proposition}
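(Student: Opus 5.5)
The proof splits the $u$-integral in \eqref{smeq:Phidef} at $u=U:=y^{-1/3}$ (any intermediate scale $1\ll U\ll y^{-1}$ works) and substitutes $t=uy$ on the outer piece.

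\emph{Inner region $1\le u\le U$.} Here both Bessel arguments are $O(uy)\ll1$, so \eqref{smeq:j1bounds} gives $|j_1(\tfrac{u\pm v}{2}y)|\le C uy$. The factor $L(u)$ has only a logarithmic singularity at $u=1$ and decays like $2/u$ at large $u$. The factor $\ell(u,v)$ is integrable in $v$. Hence this region contributes at most
\[
C\,y\cdot y^2\int_1^U u^2 L(u)(1+\log u)\,\rd u=O\big(y^3U^2\log U\big)=O\big(y^{7/3}\abs{\log y}\big),
\]
which is absorbed into $O(y^2|\log y|^2)$.

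\emph{Outer region $u\ge U$.} Put $t=uy$, so $t\ge y^{2/3}$. Three expansions are used.
\begin{itemize}
\item Since $u\ge U\gg1$, $L(u)=2/u+O(u^{-3})=2y/t+O(y^3/t^3)$.
\item The arguments satisfy $\tfrac{u\pm v}{2}y=\tfrac t2\pm\tfrac{vy}{2}$. Taylor expansion with the bound on $j_1'$ gives
\[
j_1\big(\tfrac t2+\tfrac{vy}2\big)\,j_1\big(\tfrac t2-\tfrac{vy}2\big)=j_1(t/2)^2+O\big(y^2\,\min\{1,t^{-2}\}\big).
\]
The first-order terms cancel by the $\pm$ symmetry.
\item $\ell(u,v)=\log(y/t)+\log(1/v)$, and $\int_0^1\log(1/v)\,\rd v=1$.
\end{itemize}
With these, the leading term of the outer piece is
\[
y\int_{y^{2/3}}^\infty\frac{\rd t}{y}\,\frac{2y}{t}\,j_1(t/2)^2\big(\log y-\log t+1\big)
= y\log y\cdot 2\!\int_0^\infty\!\frac{j_1(t/2)^2}{t}\,\rd t+y\cdot 2\!\int_0^\infty\!\frac{j_1(t/2)^2}{t}(1-\log t)\,\rd t+\text{tails}.
\]
By \eqref{smeq:a0b0} these two integrals are $a_0$ and $b_0$. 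The tail from $t<y^{2/3}$ is $O(y\cdot y^{4/3}|\log y|)$, because $j_1(t/2)^2/t=O(t)$ near $0$.

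\emph{Main obstacle: the correction terms.} The remaining terms from the three expansions must be shown to be $O(y^2|\log y|^2)$. The $L$-correction contributes $y\int y^{-1}\,\rd t\,(y^3/t^3)\,j_1^2\,|\log|$. With $j_1^2=O(t^2)$ at small $t$, the $t$-integral behaves like $\int_{y^{2/3}}\rd t/t$, which is logarithmic, so this term is $O(y^3\log^2)$. The Bessel-product correction is more delicate. At large $t$ only the crude bound $O(y^2/t^2)$ is available, and it multiplies $L\sim 2y/t$ and $|\ell|$, giving $y\int\rd t\,(2/t)(y^2/t^2)|\log|$. This integral converges at infinity. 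At small $t$, the product of two small-argument Bessel factors gives a correction $O(y^2)$ times $t^0$, and $\int_{y^{2/3}}^1 \rd t/t$ again yields only a logarithm. Altogether this term is $O(y^3|\log y|^2)$.

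A cruder estimate that drops the $\min\{1,t^{-2}\}$ factor still leaves every correction within $O(y^2|\log y|^2)$. So the stated remainder holds with room to spare, and the step needing the most care is the uniform control near $u=1$ and $v=0$, which is confined to the inner region already bounded above.

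\emph{Mellin consequence.} Choose a cutoff at $y=1$. Then
\[
\int_0^1 y^{s-1}\,y\log y\,\rd y=-\frac{1}{(s+1)^2},\qquad \int_0^1 y^{s-1}\,y\,\rd y=\frac{1}{s+1}.
\]
The remainder term $O(y^2|\log y|^2)$ has a Mellin transform that is holomorphic for $\Re s>-2$. The integral over $y\ge1$ is holomorphic on $\Re s<2$ by Proposition~\ref{smprop:envelope}. Together these give the principal part $-a_0/(s+1)^2+b_0/(s+1)$ of $F$ at $s=-1$.
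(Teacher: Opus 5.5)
Your proof is correct and follows the paper's argument essentially step for step. You use the same split of the $u$-integral at $y^{-1/3}$, the same substitution $t=uy$ with expansions of $L$, the Bessel product and $\ell$, and the same elementary Mellin transforms for the principal part at $s=-1$; your symmetric second-order bound $O(y^2\min\{1,t^{-2}\})$ on the Bessel product is actually slightly sharper than the paper's first-order one.

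The one slip is your aside that dropping the $\min\{1,t^{-2}\}$ factor would still suffice. Without the $t^{-2}$ decay the large-$t$ piece $\int^\infty y^{3}t^{-1}\abs{\log t}\,\rd t$ diverges. Your main argument never uses that remark, so the proof is unaffected.
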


\begin{proof}
Fix $0<y<1$ and split the $u$-integration in \eqref{smeq:Phidef} at $y^{-1/3}$:
$\Phi=\Phi_<+\Phi_>$ with $U_<=[1,y^{-1/3}]$, $U_>=[y^{-1/3},\infty)$.

\emph{Low-$u$ region.} For $u\in U_<$, $uy\le y^{2/3}\ll1$, so by
\eqref{smeq:j1bounds} $\big|j_1(\tfrac{u\pm v}{2}y)\big|=O(uy)$ and the Bessel
product is $O(u^2y^2)$. Since $\abs{\ell(u,v)}\le C(1+\log u+\abs{\log v})$
and $\int_0^1(1+\abs{\log v})\rd v<\infty$,
\begin{widetext}
\[
  \abs{\Phi_<(y)}\le Cy^3\!\int_1^{y^{-1/3}}\!u^2L(u)(1+\log u)\rd u
  =O\!\big(y^{3}\,y^{-2/3}\abs{\log y}\big)=O\!\big(y^{7/3}\abs{\log y}\big).
\]
\end{widetext}

\emph{High-$u$ region.} Substitute $t=uy$. Since $t/y\ge y^{-1/3}\to\infty$,
$L(t/y)=\tfrac{2y}{t}+R_L$ with $\abs{R_L}\le C y^3 t^{-3}$, and writing
$j_1(\tfrac{t+vy}{2})j_1(\tfrac{t-vy}{2})=j_1(t/2)^2+R_j$, the smoothness of
$j_1$ and \eqref{smeq:j1bounds} give $\abs{R_j}\le Cy^2$ for $0<t\le1$ and
$\abs{R_j}\le Cy\,t^{-2}$ for $t\ge1$. Hence $\Phi_>=M+E_L+E_j$ with
\begin{widetext}
\[
  M(y)=2y\!\int_{y^{2/3}}^\infty\!\frac{j_1(t/2)^2}{t}(\log y-\log t+1)\rd t,
\]
\end{widetext}
using $\int_0^1(\log y-\log t-\log v)\rd v=\log y-\log t+1$. Extending the
lower limit to $0$ costs $O(y^{7/3}\abs{\log y})$ (as $j_1(t/2)^2/t=O(t)$ near
$0$), and the extended integral equals $a_0y\log y+b_0y$ with $a_0,b_0$ of
\eqref{smeq:a0b0}. The remainders satisfy
$\abs{E_L},\abs{E_j}=O(y^3\abs{\log y}^2)+O(y^2)$ by the bounds on $R_L,R_j$
and $j_1(t/2)^2=O(t^2),O(t^{-2})$. Combining gives \eqref{smeq:smally}. Finally
$\int_0^1 y^{s}\log y\,\rd y=-(s+1)^{-2}$ and $\int_0^1 y^{s}\rd y=(s+1)^{-1}$
convert $a_0y\log y+b_0y$ into $-a_0/(s+1)^2+b_0/(s+1)$, the remainder having
a Mellin transform regular for $\Re s>-2$.
\end{proof}

\noindent The first envelope in \eqref{smeq:envelope} is the crude bound
contained in \eqref{smeq:smally}.

\subsection{Auxiliary oscillatory Mellin kernels}
When the Mellin transform in $y$ is taken inside the channel integrals of the
next subsections, the inner $y$-integral is no longer over $(0,\infty)$ but over
a tail $(v,\infty)$ or $(u,\infty)$, producing \emph{incomplete} oscillatory
transforms. The four kernels $\mathcal K_s,\mathcal M_s,\mathcal L_s,\mathcal
N_s$ below are exactly these tail integrals (rescaled so as to be dimensionless
in the endpoint variable), and $E^{\sin}_\alpha,E^{\cos}_\alpha$ are the
corresponding full-line transforms that appear once an endpoint is reached. We
record here, once and for all, that each is holomorphic on a vertical strip and
grows at most linearly in $\abs s$; these are the only properties used later, so
the reader may use the kernels solely through the stated holomorphy and growth
bounds. The
estimates rest on a single integration by parts, which trades one power of the
oscillation frequency for one power of $y$.

The large-$y$ analysis uses the following on a closed strip
$S_{\sigma_0,\sigma_1}=\{\sigma_0\le\Re s\le\sigma_1\}$, $0<\sigma_0<\sigma_1<3$.
Set $\beta:=\min\{1,2-\sigma_1\}\in(-1,1]$.

\begin{lemma}[Oscillatory Mellin kernels]\label{smlem:kernels}
For $\alpha\ge\alpha_0>0$, the functions
\begin{gather*}
  E^{\sin}_\alpha(s)=\int_1^\infty y^{s-3}\sin(\alpha y)\,\rd y \\
  E^{\cos}_\alpha(s)=\int_1^\infty y^{s-3}\cos(\alpha y)\,\rd y
\end{gather*}
are holomorphic on $\sigma_0<\Re s<\sigma_1$ with
$\abs{E^{\sin}_\alpha(s)}+\abs{E^{\cos}_\alpha(s)}\le C(1+\abs s)$. The kernels
\begin{widetext}
\begin{equation}
\begin{aligned}
  \mathcal K_s(v)&=v^{2-s}\!\int_v^\infty\! t^{s-3}\sin t\,\rd t, &
  \mathcal M_s(v)&=v^{3-s}\!\int_v^\infty\! t^{s-4}\cos t\,\rd t,\\[2pt]
  \mathcal L_s(u)&=u^{2-s}\!\int_u^\infty\! t^{s-3}\sin t\,\rd t, &
  \mathcal N_s(u)&=u^{3-s}\!\int_u^\infty\! t^{s-4}\cos t\,\rd t
\end{aligned}
  \label{smeq:KMLN}
\end{equation}
\end{widetext}
($0<v\le1$, $u\ge1$) are holomorphic on $\sigma_0<\Re s<\sigma_1$ and obey
\begin{align}
  \abs{\mathcal K_s(v)}&\le C(1+\abs s)v^{\beta}(1+\abs{\log v}),\label{smeq:KMLNbound}\\
  \abs{\mathcal M_s(v)}&\le C,\\
  \abs{\mathcal L_s(u)}+\abs{\mathcal N_s(u)}&\le C(1+\abs s).
\end{align}
\end{lemma}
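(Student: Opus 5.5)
The plan is to prove everything with one elementary tool: a single integration by parts on the oscillatory tail, which raises the power of $y$ by one and costs one factor $|s-3|/\alpha$. Absolute convergence of the resulting integrals then gives holomorphy by dominated convergence (or Morera). Throughout we work on the closed strip $\sigma_0\le\sigma:=\Re s\le\sigma_1$ with $0<\sigma_0<\sigma_1<3$, so that $3-\sigma\ge 3-\sigma_1>0$ uniformly. All constants may depend on $\sigma_0,\sigma_1,\alpha_0$ but not on $s$, $v$, $u$, or $\alpha\ge\alpha_0$.

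\emph{The full transforms $E^{\sin}_\alpha,E^{\cos}_\alpha$.} For $\Re s<2$ the integrals converge absolutely, but for $2\le\sigma<3$ they do not, so we integrate by parts once:
\[
  \int_1^\infty y^{s-3}\sin(\alpha y)\,\rd y=\frac{\cos\alpha}{\alpha}+\frac{s-3}{\alpha}\int_1^\infty y^{s-4}\cos(\alpha y)\,\rd y .
\]
The boundary term at $\infty$ vanishes because $\sigma<3$. The new integrand is dominated by $y^{\sigma_1-4}\in L^1(1,\infty)$, uniformly on the strip. This gives holomorphy, and the bound
\[
  \abs{E^{\sin}_\alpha(s)}\le \alpha_0^{-1}\bigl[1+\abs{s-3}/(3-\sigma_1)\bigr]\le C(1+\abs s).
\]
The cosine transform is handled identically. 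The identity also serves as the analytic continuation from $\Re s<2$.

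\emph{The tail kernels.} For $\mathcal L_s$ and $\mathcal N_s$ ($u\ge1$) we apply the same integration by parts on $(u,\infty)$:
\[
  \int_u^\infty t^{s-3}\sin t\,\rd t=u^{s-3}\cos u+(s-3)\int_u^\infty t^{s-4}\cos t\,\rd t .
\]
The last integral is bounded by $u^{\sigma-3}/(3-\sigma)$. Multiplying by $\abs{u^{2-s}}=u^{2-\sigma}$ gives
\[
  \abs{\mathcal L_s(u)}\le u^{-1}\bigl[1+\abs{s-3}/(3-\sigma_1)\bigr]\le C(1+\abs s).
\]
For $\mathcal N_s$ the integral is already absolutely convergent, with $\abs{\mathcal N_s(u)}\le u^{3-\sigma}\,u^{\sigma-3}/(3-\sigma)\le C$.

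For $\mathcal M_s$ ($0<v\le1$) we split at $t=1$. The piece $\int_1^\infty$ is bounded by $1/(3-\sigma_1)$. For the other piece,
\[
  \int_v^1 t^{\sigma-4}\,\rd t\le v^{\sigma-3}/(3-\sigma).
\]
Multiplying by $v^{3-\sigma}\le 1$ gives $\abs{\mathcal M_s(v)}\le C$. Holomorphy in $s$ of each kernel follows, for fixed $u$ or $v$, from the same dominations.

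\emph{The main obstacle: $\mathcal K_s$.} Its bound must be uniform across $\sigma=1$, where $\int_v^1 t^{\sigma-2}\,\rd t$ changes from bounded to logarithmic to power-divergent. We split again at $t=1$. The tail is $\int_1^\infty t^{s-3}\sin t\,\rd t=E^{\sin}_1(s)$, bounded by $C(1+\abs s)$ from the first step. Multiplied by $v^{2-\sigma}$, it contributes at most $C(1+\abs s)v^{2-\sigma}$.

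On $(v,1)$ we use $\abs{\sin t}\le t$ and the uniform elementary estimate
\[
  \int_v^1 t^{\sigma-2}\,\rd t\le(1+\abs{\log v})\max\{1,v^{\sigma-1}\}.
\]
This holds because $\int_v^1 t^{\sigma-2}\,\rd t\le\int_v^1 t^{-1}\max\{1,v^{\sigma-1}\}\,\rd t$, since $t^{\sigma-1}\le\max\{1,v^{\sigma-1}\}$ on $[v,1]$. Multiplying by $v^{2-\sigma}$ gives at most
\[
  \bigl(v^{2-\sigma}+v\bigr)(1+\abs{\log v})\le 2\,v^{\min\{1,2-\sigma_1\}}(1+\abs{\log v}).
\]
Since $v\le1$, both $v^{2-\sigma}$ and $v$ are at most $v^{\beta}$ with $\beta=\min\{1,2-\sigma_1\}$, so both pieces combine to
\[
  \abs{\mathcal K_s(v)}\le C(1+\abs s)v^{\beta}(1+\abs{\log v}),
\]
which is \eqref{smeq:KMLNbound}. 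The only care needed is to avoid the non-uniform factor $1/\abs{\sigma-1}$; the logarithmic majorant above does exactly that. Holomorphy again follows from the locally uniform dominations.
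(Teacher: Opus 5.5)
Your proof is correct and follows the paper's argument essentially step for step. Like the paper, you use one integration by parts for $E^{\sin}_\alpha$, $E^{\cos}_\alpha$ and $\mathcal L_s$, and direct absolute bounds for $\mathcal M_s$ and $\mathcal N_s$. For $\mathcal K_s$ you also split at $t=1$, with the tail given by $E^{\sin}_1$ and the local part controlled through $\abs{\sin t}\le t$.

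Your majorant $t^{\sigma-1}\le\max\{1,v^{\sigma-1}\}$ on $[v,1]$ simply makes explicit, uniformly across $\sigma=1$, what the paper asserts as holding ``in both cases $\sigma\le1$ and $1<\sigma<3$.'' Your split of $\mathcal M_s$ at $t=1$ is harmless but unnecessary: as for $\mathcal N_s$, the whole integral is bounded directly by $1/(3-\sigma_1)$.
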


\begin{proof}
One integration by parts gives
$E^{\sin}_\alpha(s)=\tfrac{\cos\alpha}{\alpha}+\tfrac{s-3}{\alpha}\int_1^\infty y^{s-4}\cos(\alpha y)\rd y$
(and analogously $E^{\cos}_\alpha$), with $\int_1^\infty y^{\sigma_1-4}\rd y<\infty$,
giving the bound and (by dominated convergence on compacts) holomorphy. For
$\mathcal N_s,\mathcal M_s$, $\abs{\mathcal N_s(u)}\le u^{3-\Re s}\int_u^\infty t^{\Re s-4}\rd t=(3-\Re s)^{-1}\le(3-\sigma_1)^{-1}$,
similarly $\mathcal M_s$. For $\mathcal L_s$,
$\mathcal L_s(u)=u^{-1}\cos u+(s-3)u^{2-s}\int_u^\infty t^{s-4}\cos t\,\rd t$ with the
second factor $\le u^{-1}(3-\sigma_1)^{-1}$, so $\abs{\mathcal L_s}\le C(1+\abs s)$.
For $\mathcal K_s$, the tail $v^{2-s}E^{\sin}_1(s)\le C(1+\abs s)v^{2-\sigma_1}\le C(1+\abs s)v^{\beta}$,
and the local part $v^{2-s}\int_v^1 t^{s-3}\sin t\,\rd t$ is, using
$\abs{\sin t}\le t$, $\le v^{2-\sigma}\int_v^1 t^{\sigma-2}\rd t\le v^{\beta}(1+\abs{\log v})$
in both cases $\sigma\le1$ and $1<\sigma<3$.
\end{proof}

\subsection{Exact trigonometric decomposition}
The spherical Bessel function is elementary, $j_1(z)=\sin z/z^2-\cos z/z$, so the
product $j_1(Ay)j_1(By)$ is a finite combination of $\sin$ and $\cos$ of $Ay$ and
$By$. Product-to-sum identities convert these into oscillations in the
\emph{sum} and \emph{difference} variables $u=A+B$ and $v=A-B$ only. The
resulting identity \eqref{smeq:trig} is exact (no large-$y$ approximation has
been made); its first group of terms decays as $1/y$ and its second as $1/y^3$,
but after the $u,v$ integrations the surviving $1/y^2$ behavior comes from the
endpoints of the slowly decaying first group. Substituting into $\Phi$ splits it
into a $v$-channel $T_v$ and a $u$-channel $T_u$ (carrying the two endpoint
singularities) and a faster, regular remainder $R_3$, analyzed in turn below.

\begin{lemma}[Exact decomposition]\label{smlem:trig}
With $A=\tfrac{u+v}2$, $B=\tfrac{u-v}2$, $D=u^2-v^2$,
\begin{widetext}
\begin{equation}
  y\,j_1(Ay)j_1(By)=\frac{2}{Dy}\big[\cos(vy)+\cos(uy)\big]
  +\frac{8}{D^2y^3}\big[\cos(vy)-\cos(uy)-uy\sin(uy)+vy\sin(vy)\big].
  \label{smeq:trig}
\end{equation}
\end{widetext}
Consequently $\Phi(y)=T_v(y)+T_u(y)+R_3(y)$ with
\begin{widetext}
\begin{align}
  T_v(y)&=\frac{2}{y}\int_0^1\!\mathcal A(v)\cos(vy)\rd v,
  & \mathcal A(v)&=\int_1^\infty\!\frac{L(u)\ell(u,v)}{u^2-v^2}\rd u,
  \label{smeq:Tv}\\
  T_u(y)&=\frac{2}{y}\int_1^\infty\!\mathcal B(u)\cos(uy)\rd u,
  & \mathcal B(u)&=L(u)\!\int_0^1\!\frac{\ell(u,v)}{u^2-v^2}\rd v,
  \label{smeq:Tu}\\
  R_3(y)&=\frac{8}{y^3}\int_1^\infty\!\!\rd u\!\int_0^1\!\!\rd v\,
  \frac{L(u)\ell(u,v)}{(u^2-v^2)^2}\notag\\
  &\qquad\times
  \big[\cos(vy)-\cos(uy)-uy\sin(uy)+vy\sin(vy)\big].
  \label{smeq:R3}
\end{align}
\end{widetext}
\end{lemma}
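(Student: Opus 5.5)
The identity \eqref{smeq:trig} is pure algebra, so I would prove it directly from $j_1(z)=\sin z/z^2-\cos z/z$ and then justify splitting the integral. Expanding the product gives four terms:
\[
j_1(Ay)j_1(By)=\frac{\sin Ay\,\sin By}{A^2B^2y^4}-\frac{B\sin Ay\cos By+A\cos Ay\sin By}{A^2B^2y^3}+\frac{\cos Ay\cos By}{ABy^2}.
\]
Next I would apply the product-to-sum identities in the variables $A+B=u$ and $A-B=v$:
\begin{itemize}
\item $\cos Ay\cos By=\tfrac12[\cos uy+\cos vy]$;
\item $\sin Ay\sin By=\tfrac12[\cos vy-\cos uy]$;
\item $\sin Ay\cos By=\tfrac12[\sin uy+\sin vy]$;
\item $\cos Ay\sin By=\tfrac12[\sin uy-\sin vy]$.
\end{itemize}
With these, the numerator of the middle term collapses to $\tfrac12[(A+B)\sin uy-(A-B)\sin vy]=\tfrac12[u\sin uy-v\sin vy]$. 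Using $AB=D/4$ and $A^2B^2=D^2/16$, multiplying by $y$ reproduces the three groups in \eqref{smeq:trig}:
\begin{itemize}
\item the last term gives $\tfrac{2}{Dy}[\cos vy+\cos uy]$;
\item the first term gives $\tfrac{8}{D^2y^3}[\cos vy-\cos uy]$;
\item the middle term gives $\tfrac{8}{D^2y^3}[-uy\sin uy+vy\sin vy]$.
\end{itemize}
I would state this as a line-by-line verification; no approximation enters.

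For the consequence, I would insert \eqref{smeq:trig} into \eqref{smeq:Phidef} at fixed $y>0$ and split the integral into the three pieces. The $\cos(vy)$ part of the first group is rewritten with the $u$-integral innermost, which defines $\mathcal A(v)$ and gives $T_v$. The $\cos(uy)$ part keeps the $v$-integral innermost, which defines $\mathcal B(u)$ and gives $T_u$. The second group is left as $R_3$. The only real content is that each of the three pieces converges absolutely on $[1,\infty)\times[0,1]$, so that linearity and Tonelli/Fubini legitimize both the splitting and the reordering.

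I expect the absolute-convergence check to be the main obstacle, and it concentrates at two places.
\begin{itemize}
\item \emph{At infinity.} We have $L(u)\sim 2/u$ and $|\ell|\le C(\log u+|\log v|)$. With $D\sim u^2$, the integrand of the first group is $O(u^{-3}\log u)$ and that of $R_3$ is $O(u^{-4}\,u y\log u)$. Both are summable, since the $|\log v|$ factor is integrable on $[0,1]$.
\item \emph{At the corner $u=v=1$.} Here $D=(u-v)(u+v)$ vanishes and $L(u)\sim\log\frac{2}{u-1}$ is logarithmically singular, but $\ell(u,v)=-\log u-\log v$ vanishes linearly.
\end{itemize}
For the first group I would bound $|L\ell/D|\le C\log\frac{2}{u-1}\,\frac{(u-1)+(1-v)}{u-v}$. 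This is bounded by $C\log\frac{2}{u-1}$ because $u-v=(u-1)+(1-v)$. Alternatively, even without using the vanishing of $\ell$, one has $\int_0^1 dv/(u-v)=\log\frac{u}{u-1}$, so the $v$-integral is $O(\log^2\frac{1}{u-1})$, which is integrable in $u$. Either way the first-group pieces are absolutely integrable, and in particular $\mathcal A$ and $\mathcal B$ are well defined, with $\mathcal A$ locally integrable.

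For $R_3$ the factor $D^{-2}$ is worse, and the first thing I would try is to use the bracket itself. Writing $f(w)=\cos w+w\sin w$, the bracket equals $f(vy)-f(uy)$ with $f'(w)=w\cos w$, so it is $O\big(y^2(u-v)\big)$ near the diagonal. Hence the $R_3$ integrand is $O\big(L\,|\ell|/(u-v)\big)$ near the corner, which is again integrable by the bound just used. This also shows that $R_3(y)$ is finite for each $y$ without relying on oscillatory cancellation, which is all the lemma asserts. The finer decay properties of $T_v$, $T_u$ and $R_3$ are deferred to the subsequent subsections.
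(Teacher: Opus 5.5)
Your proof is correct and follows the paper's route: expand $j_1(z)=\sin z/z^2-\cos z/z$, apply the product-to-sum identities in $u=A+B$ and $v=A-B$ using $AB=D/4$, and substitute the result into $\Phi$. The paper stops at ``substitution gives the channels,'' so your absolute-integrability check is a welcome addition: it uses the linear vanishing of $\ell$ at $u=v=1$, the form $f(vy)-f(uy)$ of the $R_3$ bracket with $f(w)=\cos w+w\sin w$, and the large-$u$ decay, and it justifies the splitting and Fubini reordering that the paper leaves implicit.
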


\begin{proof}
Expand $j_1(z)=\sin z/z^2-\cos z/z$ in $y\,j_1(Ay)j_1(By)$ and apply the
product-to-sum identities
$\sin(Ay)\sin(By)=\tfrac12[\cos(vy)-\cos(uy)]$,
$\cos(Ay)\cos(By)=\tfrac12[\cos(vy)+\cos(uy)]$,
$\sin(Ay)\cos(By)\mp\cos(Ay)\sin(By)=\sin(vy),\sin(uy)$; collecting terms gives
\eqref{smeq:trig}, and substitution into \eqref{smeq:Phidef} gives the channels.
\end{proof}

\subsection{The $v$-channel}
The $v$-channel amplitude $\mathcal A(v)$ is regular except at $v=0$, where it
has an integrable logarithmic singularity $\mathcal A(v)\sim2\log2\,\log\tfrac1v$;
the coefficient $2\log2$ is just $\int_1^\infty L(u)u^{-2}\rd u$. This single
endpoint is the source of the entire non-oscillatory tail $2\pi\log2/y^2$ of
$\Phi$, hence of the principal part of $F$ at $s=2$. We therefore extract the
singular part with a smooth cutoff $\chi_0$ supported near $v=0$ and show that
the remainder $g$ lies in $W^{1,1}(0,1)$ --- that is, $g$ and its derivative
$g'$ are both (absolutely) integrable --- which is exactly the regularity needed
for one integration by parts to give the remainder an $O(y^{-2})$ bound without
an accompanying pole. The Fourier cosine transform of the singular part is
computed in closed form through the sine integral $\Si$, whose value
$\Si(\infty)=\pi/2$ delivers the constant $2\pi\log2$.

\begin{lemma}[$v$-channel singular decomposition]\label{smlem:vchannel}
Let $\chi_0\in C^\infty_c([0,1))$, $\chi_0\equiv1$ on $[0,\delta]$,
$\chi_0\equiv0$ on $[2\delta,\infty)$. Then
$\mathcal A(v)=2\log2\,\chi_0(v)\log\tfrac1v+g(v)$ with $g\in W^{1,1}(0,1)$.
\end{lemma}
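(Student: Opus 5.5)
We need to show $\mathcal A(v)=\int_1^\infty L(u)\ell(u,v)(u^2-v^2)^{-1}\rd u$ equals $2\log2\,\chi_0(v)\log\tfrac1v+g(v)$ with $g,g'\in L^1(0,1)$. The plan is to split $\ell(u,v)=\log\tfrac1u+\log\tfrac1v$, which gives
\[
\mathcal A(v)=\log\tfrac1v\,P(v)+Q(v),\qquad
P(v)=\int_1^\infty\frac{L(u)}{u^2-v^2}\rd u,\qquad
Q(v)=-\int_1^\infty\frac{L(u)\log u}{u^2-v^2}\rd u .
\]
For $0\le v<1$ these are convergent integrals. At $u=1$ the integrand has only the singularity $\log\frac{1}{u-1}$, which is integrable. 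The denominator $u^2-v^2\ge u^2-1$ vanishes at $u=1$ only when $v=1$. At large $u$ the integrand is $O(u^{-3})$. We have $P(0)=\int_1^\infty L(u)u^{-2}\rd u=2\log2$; this is the constant already verified numerically in Sec.~\ref{sec:numerics}, and it also follows by integrating $L$ by parts against $u^{-1}$.

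\emph{Step 1: smoothness of $P,Q$ on $[0,1)$.} On any $[0,1-\eta]$ we may differentiate under the integral, since $\partial_v(u^2-v^2)^{-1}=2v(u^2-v^2)^{-2}$ is dominated by $C_\eta u^{-4}$ times the integrable factor. Hence $P,Q\in C^1([0,1-\eta])$, and in fact $P(v)=2\log2+O(v^2)$. We then write
\[
g=\log\tfrac1v\,\bigl[P(v)-2\log2\,\chi_0(v)\bigr]+Q(v).
\]
Near $v=0$, where $\chi_0\equiv1$, the bracket is $O(v^2)$ with derivative $O(v)$. So $g'$ contains $O(v)\log\tfrac1v+O(v^2)/v$, which is bounded, and $g\in W^{1,1}(0,1-\eta)$. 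Away from $0$ the factor $\log\tfrac1v$ is smooth, so no problem arises on $[\delta,1-\eta]$.

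\emph{Step 2 (main obstacle): the endpoint $v\to1$.} Here $u^2-v^2$ can vanish at $u=1$. Since $L(u)\sim\log\frac{2}{u-1}$, the integral $P(v)$ behaves like $\int_1\frac{\log\frac{1}{u-1}}{2(u-1)+(1-v^2)}\rd u$, which is of order $\tfrac12\log^2\frac{1}{1-v}$. This blows up, but it is integrable. The derivative $P'(v)$ behaves like $\frac{\log\frac{1}{1-v}}{1-v}$, which is \emph{not} integrable at $v=1$. The same holds for $Q$ with an extra factor $\log u\approx u-1$, which tames $Q$ completely. So the key point is that the $P$-term is multiplied by $\log\frac1v\approx 1-v$ near $v=1$. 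That makes $\frac{d}{dv}[\log\frac1v\,P]=-\frac{P}{v}+\log\frac1v\,P'$ of size $O(\log^2\frac1{1-v})$, which is integrable.

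I will make this precise by setting $w=u-1$ and $\epsilon=1-v^2$. I will bound
\[
|P(v)|\le C\int_0^\infty\frac{1+\log_+\frac1w}{(w+\epsilon)(1+w)}\rd w\le C\log^2\tfrac{1}{\epsilon}
\]
and
\[
|P'(v)|\le C\int_0^\infty\frac{1+\log_+\frac1w}{(w+\epsilon)^2(1+w)}\rd w\le C\epsilon^{-1}\log\tfrac1\epsilon,
\]
both by elementary splitting at $w=\epsilon$. For $Q$ the analogous bounds have one fewer logarithm and one fewer power of $\epsilon^{-1}$, because $\log u\le w$. Combining this with Step 1 gives $g,g'\in L^1(0,1)$, i.e.\ $g\in W^{1,1}(0,1)$. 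This is exactly the regularity needed for the subsequent integration by parts in the $v$-channel.
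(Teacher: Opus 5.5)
Your proof is correct and is essentially the paper's argument. Your split $\mathcal A=P\log\frac1v+Q$ is the paper's $a\log\frac1v+b$, and with $P-2\log2=O(v^2)$, $P'=O(v)$ it disposes of $v\to0$ exactly as the paper does. At $v\to1$ the paper treats $\mathcal A$ as a whole and bounds $\abs{\ell(u,v)}\le C[(u-1)+(1-v)]$, which is comparable to $u^2-v^2$ near the corner. That is the same cancellation you exploit piecewise, through $\log\frac1v\approx1-v$ on the $P$-term and $\log u\le w$ on the $Q$-term.

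One small correction, harmless for the conclusion. Splitting at $w=\epsilon$ actually gives $\abs{Q'}\le C\log^2\frac1\epsilon$, because the range $\epsilon<w<1$ contributes $\int_\epsilon^1 w^{-1}\log\frac1w\,\rd w$. So the bound is not one power of $\epsilon^{-1}$ below your bound on $P'$. Also, for $w\ge1$ you need $L(u)=O(u^{-1})$ rather than $\log u\le w$ to keep the integral convergent. Both bounds are still integrable near $v=1$, so $g\in W^{1,1}(0,1)$ follows.
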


\begin{proof}
Write $\mathcal A=a(v)\log\tfrac1v+b(v)$,
$a(v)=\int_1^\infty L(u)(u^2-v^2)^{-1}\rd u$,
$b(v)=\int_1^\infty L(u)\log(1/u)(u^2-v^2)^{-1}\rd u$. For $0<v\le\tfrac12$,
$u^2-v^2\ge u^2/2$ and $(u^2-v^2)^{-1}=u^{-2}+v^2u^{-2}(u^2-v^2)^{-1}$ give
$a(v)=a_{A,0}+O(v^2)$ with $a_{A,0}=\int_1^\infty L(u)u^{-2}\rd u=2\log2$
(using $\int_1^\infty L(u)u^{-4}\rd u<\infty$), and similarly $b(v)=b_{A,0}+O(v^2)$;
differentiation under the integral gives $a',b'=O(v)$, hence on $(0,\delta_0]$,
$g=\mathcal A-2\log2\log\tfrac1v$ satisfies $g=b_{A,0}+O(v^2\log\tfrac1v)$,
$g'=O(v\log\tfrac1v)\in L^1$. Near $v\uparrow1$, $\chi_0=0$, so $g=\mathcal A$;
splitting at $u=1+\varepsilon$ and using
$c_0(\delta+\eta)\le u^2-v^2\le C_0(\delta+\eta)$ ($v=1-\eta$, $u=1+\delta$),
$L(u)\le C\log\tfrac e\delta$, $\abs{\ell(u,v)}\le C(\delta+\eta)$, yields
$\abs{\mathcal A}\le C\int_0^\varepsilon\log\tfrac e\delta\rd\delta<\infty$ and
$\mathcal A'(v)=O(\log^2\tfrac{e}{1-v})\in L^1$. Thus $g\in W^{1,1}(0,1)$.
\end{proof}

\begin{proposition}[$T_v$ structure]\label{smprop:Tv}
$T_v(y)=\dfrac{2\pi\log2}{y^2}+G_v(y)$, where $G_v\in L^1_{\mathrm{loc}}([1,\infty))$,
$G_v(y)=O(y^{-2})$, and $\Mgt[G_v](s)$ is holomorphic for $\Re s<3$ with
$\abs{\Mgt[G_v](s)}\le C(1+\abs s)$ on each closed substrip of $\{0<\Re s<3\}$.
\end{proposition}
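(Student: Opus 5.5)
We use Lemma~\ref{smlem:vchannel} to split the amplitude as $\mathcal A(v)=2\log2\,\chi_0(v)\log\tfrac1v+g(v)$, with $g\in W^{1,1}(0,1)$. Inserting this into \eqref{smeq:Tv} gives
\[
T_v(y)=\frac{4\log2}{y}\int_0^1\chi_0(v)\log\tfrac1v\,\cos(vy)\,\rd v+\frac{2}{y}\int_0^1 g(v)\cos(vy)\,\rd v=:T_v^{\mathrm{sing}}(y)+T_v^{\mathrm{reg}}(y),
\]
and we treat the two pieces separately.

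\emph{Regular piece.} Since $g\in W^{1,1}$, $g$ is absolutely continuous with bounded boundary values, and one integration by parts gives
\[
\int_0^1 g\cos(vy)\,\rd v=\frac{g(1)\sin y}{y}-\frac1y\int_0^1 g'(v)\sin(vy)\,\rd v .
\]
Hence $T_v^{\mathrm{reg}}=O(y^{-2})$. For its Mellin transform we keep the structure $y^{-2}\times(\text{bounded oscillatory})$ and exchange orders. The boundary term gives $2g(1)E^{\sin}_1(s)$, which is holomorphic on the strip with linear growth by Lemma~\ref{smlem:kernels}. The bulk term gives $-2\int_0^1 g'(v)\,\bigl[\int_1^\infty y^{s-3}\sin(vy)\,\rd y\bigr]\rd v$. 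Substituting $t=vy$, the inner integral equals $v^{2-s}\int_v^\infty t^{s-3}\sin t\,\rd t=\mathcal K_s(v)$. The bound \eqref{smeq:KMLNbound} gives $|\mathcal K_s(v)|\le C(1+|s|)$ uniformly in $v$ when $\beta\ge0$; for $\sigma_1>2$ it gives $C(1+|s|)v^{\beta}(1+|\log v|)$ with $\beta>-1$. We also need $g'$ bounded near $v=0$ (Lemma~\ref{smlem:vchannel} gives $g'=O(v\log\tfrac1v)$ there) and integrable elsewhere. With these, Fubini and dominated convergence give holomorphy and the bound $C(1+|s|)$ on closed substrips of $0<\Re s<3$. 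For $\Re s<3$ outside that strip, holomorphy follows directly from $T_v^{\mathrm{reg}}=O(y^{-2})$, since $y^{s-1}\cdot y^{-2}$ is integrable on $[1,\infty)$ for $\Re s<2$. For $2\le\Re s<3$ we use the integral representation just described.

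\emph{Singular piece.} We compute $C(y):=\int_0^1\chi_0(v)\log\tfrac1v\cos(vy)\,\rd v$ by integrating by parts against $\sin(vy)/y$. The boundary terms vanish: at $v=1$ because $\chi_0=0$, and at $v=0$ because $\log\tfrac1v\,\sin(vy)\to0$. Thus
\[
yC(y)=\int_0^1\frac{\chi_0(v)}{v}\sin(vy)\,\rd v-\int_0^1\chi_0'(v)\log\tfrac1v\,\sin(vy)\,\rd v .
\]
The second integral has a smooth amplitude supported in $[\delta,2\delta]$, so a further integration by parts makes it $O(1/y)$. The first integral equals $\Si(y)+\int_0^1\frac{\chi_0(v)-1}{v}\sin(vy)\,\rd v$. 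Here $\Si(y)=\tfrac\pi2-\tfrac{\cos y}{y}+O(y^{-2})$, and the remaining integral has an amplitude smooth on $[\delta,1]$, so it is $O(1/y)$ with explicit $\cos y/y$ and $\cos(y)\cdot(\text{const})/y$ boundary terms. Altogether
\[
T_v^{\mathrm{sing}}=\frac{4\log2}{y^2}\cdot\frac\pi2+O(y^{-3})+\frac{\text{oscillatory}}{y^{3}}\text{-type terms}=\frac{2\pi\log2}{y^2}+O(y^{-3}).
\]
The remainder is a finite sum of terms $y^{-3}\cos y$ and $y^{-3}\sin y$ (Mellin transforms $E^{\cos}_1(s+?)$-type, i.e.\ $\int_1^\infty y^{s-4}\cos y\,\rd y$) plus a genuinely $O(y^{-4})$ remainder. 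Each of these is holomorphic for $\Re s<3$ with linear growth, by the same integration-by-parts argument as in Lemma~\ref{smlem:kernels}.

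\emph{Conclusion and main obstacle.} Setting $G_v:=T_v^{\mathrm{reg}}+(T_v^{\mathrm{sing}}-2\pi\log2/y^2)$ gives $G_v=O(y^{-2})$ with the claimed Mellin properties. The delicate step is the uniformity of the $\mathcal K_s(v)$ bound near $v=0$ when $\Re s$ approaches $3$. There $\beta$ approaches $-1$, and integrability against $g'$ relies on $g'=O(v\log\tfrac1v)$ near $0$ from Lemma~\ref{smlem:vchannel}, not merely on $g'\in L^1$. I would verify this using that lemma's small-$v$ expansion. If that expansion did not suffice, the fallback is to subtract the constant $b_{A,0}$ from $g$ before integrating by parts.
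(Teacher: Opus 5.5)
Your proposal is correct and follows essentially the paper's proof. You use the same split from Lemma~\ref{smlem:vchannel}, the same single integration by parts for $T_v^{\mathrm{reg}}$, and the same Mellin treatment through $E^{\sin}_1$ and $\int_0^1 g'\,\mathcal K_s\,\rd v$, which is integrable near $v=0$ because $g'=O(v\log\tfrac1v)$; the constant $2\pi\log2$ likewise comes from $\Si(\infty)=\pi/2$.

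The only difference is cosmetic: the paper applies the exact formula on $[0,\delta]$ to get $\Si(\delta y)$ and integrates by parts only on $[\delta,2\delta]$, whereas you integrate by parts over the whole cutoff support to get $\Si(y)$. Since your singular remainder is $O(y^{-3})$, its Mellin transform converges absolutely on $\Re s<3$, so the ``$E^{\cos}_1(s+?)$'' bookkeeping is unnecessary (that term would be $E^{\cos}_1(s-1)$).
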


\begin{proof}
By Lemma~\ref{smlem:vchannel},
$T_v=\tfrac{4\log2}{y}\int_0^1\chi_0(v)\log\tfrac1v\cos(vy)\rd v
 +\tfrac2y\int_0^1 g(v)\cos(vy)\rd v=:T_v^{\mathrm{sing}}+T_v^{\mathrm{reg}}$.
Since $g\in W^{1,1}$,
$T_v^{\mathrm{reg}}=\tfrac{2g(1)\sin y}{y^2}-\tfrac{2}{y^2}\int_0^1 g'\sin(vy)\rd v=O(y^{-2})$.
With $\int_0^\delta\log\tfrac1v\cos(vy)\rd v=[\Si(\delta y)-(\log\delta)\sin(\delta y)]/y$,
the $\sin(\delta y)/y^2$ terms cancel and
\begin{gather*}
  T_v^{\mathrm{sing}}=\frac{2\pi\log2}{y^2}+Q_v
  -\frac{4\log2}{y^2}\int_\delta^{2\delta}a_\delta'\sin(vy)\,\rd v \\
  Q_v=\frac{4\log2}{y^2}\Big(\Si(\delta y)-\frac\pi2\Big)=O(y^{-3})
\end{gather*}
with $a_\delta(v)=\chi_0(v)\log\tfrac1v$. Hence
$G_v=T_v^{\mathrm{reg}}+Q_v-\tfrac{4\log2}{y^2}\int a_\delta'\sin=O(y^{-2})$. The
$\sin y/y^2$ piece transforms to $2g(1)E^{\sin}_1(s)$; $Q_v$ transforms to a
function bounded by $C\int_1^\infty y^{\sigma_1-4}\rd y$; the $g'$- and
$a_\delta'$-pieces transform (Fubini on $[1,R]$, then $R\to\infty$ dominated
convergence) to $\int g'\mathcal K_s\rd v$ and $\int_\delta^{2\delta}a_\delta'\mathcal K_s\rd v$,
which by \eqref{smeq:KMLNbound} and $\abs{g'(v)}\le Cv\log\tfrac ev$ are
$O(1+\abs s)$. All are holomorphic for $\Re s<3$.
\end{proof}

\subsection{The $u$-channel}
The $u$-channel amplitude is $\mathcal B(u)=L(u)J(u)$, singular at $u=1$ because
$L(u)=\log\tfrac{u+1}{u-1}$ diverges logarithmically there. Its inner integral
$J(u)$ has the dilogarithm closed form \eqref{smeq:Jclosed}, from which
$\mathcal B(u)\sim\tfrac{\pi^2}{8}L(u)$ as $u\downarrow1$. Unlike the
$v$-channel, this endpoint produces no pole of $F$ at $s=2$: it generates only
the oscillatory $(\log y)\,y^{-2}\sin y$ term, because the relevant Mellin
integral $\int_1^\infty y^{s-3}\log y\sin y\,\rd y$ is holomorphic for $\Re s<3$.
Establishing this requires (i) the closed form and endpoint behavior of $J$,
(ii) control of the once-differentiated remainder amplitude $\mathcal D$, and
(iii) a general bound (Lemma~\ref{smlem:osc}) on oscillatory integrals whose
amplitude has a logarithmic endpoint singularity; the same machinery then also
disposes of the $v$-channel remainder $\mathcal C$. Throughout, the standard
dilogarithm identities (reflection and the elementary $\int_0^a\log t/(1\mp
t)\,\rd t$) are those of Lewin\cite{Lewin}.

\begin{lemma}[Closed form and decomposition of $\mathcal B$]\label{smlem:uchannel}
Let $J(u)=\int_0^1\ell(u,v)(u^2-v^2)^{-1}\rd v$, so $\mathcal B=L\,J$. Then
\begin{equation}
  J(u)=\frac{1}{2u}\Big[\Li_2\!\big(\tfrac1u\big)-\Li_2\!\big(-\tfrac1u\big)-\log u\,L(u)\Big],
  \label{smeq:Jclosed}
\end{equation}
with $J(u)=\tfrac{\pi^2}{8}+O((u-1)\abs{\log(u-1)})$ as $u\downarrow1$ and
$J(u)=u^{-2}(1-\log u)+O(u^{-4}(1+\log u))$ as $u\to\infty$. With
$\chi_1\in C^\infty_c([1,\infty))$, $\chi_1\equiv1$ on $[1,1+\delta]$,
$\chi_1\equiv0$ on $[1+2\delta,\infty)$, the function
$h(u):=\mathcal B(u)-\tfrac{\pi^2}{8}\chi_1(u)L(u)$ lies in $W^{1,1}(1,\infty)$.
\end{lemma}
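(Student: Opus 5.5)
The plan is to prove the three assertions in order: first the closed form \eqref{smeq:Jclosed} from a partial-fraction and series computation, then the two endpoint expansions from that closed form, and finally the $W^{1,1}$ property of $h$, which uses those expansions near $u=1$ and at infinity.

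\emph{Closed form.} For $u>1$ I would use
\[
\frac{1}{u^2-v^2}=\frac{1}{2u}\Big[\frac{1}{u-v}+\frac{1}{u+v}\Big]
\]
and split $\ell(u,v)=\log\tfrac1u-\log v$.

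For the constant piece, $\int_0^1(u^2-v^2)^{-1}\rd v=L(u)/(2u)$. It therefore contributes $-\log u\,L(u)/(2u)$.

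For the $-\log v$ piece, expand $1/(u\mp v)=u^{-1}\sum_{n\ge0}(\pm v/u)^n$. This series converges uniformly on $[0,1]$ for $u>1$. Integrate term by term using $\int_0^1 v^n\log v\,\rd v=-(n+1)^{-2}$. This gives
\[
\int_0^1\frac{\log v}{u-v}\,\rd v=-\Li_2(1/u),\qquad \int_0^1\frac{\log v}{u+v}\,\rd v=\Li_2(-1/u).
\]
Hence the $-\log v$ piece equals $[\Li_2(1/u)-\Li_2(-1/u)]/(2u)$. Adding the two pieces yields \eqref{smeq:Jclosed}.

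\emph{Endpoint expansions.} As $u\downarrow1$ I would use $\Li_2(1)-\Li_2(-1)=\pi^2/6+\pi^2/12=\pi^2/4$. Together with the prefactor $1/(2u)$, this gives the constant $\pi^2/8$.

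The corrections are controlled by $\tfrac{\rd}{\rd x}\Li_2(x)=-\log(1-x)/x$. This makes $\Li_2(1/u)-\pi^2/6=O((u-1)\abs{\log(u-1)})$, while $\Li_2(-1/u)$ is smooth at $u=1$. The remaining term satisfies $\log u\,L(u)=O((u-1)\abs{\log(u-1)})$. Together these give $J(u)=\pi^2/8+O((u-1)\abs{\log(u-1)})$.

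As $u\to\infty$ I would expand both pieces in powers of $1/u$:
\[
\Li_2(1/u)-\Li_2(-1/u)=\frac2u+\frac{2}{9u^3}+\cdots,\qquad L(u)=\frac2u+\frac{2}{3u^3}+\cdots.
\]
Inserting these into \eqref{smeq:Jclosed} gives $J=u^{-2}(1-\log u)+O(u^{-4}(1+\log u))$.

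\emph{$h\in W^{1,1}$.} This is the step I expect to be the main obstacle, because it needs control of the derivative near the logarithmic endpoint $u=1$. I would split $[1,\infty)$ into three regions.

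On $[1,1+\delta]$ we have $h=L\,(J-\pi^2/8)$, and $h'=L'(J-\pi^2/8)+L\,J'$.
\begin{itemize}
\item Since $L=O(\abs{\log(u-1)})$, $h$ itself is $O((u-1)\log^2(u-1))$.
\item Since $L'=-2/(u^2-1)$, the first term of $h'$ is $O(\abs{\log(u-1)})$.
\item For $J'$, I would differentiate \eqref{smeq:Jclosed} directly. The $\Li_2$ terms produce $\log(1\mp1/u)$, and $\tfrac{\rd}{\rd u}[\log u\,L]=L/u+\log u\,L'$, where $\log u\,L'$ is bounded. So $J'=O(\abs{\log(u-1)})$, and $L\,J'=O(\log^2(u-1))$.
\end{itemize}
Both terms of $h'$ are integrable near $u=1$.

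On $[1+\delta,1+2\delta]$ the functions $\chi_1$, $L$ and $J$ are all smooth, so there is nothing to check.

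On $[1+2\delta,\infty)$ we have $h=\mathcal B=LJ=O(u^{-3}\log u)$. Differentiating the large-$u$ expansion gives $h'=O(u^{-4}\log u)$. Differentiating that expansion term by term is legitimate because $\Li_2(\pm1/u)$ and $L$ are convergent power series in $1/u$. Both $h$ and $h'$ are therefore integrable at infinity.

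Combining the three regions gives $h\in W^{1,1}(1,\infty)$.
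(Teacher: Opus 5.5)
Your proposal is correct and follows the paper's proof essentially step by step. Both use partial fractions plus dilogarithm integrals for the closed form, read both endpoint expansions off that closed form, and check $h$ and $h'=L'(J-\tfrac{\pi^2}{8})+LJ'$ region by region, with $J'=O(\abs{\log(u-1)})$ near $u=1$. The only differences are cosmetic: you evaluate the $\log v$ integrals by termwise integration of the geometric series in $v/u$ rather than via the substitution $v=ut$ and the closed forms of $\int_0^a\log t/(1\mp t)\,\rd t$, and you bound $\Li_2(1/u)-\pi^2/6$ through $\Li_2'(x)=-\log(1-x)/x$ instead of the reflection formula.
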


\begin{proof}
Set $v=ut$, $a=1/u$; with $u^2-v^2=u^2(1-t^2)$,
$\tfrac{1}{1-t^2}=\tfrac12(\tfrac1{1-t}+\tfrac1{1+t})$ and
$\int_0^a\tfrac{\log t}{1\mp t}\rd t=\mp\Li_2(\pm a)\mp\log a\log(1\mp a)$ one
obtains \eqref{smeq:Jclosed}. The $u\downarrow1$ limit uses
$\Li_2(x)+\Li_2(1-x)=\tfrac{\pi^2}{6}-\log x\log(1-x)$ at $x=1/u$, giving
$\Li_2(\tfrac1u)\to\tfrac{\pi^2}{6}$, $\Li_2(-\tfrac1u)\to-\tfrac{\pi^2}{12}$,
$\log u\,L(u)=O(\eta\abs{\log\eta})$ ($\eta=u-1$), hence $J\to\tfrac{\pi^2}{8}$.
The $u\to\infty$ form follows from $\Li_2(\pm x)=\pm x+\tfrac{x^2}4+O(x^3)$ and
$L(u)=\tfrac2u+\tfrac{2}{3u^3}+O(u^{-5})$. For $h$: near $u\downarrow1$,
$h=L(J-\tfrac{\pi^2}{8})=O((u-1)\abs{\log(u-1)}^2)\in L^1$, and from
$J'(u)=-J/u-L/u^2-\tfrac{\log u}{2u}L'(u)=O(\abs{\log(u-1)})$,
$h'=L'(J-\tfrac{\pi^2}{8})+LJ'\in L^1(1,1+\delta)$; for $u\ge1+2\delta$,
$h=\mathcal B=O(u^{-3}(1+\log u))$ and $h'=O(u^{-4}(1+\log u))\in L^1$.
\end{proof}

\begin{lemma}[$u$-channel remainder amplitude]\label{smlem:D}
$\mathcal D(u):=L(u)\int_0^1\ell(u,v)(u^2-v^2)^{-2}\rd v$ obeys
$\mathcal D=\tfrac14 L(u)^2+O(L(u))$, $\mathcal D'=O(L(u)/(u-1))$ as $u\downarrow1$,
and $\mathcal D=O(u^{-5}(1+\log u))$, $\mathcal D'=O(u^{-6}(1+\log u))$ as
$u\to\infty$; thus $\mathcal D,\mathcal D',u\mathcal D,(u\mathcal D)'\in L^1(1+\delta,\infty)$.
\end{lemma}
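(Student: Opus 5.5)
The plan is to work from the closed form \eqref{smeq:Jclosed} and treat $\mathcal D$ by the same substitution. Set $v=ut$ and $a=1/u$. This gives
\[
  \mathcal D(u)=\frac{L(u)}{u^{3}}\int_0^{a}\frac{\ell(u,ut)}{(1-t^2)^2}\,\rd t,
  \qquad \ell(u,ut)=-2\log u-\log t .
\]
The partial fraction $(1-t^2)^{-2}=\tfrac14\big[(1-t)^{-2}+(1+t)^{-2}+(1-t)^{-1}+(1+t)^{-1}\big]$ then reduces $\mathcal D$ to elementary integrals. These are $\int_0^a\log t\,(1\mp t)^{-k}\,\rd t$ for $k=1,2$, together with the same integrals without the logarithm. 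The $k=1$ pieces are the dilogarithm expressions already used for $J$. The $k=2$ pieces follow by one integration by parts, since $\int\log t\,(1-t)^{-2}\rd t=\log t/(1-t)+\log\frac{1-t}{t}$, and are therefore elementary. The result is an exact formula for $\mathcal D$ in terms of $\Li_2(\pm1/u)$, $\log u$, $L(u)$, $\log(u^2-1)$ and rational functions of $u$. Both endpoint regimes can be read off from this formula.

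For $u\downarrow1$, I would write $\eta=u-1$. The only singular building block is the $(1-t)^{-2}$ term at $t=a=1/(1+\eta)$, where $1-a\sim\eta$. Its contribution is $\tfrac14\big[\ell(u,ut)/(1-t)\big]$ evaluated at $t=a$, plus lower-order terms. Here $\ell(u,ua)=-\log u=O(\eta)$ and $1-a\sim\eta$, so this boundary term is $O(1)$. The logarithmic growth comes instead from $\int_0^a(1-t)^{-1}\,\partial_t\ell\,\rd t$ and from the $\log\frac{1-t}{t}$ term, each of size $\tfrac12\log\tfrac1\eta+O(1)$. Since $L(u)=\log\frac{2+\eta}{\eta}=\log\tfrac1\eta+O(1)$, collecting these gives $\int_0^1\ell\,(u^2-v^2)^{-2}\rd v=\tfrac14L(u)+O(1)$, and multiplying by $L$ yields $\mathcal D=\tfrac14L^2+O(L)$. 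The main obstacle is bookkeeping the coefficient $\tfrac14$ exactly. I would cross-check it directly: near $v=1$, $u=1+\eta$, one has $u^2-v^2\approx2(\eta+1-v)$ and $\ell\approx-(\eta-(1-v))$. The integral $\int_0^1 \frac{(1-v)-\eta}{4(\eta+1-v)^2}\,\rd v$ then gives $\tfrac14\log\tfrac1\eta+O(1)$, which confirms the coefficient independently of the dilogarithm algebra.

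For the derivative, I would differentiate under the integral sign. Differentiating $(u^2-v^2)^{-2}$ produces $(u^2-v^2)^{-3}\sim(\eta+1-v)^{-3}$, and against $\ell=O(\eta+1-v)$ the integral is $O(1/\eta)$. The terms $L'\cdot(\ldots)$ and $\partial_u\ell=-1/u$ contribute $O(L/\eta)$ and $O(L)$. Together these give $\mathcal D'=O(L/(u-1))$. For $u\to\infty$, expand $(u^2-v^2)^{-2}=u^{-4}(1+O(v^2/u^2))$ with $\int_0^1\ell\,\rd v=1-\log u$. Using $L=2/u+O(u^{-3})$, this gives $\mathcal D=O(u^{-5}(1+\log u))$, and term-by-term differentiation gives $\mathcal D'=O(u^{-6}(1+\log u))$. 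Finally, on $[1+\delta,\infty)$ the functions $L$ and $1/(u-1)$ are bounded, so the large-$u$ decay gives integrability of $\mathcal D,\mathcal D',u\mathcal D$ and $(u\mathcal D)'=\mathcal D+u\mathcal D'$.
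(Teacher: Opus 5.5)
Your proposal is correct, but it reaches the endpoint behavior by a different route from the paper. The paper never computes $\mathcal D$ afresh. Differentiating $J(u)=\int_0^1\ell(u,v)(u^2-v^2)^{-1}\rd v$ under the integral and using $\int_0^1(u^2-v^2)^{-1}\rd v=L(u)/(2u)$ gives the exact identity $\mathcal D=-\frac{L}{2u}J'-\frac{L^2}{4u^3}$. Everything is then inherited from the closed form of $J$ in Lemma~\ref{smlem:uchannel}:
\begin{itemize}
\item near $u=1$, $J'=-L+O(1)$ gives $\mathcal D=\tfrac14L^2+O(L)$;
\item one further derivative, with $J''=O((u-1)^{-1})$, gives $\mathcal D'=O(L/(u-1))$;
\item the large-$u$ forms of $J,L,J'$ give the decay, indeed $\mathcal D\sim2u^{-5}(1-\log u)$, matching your expansion.
\end{itemize}
You instead either build a second closed form for $\mathcal D$ from the partial fraction of $(1-t^2)^{-2}$, or estimate directly. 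Your direct estimates are the local model $\ell\approx(1-v)-\eta$, $u^2-v^2\approx2(\eta+1-v)$ near $v=1$; the bound $|\ell|\le C(\eta+1-v)$ against $(u^2-v^2)^{-3}$ for the derivative; and the expansion in $v^2/u^2$ at large $u$. The paper's identity is shorter and costs nothing once $J$ is known. Your direct estimates are self-contained and need no dilogarithms at all. The local model, together with the observation that the neglected terms contribute $O(1)$ uniformly in $\eta$, already proves the $u\downarrow1$ statement, so the new closed form is superfluous.

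Three bookkeeping slips should be repaired, though none changes the conclusions.
\begin{itemize}
\item In the $(1-t)^{-2}$ block, integrating by parts against $1/(1-t)$ produces divergent pieces at $t=0$: both the boundary term and $\int_0^a(1-t)^{-1}\partial_t\ell\,\rd t$ with $\partial_t\ell=-1/t$ diverge. Use the antiderivative $t/(1-t)$ instead, which gives $\int_0^a\ell\,(1-t)^{-2}\rd t=\frac{a\,\ell(u,ua)}{1-a}-\log(1-a)$. The logarithm is then a single term, $-\log(1-a)=\log\frac1\eta+O(1)$, with coefficient one. Your ``$\int(1-t)^{-1}\partial_t\ell$'' and ``$\log\frac{1-t}{t}$'' are the same contribution, not two halves of size $\frac12\log\frac1\eta$. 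The $\tfrac14$ comes from the partial-fraction weight, and the $(1\mp t)^{-1}$ and $(1+t)^{-2}$ blocks stay bounded.
\item In $\mathcal D'$, the $\partial_u\ell=-1/u$ term equals $-\frac{L}{u}\int_0^1(u^2-v^2)^{-2}\rd v\sim-L/(4\eta)$. It is therefore $O(L/\eta)$, not $O(L)$; the bound $\mathcal D'=O(L/(u-1))$ is unaffected.
\item At large $u$, differentiate under the integral sign, as you did near $u=1$, rather than term by term in the asymptotic expansion.
\end{itemize}
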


\begin{proof}
Differentiating $J$ and using $\int_0^1(u^2-v^2)^{-1}\rd v=L(u)/(2u)$ gives
\begin{gather*}
  J'(u)=-\tfrac{L(u)}{2u^2}-2u\!\int_0^1\!\ell(u,v)(u^2-v^2)^{-2}\rd v \\
  \text{i.e.} \\
  \mathcal D=-\tfrac{L}{2u}J'-\tfrac{L^2}{4u^3}.
\end{gather*}
Near $u\downarrow1$, with
$L'(u)=-\tfrac1{u-1}+O(1)$ and \eqref{smeq:Jclosed}, $J'=-L+O(1)$, so
$\mathcal D=\tfrac14L^2+O(L)$; a further differentiation with
$J''=O((u-1)^{-1})$ gives $\mathcal D'=O(L/(u-1))$. The decay at infinity is
read off the $u\to\infty$ forms of $J,L,J'$.
\end{proof}

\begin{lemma}[Oscillatory integral with logarithmic endpoint]\label{smlem:osc}
If $g\in C^1((0,\delta])$ with $\abs{g(t)}\le C\log^2\tfrac et$ and
$\abs{g'(t)}\le C\,t^{-1}\log\tfrac et$, then
$\int_0^\delta g(t)\ee^{\ii yt}\rd t=O(y^{-1}(\log y)^2)$, and the same holds
for the cosine and sine transforms.
\end{lemma}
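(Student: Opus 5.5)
The plan is to split $(0,\delta]$ at the scale $1/y$, writing $\int_0^\delta g(t)\ee^{\ii yt}\rd t=\int_0^{1/y}+\int_{1/y}^\delta$ for $y\ge 1/\delta$, and to handle the inner piece by a size bound and the outer piece by one integration by parts. This is the same device used for the kernels of Lemma~\ref{smlem:kernels}: one integration by parts trades a power of the oscillation frequency for a power of the endpoint variable. The only new feature is the logarithmic endpoint singularity, which costs at most two factors of $\log y$.

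\emph{Inner piece.} Bound trivially:
\[
\Big|\int_0^{1/y}g(t)\ee^{\ii yt}\rd t\Big|\le C\int_0^{1/y}\log^2\tfrac et\,\rd t .
\]
The antiderivative of $\log^2(e/t)$ is $t\log^2(e/t)+2t\log(e/t)+2t$. It vanishes at $t=0$ and evaluated at $t=1/y$ equals $y^{-1}\big[(1+\log y)^2+2(1+\log y)+2\big]=O(y^{-1}(\log y)^2)$.

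\emph{Outer piece.} Since $g\in C^1$ on $[1/y,\delta]$, integrate by parts with $\ee^{\ii yt}=(\ii y)^{-1}\partial_t\ee^{\ii yt}$:
\[
\int_{1/y}^\delta g\,\ee^{\ii yt}\rd t=\Big[\frac{g(t)\ee^{\ii yt}}{\ii y}\Big]_{1/y}^{\delta}-\frac{1}{\ii y}\int_{1/y}^\delta g'(t)\ee^{\ii yt}\rd t .
\]
The boundary terms are bounded by $y^{-1}\big(|g(\delta)|+|g(1/y)|\big)\le Cy^{-1}\big(1+(1+\log y)^2\big)$. For the remaining integral, the hypothesis on $g'$ gives
\[
\int_{1/y}^\delta|g'|\,\rd t\le C\int_{1/y}^\delta t^{-1}\log\tfrac et\,\rd t=\tfrac C2\Big[\log^2\tfrac et\Big]_{\delta}^{1/y}\le C'(1+\log y)^2 .
\]
After the prefactor $y^{-1}$ this is again $O(y^{-1}(\log y)^2)$. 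Adding the inner and outer pieces proves the claim for $\ee^{\ii yt}$. For bounded $y$ the integral is bounded because $g\in L^1$, so the estimate holds uniformly for $y\ge y_0>1$.

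The cosine and sine transforms are the real and imaginary parts, since $g$ is real (if $g$ is complex, apply the argument to its real and imaginary parts, which satisfy the same bounds). The main obstacle is the $g'$ integral: the hypothesis $|g'|\lesssim t^{-1}\log(e/t)$ is not integrable at $0$, which is exactly why we cut at $t=1/y$ rather than integrating by parts on all of $(0,\delta]$. The cut is placed so that the divergent part of $\int|g'|$ produces only $\log^2 y$ and exactly matches the size of the inner piece. No finer cancellation is needed, and the bound $O(y^{-1}(\log y)^2)$ is what the argument naturally delivers.
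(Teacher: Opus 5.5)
Your proof is correct and follows essentially the same route as the paper. Both arguments split at $t=1/y$, bound the inner piece trivially by $\int_0^{1/y}|g|$, and integrate by parts on $[1/y,\delta]$, where the boundary terms and $y^{-1}\int_{1/y}^{\delta}|g'|$ each contribute $O(y^{-1}(\log y)^2)$; your explicit antiderivatives simply supply the arithmetic the paper leaves implicit.
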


\begin{proof}
For $y\ge\delta^{-1}$ split at $1/y$: the piece on $[0,1/y]$ is
$\le\int_0^{1/y}\abs g\le Cy^{-1}\log^2(ey)$; on $[1/y,\delta]$, integration by
parts gives boundary terms $O(y^{-1}(\log y)^2)$ and
$\tfrac1y\int_{1/y}^\delta\abs{g'}=O(y^{-1}(\log y)^2)$.
\end{proof}

\begin{lemma}[$v$-channel remainder amplitude $\mathcal C$]\label{smlem:C}
$\mathcal C(v):=\int_1^\infty L(u)\ell(u,v)(u^2-v^2)^{-2}\rd u$ satisfies
$\mathcal C,v\mathcal C\in L^1(0,1)$ and, as $y\to\infty$,
$\int_0^1\mathcal C(v)\cos(vy)\rd v=O(y^{-1}(\log y)^2)$ and
$\int_0^1 v\mathcal C(v)\sin(vy)\rd v=O(y^{-1}(\log y)^2)$.
\end{lemma}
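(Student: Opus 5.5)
The plan is to treat $\mathcal C$ exactly as the $u$-channel amplitude was treated. First I derive pointwise bounds on $\mathcal C$ and $\mathcal C'$ near the two endpoints $v=0$ and $v=1$, where all the singular behaviour sits. Then I split $[0,1]$ with a smooth partition of unity and apply Lemma~\ref{smlem:osc} at each endpoint, handling the interior by one ordinary integration by parts. The factor $v$ in $v\mathcal C$ is smooth and bounded on $[0,1]$, so every bound for $\mathcal C$ transfers to $v\mathcal C$ verbatim. The sine transform is covered by the ``same holds for sine'' clause of Lemma~\ref{smlem:osc}.

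\emph{Endpoint $v\downarrow0$.} I split $\ell(u,v)=\log\frac1v+\log\frac1u$, exactly as in Lemma~\ref{smlem:vchannel}. This gives $\mathcal C(v)=a_2(v)\log\frac1v+b_2(v)$ with
\[
a_2(v)=\int_1^\infty L(u)(u^2-v^2)^{-2}\,\rd u,\qquad b_2(v)=\int_1^\infty L(u)\log\tfrac1u\,(u^2-v^2)^{-2}\,\rd u .
\]
For $v\le\frac12$ both integrals converge: near $u=1$ the denominator is $\ge\frac9{16}$ and the singularity of $L$ is integrable, and at large $u$ the integrand is $O(u^{-5}\log u)$. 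Differentiation under the integral shows $a_2,b_2$ are $C^1$ with $a_2',b_2'=O(v)$. Hence $|\mathcal C|\le C\log\frac ev$ and $|\mathcal C'|\le C v^{-1}$, which is well within the hypotheses of Lemma~\ref{smlem:osc} with $t=v$.

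\emph{Endpoint $v\uparrow1$.} This is the step I expect to be the main obstacle, because the squared denominator is now genuinely singular at the corner $(u,v)=(1,1)$. I use the local variables of Lemma~\ref{smlem:vchannel}, $u=1+\delta$ and $v=1-\eta$, for which $u^2-v^2\asymp\delta+\eta$, $L(u)\le C\log\frac e\delta$ and $|\ell(u,v)|\le C(\delta+\eta)$. One factor of the squared denominator is cancelled by the vanishing of $\ell$, leaving
\[
|\mathcal C(1-\eta)|\le C\int_0^{\varepsilon}\frac{\log(e/\delta)}{\delta+\eta}\,\rd\delta+O(1)=O\!\big(\log^2\tfrac e\eta\big),
\]
the piece $u>1+\varepsilon$ being bounded uniformly. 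For the derivative, $\partial_v$ either lands on $\ell$, producing $\partial_v\ell=-1/v$, or on the denominator, producing a factor $4v(u^2-v^2)^{-3}$ times $\ell=O(\delta+\eta)$. Each alternative gives an integrand of size $\log(e/\delta)/(\delta+\eta)^2$, so
\[
|\mathcal C'(1-\eta)|\le C\int_0^{\varepsilon}\frac{\log(e/\delta)}{(\delta+\eta)^2}\,\rd\delta+O(1)\le C\,\eta^{-1}\log\tfrac e\eta .
\]
The delicate point is that this must be done in the cancelled form, not by naively differentiating $\ell/(u^2-v^2)^2$. If the cancellation fails, the fallback is to express $\mathcal C$ through $\mathcal D$ and $\partial_v$ of $\mathcal A$, mirroring the identity $\mathcal D=-\tfrac{L}{2u}J'-\tfrac{L^2}{4u^3}$ of Lemma~\ref{smlem:D}. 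These bounds match Lemma~\ref{smlem:osc} with $t=1-v$ exactly. The $\log^2$ bound also gives $\mathcal C\in L^1$ near $v=1$, and the $\log\frac ev$ bound gives it near $v=0$, so $\mathcal C,v\mathcal C\in L^1(0,1)$.

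\emph{Assembly.} I choose cutoffs equal to $1$ on $[0,\delta]$ and on $[1-\delta,1]$ respectively, and complete them to a partition of unity on $[0,1]$. On the two endpoint pieces, Lemma~\ref{smlem:osc} (after the reflection $t=1-v$, which only multiplies $\ee^{\ii yv}$ by the unimodular phase $\ee^{\ii y}$) gives $O(y^{-1}(\log y)^2)$. On the interior piece $\mathcal C$ is $C^1$ on a compact interval, so one integration by parts gives $O(y^{-1})$. The cutoff derivatives are smooth and contribute only to the interior-type terms. Summing the three pieces yields both stated $O(y^{-1}(\log y)^2)$ bounds.
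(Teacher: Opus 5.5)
Your proposal is correct and follows the paper's own proof. Both arguments derive pointwise bounds on $\mathcal C,\mathcal C'$ near $v=0$ and $v=1$ as in Lemma~\ref{smlem:vchannel} (yours are slightly sharper at $v=0$), apply Lemma~\ref{smlem:osc} at the two endpoints, and use one integration by parts on the compact middle. Your hedge about needing a ``cancelled form'' is unnecessary: differentiating under the integral directly yields exactly the two terms you bound, each $O(\eta^{-1}\log\frac{e}{\eta})$, so the fallback via $\mathcal D$ is never needed.
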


\begin{proof}
As in Lemma~\ref{smlem:vchannel} (with $(u^2-v^2)^{-2}$),
$\abs{\mathcal C(v)}\le C\log^2\tfrac ev$, $\abs{\mathcal C'(v)}\le C v^{-1}\log\tfrac ev$
near $0$, and $\abs{\mathcal C(v)}\le C\log^2\tfrac{e}{1-v}$,
$\abs{\mathcal C'(v)}\le C(1-v)^{-1}\log\tfrac{e}{1-v}$ near $1$; hence
$\mathcal C,v\mathcal C\in L^1(0,1)$, and near each endpoint $v\mathcal C$ and its
derivative obey the hypotheses of Lemma~\ref{smlem:osc}. On the compact middle
$\mathcal C,v\mathcal C\in W^{1,1}$ give $O(y^{-1})$; the endpoints give
$O(y^{-1}(\log y)^2)$ by Lemma~\ref{smlem:osc}.
\end{proof}

\begin{proposition}[Fourier estimates for the $u$-channel remainder]\label{smprop:Dfourier}
As $y\to\infty$,
\begin{gather*}
  \int_1^\infty\mathcal D(u)\cos(uy)\,\rd u=O\!\big(y^{-1}(\log y)^2\big) \\
  \int_1^\infty u\mathcal D(u)\sin(uy)\,\rd u=O\!\big(y^{-1}(\log y)^2\big).
\end{gather*}
\end{proposition}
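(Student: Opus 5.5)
The plan is to split each integral at $u=1+2\delta$ with the cutoff $\chi_1$ of Lemma~\ref{smlem:uchannel}, and to treat the regular tail and the logarithmically singular endpoint $u\downarrow1$ separately. Write
\[
\int_1^\infty\mathcal D(u)\cos(uy)\,\rd u
=\int_1^{\infty}\chi_1\mathcal D\cos(uy)\,\rd u
+\int_{1+\delta}^\infty(1-\chi_1)\mathcal D\cos(uy)\,\rd u ,
\]
and similarly with $u\mathcal D\sin(uy)$.

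The tail piece is routine. By Lemma~\ref{smlem:D}, the functions $\mathcal D$, $\mathcal D'$, $u\mathcal D$ and $(u\mathcal D)'$ all lie in $L^1(1+\delta,\infty)$. Multiplying by the smooth factor $1-\chi_1$, which vanishes near $1+\delta$, keeps them in $W^{1,1}$ with zero boundary value. One integration by parts then gives $O(y^{-1})$, since the boundary term at infinity vanishes because the decay $O(u^{-5}(1+\log u))$ forces the amplitude to vanish there.

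The endpoint piece is handled by substituting $t=u-1\in(0,2\delta]$. This gives $\cos(uy)=\cos y\cos(ty)-\sin y\sin(ty)$ and the analogous identity for $\sin(uy)$. The prefactors $\cos y$ and $\sin y$ are bounded, so it suffices to bound the cosine and sine transforms
\[
\int_0^{2\delta}g(t)\,\ee^{\ii yt}\,\rd t,\qquad g(t)=\chi_1(1+t)\,\mathcal D(1+t)\ \text{or}\ \chi_1(1+t)(1+t)\mathcal D(1+t).
\]
These are exactly the objects controlled by Lemma~\ref{smlem:osc}, applied with $\delta$ replaced by $2\delta$. Its hypotheses follow from Lemma~\ref{smlem:D}. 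Since $L(1+t)=\log\frac{2+t}{t}=\log\frac1t+O(1)\le C\log\frac et$, the relation $\mathcal D=\tfrac14L^2+O(L)$ gives $|g|\le C\log^2\frac et$. Likewise $\mathcal D'=O(L/(u-1))$ gives $|g'|\le Ct^{-1}\log\frac et$. The derivative of $\chi_1$ and the factor $u$ only add terms of size $O(L^2)$, which are bounded by $C\log^2\frac et\le Ct^{-1}\log\frac et$ on $(0,2\delta]$. Lemma~\ref{smlem:osc} then yields $O(y^{-1}(\log y)^2)$ for both transforms, and together with the $O(y^{-1})$ tail this proves the proposition.

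The one step that needs care is the derivative bound $\mathcal D'=O(L/(u-1))$ near $u=1$, which rests on $J''=O((u-1)^{-1})$. If that estimate from Lemma~\ref{smlem:D} is to be rechecked, I would differentiate the identity $\mathcal D=-\frac{L}{2u}J'-\frac{L^2}{4u^3}$ using the closed form \eqref{smeq:Jclosed}. The derivative of $\Li_2(\pm1/u)$ is $\mp\log(1\mp1/u)/u$, which is $O(|\log(u-1)|)$. Consequently $J''$ is dominated by $-L'=O((u-1)^{-1})$. The terms $L'J'=O(L/(u-1))$ and $LJ''=O(L/(u-1))$ then give the claimed bound.
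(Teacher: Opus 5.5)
Your proposal is correct and follows essentially the paper's own argument. You split off a neighbourhood of $u=1$, apply Lemma~\ref{smlem:osc} to $\mathcal D(1+t)$ and $(1+t)\mathcal D(1+t)$ using the bounds of Lemma~\ref{smlem:D}, and integrate by parts once on the $L^1$ tail. The only difference is that you use the smooth cutoff $\chi_1$ where the paper splits sharply at $u=1+\delta$, which merely leaves a harmless $O(y^{-1})$ boundary term there. There is one cosmetic slip: $\frac{\rd}{\rd u}\Li_2(1/u)=+\log(1-1/u)/u$, not $-\log(1-1/u)/u$, but you only use its size $O(\abs{\log(u-1)})$, so the argument is unaffected.
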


\begin{proof}
Split at $u=1+\delta$. On $[1,1+\delta]$, Lemma~\ref{smlem:D} and
$L(1+t)\le C\log\tfrac et$ give the hypotheses of Lemma~\ref{smlem:osc} for
$\mathcal D(1+t)$ and for $g(t)=(1+t)\mathcal D(1+t)$. On $[1+\delta,\infty)$ the
amplitudes and their derivatives are $L^1$, so integration by parts gives
$O(y^{-1})$.
\end{proof}

\subsection{Large-$y$ structure and the principal part of $F$}
We now assemble the channels. The $v$-channel (Proposition~\ref{smprop:Tv})
supplies the non-oscillatory $2\pi\log2/y^2$; the $u$-channel supplies the
oscillatory $-\tfrac{\pi^2}{4}(\log y/y^2)\sin y$ and nothing of lower order in a
pole-producing form; the third group $R_3$ is smaller,
$O(y^{-3}(\log y)^2)$. Combining them gives the two-term large-$y$ expansion of
$\Phi$ with a remainder whose Mellin transform is holomorphic past $s=2$.
Feeding this, together with the small-$y$ result, into $F(s)=\int_0^\infty
y^{s-1}\Phi\,\rd y$ split at $y=1$ then yields the complete principal part of $F$
--- the double pole at $s=-1$ and the simple pole at $s=2$ --- which is all that
the contour-shift machinery of Appendix~\ref{smsec:layers} requires. The key point,
made precise below, is that the oscillatory $u$-channel term, though larger than
$O(y^{-2})$, contributes \emph{no} pole at $s=2$, because
$\int_1^\infty y^{s-3}\log y\,\sin y\,\rd y$ stays holomorphic for $\Re s<3$.

\begin{proposition}[Large-$y$ structure]\label{smprop:largey}
\begin{equation}
  \Phi(y)=\frac{2\pi\log2}{y^2}-\frac{\pi^2}{4}\,\frac{\log y}{y^2}\sin y+G(y),
  \label{smeq:largey}
\end{equation}
where $G\in L^1_{\mathrm{loc}}([1,\infty))$, $G(y)=O(y^{-2})$, and $\Mgt[G](s)$
is holomorphic for $\Re s<3$ with $\abs{\Mgt[G](s)}\le C(1+\abs s)$ on each
closed substrip of $\{0<\Re s<3\}$. In particular the second envelope in
\eqref{smeq:envelope} holds.
\end{proposition}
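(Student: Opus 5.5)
The plan is to start from the exact decomposition of Lemma~\ref{smlem:trig}, $\Phi=T_v+T_u+R_3$, and treat the three pieces in turn.

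\emph{The $v$-channel.} Proposition~\ref{smprop:Tv} already gives $T_v=2\pi\log2/y^2+G_v$. Here $G_v=O(y^{-2})$, and $\Mgt[G_v]$ is holomorphic for $\Re s<3$ with linear growth. This piece is therefore simply imported, and $G_v$ becomes the first summand of $G$.

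\emph{The $u$-channel.} I will use Lemma~\ref{smlem:uchannel} to write $\mathcal B=\tfrac{\pi^2}{8}\chi_1L+h$ with $h\in W^{1,1}(1,\infty)$.

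\textbf{Regular part.} One integration by parts in $\tfrac2y\int h\cos(uy)\,\rd u$ gives $-\tfrac{2h(1)\sin y}{y^2}-\tfrac{2}{y^2}\int h'\sin(uy)\,\rd u=O(y^{-2})$. Its Mellin transform, taken by Fubini on $[1,R]$ and then $R\to\infty$, is a combination of $E^{\sin}_1(s)$ and $\int h'(u)\,u^{2-s}\cdot(\text{tail})\,\rd u$, which is $\int h'\,\mathcal L_s(u)u^{s-2}\,\rd u$ after rescaling. Lemma~\ref{smlem:kernels} bounds this by $C(1+\abs s)$ on substrips of $0<\Re s<3$.

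\textbf{Singular part.} I will write $u=1+t$ and $L(1+t)=\log\tfrac1t+\log(2+t)$. The smooth $\log(2+t)\chi_1$ piece is again $W^{1,1}$ and is handled as above. The essential term is
\[
\frac{\pi^2}{4y}\int_0^{2\delta}\chi_1(1+t)\log\tfrac1t\,\cos\big((1+t)y\big)\,\rd t
=\frac{\pi^2}{4y}\Re\Big[\ee^{\ii y}\int_0^{\infty}\log\tfrac1t\,\ee^{\ii ty}\,\rd t\Big]+(\text{cutoff terms}).
\]
Using the classical Fourier transform of $\log\tfrac1t$ (a regularized $\Gamma'(1)$ evaluation), $\int_0^\infty\log\tfrac1t\,\ee^{\ii ty}\,\rd t=\tfrac{\ii}{y}(\log y+\gamma-\ii\tfrac\pi2)$, the leading contribution is $-\tfrac{\pi^2}{4}\tfrac{\log y}{y^2}\sin y$. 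What remains consists of $O(y^{-2})$ terms of the form $y^{-2}\times\{\sin y,\cos y\}$, together with cutoff terms involving $\chi_1'$ that are $O(y^{-2})$ after integration by parts.

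\textbf{Mellin transforms of the singular-part remainders.} The leftover $y^{-2}\cos y$, $y^{-2}\sin y$ pieces transform to $E^{\cos}_1,E^{\sin}_1$. The cutoff pieces transform through $\mathcal L_s,\mathcal N_s$. Every one of these is holomorphic for $\Re s<3$ with linear growth.

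\emph{The third group $R_3$.} I will split the bracket in \eqref{smeq:R3} term by term. The $\cos(vy)$ and $vy\sin(vy)$ terms have amplitudes $\mathcal C(v)$ and $v\mathcal C(v)$. The $\cos(uy)$ and $uy\sin(uy)$ terms have amplitudes $\mathcal D(u)$ and $u\mathcal D(u)$. Lemma~\ref{smlem:C} and Proposition~\ref{smprop:Dfourier} then give $R_3=\tfrac{8}{y^3}\big[O(y^{-1}\log^2y)+y\,O(y^{-1}\log^2y)\big]=O(y^{-3}\log^2y)\subset O(y^{-2})$. Hence $\int_1^\infty y^{s-1}\abs{R_3}\,\rd y$ converges absolutely for $\Re s<3$, and $\Mgt[R_3]$ is holomorphic and bounded on closed substrips.

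\emph{Assembly.} I will set $G=G_v+(\text{$u$-channel remainders})+R_3$. Each summand is $O(y^{-2})$ with Mellin transform holomorphic on $\Re s<3$ and of size $O(1+\abs s)$. The stated envelope $\abs\Phi\le Cy^{-2}(1+\log y)$ follows immediately from \eqref{smeq:largey}.

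\emph{Main obstacle.} I expect the main difficulty to be the $u$-channel singular term. Two things need care there. First, the constant must come out exactly as $-\pi^2/4$, which requires tracking the $\log y$ coefficient of the log-Fourier transform and the factor $\tfrac2y\cdot\tfrac{\pi^2}{8}$. Second, the Mellin remainder must stay holomorphic while its amplitude carries a logarithmic endpoint singularity. For the second point I would first try rewriting the Mellin transform of the singular piece as $\int\chi_1L\,\mathcal N_s$-type kernels, using the bound in Lemma~\ref{smlem:kernels} together with the integrability of $L$ at $u=1$, rather than expanding asymptotically.
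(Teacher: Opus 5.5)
Your proposal is correct and follows essentially the same route as the paper. Both split $\Phi=T_v+T_u+R_3$, import Proposition~\ref{smprop:Tv}, decompose $\mathcal B=\tfrac{\pi^2}{8}\chi_1L+h$ with one integration by parts on $h$, and extract $-\tfrac{\pi^2}{4}\tfrac{\log y}{y^2}\sin y$ from the $\log\tfrac1t$ piece. Two of your steps differ slightly from the paper: you use the regularized half-line transform of $\log\tfrac1t$ where the paper uses exact $\Si$/$\Ci$ formulas on $[0,\delta]$, and your absolute-convergence treatment of $\Mgt[R_3]$ from the pointwise $O(y^{-3}\log^2y)$ bound is a small simplification of the paper's kernel representation. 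One harmless slip: the Mellin transform of $y^{-2}\int h'(u)\sin(uy)\,\rd u$ is $\int h'(u)\,\mathcal L_s(u)\,\rd u$, without the extra factor $u^{s-2}$.
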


\begin{proof}
By Lemma~\ref{smlem:trig}, $\Phi=T_v+T_u+R_3$; Proposition~\ref{smprop:Tv} treats
$T_v$. For $T_u$, $\mathcal B=\tfrac{\pi^2}{8}\chi_1L+h$
(Lemma~\ref{smlem:uchannel}) splits $T_u=T_u^{\mathrm{sing}}+T_u^{\mathrm{reg}}$;
$T_u^{\mathrm{reg}}=\tfrac2y\int_1^\infty h\cos(uy)\rd u
 =\tfrac{2h(1)\sin y}{y^2}-\tfrac2{y^2}\int_1^\infty h'\sin(uy)\rd u=O(y^{-2})$,
with Mellin transform $O(1+\abs s)$ holomorphic for $\Re s<3$ (via
$E^{\sin}_1$ and the kernel $\mathcal L_s$ as in Proposition~\ref{smprop:Tv}).
For the singular part, set $u=1+t$, $\chi_*(t)=\chi_1(1+t)$,
$L(1+t)=\log\tfrac1t+r(t)$, $r(t)=\log(2+t)$. The smooth piece
$\tfrac{\pi^2}{4y}\int_0^\infty\chi_*r\cos(y(1+t))\rd t=O(y^{-2})$. For the
$\log\tfrac1t$ piece, with $\chi_*=1$ on $[0,\delta]$ and the exact formulas
$\int_0^\delta\log\tfrac1t\cos(yt)\rd t=[\Si(\delta y)-(\log\delta)\sin(\delta y)]/y$,
$\int_0^\delta\log\tfrac1t\sin(yt)\rd t=[\log y+\gamma-\Ci(\delta y)+(\log\delta)\cos(\delta y)]/y$,
together with $\cos(y(1+t))=\cos y\cos(yt)-\sin y\sin(yt)$, one finds
\begin{equation}
  \frac{\pi^2}{4y}\int_0^\delta\log\tfrac1t\cos(y(1+t))\rd t
  =-\frac{\pi^2}{4}\frac{\log y}{y^2}\sin y+V_\delta(y),
\end{equation}
where $V_\delta(y)=O(y^{-2})$ consists of $\sin y/y^2$, $\cos y/y^2$,
$\sin((1+\delta)y)/y^2$ terms and
$Q_u(y)=\tfrac{\pi^2}{4y^2}[\cos y(\Si(\delta y)-\tfrac\pi2)+\sin y\,\Ci(\delta y)]=O(y^{-3})$.
The cutoff-boundary contribution from $[\delta,2\delta]$ produces a matching
$\sin((1+\delta)y)/y^2$ term that cancels, leaving
$T_u^{\mathrm{sing}}=-\tfrac{\pi^2}{4}\tfrac{\log y}{y^2}\sin y+G_u^{\mathrm{sing}}$
with $G_u^{\mathrm{sing}}=O(y^{-2})$. Each oscillatory $y^{-2}$ term transforms
via $E^{\sin}_\alpha,E^{\cos}_\alpha$ and each $a'$/$b_\delta'$ term via
$\mathcal L_s$ (Lemma~\ref{smlem:kernels}); thus $\Mgt[G_u](s)$ is holomorphic for
$\Re s<3$ and $O(1+\abs s)$, where $G_u:=T_u^{\mathrm{reg}}+G_u^{\mathrm{sing}}$.
Finally, by \eqref{smeq:R3}, Lemma~\ref{smlem:C} and Proposition~\ref{smprop:Dfourier},
$R_3(y)=O(y^{-3}(\log y)^2)$ and, since $\mathcal C,v\mathcal C\in L^1(0,1)$ and
$\mathcal D,u\mathcal D\in L^1(1,\infty)$, $R_3\in L^1_{\mathrm{loc}}$ with
$\Mgt[R_3](s)=\mathcal I^{\cos}_{\mathcal C}+\mathcal I^{\cos}_{\mathcal D}+\mathcal I^{\sin}_{u\mathcal D}+\mathcal I^{\sin}_{v\mathcal C}$,
each term holomorphic for $\Re s<3$ and $O(1+\abs s)$ by the kernels
$\mathcal M_s,\mathcal N_s,\mathcal L_s,\mathcal K_s$. Setting $G=G_v+G_u+R_3$ gives
\eqref{smeq:largey}.
\end{proof}

\begin{corollary}[Principal part of $F$]\label{smcor:Fpp}
$F$ extends meromorphically to neighborhoods of $s=-1$ and $s=2$, and
\begin{equation}
  F(s)=-\frac{a_0}{(s+1)^2}+\frac{b_0}{s+1}+\frac{2\pi\log2}{2-s}+H(s),
  \label{smeq:Fpp}
\end{equation}
with $H$ regular near $s=-1$ and $s=2$. Equivalently $s\mapsto F(-s)$ has a
double pole at $s=1$ (with principal part $-a_0(1-s)^{-2}+b_0(1-s)^{-1}$) and a
simple pole at $s=-2$ (with residue $2\pi\log2$ in the variable $s$, i.e.
$F(-s)=\tfrac{2\pi\log2}{2+s}+\mathrm{reg}$).
\end{corollary}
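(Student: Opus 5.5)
We split the Mellin integral $F(s)=\int_0^\infty y^{s-1}\Phi(y)\,\rd y$ at $y=1$, writing $F(s)=F_<(s)+F_>(s)$ with $F_<(s)=\int_0^1 y^{s-1}\Phi\,\rd y$ and $F_>(s)=\Mgt[\Phi](s)$. By Proposition~\ref{smprop:envelope}, $F_<$ is holomorphic for $\Re s>-1$ and $F_>$ for $\Re s<2$, so each endpoint can be treated separately. Near $s=-1$ only $F_<$ is singular, and near $s=2$ only $F_>$.

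\emph{The pole at $s=-1$.} We insert Proposition~\ref{smprop:smally}, $\Phi=a_0y\log y+b_0y+E(y)$ with $E=O(y^2|\log y|^2)$ on $(0,1]$. The elementary transforms $\int_0^1 y^{s}\log y\,\rd y=-(s+1)^{-2}$ and $\int_0^1 y^s\,\rd y=(s+1)^{-1}$ give the principal part $-a_0/(s+1)^2+b_0/(s+1)$. The remainder $\int_0^1y^{s-1}E\,\rd y$ converges absolutely and is holomorphic for $\Re s>-2$, by dominated convergence on compact subsets. This provides the meromorphic continuation of $F$ into a neighbourhood of $s=-1$, with $H$ regular there.

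\emph{The pole at $s=2$.} We use Proposition~\ref{smprop:largey} and transform its three terms over $[1,\infty)$.
\begin{itemize}
\item The non-oscillatory term gives $\int_1^\infty y^{s-3}\,2\pi\log2\,\rd y=2\pi\log2/(2-s)$. This holds for $\Re s<2$ and continues meromorphically, with a simple pole of residue $-2\pi\log2$ in $s$. This is exactly the term $2\pi\log2/(2-s)$ in \eqref{smeq:Fpp}.
\item The oscillatory term contributes $-\tfrac{\pi^2}{4}\int_1^\infty y^{s-3}\log y\,\sin y\,\rd y=-\tfrac{\pi^2}{4}\,\partial_sE^{\sin}_1(s)$. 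By Lemma~\ref{smlem:kernels}, $E^{\sin}_1$ is holomorphic on $\Re s<3$, so its $s$-derivative is holomorphic there too. Differentiation under the integral is justified on compact subsets after one integration by parts, which yields $y^{s-4}\log y$ integrands that are absolutely convergent for $\Re s<3$. This term therefore contributes no pole at $s=2$.
\item $\Mgt[G]$ is holomorphic for $\Re s<3$, again by Proposition~\ref{smprop:largey}.
\end{itemize}
Hence $F_>(s)-2\pi\log2/(2-s)$ extends holomorphically across $s=2$. Adding $F_<$, which is holomorphic there, gives \eqref{smeq:Fpp} with $H$ regular near $s=2$.

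The equivalent statement for $F(-s)$ follows by substitution. Setting $-s+1=-(s-1)$ converts $-a_0/(s+1)^2+b_0/(s+1)$ into $-a_0(1-s)^{-2}+b_0(1-s)^{-1}$. Likewise $2\pi\log2/(2-(-s))=2\pi\log2/(2+s)$.

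The only delicate step is the oscillatory $\log y\,\sin y/y^2$ term. Its magnitude exceeds $y^{-2}$, so one must rule out a pole from it. The argument above settles this by identifying its transform as the derivative of a function already known to be holomorphic. Everything else is bookkeeping of the previously established endpoint expansions.
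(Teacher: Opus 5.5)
Your proof is correct and follows the paper's own argument. You split $F$ at $y=1$, read the double pole at $s=-1$ off Proposition~\ref{smprop:smally}, read the simple pole at $s=2$ off Proposition~\ref{smprop:largey} (with $\Mgt[G]$ and the oscillatory $\log y\,\sin y/y^2$ term holomorphic for $\Re s<3$), and then substitute $s\mapsto -s$; writing the oscillatory transform as $\partial_s E^{\sin}_1(s)$ is just a repackaging of the single integration by parts the paper uses for that term (the same computation as for $J_{\mathrm{osc}}$ in Lemma~\ref{smlem:growth}).
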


\begin{proof}
Split $F=F_<+F_>$ at $y=1$. By Proposition~\ref{smprop:smally},
$F_<(s)=-a_0/(s+1)^2+b_0/(s+1)+H_<(s)$, $H_<$ regular for $\Re s>-2$. By
Proposition~\ref{smprop:largey},
$F_>(s)=2\pi\log2\int_1^\infty y^{s-3}\rd y-\tfrac{\pi^2}{4}\int_1^\infty y^{s-3}\log y\sin y\,\rd y+\Mgt[G](s)$;
the first term is $\tfrac{2\pi\log2}{2-s}$ for $\Re s<2$, and the second and
$\Mgt[G]$ are regular for $\Re s<3$ (the oscillatory integral by one
integration by parts; $\Mgt[G]$ by Proposition~\ref{smprop:largey}). Combining
gives \eqref{smeq:Fpp}; the reformulation for $F(-s)$ is the substitution
$s\mapsto-s$.
\end{proof}

\begin{remark}
The oscillatory term $-\tfrac{\pi^2}{4}(\log y/y^2)\sin y$ in \eqref{smeq:largey}
is of order $(\log y)y^{-2}$, hence \emph{larger} than the $O(y^{-2})$
remainder; it must be displayed. It does \emph{not}, however, produce a pole
of $F$ at $s=2$: $\int_1^\infty y^{s-3}\log y\sin y\,\rd y$ is entire for
$\Re s<3$. Only the non-oscillatory tail $2\pi\log2/y^2$ contributes the
simple pole at $s=2$, so \eqref{smeq:Fpp} is unaffected.
\end{remark}

% =====================================================================
\section{Vertical growth of the regular part $H_2(s)$}\label{smsec:H2}
% =====================================================================
To shift the contour across $s=-2$ we need polynomial control of the regular
part of $F$ on the closed strip $0<\Re s<3$.

\begin{definition}\label{smdef:H2}
$H_2(s):=F(s)-\dfrac{2\pi\log2}{2-s}$, regular near $s=2$ by
Corollary~\ref{smcor:Fpp}.
\end{definition}

\begin{lemma}[Exact decomposition and continuation of $H_2$]\label{smlem:H2dec}
For $0<\Re s<3$,
\begin{widetext}
\begin{equation}
  H_2(s)=F_<(s)+J_{\mathrm{osc}}(s)+\Mgt[G_v](s)+\Mgt[G_u](s)+\Mgt[R_3](s),
  \label{smeq:H2dec}
\end{equation}
\end{widetext}
where $F_<(s)=\int_0^1 y^{s-1}\Phi(y)\rd y$,
$J_{\mathrm{osc}}(s)=-\tfrac{\pi^2}{4}\int_1^\infty y^{s-3}\log y\sin y\,\rd y$,
and $G_v,G_u,R_3$ are the remainders of Appendix~\ref{smsec:endpoint}. The right-hand
side is holomorphic on $0<\Re s<3$ and provides the holomorphic continuation
of $H_2$ there.
\end{lemma}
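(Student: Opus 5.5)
The plan is to derive \eqref{smeq:H2dec} first on the open substrip $0<\Re s<2$, where every piece converges absolutely, and then continue each term across $\Re s=2$ separately. On $0<\Re s<2$ I split $F=F_<+F_>$ at $y=1$. Proposition~\ref{smprop:envelope} gives absolute convergence of both halves there. Inserting the large-$y$ structure of Proposition~\ref{smprop:largey} into $F_>$ gives
\[
F_>(s)=2\pi\log2\int_1^\infty y^{s-3}\,\rd y+J_{\mathrm{osc}}(s)+\Mgt[G](s).
\]
Here $G=G_v+G_u+R_3$ as assembled in the proof of that proposition, and each of $G_v,G_u,R_3$ has an absolutely convergent Mellin integral on $\Re s<2$, because each is $O(y^{-2})$ or $O(y^{-3}(\log y)^2)$. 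The first integral equals $2\pi\log2/(2-s)$ for $\Re s<2$, so subtracting it as in Definition~\ref{smdef:H2} leaves exactly the right-hand side of \eqref{smeq:H2dec} on $0<\Re s<2$. Linearity of the Mellin integral lets me split $\Mgt[G]$ into the three displayed transforms.

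Next I show that each term on the right is holomorphic on the whole open strip $0<\Re s<3$.

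For $F_<$, the envelope \eqref{smeq:envelope} (or the sharper Proposition~\ref{smprop:smally}) gives $|\Phi(y)|\le Cy(1+|\log y|)$ on $(0,1]$. So $\int_0^1 y^{\Re s}(1+|\log y|)\,\rd y$ converges for $\Re s>-2$, and holomorphy follows by dominated convergence on compacts (Morera).

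The transforms $\Mgt[G_v]$, $\Mgt[G_u]$ and $\Mgt[R_3]$ were already shown, in Propositions~\ref{smprop:Tv} and~\ref{smprop:largey}, to be holomorphic for $\Re s<3$. They were expressed through the kernels $E^{\sin}_\alpha$, $E^{\cos}_\alpha$, $\mathcal K_s$, $\mathcal L_s$, $\mathcal M_s$, $\mathcal N_s$ of Lemma~\ref{smlem:kernels}. I would invoke those statements directly. The only care needed is that the Fubini exchanges there were done on $[1,R]$ with $R\to\infty$, so their identities hold for $\Re s<2$ and extend by the holomorphy of the kernel representations.

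The term $J_{\mathrm{osc}}$ is the only genuinely new item, and I expect it to be the (mild) obstacle. The integrand $y^{s-3}\log y\sin y$ is not absolutely integrable once $\Re s\ge2$. I would integrate by parts once against $\sin y=-(\cos y)'$:
\[
\int_1^\infty y^{s-3}\log y\,\sin y\,\rd y=\int_1^\infty\big[(s-3)y^{s-4}\log y+y^{s-4}\big]\cos y\,\rd y .
\]
The boundary term at $y=1$ vanishes because $\log1=0$, and the term at infinity vanishes for $\Re s<3$. The new integrand is $O(y^{\Re s-4}\log y)$, which is absolutely integrable uniformly on compact subsets of $\Re s<3$. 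This gives a holomorphic function there that agrees with $J_{\mathrm{osc}}$ for $\Re s<2$.

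Finally, $H_2$ is holomorphic near $s=2$ by Corollary~\ref{smcor:Fpp}, and on $0<\Re s<2$ it is given by $F(s)-2\pi\log2/(2-s)$. The right-hand side of \eqref{smeq:H2dec} is holomorphic on the connected strip $0<\Re s<3$ and agrees with $H_2$ on $0<\Re s<2$. The identity theorem then says it is the holomorphic continuation of $H_2$ and coincides with the local continuation near $s=2$. The main point to watch is this identity-theorem step, in particular that no hidden pole at $s=2$ enters through $J_{\mathrm{osc}}$. The integration-by-parts representation settles that, and it is also exactly the form needed for the later vertical-growth bound, since it yields $|J_{\mathrm{osc}}(s)|\le C(1+|s|)$ on closed substrips.
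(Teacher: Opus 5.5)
Your proposal is correct and follows essentially the same route as the paper. You establish \eqref{smeq:H2dec} on $0<\Re s<2$ by splitting $F=F_<+F_>$ and inserting Proposition~\ref{smprop:largey}, then continue term by term, using Propositions~\ref{smprop:Tv} and~\ref{smprop:largey} for the $G$-transforms and the same single integration by parts as Lemma~\ref{smlem:growth} for $J_{\mathrm{osc}}$, followed by the identity theorem. One harmless slip: the bound $|\Phi(y)|\le Cy(1+|\log y|)$ makes $F_<$ converge for $\Re s>-1$, not $\Re s>-2$, which does not affect the strip $0<\Re s<3$.
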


\begin{proof}
For $0<\Re s<2$, Proposition~\ref{smprop:envelope} lets us split
$F=F_<+F_>$ and substitute \eqref{smeq:largey} into $F_>$; using
$\int_1^\infty y^{s-3}\rd y=(2-s)^{-1}$ and absolute convergence of the
$G$-transforms ($G=O(y^{-2})$, $R_3=O(y^{-3}(\log y)^2)$) gives \eqref{smeq:H2dec}
after subtracting $2\pi\log2/(2-s)$. Each summand is holomorphic for
$0<\Re s<3$: $F_<,J_{\mathrm{osc}}$ by Lemma~\ref{smlem:growth} below, and the
three $G$-transforms by Propositions~\ref{smprop:Tv}, \ref{smprop:largey}. As the
identity holds on $0<\Re s<2$ with a right-hand side holomorphic on
$0<\Re s<3$, it continues $H_2$ to that strip.
\end{proof}

\begin{lemma}[Growth of $F_<$ and $J_{\mathrm{osc}}$]\label{smlem:growth}
On $S_{\sigma_0,\sigma_1}$ with $0<\sigma_0<\sigma_1<3$,
$\abs{F_<(s)}\le C$ and $\abs{J_{\mathrm{osc}}(s)}\le C(1+\abs s)$, both
holomorphic on $\sigma_0<\Re s<\sigma_1$.
\end{lemma}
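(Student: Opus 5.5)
Both bounds reduce to estimates already available. We treat $F_<$ and $J_{\mathrm{osc}}$ separately, on the closed strip $S_{\sigma_0,\sigma_1}$ with $0<\sigma_0<\sigma_1<3$.

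\emph{The function $F_<$.} Proposition~\ref{smprop:envelope} gives $\abs{\Phi(y)}\le Cy(1+\abs{\log y})$ on $(0,1]$. Hence, for $\Re s=\sigma\ge\sigma_0>0$,
\[
\abs{y^{s-1}\Phi(y)}\le Cy^{\sigma}(1+\abs{\log y})\le Cy^{\sigma_0}(1+\abs{\log y}),
\]
and the right-hand side is integrable on $(0,1)$ independently of $\Im s$. This gives $\abs{F_<(s)}\le C$ uniformly on the strip; in fact the bound holds for all $\Re s>-1$. Holomorphy follows by dominated convergence, or by Morera's theorem combined with Fubini, since the majorant is locally uniform in $s$.

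\emph{The function $J_{\mathrm{osc}}$.} For $\Re s<2$ the integral converges absolutely. To reach $\Re s<3$, integrate by parts once, writing $\sin y=-(\cos y)'$:
\[
\int_1^\infty y^{s-3}\log y\,\sin y\,\rd y=\Big[-y^{s-3}\log y\cos y\Big]_1^\infty+\int_1^\infty\big((s-3)y^{s-4}\log y+y^{s-4}\big)\cos y\,\rd y .
\]
The boundary term vanishes at both ends for $\Re s<3$. Indeed, at $y=1$ it vanishes because $\log 1=0$. The remaining integrand is bounded by $(\abs{s-3}+1)\,y^{\sigma_1-4}(1+\log y)$, which is integrable on $[1,\infty)$ because $\sigma_1<3$. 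This gives $\abs{J_{\mathrm{osc}}(s)}\le C(1+\abs s)$, with $C$ depending only on $\sigma_1$. The integrated form is holomorphic on $\Re s<3$ by the same dominated-convergence argument, and it agrees with the original integral on $\Re s<2$. Equivalently, one may note that $J_{\mathrm{osc}}=-\tfrac{\pi^2}{4}\,\partial_s E^{\sin}_1(s)$ and apply Lemma~\ref{smlem:kernels} together with Cauchy estimates on a slightly larger strip. I prefer the direct integration by parts, since it yields the linear bound with no loss.

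\emph{Expected difficulty.} The only point needing care is uniformity at the edge $\sigma_1\to3$. There the constant $\int_1^\infty y^{\sigma_1-4}(1+\log y)\,\rd y\sim(3-\sigma_1)^{-2}$ blows up. This is harmless, because the lemma only asserts the bound on closed substrips with $\sigma_1<3$ fixed. Nothing else is delicate: the small-$y$ side is controlled entirely by the a priori envelope, and the large-$y$ side by a single integration by parts.
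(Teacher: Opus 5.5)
Your proof is correct and follows the paper's argument essentially line for line. For $F_<$ you use the uniform integrable majorant $y^{\sigma_0}(1+\abs{\log y})$ on $(0,1]$. For $J_{\mathrm{osc}}$ you use a single integration by parts, with the boundary terms killed by $\log 1=0$ and $\Re s<3$, which gives the bound $C(1+\abs s)$. Holomorphy then follows by dominated convergence.

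The only difference is on the $F_<$ side. You cite the envelope on all of $(0,1]$, whereas the paper uses the small-$y$ expansion of Proposition~\ref{smprop:smally} on $(0,y_0]$ together with a direct boundedness check of $\Phi$ on $[y_0,1]$. That check is exactly the step that upgrades the expansion to the envelope you invoke, so your citation is legitimate.
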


\begin{proof}
By Proposition~\ref{smprop:smally}, $\abs{\Phi(y)}\le C_0 y(1+\abs{\log y})$ on
$(0,y_0]$; on $[y_0,1]$, $\Phi$ is bounded (insert \eqref{smeq:j1bounds} with
$\abs{j_1(A)}\le C(1+u)^{-1}$, $\abs{j_1(B)}\le C(1+u-v)^{-1/2}$ into
\eqref{smeq:Phidef}: the $u$-integral converges, being $O(u^{-5/2}\log u)$ at
$\infty$ and $O(\log\tfrac1{u-1})$ at $1$). Hence
$\abs{F_<(s)}\le C_0\int_0^{y_0}y^{\sigma_0}(1+\abs{\log y})\rd y+C\int_{y_0}^1 y^{\sigma_0-1}\rd y\le C$.
For $J_{\mathrm{osc}}$, one integration by parts ($\log1=0$, $y^{s-3}\log y\to0$)
gives $J_{\mathrm{osc}}(s)=-\tfrac{\pi^2}{4}\int_1^\infty[(s-3)y^{s-4}\log y+y^{s-4}]\cos y\,\rd y$,
whence $\abs{J_{\mathrm{osc}}(s)}\le C(1+\abs s)$ since $\int_1^\infty y^{\sigma_1-4}(1+\log y)\rd y<\infty$.
Holomorphy is dominated convergence on compacts.
\end{proof}

\begin{proposition}[Vertical growth of $H_2$]\label{smprop:H2growth}
For every closed strip $S_{\sigma_0,\sigma_1}\subset\{0<\Re s<3\}$ there is
$C_{\sigma_0,\sigma_1}$ with
\begin{equation}
  \abs{H_2(s)}\le C_{\sigma_0,\sigma_1}(1+\abs s)^2
  \quad(s\in S_{\sigma_0,\sigma_1}).
  \label{smeq:H2growth}
\end{equation}
\end{proposition}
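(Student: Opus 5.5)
The plan is to bound each summand in the decomposition of Lemma~\ref{smlem:H2dec} separately on the closed strip $S_{\sigma_0,\sigma_1}$, and then add the bounds. By Lemma~\ref{smlem:growth}, $\abs{F_<(s)}\le C$ and $\abs{J_{\mathrm{osc}}(s)}\le C(1+\abs s)$ there. By Proposition~\ref{smprop:Tv}, the transform $\Mgt[G_v](s)$ is $O(1+\abs s)$ on each closed substrip of $\{0<\Re s<3\}$. By Proposition~\ref{smprop:largey} (through its construction), $\Mgt[G_u](s)$ and $\Mgt[R_3](s)$ are also $O(1+\abs s)$. Summing these gives $\abs{H_2(s)}\le C(1+\abs s)$, which is stronger than the claimed $(1+\abs s)^2$. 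So the stated bound follows with room to spare, and the square can be kept as a safety margin for the step discussed next.

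The one point needing care is uniformity. Lemma~\ref{smlem:kernels} is stated for $\sigma_0<\Re s<\sigma_1$, an open strip, with $\beta=\min\{1,2-\sigma_1\}$. To obtain bounds on the closed strip $S_{\sigma_0,\sigma_1}$, I would apply everything on a slightly larger strip $S_{\sigma_0',\sigma_1'}$ with $0<\sigma_0'<\sigma_0$ and $\sigma_1<\sigma_1'<3$. The constants $C$ there depend only on $(3-\sigma_1')^{-1}$, on $\int_1^\infty y^{\sigma_1'-4}(1+\log y)\,\rd y$, and on the $W^{1,1}$ norms of $g$, $h$, $\mathcal C$, $\mathcal D$, none of which depend on $\Im s$.

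For the $\mathcal K_s$ terms, the bound $v^{\beta}(1+\abs{\log v})$ must stay integrable against $\abs{g'(v)}\le Cv\log\tfrac ev$ and against $a_\delta'$, which is supported on $[\delta,2\delta]$. Since $\beta>-1$ whenever $\sigma_1'<3$, we get $v^{1+\beta}\log^2$ integrable at $0$, so each such term is bounded by $C(1+\abs s)$. The factor $(1+\abs s)$ comes from exactly one integration by parts: it is the $(s-3)$ prefactor in $E^{\sin}_\alpha$ and in $\mathcal L_s$.

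The main obstacle I anticipate is the $R_3$ contribution and the cutoff-boundary pieces of $G_u^{\mathrm{sing}}$. For $R_3$, the amplitudes $\mathcal C$ and $\mathcal D$ have $\log^2$ endpoint singularities. Fubini, that is, swapping the $y$-Mellin integral with the $u$/$v$ integrals, has to be justified on $[1,R]$, after which one lets $R\to\infty$. The key estimate is $R_3=O(y^{-3}(\log y)^2)$, which makes $\int_1^\infty y^{\sigma_1'-1}\abs{R_3}\,\rd y$ finite. That already gives $\abs{\Mgt[R_3](s)}\le C$ directly, without any kernel bounds, because $\Re s<3$. If instead we route through the kernels $\mathcal M_s,\mathcal N_s,\mathcal L_s,\mathcal K_s$, the term $y\sin(uy)$ in $R_3$ costs a factor $u$, giving $u\mathcal D$. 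That factor, combined with the $(1+\abs s)$ from $\mathcal L_s$ and possibly another from handling the $\log^2$ endpoint via Lemma~\ref{smlem:osc}, is where a second power of $(1+\abs s)$ could enter. This explains the exponent $2$ in \eqref{smeq:H2growth}. I would first try the direct absolute bound above, since it avoids this issue entirely.

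Finally, holomorphy of every piece on the open strip has already been established. The constants collected above are uniform in $\Im s$ on the closed substrip, so combining the bounds yields \eqref{smeq:H2growth}.
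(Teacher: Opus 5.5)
Your proposal is correct and follows the paper's proof: bound each summand of the decomposition in Lemma~\ref{smlem:H2dec} by $C(1+\abs s)$ using Lemma~\ref{smlem:growth} and Propositions~\ref{smprop:Tv} and \ref{smprop:largey}, then weaken to $C(1+\abs s)^2$. The square is therefore pure slack in the paper as well, not something forced by the $u\mathcal D$ term as you speculate. Your enlarged-strip treatment of the closed strip and your direct bound $\abs{\Mgt[R_3](s)}\le\int_1^\infty y^{\sigma_1'-1}\abs{R_3}\,\rd y<\infty$ are slightly cleaner variants of the paper's continuity remark and its kernel-based bound for $R_3$.
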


\begin{proof}
On the open strip, \eqref{smeq:H2dec}, Lemma~\ref{smlem:growth}, and the bounds
$\abs{\Mgt[G_v]},\abs{\Mgt[G_u]},\abs{\Mgt[R_3]}\le C(1+\abs s)$
(Propositions~\ref{smprop:Tv}, \ref{smprop:largey}) give
$\abs{H_2(s)}\le C(1+\abs s)\le C(1+\abs s)^2$; the summands extend
continuously to the boundary lines, so the bound holds on the closed strip.
\end{proof}

% =====================================================================
\section{Rigorous asymptotic terms}\label{smsec:layers}
% =====================================================================
We now verify the hypotheses of Theorem~\ref{smthm:shift} on three strips and
extract the corresponding terms of $S(\Lambda)$. Throughout,
$I(s;\Lambda)=\tfrac1{D_0}K(s)F(-s)\Lambda^{-s}$ and we use the pole data of
Lemma~\ref{smlem:poles} ($K$-side) and Corollary~\ref{smcor:Fpp} ($F$-side).

\subsection{Large-$\Lambda$ term (first right strip)}

\begin{proposition}[Hypotheses on $c<\Re s<\sigma_+$, first right strip]\label{smprop:rightstrip}
Fix $-1<c<0<1<\sigma_+<2$. On $\Sigma:=\{c\le\Re s\le\sigma_+\}$, $I(\cdot;\Lambda)$
is meromorphic with poles only at $s=0$ (from $K$) and $s=1$ (a double pole,
from $F(-s)$), and hypotheses \textnormal{(A1)--(A3)} of
Theorem~\ref{smthm:shift} hold with $a=c$, $b=\sigma_+$.
\end{proposition}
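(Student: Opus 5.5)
The argument checks (A1), (A2) and (A3) in turn on the strip $\Sigma=\{c\le\Re s\le\sigma_+\}$, using only the pole data of Lemma~\ref{smlem:poles}, the principal part of $F$ from Corollary~\ref{smcor:Fpp}, and the growth bounds of Appendices~\ref{smsec:endpoint}--\ref{smsec:H2}.

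\textbf{Meromorphy (A1).} We write $s'=-s$, so that $s'$ ranges over $-\sigma_+\le\Re s'\le -c$, i.e.\ a strip contained in $(-2,1)$.

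On the $F$-side, $F(s')$ is holomorphic on $-1<\Re s'<2$ by Proposition~\ref{smprop:envelope}. Near $\Re s'=-1$, Corollary~\ref{smcor:Fpp} (via Proposition~\ref{smprop:smally}) continues it meromorphically: $F=-a_0/(s'+1)^2+b_0/(s'+1)+H_<+F_>$, where $H_<$ is regular for $\Re s'>-2$ and $F_>$ is holomorphic for $\Re s'<2$. So on $\Sigma$ the only pole of $F(-s)$ is the double pole at $s=1$. It is genuinely double because $a_0=\tfrac12\neq0$.

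On the $K$-side, Lemma~\ref{smlem:poles} shows that the only integer in $[c,\sigma_+]$ where $K$ has a pole is $s=0$, a simple pole. We also have $K(1)=-\pi/2\neq0$ and $K$ is regular at $s=1$.

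Since $c\notin\mathbb{Z}$ and $\sigma_+\notin\mathbb{Z}$, no pole lies on the boundary lines, so (A1) holds with $\rho_1=0$ and $\rho_2=1$.

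\textbf{Vertical integrability (A2).} The plan is to bound $F(-\sigma-\ii t)$ polynomially in $|t|$, uniformly for $\sigma\in[c,\sigma_+]$, and then combine this with Lemma~\ref{smlem:decay}.

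Split $F(s')=F_<(s')+F_>(s')$ at $y=1$.
\begin{itemize}
\item For $F_>$, use $|\Phi(y)|\le Cy^{-2}(1+\log y)$. On $\Re s'\le -c<1$ this gives $|F_>(s')|\le\int_1^\infty y^{-c-3}(1+\log y)\,\rd y$, a constant.
\item For $F_<$, subtract the small-$y$ terms $a_0y\log y+b_0y$. The remainder is $O(y^2|\log y|^2)$, so its Mellin transform over $(0,1)$ is bounded on $\Re s'\ge-\sigma_+>-2$. The subtracted terms contribute exactly $-a_0/(s'+1)^2+b_0/(s'+1)$, which is bounded by a constant on $\Sigma$ away from a neighbourhood of $s'=-1$.
\end{itemize}
Hence $F(-s)$ is bounded on the lines $\Re s=c$ and $\Re s=\sigma_+$, uniformly in $t$ for $|t|\ge1$.

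Lemma~\ref{smlem:decay} applies with $K_0=\{c,\sigma_+\}\subset\mathbb{R}\setminus\mathbb{Z}$ and gives $|K|\le C(1+|t|)^M\ee^{-\pi|t|}$. The product is therefore integrable, so $A(c),A(\sigma_+)<\infty$.

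\textbf{Horizontal vanishing (A3).} This is the step I expect to be the main obstacle. The decay \eqref{smeq:Kdecay2} is stated only for $\sigma$ in compacts avoiding $\mathbb{Z}$, but the horizontal segments cross $\Re s=0$ and $\Re s=1$.

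I would first try to redo the bound directly from \eqref{smeq:Ks}, since it holds uniformly in $\sigma\in[c,\sigma_+]$ once $|t|\ge1$:
\begin{itemize}
\item Stirling gives $|\Gamma(\sigma+1+\ii t)|\le C|t|^{\sigma_++1/2}\ee^{-\pi|t|/2}$, uniformly for $\sigma+1\in[c+1,\sigma_++1]\subset(0,3)$.
\item $|\sin(\tfrac\pi2(\sigma+\ii t))|\ge\sinh(\pi|t|/2)\ge c\,\ee^{\pi|t|/2}$, which involves no integer restriction at all.
\end{itemize}
Together these give $|K(\sigma\pm\ii T)|\le CT^{M}\ee^{-\pi T}$ uniformly on the segment.

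On the $F$-side, the split above already gives a uniform bound on the whole horizontal segment. Near $\Re s=1$, the principal part is $O(T^{-1})$ there, and $H_<+F_>$ is bounded uniformly in $\Re s'\in[-\sigma_+,-c]$.

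Finally $|\Lambda^{-s}|\le\max(\Lambda^{-c},\Lambda^{-\sigma_+})$, so $\int_c^{\sigma_+}|I(\sigma\pm\ii T;\Lambda)|\,\rd\sigma\le C\,T^{M}\ee^{-\pi T}\to0$. This gives (A3) and completes the verification.
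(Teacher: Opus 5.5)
Your proposal is correct and follows essentially the same route as the paper. (A1) comes from the pole data of $K$ together with the principal part of $F$. (A2) comes from boundedness of $F(-s)$ on the two vertical lines combined with Lemma~\ref{smlem:decay}. (A3) comes from a Stirling bound made uniform across the integers via $\abs{\sin(\tfrac\pi2(\sigma+\ii t))}\ge\sinh(\tfrac\pi2\abs{t})$, exactly as the paper does. Your explicit split $F=F_<+F_>$, with the small-$y$ terms subtracted, usefully justifies the uniform boundedness of the regular part on the whole strip, which the paper simply asserts from Corollary~\ref{smcor:Fpp}.
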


\begin{proof}
For $s\in\Sigma$, $-s\in[-\sigma_+,-c]\subset(-2,1)$, so by
Corollary~\ref{smcor:Fpp}, $F(-s)=-a_0(1-s)^{-2}+b_0(1-s)^{-1}+H(-s)$ with
$H(-s)$ bounded on $\Sigma$; thus the only poles are $s=0$ (from $K$) and
$s=1$ (double, from $F(-s)$). On a line $\Re s=\sigma\in\{c,\sigma_+\}$
(both $\notin\mathbb Z$), $\abs{F(-\sigma-\ii t)}\le C_\sigma$ (bounded) and
Lemma~\ref{smlem:decay} gives $\abs{K(\sigma+\ii t)}\le C_\sigma(1+\abs t)^Me^{-\pi\abs t}$,
so $A(\sigma)=\int\abs{\tfrac1{D_0}KF}\rd t<\infty$, which is (A2). For (A3),
on $[c,\sigma_+]$ and $\abs T\ge1$, $\abs{F(-\sigma-\ii T)}\le C$ and
$\abs{K(\sigma\pm\ii T)}\le C(1+T)^Me^{-\pi T}$ uniformly (Stirling and
$\abs{\sin(\tfrac\pi2(\sigma\pm\ii T))}^{-1}\le2e^{-\pi T/2}$); with
$\abs{\Lambda^{-s}}=\Lambda^{-\sigma}\le\max\{\Lambda^{-c},\Lambda^{-\sigma_+}\}$,
$\int_c^{\sigma_+}\abs{I(\sigma\pm\ii T;\Lambda)}\rd\sigma\le C_\Lambda(1+T)^Me^{-\pi T}\to0$.
\end{proof}

\begin{corollary}[Large-$\Lambda$ asymptotics]\label{smcor:large}
As $\Lambda\to\infty$, for any $\varepsilon\in(0,1)$,
\begin{align}
  S(\Lambda)&=1+\frac{C_1^{\Lambda}\log\Lambda+C_0^{\Lambda}}{\Lambda}+O(\Lambda^{-1-\varepsilon}),\label{smeq:large}\\
  C_1^{\Lambda}&=\frac{\pi}{4D_0},\quad
  C_0^{\Lambda}=\frac{K'(1)-\pi b_0}{2D_0},
\end{align}
with $K'(1)=-\tfrac\pi2(1-\gamma)$. Numerically $C_1^{\Lambda}=3.14283\ldots$ and
$C_0^{\Lambda}\approx-0.786$.
\end{corollary}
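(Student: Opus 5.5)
The plan is to read the expansion off the Mellin--Barnes representation of Proposition~\ref{smprop:MB} by one application of the contour-shift theorem, Theorem~\ref{smthm:shift}, in its rightward form \eqref{smeq:shiftR}. Fix $\varepsilon\in(0,1)$, pick $c\in(-1,0)$, and set $\sigma_+:=1+\varepsilon\in(1,2)$. Proposition~\ref{smprop:rightstrip} already verifies hypotheses (A1)--(A3) on the strip $c\le\Re s\le\sigma_+$. It also shows that the only poles of $I(s;\Lambda)$ there are the simple pole of $K$ at $s=0$ and the double pole of $F(-s)$ at $s=1$. Then \eqref{smeq:shiftR} gives
\[
S(\Lambda)=-\Res_{s=0}I-\Res_{s=1}I+\frac{1}{2\pi\ii}\int_{(\sigma_+)}I\,\rd s .
\]
By \eqref{smeq:remainder}, the last term is bounded by $\tfrac{A(\sigma_+)}{2\pi}\Lambda^{-1-\varepsilon}$, and $A(\sigma_+)$ is finite by (A2). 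This bound is the claimed $O(\Lambda^{-1-\varepsilon})$ remainder.

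Next comes the residue bookkeeping, using Lemma~\ref{smlem:residue}. At $s=0$, the expansion \eqref{smeq:Klaurent} gives $K(s)=-1/s+O(1)$. Since $F(-s)$ is regular there with value $F(0)=D_0$, we get $\Res_{s=0}I=-D_0/D_0=-1$. Hence this pole contributes $+1$, the static value.

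At $s=1$, Corollary~\ref{smcor:Fpp} gives
\[
F(-s)=-\frac{a_0}{(s-1)^2}-\frac{b_0}{s-1}+\text{reg}.
\]
We also use the Taylor expansion $K(s)=K(1)+K'(1)(s-1)+\cdots$, with $K(1)=-\pi/2$ and $K'(1)=-\tfrac\pi2(1-\gamma)$ from Lemma~\ref{smlem:poles}. Multiplying these, the Laurent coefficients of $\tfrac1{D_0}KF(-s)$ are
\[
a_{-2}=-\frac{a_0K(1)}{D_0},\qquad a_{-1}=-\frac{a_0K'(1)+b_0K(1)}{D_0}.
\]
By \eqref{smeq:residue} with $m=2$, we have $-\Res_{s=1}I=\Lambda^{-1}\bigl[a_{-2}\log\Lambda-a_{-1}\bigr]$. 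Inserting $a_0=\tfrac12$ and $K(1)=-\pi/2$ then gives
\[
C_1^\Lambda=\frac{\pi}{4D_0},\qquad C_0^\Lambda=\frac{K'(1)-\pi b_0}{2D_0}.
\]
The numerical values $C_1^\Lambda=3.14283\ldots$ and $C_0^\Lambda\approx-0.786$ follow from the closed form of $D_0$ in Proposition~\ref{prop:D0} and the value $b_0=-0.08639\ldots$ in \eqref{smeq:a0b0}.

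The main point needing care is not the residue algebra but the inputs it relies on. The first is the exact principal part of $F$ at $s=-1$, including the sign conventions under $s\mapsto-s$; this comes from the small-$y$ analysis of Proposition~\ref{smprop:smally}. The second is the uniform boundedness of the regular part $H(-s)$ on the closed strip, which is what makes $A(\sigma_+)$ finite. Both are supplied by Proposition~\ref{smprop:rightstrip} and Corollary~\ref{smcor:Fpp}. I would therefore recheck only the sign of each term when converting $(1-s)^{-k}$ to $(s-1)^{-k}$, and confirm that $\sigma_+<2$ keeps the contour clear of the next poles of $K$ at $s=2$ and of $F(-s)$.
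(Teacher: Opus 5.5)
Your proposal is correct and follows essentially the same route as the paper: a single right shift of the Mellin--Barnes contour to $\Re s=1+\varepsilon$ via Theorem~\ref{smthm:shift} and Proposition~\ref{smprop:rightstrip}. The static value $1$ comes from the simple pole of $K$ at $s=0$, and $C_1^{\Lambda},C_0^{\Lambda}$ come from the $m=2$ residue at the double pole of $F(-s)$ at $s=1$, with Laurent coefficients and signs matching the paper's computation exactly.
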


\begin{proof}
Apply Theorem~\ref{smthm:shift} (right shift, $b=\sigma_+=1+\varepsilon$) using
Proposition~\ref{smprop:rightstrip}:
$S(\Lambda)=-\Res_{s=0}I-\Res_{s=1}I+\tfrac1{2\pi\ii}\int_{(\sigma_+)}I$, with
remainder $\le\tfrac{A(\sigma_+)}{2\pi}\Lambda^{-1-\varepsilon}$ by \eqref{smeq:remainder}.
At $s=0$: $K(s)=-s^{-1}+\gamma+O(s)$ and $F(-s)=D_0+O(s)$, so
$\Res_{s=0}I=-1$, i.e.\ $-\Res_{s=0}I=1$. At $s=1$ ($\varepsilon'=s-1$):
$F(-s)=-a_0\varepsilon'^{-2}-b_0\varepsilon'^{-1}+O(1)$ and
$K(s)=K(1)+K'(1)\varepsilon'+O(\varepsilon'^2)$, $\Lambda^{-s}=\Lambda^{-1}(1-\varepsilon'\log\Lambda+\cdots)$;
Lemma~\ref{smlem:residue} ($m=2$) gives
$\Res_{s=1}I=\tfrac{1}{\Lambda}\big(\tfrac{a_0K(1)}{D_0}\log\Lambda-\tfrac{a_0K'(1)+b_0K(1)}{D_0}\big)$,
hence $-\Res_{s=1}I=\tfrac1\Lambda(C_1\log\Lambda+C_0)$ with $C_1^{\Lambda}=-a_0K(1)/D_0=\pi/(4D_0)$
(using $a_0=\tfrac12$, $K(1)=-\tfrac\pi2$) and
$C_0^{\Lambda}=(a_0K'(1)+b_0K(1))/D_0=(K'(1)-\pi b_0)/(2D_0)$.
\end{proof}

\subsection{Leading small-$\Lambda$ term (first left strip)}

\begin{proposition}[Hypotheses on $\sigma_-<\Re s<c$, first left strip]\label{smprop:leftstrip1}
Fix $-2<\sigma_-<-1<c<0$. On $\Sigma:=\{\sigma_-\le\Re s\le c\}$, $F(-s)$ is
given by the absolutely convergent integral $\int_0^\infty y^{-s-1}\Phi(y)\rd y$,
is bounded, and $I(\cdot;\Lambda)$ has a single pole at $s=-1$ (from $K$);
hypotheses \textnormal{(A1)--(A3)} hold with $a=\sigma_-$, $b=c$.
\end{proposition}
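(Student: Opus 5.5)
I will follow the template of Proposition~\ref{smprop:rightstrip} and check the three claims in turn: convergence and boundedness of $F(-s)$ on $\Sigma$, the pole inventory, and then (A1)--(A3) for Theorem~\ref{smthm:shift}.

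\emph{Convergence and boundedness of $F(-s)$.} For $s\in\Sigma$ we have $-\Re s\in[-c,-\sigma_-]\subset(0,2)$. This interval lies inside the strip $-1<\Re s<2$ of absolute convergence from Proposition~\ref{smprop:envelope}. To get boundedness uniformly in $\Im s$, I split at $y=1$. Near the origin,
\begin{equation*}
\int_0^1 y^{-\Re s-1}\,C y(1+\abs{\log y})\,\rd y\le\int_0^1 y^{c}(1+\abs{\log y})\,\rd y<\infty ,
\end{equation*}
since $c>-1$. At infinity,
\begin{equation*}
\int_1^\infty y^{-\Re s-1}\,C y^{-2}(1+\log y)\,\rd y\le\int_1^\infty y^{\sigma_--3}(1+\log y)\,\rd y<\infty ,
\end{equation*}
since $\sigma_->-2$. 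Both majorants are independent of $\Im s$, so $\abs{F(-s)}\le C_\Sigma$ on the whole closed strip. Holomorphy of $F(-s)$ there follows by dominated convergence on compacts.

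\emph{Pole inventory.} By Lemma~\ref{smlem:poles}, the only integer with $\sigma_-<n<c$ is $n=-1$, where $K$ has a simple pole. The pole of $K$ at $0$ lies to the right of $c$, and the double pole at $-2$ lies to the left of $\sigma_-$. Since $F(-s)$ is holomorphic on $\Sigma$, the integrand $I$ has exactly one pole in $\Sigma$, namely the simple pole at $s=-1$ with residue proportional to $F(1)$.

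\emph{Hypothesis (A1).} The boundary lines $\Re s=\sigma_-$ and $\Re s=c$ are non-integers, so they carry no poles of $K$. They carry no poles of $F(-s)$ either, by the previous step.

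\emph{Hypothesis (A2).} Take a compact $K_0\subset\mathbb{R}\setminus\mathbb{Z}$ containing $\sigma_-$ and $c$. On each of the two lines, Lemma~\ref{smlem:decay} gives $\abs{K}\le C(1+\abs t)^M\ee^{-\pi\abs t}$, and $F(-s)$ is bounded. Hence $A(\sigma)<\infty$ for both boundary lines.

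\emph{Hypothesis (A3): the main point of care.} The horizontal segments cross $\Re s=-1$, which is an integer, so the compact-set form \eqref{smeq:Kdecay2} does not apply directly. I would instead bound $\Gamma$ and $\sin$ separately for $\abs T\ge1$.
\begin{itemize}
\item Stirling's formula gives $\abs{\Gamma(\sigma+1\pm\ii T)}\le C(1+T)^{M}\ee^{-\pi T/2}$ uniformly for $\sigma\in[\sigma_-,c]$. Its only pole on this strip, at $\sigma+1=0$, lies on the real axis and not at height $T\ge1$.
\item The sine is bounded below as
\begin{equation*}
\abs{\sin(\tfrac\pi2(\sigma\pm\ii T))}\ge\sinh(\pi T/2)\ge c'\ee^{\pi T/2}
\end{equation*}
for $T\ge1$, independently of $\sigma$.
\end{itemize}
Together these give $\abs{K(\sigma\pm\ii T)}\le C(1+T)^M\ee^{-\pi T}$ uniformly in $\sigma\in[\sigma_-,c]$. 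Combining this with the bound on $F(-s)$ and with $\abs{\Lambda^{-s}}\le\max\{\Lambda^{-\sigma_-},\Lambda^{-c}\}$, the integral $\int_{\sigma_-}^{c}\abs{I(\sigma\pm\ii T;\Lambda)}\,\rd\sigma$ tends to $0$ as $T\to\infty$. This establishes (A3) and completes the verification.
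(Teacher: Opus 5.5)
Your proof is correct and follows the paper's argument essentially step for step. You obtain absolute convergence and uniform boundedness of $F(-s)$ from the envelope bounds split at $y=1$, identify the single pole at $s=-1$ (from $K$), use Lemma~\ref{smlem:decay} for (A2), and for (A3) use a Stirling bound uniform across $\Re s=-1$ together with $\abs{\sin(\tfrac\pi2(\sigma\pm\ii T))}\ge\sinh(\pi T/2)$; you spell this last step out more carefully than the paper's one-line ``same Stirling estimate''. One slip to fix: the majorant at infinity should be $y^{-\sigma_- -3}(1+\log y)$, not $y^{\sigma_- -3}(1+\log y)$, because your bound does not dominate $y^{-\Re s-3}$ for $y>1$ (indeed $-\Re s>0>\sigma_-$ on all of $\Sigma$). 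The condition $\sigma_->-2$ you invoke is exactly the integrability condition for the corrected bound, and your $y^{c}$ near $0$ is valid, though the natural majorant there is $y^{-c}$.
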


\begin{proof}
Write $s=\sigma+\ii t$. For $s\in\Sigma$, the integral
$F(-s)=\int_0^\infty y^{-s-1}\Phi(y)\rd y$ is absolutely convergent: by
\eqref{smeq:envelope}, near $0$ the absolute integrand is bounded by
$C y^{-\sigma}(1+\abs{\log y})\le C y^{-c}(1+\abs{\log y})$, which is integrable
because $c<1$, while near infinity it is bounded by
$C y^{-\sigma-3}(1+\log y)\le C y^{-\sigma_- -3}(1+\log y)$, which is integrable
because $\sigma_->-2$. Hence $F(-s)$
converges absolutely with $\abs{F(-s)}\le M_\Sigma$, and is holomorphic on the
open strip by Morera. Since $F(-s)$ is regular on $\Sigma$ and $K$ has its only
pole there at $s=-1$, (A1) holds with $\rho=-1$. On $\Re s=\sigma\in\{\sigma_-,c\}\subset(-2,0)\setminus\{-1\}$,
Lemma~\ref{smlem:decay} and $\abs{F(-s)}\le M_\Sigma$ give (A2); the same
Stirling estimate, uniform on $[\sigma_-,c]$, with
$\abs{\Lambda^{-s}}\le\max\{\Lambda^{-\sigma_-},\Lambda^{-c}\}$ gives (A3).
\end{proof}

\begin{corollary}[Leading small-$\Lambda$ term]\label{smcor:small1}
As $\Lambda\to0^+$, for any $\varepsilon\in(0,1)$,
\begin{align}
  S(\Lambda)&=A\,\Lambda+O(\Lambda^{2-\varepsilon}),\label{smeq:small1}\\
  A&=\frac{\pi\,F(1)}{2D_0},
\end{align}
where $F(1)=\int_0^\infty\Phi(y)\rd y$. Numerically $F(1)\approx3.04$,
$A\approx19.1$.
\end{corollary}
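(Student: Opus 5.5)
The plan is to apply Theorem~\ref{smthm:shift} in its left-shift form \eqref{smeq:shiftL} on the strip of Proposition~\ref{smprop:leftstrip1}. Given $\varepsilon\in(0,1)$, take $c\in(-1,0)$ as in the Mellin--Barnes representation \eqref{smeq:Smb}. Then take $\sigma_-:=-2+\varepsilon\in(-2,-1)$, so that $\sigma_-\notin\mathbb{Z}$. Proposition~\ref{smprop:leftstrip1} supplies hypotheses (A1)--(A3) with $a=\sigma_-$, $b=c$, and the only pole in between is $s=-1$. Theorem~\ref{smthm:shift} then gives
\[
S(\Lambda)=\Res_{s=-1}I(s;\Lambda)+\frac{1}{2\pi\ii}\int_{(\sigma_-)}I(s;\Lambda)\,\rd s ,
\]
and by \eqref{smeq:remainder} the remainder is bounded by $\tfrac{A(\sigma_-)}{2\pi}\Lambda^{2-\varepsilon}$. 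This is $O(\Lambda^{2-\varepsilon})$ as $\Lambda\to0^+$, because $A(\sigma_-)$ is finite and independent of $\Lambda$.

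For the residue, Lemma~\ref{smlem:poles}(ii) with $k=0$ says $s=-1$ is a simple pole of $K$ with residue $\tfrac{\pi}{2}$. By Proposition~\ref{smprop:leftstrip1}, $F(-s)$ is holomorphic on the open strip and equals the absolutely convergent integral $\int_0^\infty y^{-s-1}\Phi(y)\,\rd y$. At $s=-1$ this is $F(1)=\int_0^\infty\Phi(y)\,\rd y$, which converges by the envelope of Proposition~\ref{smprop:envelope}: the integrand is $O(y|\log y|)$ at $0$ and $O(y^{-2}\log y)$ at $\infty$. Lemma~\ref{smlem:residue} with $m=1$ then gives
\[
\Res_{s=-1}I=\frac{1}{D_0}\cdot\frac{\pi}{2}\,F(1)\,\Lambda^{1}=A\Lambda .
\]
Combining this with the remainder bound yields \eqref{smeq:small1}. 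The numerical values $F(1)\approx3.04$ and $A\approx19.1$ are quoted from the quadrature of Appendix~\ref{app:algo}, consistent with the plateau of $S(\Lambda)/\Lambda$ reported in Sec.~\ref{sec:numerics}.

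The only step that needs care is the uniform boundedness of $F(-s)$ on the closed strip, which feeds (A2)--(A3). This is already settled in Proposition~\ref{smprop:leftstrip1}, since $\sigma_->-2$ keeps the large-$y$ tail integrable; I would simply check that the choice $\sigma_-=-2+\varepsilon$ satisfies that proposition's hypotheses. I would also remark that $\sigma_-$ cannot be pushed to $-2$: there the double pole of $K$ collides with the simple pole of $F(-s)$ from Corollary~\ref{smcor:Fpp}, which is why the remainder is stated as $O(\Lambda^{2-\varepsilon})$ rather than $O(\Lambda^2)$.
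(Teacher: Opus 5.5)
Your proposal is correct and is essentially the paper's own proof: you shift the Mellin--Barnes contour left to $\Re s=-2+\varepsilon$ via Theorem~\ref{smthm:shift} and Proposition~\ref{smprop:leftstrip1}, pick up only the simple pole of $K$ at $s=-1$ (residue $\tfrac{\pi}{2}$) against the regular value $F(1)$, and bound the remainder by $\tfrac{A(\sigma_-)}{2\pi}\Lambda^{2-\varepsilon}$. Your added remarks on the convergence of $F(1)$ and on why the contour cannot be pushed to $\Re s=-2$ are accurate, though the paper's proof does not need them.
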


\begin{proof}
Left shift to $a=\sigma_-=-2+\varepsilon$ (Theorem~\ref{smthm:shift},
Proposition~\ref{smprop:leftstrip1}):
$S(\Lambda)=\Res_{s=-1}I+\tfrac1{2\pi\ii}\int_{(\sigma_-)}I$, remainder
$\le\tfrac{A(\sigma_-)}{2\pi}\Lambda^{2-\varepsilon}$. Since $\Res_{s=-1}K=\tfrac\pi2$ and
$F(-s)$ is regular at $s=-1$ with value $F(1)$,
$\Res_{s=-1}I=\tfrac{\pi}{2D_0}F(1)\,\Lambda$, giving \eqref{smeq:small1}.
\end{proof}

\subsection{Second small-$\Lambda$ term (second left strip)}

\begin{proposition}[Hypotheses on $\sigma_-<\Re s<c$, second left strip]\label{smprop:leftstrip2}
Fix $-3<\sigma_-<-2<-1<c<0$. On $\Sigma:=\{\sigma_-\le\Re s\le c\}$, the relation
$F(-s)=\dfrac{2\pi\log2}{2+s}+H_2(-s)$ holds (with $H_2$ continued by
Lemma~\ref{smlem:H2dec}), so $I(\cdot;\Lambda)$ is meromorphic with poles only at
$s=-1$ (simple, from $K$) and $s=-2$ (\emph{triple}: double pole of $K$ meeting
the simple pole of $F(-s)$); hypotheses \textnormal{(A1)--(A3)} hold with
$a=\sigma_-$, $b=c$.
\end{proposition}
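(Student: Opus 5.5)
The plan is to follow the template of Proposition~\ref{smprop:rightstrip} and Proposition~\ref{smprop:leftstrip1}. The new ingredient is the meromorphic continuation of $F(-s)$ past the simple pole at $s=-2$, which Lemma~\ref{smlem:H2dec} and Proposition~\ref{smprop:H2growth} supply.

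\textbf{Representation of $F(-s)$ on the strip.} I would split $\Sigma$ at $\Re s=-1-\eta$, with $\eta$ small. On the right part, $\{-1-\eta\le\Re s\le c\}$, we have $-\Re s\in(0,2)$. There the defining integral of $F(-s)$ converges absolutely and is bounded, exactly as in Proposition~\ref{smprop:leftstrip1}.

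On the whole of $\Sigma$ I would use Definition~\ref{smdef:H2}, $F(w)=2\pi\log2/(2-w)+H_2(w)$ with $w=-s$. Since $-\Re s\in[-c,-\sigma_-]\subset(0,3)$, Lemma~\ref{smlem:H2dec} gives $H_2(-s)$ holomorphic on $\Sigma$. This yields $F(-s)=2\pi\log2/(2+s)+H_2(-s)$ and defines the continuation. The two descriptions agree on the overlap $\{\Re s>-2\}$ by uniqueness of analytic continuation.

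\textbf{Pole structure (A1).} The poles of $K$ in $\Sigma$ are at $s=-1$ (simple) and $s=-2$ (double), by Lemma~\ref{smlem:poles}. The point $s=0$ lies outside $\Sigma$ because $c<0$, and $s=-3$ lies outside because $\sigma_->-3$.

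The factor $F(-s)$ contributes only the simple pole at $s=-2$, and its residue $2\pi\log2$ is nonzero. Combining this with the double pole of $K$, whose leading coefficient is $-1$ by \eqref{smeq:Klaurent}, gives a genuine triple pole with leading Laurent coefficient $-2\pi\log2/D_0$. At $s=-1$ the pole stays simple, since $F(-s)$ is regular there with value $F(1)$.

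Neither boundary line contains a pole: $\sigma_-\notin\mathbb Z$ because $-3<\sigma_-<-2$, and $c\notin\mathbb Z$ because $-1<c<0$.

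\textbf{Integrability (A2) and horizontal vanishing (A3).} This is the main point, because $F(-s)$ is no longer a bounded absolutely convergent integral near $\Re s=\sigma_-$. I would use the decomposition above:
\[
|F(-\sigma-\ii t)|\le \frac{2\pi\log2}{|2+\sigma+\ii t|}+C(1+|t|)^2 .
\]
The polynomial bound on $H_2$ comes from Proposition~\ref{smprop:H2growth}, applied on the closed strip $S_{-c,-\sigma_-}\subset\{0<\Re s<3\}$. Away from $\sigma=-2$, the first term is bounded by $1/|t|$ for $|t|\ge1$.

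Lemma~\ref{smlem:decay}, on the compact set $\{\sigma_-,c\}$, gives $|K|\le C(1+|t|)^M e^{-\pi|t|}$. The product is therefore integrable on both vertical lines, which is (A2).

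For (A3) I need uniform control for $\sigma\in[\sigma_-,c]$ and $|T|\ge1$. The bound on $K$ must also hold when $\sigma$ passes through the integers $-1,-2$. It does, because $|\sin(\tfrac\pi2(\sigma\pm\ii T))|\ge\sinh(\pi T/2)$ and the Stirling bound for $\Gamma(\sigma+1\pm\ii T)$ is uniform once $|T|\ge1$. For $F(-s)$ I would use the bound $|2+s|^{-1}\le1$ together with $|H_2|\le C(1+T)^2$.

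Multiplying by $|\Lambda^{-s}|\le\max\{\Lambda^{-\sigma_-},\Lambda^{-c}\}$ bounds the horizontal integrals by $C_\Lambda(1+T)^{M+2}e^{-\pi T}$, which tends to $0$. This completes (A1)--(A3).
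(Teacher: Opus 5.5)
Your proposal is correct and follows essentially the same route as the paper: the continuation $F(-s)=2\pi\log2/(2+s)+H_2(-s)$ via Lemma~\ref{smlem:H2dec}, the $(1+|t|)^2$ bound from Proposition~\ref{smprop:H2growth} on $-\Re s\in[-c,-\sigma_-]\subset(0,3)$, and exponential decay of $K$. For the last step the paper first shifts $\Gamma(\sigma+1+\ii t)$ to $\Gamma(\sigma+5+\ii t)$ before applying Stirling, whereas you invoke Lemma~\ref{smlem:decay} together with $|\sin(\tfrac\pi2(\sigma\pm\ii T))|\ge\sinh(\pi T/2)$ directly, and your preliminary split of $\Sigma$ at $\Re s=-1-\eta$ is unnecessary but harmless.
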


\begin{proof}
For $s\in\Sigma$, $-s\in[-c,-\sigma_-]\subset(0,3)$, where by
Lemma~\ref{smlem:H2dec} and the identity theorem
$F(z)=\tfrac{2\pi\log2}{2-z}+H_2(z)$; with $z=-s$ this is the stated relation,
and $H_2(-s)$ is holomorphic. Hence the poles of $I$ in $\Sigma$ are $s=-1$
(simple, from $K$) and $s=-2$ (double pole of $K$ times the simple pole
$\tfrac{2\pi\log2}{2+s}$ of $F(-s)$ $=$ triple). On a line
$\Re s=\sigma\in\{\sigma_-,c\}\setminus\{-1,-2\}$,
Proposition~\ref{smprop:H2growth} gives $\abs{H_2(-\sigma-\ii t)}\le C(1+\abs t)^2$,
and $\abs{(2+\sigma+\ii t)^{-1}}\le\abs{2+\sigma}^{-1}$, so
$\abs{F(-\sigma-\ii t)}\le C(1+\abs t)^2$. Writing
$\Gamma(\sigma+1+\ii t)=\Gamma(\sigma+5+\ii t)\big/\!\prod_{j=1}^4(\sigma+j+\ii t)$
with $\sigma+5\in(2,5)$ and Stirling on $\{2\le\Re z\le5\}$ gives
$\abs{\Gamma(\sigma+1+\ii t)}\le C(1+\abs t)^{\sigma+1/2}e^{-\pi\abs t/2}$,
whence $\abs{K(\sigma+\ii t)}\le C(1+\abs t)^Me^{-\pi\abs t}$; multiplying,
$\abs{\tfrac1{D_0}KF(-\,\cdot\,)}\le C(1+\abs t)^{M+2}e^{-\pi\abs t}$, integrable,
which is (A2). The same uniform estimates on $[\sigma_-,c]$ for $\abs T\ge1$
(now $\abs{F(-\sigma-\ii T)}\le C(1+T)^2$ and $\abs{2\pi\log2/(2+\sigma+\ii T)}\le2\pi\log2/T$)
give $\int_{\sigma_-}^{c}\abs{I(\sigma\pm\ii T;\Lambda)}\rd\sigma\le C_\Lambda(1+T)^{M+2}e^{-\pi T}\to0$,
which is (A3).
\end{proof}

\begin{corollary}[Second small-$\Lambda$ term]\label{smcor:small2}
As $\Lambda\to0^+$, for any $\varepsilon\in(0,1)$,
\begin{equation}
  S(\Lambda)=A\,\Lambda+\Lambda^2\big[B_2(\log\Lambda)^2+B_1\log\Lambda+B_0\big]+O(\Lambda^{2+\varepsilon}),
  \label{smeq:small2}
\end{equation}
where $A=\tfrac{\pi F(1)}{2D_0}$ as in \eqref{smeq:small1},
\begin{align}
  B_2&=-\frac{\pi\log2}{D_0},\label{smeq:Bcoeffs}\\
  B_1&=\frac{H_2(2)-2(\gamma-1)\pi\log2}{D_0},
\end{align}
and $B_0=a_{-1}/D_0$ with $a_{-1}$ the simple-pole coefficient of $K(s)F(-s)$
at $s=-2$, determined by $H_2(2),H_2'(2)$ and the regular part of $K$ there.
Numerically $B_2=-8.71376\ldots$.
\end{corollary}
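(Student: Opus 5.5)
The approach is to apply Theorem~\ref{smthm:shift} with the left shift \eqref{smeq:shiftL}. We take $a=\sigma_-=-2-\varepsilon$ with $\varepsilon\in(0,1)$, so that $-3<\sigma_-<-2$, and $b=c\in(-1,0)$; Proposition~\ref{smprop:leftstrip2} has already verified (A1)--(A3) on this strip. This gives
\[
S(\Lambda)=\Res_{s=-1}I+\Res_{s=-2}I+\frac{1}{2\pi\ii}\int_{(\sigma_-)}I(s;\Lambda)\,\rd s .
\]
By \eqref{smeq:remainder}, the last term is bounded by $A(\sigma_-)\Lambda^{2+\varepsilon}/(2\pi)$, which is the stated remainder. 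The residue at $s=-1$ is computed exactly as in Corollary~\ref{smcor:small1}, giving $A\Lambda$ with $A=\pi F(1)/(2D_0)$. Here $F(-s)$ is regular at $s=-1$: the absolutely convergent integral representation of Proposition~\ref{smprop:leftstrip1} is still valid near $s=-1$, and it agrees with the continuation $2\pi\log2/(2+s)+H_2(-s)$.

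What remains is the triple pole at $s=-2$. Write $\eta=s+2$. From Lemma~\ref{smlem:poles} we have
\[
K(s)=-\eta^{-2}+(\gamma-1)\eta^{-1}+k_0+O(\eta).
\]
From Proposition~\ref{smprop:leftstrip2} and Lemma~\ref{smlem:H2dec},
\[
F(-s)=2\pi\log2\,\eta^{-1}+H_2(2)-H_2'(2)\eta+O(\eta^2),
\]
where $H_2$ is holomorphic at $2$ because $2\in(0,3)$. The sign of the linear term comes from $-s=2-\eta$. Multiplying the two expansions gives the Laurent coefficients of $K(s)F(-s)$:
\begin{align*}
a_{-3}&=-2\pi\log2,\\
a_{-2}&=2\pi\log2(\gamma-1)-H_2(2),\\
a_{-1}&=H_2'(2)+(\gamma-1)H_2(2)+2\pi\log2\,k_0 .
\end{align*}
The constant $k_0$ requires the next order of $\Gamma(-1+\eta)$ and $\sin(\pi(-2+\eta)/2)$. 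The claim only states that $B_0$ depends on $k_0$, so its value is not needed.

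Lemma~\ref{smlem:residue} with $m=3$ and $\rho=-2$ then gives
\[
\Res_{s=-2}I=\frac{\Lambda^{2}}{D_0}\Big[a_{-1}-a_{-2}\log\Lambda+\tfrac12 a_{-3}(\log\Lambda)^2\Big].
\]
Hence $B_2=a_{-3}/(2D_0)=-\pi\log2/D_0$ and $B_1=-a_{-2}/D_0=[H_2(2)-2(\gamma-1)\pi\log2]/D_0$, both matching the statement, and $B_0=a_{-1}/D_0$. Inserting $D_0$ from Proposition~\ref{prop:D0} gives the numerical value $B_2=-8.71376\ldots$.

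The main obstacle is already handled upstream: it is the legitimacy of continuing $F(-s)$ past $\Re s=-2$ with polynomial vertical growth (Proposition~\ref{smprop:H2growth}). What remains is bookkeeping. The points to check are the sign conventions: $F(-s)$ has residue $+2\pi\log2$ in the variable $\eta$, the Taylor term $H_2(2-\eta)$ carries a minus sign, and the residue enters with a plus sign in the left shift. These are exactly what fix the sign of $B_1$.
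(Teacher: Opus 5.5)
Your proposal is correct and follows the paper's proof essentially line for line. Both shift left to $\sigma_-=-2-\varepsilon$ under Proposition~\ref{smprop:leftstrip2}, take the $A\Lambda$ residue at $s=-1$ as in Corollary~\ref{smcor:small1}, and obtain $B_2,B_1,B_0$ by expanding $K(s)\bigl[2\pi\log2/(s+2)+H_2(-s)\bigr]$ at the triple pole $s=-2$ and applying Lemma~\ref{smlem:residue} with $m=3$. Your explicit $a_{-1}=H_2'(2)+(\gamma-1)H_2(2)+2\pi\log2\,k_0$ only spells out what the paper leaves implicit.
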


\begin{proof}
Left shift to $a=\sigma_-=-2-\varepsilon$ (Theorem~\ref{smthm:shift},
Proposition~\ref{smprop:leftstrip2}):
$S(\Lambda)=\Res_{s=-1}I+\Res_{s=-2}I+\tfrac1{2\pi\ii}\int_{(\sigma_-)}I$, remainder
$\le\tfrac{A(\sigma_-)}{2\pi}\Lambda^{2+\varepsilon}$. The $s=-1$ residue is
$\tfrac{\pi}{2D_0}F(1)\Lambda=A\Lambda$ as in Corollary~\ref{smcor:small1}. At $s=-2$,
put $\varepsilon'=s+2$: by \eqref{smeq:Klaurent},
$K(s)=-\varepsilon'^{-2}+(\gamma-1)\varepsilon'^{-1}+O(1)$, and by
Definition~\ref{smdef:H2}, $F(-s)=F(2-\varepsilon')=\tfrac{2\pi\log2}{\varepsilon'}+h_0-h_1\varepsilon'+O(\varepsilon'^2)$
with $h_0=H_2(2)$, $h_1=H_2'(2)$; multiplying,
$K(s)F(-s)=-\tfrac{2\pi\log2}{\varepsilon'^3}+\tfrac{2(\gamma-1)\pi\log2-h_0}{\varepsilon'^2}+\tfrac{a_{-1}}{\varepsilon'}+O(1)$.
With $\Lambda^{-s}=\Lambda^2(1-\varepsilon'\log\Lambda+\tfrac{\varepsilon'^2}{2}(\log\Lambda)^2+\cdots)$ and
Lemma~\ref{smlem:residue} ($m=3$),
$\Res_{s=-2}I=\Lambda^2\big[-\tfrac{\pi\log2}{D_0}(\log\Lambda)^2+\tfrac{h_0-2(\gamma-1)\pi\log2}{D_0}\log\Lambda+\tfrac{a_{-1}}{D_0}\big]$,
giving \eqref{smeq:Bcoeffs}.
\end{proof}

\begin{remark}[Relation to the main-text theorems]
Corollary~\ref{smcor:large} is the rigorous form of the large-$\Lambda$ theorem
($\Lambda^{-1}\log\Lambda$ leading correction); Corollary~\ref{smcor:small1} is the
leading small-$\Lambda$ term ($A\Lambda$); and Corollary~\ref{smcor:small2} is the
second small-$\Lambda$ term, whose $\Lambda^2(\log\Lambda)^2$ structure is forced by
the collision at $s=-2$ of the double pole of $K$ with the simple pole of
$F(-s)$. The closed-form leading coefficients
$C_1^{\Lambda}=\tfrac{\pi}{4D_0}=-\tfrac{3}{2G(3)}$ and
$B_2=-\tfrac{\pi\log2}{D_0}=\tfrac{6\log2}{G(3)}$ both reduce to Glasser's
$G(3)$ because the static endpoint is pinned to $D_0=-\tfrac\pi6G(3)$; the
linear amplitude $A$ does not, since $F(1)$ depends on the whole profile of
$\Phi$.
\end{remark}

\end{document}